\documentclass[11pt]{article}

\usepackage[letterpaper,margin=1in]{geometry}
\usepackage[T1]{fontenc}
\usepackage{lmodern}
\usepackage{amsmath,amssymb,amsthm,mathtools}
\usepackage{booktabs}
\usepackage{array}
\usepackage{enumitem}
\usepackage{float}
\usepackage{microtype}
\usepackage{needspace}
\usepackage{placeins}
\usepackage{tikz}
\usepackage[round,authoryear]{natbib}
\usepackage{xcolor}
\usepackage[colorlinks=true,allcolors=blue!55!black]{hyperref}
\usepackage[nameinlink,noabbrev]{cleveref}

\usetikzlibrary{arrows.meta}

\newtheorem{theorem}{Theorem}[section]
\newtheorem{lemma}[theorem]{Lemma}
\newtheorem{proposition}[theorem]{Proposition}
\newtheorem{corollary}[theorem]{Corollary}

\theoremstyle{definition}
\newtheorem{definition}[theorem]{Definition}
\newtheorem{example}[theorem]{Example}
\theoremstyle{remark}
\newtheorem{remark}[theorem]{Remark}

\numberwithin{equation}{section}
\setlist{leftmargin=*}
\allowdisplaybreaks[2]
\newcolumntype{L}[1]{>{\raggedright\arraybackslash}p{#1}}

\newcommand{\R}{\mathbb R}

\newcommand{\N}{\mathbb N}
\newcommand{\one}{\mathbf 1}
\newcommand{\id}{\operatorname{id}}
\newcommand{\supp}{\operatorname{supp}}
\newcommand{\esssup}{\operatorname*{ess\,sup}}

\newcommand{\re}{\operatorname{Re}}

\newcommand{\pc}{\mathcal A_{\mathrm{pc}}}
\newcommand{\rt}{\mathcal A_{\mathrm{rt}}}
\newcommand{\readouts}{\mathcal H}
\newcommand{\cost}{\mathcal C}
\newcommand{\storage}{\mathcal S}
\newcommand{\laplace}[1]{\widehat{#1}}
\newcommand{\dd}{\,\mathrm d}
\newcommand{\transpose}{\mathsf T}
\newcommand{\weakstar}{\mathop{\rightharpoonup}\limits^{*}}
\newcommand{\DeltaT}{\Delta_T}

\title{Price manipulation in nonlinear transient impact models:\\
rigidity before memory and complete positivity after memory}
\author{Minhyeok Lee\\[-0.1em]
\small Independent Researcher, Republic of Korea\\[-0.1em]
\small \texttt{borrownotime@gmail.com}}
\date{2 September 2026}

\hypersetup{
  pdftitle={Price manipulation in nonlinear transient impact models: rigidity before memory and complete positivity after memory},
  pdfauthor={Minhyeok Lee},
  pdfsubject={No-manipulation classifications for nonlinear transient market impact: rigidity before memory, complete positivity after memory},
  pdfkeywords={transient market impact, price manipulation, square-root law, power-law decay, nonlinear propagators, complete positivity, Prony kernels, dissipativity, round trip, Volterra equation}
}

\begin{document}
\maketitle

\begin{abstract}
Transient impact models compose a nonlinearity with a memory kernel, and
the order of composition determines the criterion for absence of price
manipulation.  We classify both orders.  If an arbitrary instantaneous law
$f$ acts on the trading rate before any nonzero integrable Volterra kernel,
nonnegative cost on every finite piecewise-constant round trip forces $f$
to be affine, and linear for every nonzero convolution kernel.  In
particular, the power law $\operatorname{sgn}(x)|x|^\delta$, $\delta>0$,
combined with power-law decay $t^{-\gamma}$, $0<\gamma<1$, admits
manipulation if and only if $\delta\ne1$: square-root impact is manipulable
at every decay exponent, and the region left open by Gatheral's slow-rate
two-block bound $\delta+\gamma\ge1$ collapses to the line $\delta=1$.
Earlier rigidity theorems require a kernel that is bounded at zero; the
argument here is a zero-volume chattering pump read out by two thin
baseline trades, and it applies to singular kernels.  If instead a monotone
readout acts on the impact state after the kernel, safety for all inputs
and all readouts is equivalent to complete positivity of the kernel, with a
constructive converse; in particular, square-root impact after power-law
memory is manipulation-free.  A remote compensating block shows that
round-trip safety and all-input safety coincide for kernels with uniformly
vanishing tails and differ, for permanent memory, by an explicit storage
quotient.  These mechanisms classify every two-mode Prony kernel,
first-order time-inhomogeneous memory, and stable fully actuated matrix
memory, and they quantify the friction, the two-block phase, and the
switching complexity behind the power-law case.  Calibrated exponent pairs
all lie in the manipulable set: absent friction, concavity has to enter
after the memory, not before it.
\end{abstract}

\noindent\textbf{Keywords.}
transient market impact; price manipulation; square-root law; power-law
decay; nonlinear propagators; complete positivity; Prony kernels;
dissipativity

\smallskip
\noindent\textbf{2020 Mathematics Subject Classification.}
91G15 (primary); 45D05, 47H05, 49J45, 93C10 (secondary).

\section{Introduction}\label{sec:introduction}

The propagator model of \citet{BouchaudGefenPottersWyart2004} and
\citet{Gatheral2010} describes the price response to a trading rate $v$ as
the convolution of a memory kernel with an instantaneous impact law.  Two
empirical regularities are usually imposed on it: the response to a single
trade is concave in size, well approximated by a square root
\citep{TothEtAl2011,AbiJaberEtAl2025}, and it decays as a power law
\citep{BouchaudGefenPottersWyart2004}.  Whether these two facts can coexist
in one model without allowing a trader to make money on a round trip has
been an open question since \citet{Gatheral2010} derived a slow-rate
power-law bound and a separate maximal-rate bound,
\citet{CuratoGatheralLillo2017} found negative-cost strategies numerically
inside the resulting schematic region, and \citet{SchneiderLillo2019} and
\citet{AbiJaberEtAl2025} recorded
the question as open.  This paper answers it, and shows that the answer
depends on where the nonlinearity is placed relative to the memory.

Three architectures are natural:
\begin{align}
\text{rate-inside:}\qquad
 D(t)&=\int_0^t H(t,s)f(v(s))\dd s,
 &\cost[v]&=\int_0^T v(t)D(t)\dd t,
 \label{eq:intro-inside}\\
\text{state-outside:}\qquad
 D(t)&=(G*v)(t),
 &\cost_h[v]&=\int_0^T v(t)h(D(t))\dd t,
 \label{eq:intro-outside}\\
\text{mixed:}\qquad
 D(t)&=(G*f(v))(t),
 &\cost_{f,h}[v]&=\int_0^T v(t)h(D(t))\dd t.
 \label{eq:intro-mixed}
\end{align}
Here $G$ or $H$ is a causal memory kernel, $f$ acts on the contemporaneous
rate, and $h$ reads the accumulated impact state.  With a convolution kernel
$H(t,s)=G(t-s)$, the rate-inside model is the model of \citet{Gatheral2010},
which the mixed model recovers at $h=\id$; the state-outside model is the
nonlinear propagator of \citet{AbiJaberEtAl2025}.  The economic test is
nonnegative expected impact cost on every round trip $\int_0^Tv=0$.  We
impose it on ordinary controls with finitely many constant pieces; no
relaxed or distributional strategy is admissible.  When a theorem quantifies
over readouts, the class is
\begin{equation}\label{eq:intro-readouts}
 \readouts
 =\{h:\R\to\R: h\text{ is continuous and nondecreasing},\ h(0)=0\}.
\end{equation}
Quantifying over all monotone readouts is what makes complete positivity,
rather than positive definiteness, the relevant kernel condition.

\subsection{Main results}\label{sec:main-results}

\begin{enumerate}[label=(\arabic*),leftmargin=2em]
\item \emph{Rigidity before memory} (\Cref{thm:exact}, \Cref{cor:convolution}, \Cref{thm:fractional-main}, \Cref{cor:all-powers}).
  Let $H$ be any nonzero real kernel in $L^1$ on the Volterra triangle and
  $f$ any real function.  The rate-inside cost is nonnegative on every finite
  piecewise-constant round trip if and only if $f$ is affine, its intercept
  is compatible with the row integral of $H$, and the slope gives the linear
  quadratic form the right sign on round trips.  For a nonzero convolution
  kernel the intercept vanishes.  For $H(t,s)=(t-s)^{-\gamma}$,
  $0<\gamma<1$, safety is equivalent to $f(x)=\lambda x$ with
  $\lambda\ge0$; hence the power law $f_\delta(x)=\operatorname{sgn}(x)|x|^\delta$
  is safe if and only if $\delta=1$.  Every witness is a bounded finite
  round trip and fits inside any prescribed rate cap.
\item \emph{Complete positivity after memory}
  (\Cref{thm:outside-cp}, \Cref{thm:mixed-classification}).  For a scalar
  state-outside model, nonnegative cost for every input and every
  $h\in\readouts$ is equivalent to complete positivity of $G$: the Volterra
  resolvents $r_a+aG*r_a=aG$ and $s_a+aG*s_a=1$ are nonnegative for every
  $a>0$.  The converse is constructive and is already detected by one smooth
  dead-zone readout.  In the mixed model, universal safety forces
  $f=\lambda\id$ and $\lambda G$ completely positive.  Since every completely
  monotone kernel, in particular $t^{-\gamma}$, is completely positive,
  square-root impact applied after power-law memory is safe for every input
  (\Cref{cor:concave-after-powerlaw}).
\item \emph{Round trips versus all inputs}
  (\Cref{thm:remote-compensation}, \Cref{thm:permanent-quotient}).  If the kernel
  tail vanishes uniformly, a remote slow block closes any input at
  asymptotically zero cost, so round-trip safety on all horizons equals
  all-input safety.  If $G\to g\ne0$, the two notions differ by the storage
  quotient $F_h(gm)/g$, $F_h'=h$, $m=\int v$, which keeps the sign of $g$.
\item \emph{Permanent shifts and Prony kernels}
  (\Cref{thm:licm-shifts}, \Cref{thm:stable-complement}, \Cref{thm:two-mode}, \Cref{prop:signed-three-mode}, \Cref{thm:finite-prony-hierarchy}).
  Every real permanent shift of a completely monotone transient is safe on
  round trips for every readout, including shifts under which the kernel
  changes sign.  Within a stable inverse class, safety is equivalent to a
  nonnegative nonincreasing inverse density.  For
  $G=g+a_1e^{-\lambda_1t}+a_2e^{-\lambda_2t}$ the safe set is described
  completely: either $a_1,a_2\ge0$, or $gG(0+)>0$ and the inverse density
  passes an explicit residue test.  Signed safe kernels exist; a decaying
  completely positive tail with residue signs $+,-,+$ exists at three modes
  and not at two; for general finite-Prony kernels with nonzero
  instantaneous gain and no imaginary-axis inverse poles, safety is an
  infinite hierarchy of finite cut linear programs, with every failure
  certified by a finite round trip.
\item \emph{Nonstationary and matrix memory}
  (\Cref{thm:time-inhomogeneous}, \Cref{thm:vector-remote}, \Cref{thm:stable-matrix}).
  For $D'=-\rho(t)D+b(t)v$, $b>0$, safety for all readouts is equivalent to
  $\rho\ge0$ and $b'+\rho b\ge0$ pointwise.  For stable fully actuated
  matrix dynamics $D'=-AD+Bv$ and a fixed $C^1$ readout $h$, round-trip
  safety on all horizons is equivalent to $B^{-\transpose}h$ being
  conservative with $(B^{-\transpose}h(x))^\transpose Ax\ge0$; the
  round-trip statement, not the all-input one, is what is new.
\item \emph{Fixed readouts, friction, and complexity}
  (\Cref{thm:vertical-reachability}, \Cref{thm:multipower-repair}, \Cref{thm:fractional-two-block-phase}, \Cref{prop:constructive-block-bound}).
  A bounded dead zone cannot hide a nonlinear $f$ under power-law memory.
  Under a common rate cap, a power penalty $\kappa\int\Psi(v)$ repairs the
  model with a finite coefficient exactly when the smallest exponent of
  $\Psi$ is at most $1+\delta$; without a common cap, such a finite coefficient
  exists exactly when the exponents of $\Psi$ bracket $1+\delta$.  A bid--ask
  spread alone cannot repair the uncapped model, a quadratic cost alone cannot
  repair concave impact, and the two together repair every concave law.  Within
  two-block strategies the
  manipulable set is $\delta<1-\gamma$ or $\delta>\delta_+(\gamma)$, with
  $\delta_+(\gamma)=\infty$ exactly when $\gamma\ge2-\log_23$.  The
  sublinear boundary is Gatheral's slow-rate bound, while the same constant
  $2-\log_23$ occurs in his separate maximal-rate blow-up regime.  Every
  manipulable point outside the two-block regions needs at least three blocks;
  near $\delta=1$,
  $O(|\delta-1|^{-2/(1-\gamma)})$ blocks suffice.
\end{enumerate}

\Cref{fig:phase} draws the consequence for the power-law family.  The
calibrated pairs $(\delta,\gamma)\approx(0.6,0.4)$ and $(0.5,0.5)$ of
\citet{Gatheral2010}, with $\gamma\approx0.4$ from
\citet{BouchaudGefenPottersWyart2004} and $\delta\approx1/2$ from the
square-root law, all lie in the manipulable set, as does every other pair
with $\delta\ne1$.  The rate-inside model with concave impact is therefore
not a description of a manipulation-free market.  The theorems also say
what repairs it without giving up either stylized fact: apply the concave
law to the impact state rather than to the rate.  For the power-law kernel, or any completely monotone
kernel, every monotone readout is safe for every input, and the kernel
condition that this architecture needs in general is complete positivity.

\subsection{Relation to prior work}\label{sec:prior-work}

\paragraph{Rigidity before memory.}
That nonlinear instantaneous impact creates manipulation is known for
permanent impact \citep{HubermanStanzl2004}, for exponential decay
\citep[Lemma~4.1]{Gatheral2010}, for nonincreasing kernels with
$G(0+)<\infty$ \citep[Proposition~1]{GatheralSchiedSlynko2011}, and for
bounded kernels with cross impact \citep[Lemma~3.5]{SchneiderLillo2019}.
These continuous-time arguments compress a two-block round trip into a
horizon on which the kernel is effectively permanent.  In discrete time,
\citet*[Theorem~2.4]{HeyNeumanTuschmann2025} obtain an all-input rigidity
result for kernels continuous at zero by perturbing a three-trade
configuration; the resulting negative witness has nonzero total volume and
is not a round trip.  Both mechanisms rely on a finite value or continuity
at the origin and are unavailable for the power-law kernel $t^{-\gamma}$,
which is unbounded at zero and scale invariant.  \citet*{HeyNeumanTuschmann2025}
explicitly restrict their result to nonsingular kernels;
\citet{CuratoGatheralLillo2017} report negative expected costs numerically
inside the region $\delta+\gamma\ge1$ allowed by Gatheral's slow-rate
condition and conclude that the concave model is misspecified;
\citet{SchneiderLillo2019} record the consistency of power-law decay with a
nonlinear impact function as an open problem; \citet{AbiJaberEtAl2025}
describe the compatibility of square-root impact with power-law decay as a
long-standing open problem.  \Cref{thm:exact} replaces the short-horizon
argument by a zero-volume chattering pump read out by two thin baseline
trades.  It works for every nonzero $L^1$ Volterra kernel, singular or not,
signed or not, and for every finite-valued $f$ without regularity.  For the
homogeneous power-law family, \citet[Lemma~5.1]{Gatheral2010} derives the
slow-rate two-block bound $\delta+\gamma\ge1$.  His Lemma~5.2 and
Appendix~A treat a different impact law that diverges at a maximal rate and
produce the constant $2-\log_23$; \Cref{thm:fractional-two-block-phase}
shows where that same constant enters the superlinear two-block phase of the
fixed power-law family.  Chattering controls and their weak-star limits are
classical in relaxed control theory \citep{Young1969,Artstein1989}; the new
point is that a zero-rate, nonzero-impact moment survives an arbitrary
integrable kernel and can be read out by ordinary trades.

\paragraph{Complete positivity after memory.}
Completely positive kernels, their resolvent characterization, and the
complementary kernel $L*G=1$ with nonnegative nonincreasing density are
classical \citep[Definition~1.1 and Theorem~2.2]{ClementNohel1981}, as is the
convolution chain rule used for sufficiency \citep{Zacher2008,VergaraZacher2015}.
Complete accretivity and the characterization of monotone-cone
inequalities by truncations go back to \citet{BenilanCrandall1991}, and the
finite isotone cone we use for cut generation is described in
\citet{Ubhaya2001}.  Nonlinear models in which the nonlinearity follows a
resilient state include the general-shape limit-order-book models of
\citet{AlfonsiFruthSchied2010} and \citet{AlfonsiSchied2010}, the concave
impact model with decay of \citet*{HeyEtAl2025}, the nonlinear propagator of
\citet{AbiJaberEtAl2025}, the market-resistance model of
\citet{DeCarvalhoEtAl2026}, and the concave cross-impact model of
\citet{HeyMastromatteoMuhleKarbe2026}.  These works optimize or establish
absence of manipulation for a selected law and kernel.  The contribution
here is the converse: nonnegative cost for every finite input and every
monotone readout forces complete positivity, and each failure is witnessed
by a finite negative-cost input, which remote compensation closes to a
finite round trip whenever the kernel tail vanishes uniformly.

\paragraph{Linear and multivariate kernels.}
Linear resilient-state benchmarks include \citet{ObizhaevaWang2013} and
\citet{AlfonsiSchiedSlynko2012}.  More generally, absence of manipulation is
governed by positive definiteness of the kernel
\citep{GatheralSchiedSlynko2012}, by
matrix-valued positive definite functions in several assets
\citep{AlfonsiKlockSchied2016}, and by operator positivity for general
propagators \citep{AbiJaberNeuman2022}.  The matrix-valued positive-type
and Laplace--Fourier characterization is classical
\citep[Chapter~16, Theorems~2.4 and~2.6]{GripenbergLondenStaffans1990};
see \citet{GatheralSchied2013} and \citet{Webster2023} for surveys.
\Cref{thm:exact} shows that, on the
rate-inside side, this quadratic problem restricted to the round-trip
subspace is the only one that survives nonlinearity.

\paragraph{Time-varying memory.}
For first-order memory with time-dependent gain and resilience,
\citet{FruthSchoenebornUrusov2014} and \citet{AlfonsiInfante2014} give
coefficient conditions for absence of manipulation under linear or
prescribed order-book shapes.  \Cref{thm:time-inhomogeneous} shows that
quantifying over all monotone readouts separates their combined conditions
into $\rho\ge0$ and $b'+\rho b\ge0$ pointwise, for round trips and
without a forgetting assumption.

\paragraph{Matrix memory and dissipativity.}
The storage--supply formulation of dissipativity is due to
\citet{Willems1972}, and the all-input equations $B^{\transpose}\nabla V=h$,
$\nabla V^\transpose Ax\ge0$ for fully actuated nonlinear systems are the
zero-feedthrough passive case of \citet{Moylan1974},
\citet{HillMoylan1976}, and \citet[Theorem~17]{HillMoylan1980}.
\citet[Definition~8]{HillMoylan1980} require the internal state to return in
a dissipative cycle; a trading round trip instead closes input volume and
can leave the memory state open.
\citet{HeyMastromatteoMuhleKarbe2026} derive fast-loop conservativity
restrictions in an additive concave cross-impact model.  The new step here
is \Cref{thm:vector-remote}: vector round trips, which leave the memory
state open, are as strong as all inputs once every horizon is allowed, so
the fixed-readout matrix classification is global and uncapped.

\subsection{Scope}\label{sec:scope}

Each theorem states its quantifiers: all inputs or round trips, one
readout or the class $\readouts$, one horizon or all horizons, and whether a
witness is cap-local.  The two-mode, time-inhomogeneous, and stable matrix
theorems are exhaustive under their hypotheses.  The finite-Prony hierarchy
is an infinite characterization with finite certificates on the unsafe side
only.  The conic-complement and storage-matching conditions of
\Cref{sec:matrix} are sufficient certificates.  Nothing here rejects
empirically concave impact; the results locate where concavity can sit in a
frictionless transient model and what friction is needed elsewhere.

\subsection{Organization}

\Cref{sec:framework} fixes the admissible classes and the quantifier map.
\Cref{sec:rate-inside} proves rigidity before memory and gives explicit
witnesses at the calibrated exponents.  \Cref{sec:outside} proves the
complete-positivity classification.  \Cref{sec:roundtrip-storage} relates
round trips to all inputs, and \Cref{sec:prony} classifies permanent shifts
and Prony kernels.  \Cref{sec:time-inhomogeneous,sec:matrix} treat
nonstationary and matrix memory.  \Cref{sec:friction-complexity} records
what survives with a fixed readout, which frictions repair the model, and
how many blocks a manipulation needs.  Proof details, separators, and the
computational supplement are described in the appendices.

\paragraph{AI-assisted research.}
This paper was developed with substantive assistance from generative AI.
The systems used and their roles are described in \Cref{sec:ai-use}.

\section{Framework and quantifiers}\label{sec:framework}

\subsection{Controls, horizons, and round trips}

For $T>0$, let $\pc(T;\R^d)$ denote the functions
$v:[0,T]\to\R^d$ that are constant on each member of a finite interval
partition.  The scalar case is $d=1$.  The round-trip class is
\begin{equation}\label{eq:rt-class}
 \rt(T;\R^d)
 =\left\{v\in\pc(T;\R^d):\int_0^T v(t)\dd t=0\right\}.
\end{equation}
If a cap $c>0$ is imposed, the scalar constraint is $|v|\le c$ and the
vector constraint is $\|v\|\le c$ for a fixed norm.  ``Cap-local'' means
that a negative witness can be constructed inside every prescribed positive
cap; it does not mean that safety inside one cap determines a readout outside
the reachable state region.

All states start from zero prehistory.  Unless a theorem explicitly fixes a
horizon, safety is required for every finite $T$.  This all-horizon
quantifier is needed whenever a remote compensation block is used.

\subsection{Scalar architectures}

Let $H\in L^1(\Delta_T)$ be a real Volterra kernel on
$\Delta_T=\{(t,s):0<s<t<T\}$.  For a finite-valued function $f$ on the
accessible rate set, define the rate-inside cost
\begin{equation}\label{eq:rate-inside-cost}
 \cost^{\mathrm{in}}_{H,f,T}[v]
 =\int_0^T v(t)\int_0^tH(t,s)f(v(s))\dd s\dd t.
\end{equation}
No measurability or continuity of $f$ is needed because an individual
$v\in\pc$ takes only finitely many values.

For a real $G\in L^1_{\mathrm{loc}}(0,\infty)$, let
\begin{equation}\label{eq:convolution-state}
 D_v(t)=(G*v)(t)=\int_0^tG(t-s)v(s)\dd s.
\end{equation}
The state-outside and mixed costs are
\begin{align}
 \cost^{\mathrm{out}}_{G,h,T}[v]
 &=\int_0^Tv(t)h(D_v(t))\dd t,
 \label{eq:outside-cost}\\
 \cost^{\mathrm{mix}}_{G,f,h,T}[v]
 &=\int_0^Tv(t)h\bigl((G*f(v))(t)\bigr)\dd t.
 \label{eq:mixed-cost}
\end{align}
The universal scalar readout class is $\readouts$ from
\eqref{eq:intro-readouts}.  We also use the analytic subclass
\begin{equation}\label{eq:analytic-readouts}
 \readouts_{\mathrm{an}}
 =\{h\in\readouts:h\text{ is real analytic and strictly increasing}\}.
\end{equation}

\begin{definition}[Safety quantifiers]\label{def:safety}
For a fixed architecture, kernel, and readout class:
\begin{enumerate}[label=(\roman*)]
\item \emph{all-input safety on $[0,T]$} means nonnegative cost for every
  $v\in\pc(T)$;
\item \emph{round-trip safety on $[0,T]$} means nonnegative cost for
  every $v\in\rt(T)$;
\item \emph{universal-readout safety} means that the corresponding
  inequality holds for every $h\in\readouts$;
\item \emph{global safety} means the printed inequality holds on every
  finite horizon.
\end{enumerate}
\end{definition}

\subsection{Complete positivity}

For $G\in L^1(0,T)$ and $a>0$, let $r_a$ and $s_a$ be the Volterra
resolvents
\begin{equation}\label{eq:resolvent-def}
 r_a+aG*r_a=aG,\qquad s_a+aG*s_a=1.
\end{equation}

\begin{definition}[Complete positivity]\label{def:cp}
The kernel $G$ is \emph{completely positive on $[0,T]$} if, for every
$a>0$, the resolvents in \eqref{eq:resolvent-def} satisfy
$r_a\ge0$ and $s_a\ge0$ almost everywhere on $(0,T)$.
It is globally completely positive if this holds on every finite horizon.
\end{definition}

For nonzero scalar kernels, we use the fixed-horizon complementary-kernel
characterization of \citet[Theorem~2.2]{ClementNohel1981}; see also
\citet[Lemma~2.2]{FengLi2023}:
\begin{equation}\label{eq:complement-form}
 L=\beta\delta_0+\ell(t)\dd t,\qquad
 \beta\ge0,\quad \ell\ge0,\quad \ell\text{ nonincreasing},\qquad
 L*G=1.
\end{equation}
The atom is allowed to vanish, and locally integrable singular kernels are
included.  The zero kernel is completely positive in the resolvent sense but
has no convolution inverse satisfying \eqref{eq:complement-form}; it is
therefore split off whenever the complementary representation is invoked.
Every completely monotone kernel is completely positive
\citep{ClementNohel1981}; \Cref{thm:licm-shifts} below reproves this in the
form needed here.

For $h\in\readouts$, define the convex primitive and its Bregman divergence
\begin{equation}\label{eq:F-bregman}
 F_h(x)=\int_0^x h(z)\dd z,\qquad
 \mathcal B_h(x,y)=F_h(x)-F_h(y)-h(y)(x-y)\ge0.
\end{equation}
The inequality is monotonicity of $h$.  It is the basic dissipation term
throughout the scalar theory.

\subsection{The quantifier map}

\Cref{tab:quantifiers} lists, for each classification, which inputs,
horizons, and readouts are quantified over and which object is
characterized.  Moving between its columns requires a separate argument;
\Cref{sec:roundtrip-storage} supplies the one between round trips and all
inputs.

\begin{table}[ht]
\centering
\small
\caption{Quantifiers controlling the classifications.}
\label{tab:quantifiers}
\begin{tabular}{@{}L{0.21\textwidth}L{0.24\textwidth}
                L{0.24\textwidth}L{0.20\textwidth}@{}}
\toprule
Branch & Inputs and horizons & Readouts & Characterized object \\
\midrule
Rate-inside & round trips; fixed horizon & identity output & affine $f$, intercept condition, and quadratic-form gate \\
Scalar state-outside & all inputs; fixed horizon & all $\readouts$, all $\readouts_{\rm an}$, or one dead-zone law & complete positivity \\
Uniformly vanishing tail & all horizons; all inputs versus round trips & one fixed normalized law, then universalized & the two input classes coincide \\
Permanent limit & all horizons & one fixed normalized law, then universalized & storage quotient $F_h(gm)/g$ \\
Two-mode Prony & round trips; all horizons & every $h\in\readouts$ & two-branch classification \\
Time-inhomogeneous & all inputs and round trips; all horizons & universal or analytic class & $\rho\ge0$, $b'+\rho b\ge0$ \\
Stable matrix first order & all inputs and vector round trips; all horizons & one fixed $C^1$ vector law & conservative one-form and drift dissipation \\
\bottomrule
\end{tabular}
\end{table}

\subsection{Two closure principles}

Two closure facts are used repeatedly.

\begin{lemma}[Finite-step closure]\label{lem:finite-step-closure}
Let $G\in L^1(0,T)$, let $h$ be continuous, and suppose a bounded measurable
input $v$ has strictly negative state-outside cost.  Then some finite
piecewise-constant input has strictly negative cost.  If $v$ has zero
integral, the approximants can be corrected on one interval so that they are
round trips, without losing strict negativity.  The same statement
holds componentwise for a locally integrable matrix kernel and continuous
vector readout.
\end{lemma}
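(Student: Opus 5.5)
The plan is to show that the state-outside cost $\cost^{\mathrm{out}}_{G,h,T}$ is continuous on bounded subsets of $L^\infty(0,T)$ with respect to $L^1$ convergence. We then approximate $v$ in $L^1$ by piecewise-constant inputs with a common bound. Fix $M\ge\|v\|_\infty$. By density of step functions in $L^1(0,T)$, truncated at level $M$ (truncation is a contraction and preserves piecewise constancy), we obtain $v_n\in\pc(T)$ with $|v_n|\le M$ and $v_n\to v$ in $L^1$.

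The key estimate is Young's inequality on $[0,T]$:
\[
\|G*(v_n-v)\|_{L^\infty(0,T)}\le \|G\|_{L^1(0,T)}\,\|v_n-v\|_{L^\infty}
\]
is not available, so we use instead
\[
\|G*(v_n-v)\|_{L^1(0,T)}\le\|G\|_{L^1}\|v_n-v\|_{L^1}.
\]
Hence $D_{v_n}\to D_v$ in $L^1$, and along a subsequence almost everywhere. The states are also uniformly bounded, since $|D_{v_n}|\le M\|G\|_{L^1}$. On the compact interval $[-M\|G\|_{L^1},M\|G\|_{L^1}]$, $h$ is bounded and uniformly continuous, so $h(D_{v_n})\to h(D_v)$ a.e. with a uniform bound.

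We then split
\[
\cost[v_n]-\cost[v]=\int (v_n-v)\,h(D_{v_n})+\int v\,\bigl(h(D_{v_n})-h(D_v)\bigr).
\]
The first term is bounded by $\sup|h|\cdot\|v_n-v\|_{L^1}$. The second tends to zero by dominated convergence. So $\cost[v_n]\to\cost[v]<0$, and some $v_n$ has strictly negative cost.

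For the round-trip correction, suppose $\int v=0$, and set $\epsilon_n=\int v_n=\int(v_n-v)\to0$. Pick one fixed interval $I\subset[0,T]$ of positive length (for instance the last cell of a common refinement, or simply $[T/2,T]$ after refining the partition). Replace $v_n$ by $\tilde v_n=v_n-(\epsilon_n/|I|)\one_I$. This remains in $\pc(T)$, has zero integral, satisfies $|\tilde v_n|\le M+1$ for large $n$, and $\|\tilde v_n-v\|_{L^1}\le 2\|v_n-v\|_{L^1}\to0$. The continuity argument above, run with bound $M+1$, gives $\cost[\tilde v_n]\to\cost[v]<0$.

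For the matrix case, $G$ is a locally integrable matrix kernel on $[0,T]$, so it lies in $L^1(0,T)$ entrywise. The same Young bound holds entrywise, with $v\cdot h(D)$ replacing the scalar product. The round-trip correction is applied componentwise on the same interval $I$.

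The only delicate point is that $G$ may be singular, so uniform convergence of states fails. This is why I would use the uniform $L^\infty$ bound on states together with a.e. convergence and dominated convergence, rather than sup-norm estimates. Each ingredient is routine, so I expect no genuine obstacle.
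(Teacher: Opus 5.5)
Your proof is correct, but it reaches convergence of the costs by a different route from the paper. You control the states only in $L^1$ (Young's inequality $L^1*L^1\to L^1$). You add the uniform bound $|D_{v_n}|\le M\|G\|_{L^1}$, pass to an a.e.\ convergent subsequence, and finish with dominated convergence after splitting the cost difference.

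The paper instead proves that the states converge \emph{uniformly}. Splitting $G=K+(G-K)$ with $K$ bounded gives $\|G*(v_n-v)\|_\infty\le\|K\|_\infty\|v_n-v\|_1+2B\|G-K\|_1$. Hence $\limsup_n\|G*(v_n-v)\|_\infty\le 2B\|G-K\|_1$, which is arbitrarily small. Uniform continuity of $h$ on the common compact state range then gives convergence of the costs along the full sequence. The same sup-norm estimate shows that the $o(1)$ volume correction on a fixed interval changes the cost by $o(1)$.

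Your closing remark that uniform convergence of the states fails for singular $G$ is therefore mistaken. The common $L^\infty$ bound on the inputs is exactly what makes the bounded-plus-$L^1$-small splitting work, even for $G(r)=r^{-1/2}$. You never use that remark, so nothing in your argument breaks.

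Each route has its merit. Yours needs only Young's inequality and dominated convergence; the paper itself uses essentially your version when it closes the resolvent probes in \Cref{app:scalar-converse}. The paper's estimate gives a stronger conclusion: sup-norm state convergence of the whole sequence, no subsequence extraction, and cost convergence from uniform continuity of $h$ alone. It also transfers verbatim to the matrix case with operator and dual norms.
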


\begin{proof}
Choose interval-step functions $v_n$ with
$\|v_n-v\|_1\to0$ and a common $L^\infty$ bound.  For every bounded kernel $K$, if $\|v_n\|_\infty,\|v\|_\infty\le B$,
\[
 \limsup_n\|G*(v_n-v)\|_\infty
 \le\limsup_n\bigl(\|K\|_\infty\|v_n-v\|_1
   +2B\|G-K\|_1\bigr)=2B\|G-K\|_1.
\]
Now approximate $G$ in $L^1$ by bounded kernels; the state error tends to zero.
Continuity of $h$ on the common compact state range then gives convergence
of the costs.  An $o(1)$ volume error is removed on an interval whose length
is fixed and whose rate correction is $o(1)$; the same estimate shows that
the cost changes by $o(1)$.  The vector proof uses operator and dual norms.
\end{proof}

\begin{lemma}[Cap locality by joint scaling]\label{lem:cap-scaling}
Whenever a strict negative witness is obtained from a state path and a
continuous monotone readout whose relevant threshold can be scaled, the
witness can be placed inside every prescribed positive symmetric rate cap by
scaling the control, state, and readout together.  Statements involving one
fixed dead-zone threshold are excluded unless separately proved.
\end{lemma}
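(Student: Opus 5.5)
The plan is to exploit the homogeneity of the state-outside cost under simultaneous rescaling of rate, state, and readout argument. Suppose a strict negative witness is given: an input $v\in\pc(T)$ (or $\rt(T)$), a kernel $G$, and a readout $h\in\readouts$ whose shape is determined by a threshold or scale parameter, with $\cost^{\mathrm{out}}_{G,h,T}[v]<0$. Fix a cap $c>0$. For $\varepsilon>0$ with $\varepsilon\|v\|_\infty\le c$, I put $v_\varepsilon=\varepsilon v$, so that $v_\varepsilon$ stays piecewise constant and, if $v$ is a round trip, so is $v_\varepsilon$. By linearity of the convolution, $D_{v_\varepsilon}=\varepsilon D_v$. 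I define the scaled readout $h_\varepsilon(x)=h(x/\varepsilon)$, which is again continuous, nondecreasing, and vanishes at zero; if $h$ is a threshold (dead-zone) law with threshold $\theta$, then $h_\varepsilon$ is the same law with threshold $\varepsilon\theta$, which is the meaning of ``the relevant threshold can be scaled''. Then
\[
 \cost^{\mathrm{out}}_{G,h_\varepsilon,T}[v_\varepsilon]
 =\int_0^T\varepsilon v(t)\,h\bigl(\varepsilon D_v(t)/\varepsilon\bigr)\dd t
 =\varepsilon\,\cost^{\mathrm{out}}_{G,h,T}[v]<0 ,
\]
so the witness survives inside the cap with the same sign and the same horizon. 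The mixed architecture is treated the same way when $f$ is positively homogeneous, or by absorbing the scaling into the readout argument. For the vector case, the same computation goes through with $\|\cdot\|$ in place of $|\cdot|$ and $h_\varepsilon(x)=h(x/\varepsilon)$ componentwise.

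When the statement is universal over a class of readouts, as in \Cref{def:safety}(iii), the class $\readouts$ (and $\readouts_{\mathrm{an}}$) is closed under $h\mapsto h(\cdot/\varepsilon)$, so failure of universal safety is witnessed within every cap. If one also wants to keep the readout in a normalized form, one can rescale vertically by a positive constant, $h\mapsto \varepsilon^{-1}h$, which multiplies the cost by a positive factor and preserves both membership in $\readouts$ and the sign of the cost.

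The main point to watch, and the reason for the exclusion clause in the statement, is that this argument changes the readout. If the threshold of $h$ is fixed, then $h_\varepsilon\ne h$, and shrinking $v$ may push the state into the dead zone, where the cost vanishes. So I would state explicitly that the conclusion concerns the pair (control, readout) jointly, and that cap locality for one fixed threshold needs a separate reachability argument, of the kind that \Cref{thm:vertical-reachability} later provides.
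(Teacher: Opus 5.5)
Your argument is the paper's: scale $v\mapsto\varepsilon v$, use linearity to get $D\mapsto\varepsilon D$, and rescale the readout's argument. The paper takes $\varepsilon h(x/\varepsilon)$, which keeps slopes fixed and gives the factor $\varepsilon^2$ instead of your $\varepsilon$; the difference is immaterial. You also explain the fixed-threshold exclusion exactly as the paper does. One caution: your aside on the mixed model is valid only for positively homogeneous $f$, since a general $f(\varepsilon v)$ cannot be absorbed into the readout argument. That aside lies outside the lemma, which concerns state paths and readouts.
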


\begin{proof}
For $\varepsilon>0$, replace $v$ by $\varepsilon v$ and define
$h_\varepsilon(x)=\varepsilon h(x/\varepsilon)$, or the corresponding
vector radial scaling.  The state scales by $\varepsilon$, the sign of the
cost is unchanged up to the positive factor $\varepsilon^2$, and
$\varepsilon$ is chosen to meet the cap.  A fixed readout cannot in general
be rescaled, which explains the printed exception.
\end{proof}

One further observation is used for all-horizon mixed statements.  If a
locally integrable convolution kernel is nonzero on some horizon and a mixed
theorem forces $f(x)=\lambda x$ there, the same $\lambda$ applies on every
horizon, because $f$ is common to all horizons; substitution reduces every
other horizon to the effective linear kernel $\lambda G$, and horizons on
which the kernel vanishes have identically zero state.

\section{Rate-inside rigidity}\label{sec:rate-inside}

\subsection{Model and results}\label{sec:rate-inside-model}

Fix a horizon $T>0$ and set
\[
 \DeltaT=\{(t,s):0<s<t<T\}.
\]
The kernel $H:\DeltaT\to\R$ is measurable and belongs to
$L^1(\DeltaT)$.  Unless explicitly specialized, it need not be positive,
translation invariant, monotone, or continuous.  We extend $H$ by zero to
$(0,T)^2$ whenever a product-space argument is used.

A control belongs to $\pc=\pc(T;\R)$ if it is real-valued and constant on
each member of some finite partition of $[0,T]$, and $\rt=\rt(T;\R)$ is the
round-trip class \eqref{eq:rt-class}; both are written without their
arguments throughout this section.  All plateau values and partition points
are arbitrary real numbers.  There is no common bound on the number of
pieces and no common minimum dwell time.

For any function $f:\R\to\R$ and $v\in\pc$, the composition $f\circ v$
has finite range and is measurable.  The impact cost
\eqref{eq:rate-inside-cost}, abbreviated in this section to
\begin{equation}\label{eq:cost}
 C_{H,f}[v]
 =\int_0^T v(t)\int_0^t H(t,s)f(v(s))\dd s\dd t
 =\int_{\DeltaT}H(t,s)v(t)f(v(s))\dd s\dd t,
\end{equation}
is therefore an absolutely convergent integral even if $f$ has no
regularity.  In the price model \eqref{eq:intro-inside}, \eqref{eq:cost} is
the impact component of implementation shortfall.  The unaffected initial price cancels on a round
trip; an unaffected martingale also contributes zero in expectation under
the usual predictability and integrability assumptions.  We isolate
\eqref{eq:cost}, since the theorem is deterministic and does not require a
probabilistic price model.

\begin{definition}[Price manipulation]\label{def:manipulation}
A control $v\in\rt$ is a price manipulation for $(H,f)$ if
$C_{H,f}[v]<0$.  The pair $(H,f)$ is \emph{safe} on a class of round trips if
the cost is nonnegative on every member of that class.
\end{definition}

Introduce the row integral and the linear quadratic form
\begin{equation}\label{eq:AHQH}
 A_H(t)=\int_0^t H(t,s)\dd s\in L^1(0,T),
 \qquad
 Q_H[v]=\int_{\DeltaT}H(t,s)v(t)v(s)\dd s\dd t.
\end{equation}

\begin{theorem}[Affine rigidity]\label{thm:exact}
Let $H\in L^1(\DeltaT;\R)$ be nonzero and let $f:\R\to\R$ be arbitrary.
The following are equivalent:
\begin{enumerate}[label=\textup{(\roman*)}]
\item $C_{H,f}[v]\ge0$ for every $v\in\rt$;
\item there exist $c,\lambda,\kappa\in\R$ such that
\begin{equation}\label{eq:affine-classification}
 f(x)=c+\lambda x\quad(x\in\R),
 \qquad cA_H(t)=\kappa\quad\text{for a.e. }t,
\end{equation}
and
\begin{equation}\label{eq:slope-sign}
 \lambda Q_H[v]\ge0\qquad(v\in\rt).
\end{equation}
\end{enumerate}
Necessarily $c=f(0)$.  In particular, if $f(0)=0$, safety forces
$f(x)=\lambda x$ on all of $\R$.
\end{theorem}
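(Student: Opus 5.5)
The sufficiency direction (ii)$\Rightarrow$(i) is a direct substitution. If $f(x)=c+\lambda x$, then
\[
C_{H,f}[v]=c\int_0^T v(t)A_H(t)\dd t+\lambda Q_H[v]=c\kappa\int_0^T v\dd t+\lambda Q_H[v]=\lambda Q_H[v]\ge0
\]
for $v\in\rt$.

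For the necessity direction I will first prove that $f$ is affine, using the chattering pump. Then I will extract the intercept and slope conditions by thin baseline trades.

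\emph{Step 1: the chattering pump.} Fix $x<0<y$ and $p\in(0,1)$ with $px+(1-p)y=0$. Fix an interval $I=(a,b)$. Let $u_n$ alternate between $x$ and $y$ on $I$ in $n$ equal cells, with proportions $p,1-p$ in each cell, and vanish off $I$. Each $u_n$ is a zero-volume element of $\pc$.

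As $n\to\infty$, $u_n\weakstar0$ and $f(u_n)\weakstar \mu\one_I+f(0)\one_{I^c}$ in $L^\infty$, where $\mu=pf(x)+(1-p)f(y)$. Tensor products of weak-star convergent sequences converge weak-star on $(0,T)^2$. Since $H\in L^1$, the self-cost $\int_{\DeltaT}H\,u_n(t)f(u_n(s))$ therefore tends to $0$.

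Now superpose a fixed round trip $w$ supported away from $I$. Mixed terms of the form $u_n(t)\times(\text{anything fixed in }s)$ vanish in the limit. The surviving terms give
\[
\lim_n C_{H,f}[u_n+w]=C_{H,f}[w]+\bigl(\mu-f(0)\bigr)\int_0^T w(t)\Phi_I(t)\dd t,\qquad \Phi_I(t)=\int_I H(t,s)\dd s .
\]
The subtraction of $f(0)$ arises because, without the pump, $w$ would already see $f(0)$ on $I$.

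\emph{Step 2: thin baseline readout.} Take $w=A(\one_{[t_1,t_1+\tau]}-\one_{[t_2,t_2+\tau]})$ with both windows after $I$. Its pumped-state term is $A\tau(\mu-f(0))(\Phi_I(t_1)-\Phi_I(t_2))+o(\tau)$. Its own cost involves only $\int|H|$ over $\tau$-neighbourhoods of the windows, and at suitable (Fubini/Lebesgue) points this is $o(\tau)$ uniformly for bounded $A$, including the extra $f(0)$ and $f(\pm A)$ factors.

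Choosing the sign of $A$ against the sign of $(\mu-f(0))(\Phi_I(t_1)-\Phi_I(t_2))$ and letting $\tau\to0$ gives a negative cost. Lemma~\ref{lem:finite-step-closure} is not even needed, since everything is already in $\rt$. Hence $\mu=f(0)$ for every zero-mean two-point law, provided some $I,t_1,t_2$ have $\Phi_I(t_1)\ne\Phi_I(t_2)$.

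The condition $p(f(x)-f(0))+(1-p)(f(y)-f(0))=0$ for all $x<0<y$ says that $(f(x)-f(0))/x=(f(y)-f(0))/y$ for all such $x,y$. This common value must be constant, so $f$ is affine with $c=f(0)$.

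\emph{Step 3: intercept and slope.} With $f=c+\lambda x$, the cost equals $c\int vA_H+\lambda Q_H[v]$. The first term is linear in the amplitude of $v$ and the second is quadratic. Scaling a thin two-window round trip by $\varepsilon\to0$ forces $\int vA_H=0$ for every round trip whenever $c\ne0$, that is, $cA_H=\kappa$ a.e. Then $\lambda Q_H\ge0$ on $\rt$ follows from (i).

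\emph{Main obstacle.} The degenerate kernels are the hard case: those for which every $\Phi_I$ is a.e.\ constant in $t$ on the region after $I$, for instance $H(t,s)=k(s)$. For these the two-window readout after the pump sees no contrast. My first attempt would be to let one baseline window lie before $I$ and the other after it, or to place the pump between the two windows, so that only one window sees the pumped state. The resulting imbalance should then be compared against the $f(0)A_H$ term, presumably together with the case split $f(0)=0$ versus $f(0)\ne0$. The $o(\tau)$ self-cost estimate for a singular $L^1$ kernel also needs care: I would choose $t_1,t_2$ as Lebesgue points of the marginals $t\mapsto\int|H(t,s)|\dd s$ and $s\mapsto\int|H(t,s)|\dd t$.
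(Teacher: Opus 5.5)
There is a genuine gap: Step~2 does not go through as written, for two related reasons.

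First, the own cost of $w$ is not $o(\tau)$ when $f(0)\neq0$. Off the windows, $f(w(s))=f(0)$ for \emph{all} $s<t$, not only for $s$ near the windows. Hence
$C_{H,f}[w]=f(0)\int_0^T wA_H\dd t+o(\tau)=A\tau f(0)\bigl(A_H(t_1)-A_H(t_2)\bigr)+o(\tau)$,
which is of the same order as your pumped contrast.

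Second, and decisively, with both windows after $I$ the contrast $\Phi_I(t_1)-\Phi_I(t_2)$ vanishes for every kernel $H(t,s)=k(s)$. This includes the permanent kernel $H\equiv1$, the most basic case the theorem must cover. So the ``degenerate'' case is not marginal, and your proposal leaves it open. The straddling configuration you mention is the right one, but you stop at ``should then be compared against the $f(0)A_H$ term, presumably together with the case split''.

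The missing idea is an ordering of the steps, not a case split. Prove the intercept condition \emph{first}, for arbitrary $f$, using the pump-free round trip $w=A(\one_{I_\tau(r)}-\one_{I_\tau(t)})$ with $r<t$.
\begin{itemize}
\item Write $g=f-f(0)$, so that $C_{H,f}[w]=C_{H,g}[w]+f(0)\int_0^TwA_H\dd t$.
\item The $g$-part is $o(\tau)$, because $g(w)$ vanishes off the windows.
\item The second part is $A\tau f(0)\bigl(A_H(r)-A_H(t)\bigr)+o(\tau)$, so both signs of $A$ force $f(0)A_H$ to be a.e.\ constant.
\end{itemize}
Your Step~3 scaling argument obtains this only after affineness is known; thinness obtains it for any $f$.

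Once $f(0)A_H=\kappa$ a.e., the intercept term equals $\kappa\int v=0$ on every round trip. You may then work with $g$, which has $g(0)=0$, and place the early window before $I$ and the late one after it.
\begin{itemize}
\item Causality hides the pump from the early window.
\item The pump's rate reads the early window's impact only through $u_n\weakstar0$.
\item The limit cost is $-A\tau\bigl(\mu-f(0)\bigr)\Phi_I(t_2)+o(\tau)$.
\end{itemize}
Because $H\neq0$, an off-diagonal Lebesgue point where $H\neq0$ supplies an interval $I$ and a.e.\ $t_2$ after it with $\Phi_I(t_2)\neq0$. So no degenerate case survives; with this reordering your plan becomes the paper's proof.

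One technical point remains. Lebesgue points of the marginals $t\mapsto\int|H(t,s)|\dd s$ only bound the diagonal self-interaction $\iint_{J\times J}|H|$ by $O(\tau)$, not $o(\tau)$. You need a genuine dilution argument. For example, use $\int_0^T\tau^{-1}\iint_{I_\tau(z)^2}|H|\dd z\le\iint_{|t-s|<\tau}|H|\to0$, and choose $\tau_n\downarrow0$ with $\sum_n\iint_{|t-s|<\tau_n}|H|<\infty$ to get a.e.\ convergence.
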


\begin{corollary}[Nonzero convolution kernels]\label{cor:convolution}
Suppose $H(t,s)=G(t-s)$ with $G\in L^1(0,T)$ and $G\ne0$ in
$L^1(0,T)$.  Then every safe impact law has zero intercept.  Consequently,
\begin{equation}\label{eq:convolution-classification}
 C_{H,f}\ge0\ \text{on }\rt
 \quad\Longleftrightarrow\quad
 f(x)=\lambda x\ \text{on }\R
 \ \text{and}\ \lambda Q_H\ge0\ \text{on }\rt.
\end{equation}
\end{corollary}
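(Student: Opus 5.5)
The plan is to deduce the corollary from \Cref{thm:exact}; the only new input is that the intercept condition $cA_H=\kappa$ is incompatible with $c\neq0$ when $H$ is a nonzero convolution kernel. By \Cref{thm:exact}, safety on $\rt$ is equivalent to $f(x)=c+\lambda x$ with $c=f(0)$, $cA_H(t)=\kappa$ for a.e.\ $t$, and $\lambda Q_H\ge0$ on $\rt$. The direction ``$\Leftarrow$'' in \eqref{eq:convolution-classification} is then immediate: take $c=\kappa=0$. So the work is to show that $c\ne0$ cannot occur.

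For a convolution kernel the row integral is the primitive
\[
 A_H(t)=\int_0^tG(t-s)\dd s=\int_0^tG(u)\dd u=:\mathcal G(t).
\]
This is absolutely continuous on $[0,T]$ with $\mathcal G(0)=0$. Suppose $c\neq0$. Then $\mathcal G(t)=\kappa/c$ for a.e.\ $t$, and by continuity this holds for every $t\in[0,T]$. Evaluating at $t=0$ gives $\kappa/c=\mathcal G(0)=0$, so $\mathcal G\equiv0$ on $[0,T]$.

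Since $\mathcal G$ is absolutely continuous, its a.e.\ derivative is $G$. Hence $G=0$ a.e.\ on $(0,T)$, which contradicts $G\ne0$ in $L^1(0,T)$. Therefore $c=f(0)=0$, so $f(x)=\lambda x$, and the remaining condition is exactly $\lambda Q_H\ge0$ on $\rt$.

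No step is a genuine obstacle. The one point needing care is passing from an a.e.\ identity to an identity at $t=0$, and continuity of the primitive handles it. The subsequent differentiation (Lebesgue differentiation of an $L^1$ primitive) is standard.
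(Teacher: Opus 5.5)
Your proof is correct and follows the paper's own argument: the row integral of a convolution kernel is the absolutely continuous primitive $\int_0^tG(u)\dd u$, so $cA_H$ constant with $c\ne0$ forces $G=A_H'=0$ almost everywhere, and the rest is \Cref{thm:exact} with $c=0$. Your evaluation at $t=0$ is harmless but not needed, since a constant absolutely continuous function already has zero derivative.
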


Indeed, $A_H(t)=\int_0^tG(r)\dd r$ is absolutely continuous.  If
$cA_H$ is constant and $c\ne0$, then $G=A_H'=0$ almost everywhere, a
contradiction.  Thus an affine intercept can survive only for a
time-inhomogeneous kernel, such as $H(t,s)=1/t$, whose row integral is
constant.

The slope condition matters for signed kernels.  If $Q_H\ge0$ on round trips
and is not identically zero there, the safe slopes are $\lambda\ge0$; if
$Q_H\le0$, they are $\lambda\le0$; if $Q_H$ takes both signs on round
trips, only $\lambda=0$ is safe; and if $Q_H\equiv0$ on round trips, every
slope is harmless.

For a common speed cap $B>0$, put
\begin{equation}\label{eq:capclass}
 \rt(B)=\{v\in\rt:\|v\|_\infty\le B\}.
\end{equation}

\begin{theorem}[Common speed cap]\label{thm:cap}
Let $H\ne0$ and $B>0$.  The cost is nonnegative on $\rt(B)$ if and only
if there exist $c,\lambda_B,\kappa\in\R$ such that
\begin{equation}\label{eq:cap-affine}
 f(x)=c+\lambda_Bx\quad(|x|\le B),
 \qquad cA_H=\kappa\quad\text{a.e.},
\end{equation}
and $\lambda_BQ_H[v]\ge0$ for every $v\in\rt$.  A common speed cap thus
localizes affine rigidity to the accessible rate interval; it does not make a
nonlinear law safe on that interval.
\end{theorem}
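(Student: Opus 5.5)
The plan is to rerun the proof of \Cref{thm:exact} with every control confined to $[-B,B]$, and to check that each witness it uses can be built inside the cap. This gives necessity. Sufficiency is immediate. If $f=c+\lambda_Bx$ on $[-B,B]$ and $v\in\rt(B)$, then
\[
C_{H,f}[v]=c\int_0^T v(t)A_H(t)\dd t+\lambda_BQ_H[v]=c\kappa\int_0^Tv+\lambda_BQ_H[v]=\lambda_BQ_H[v]\ge0 .
\]
Here only the values of $f$ on $[-B,B]$ enter, because $v$ takes its values there.

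For necessity, suppose $C_{H,f}\ge0$ on $\rt(B)$ and write $g(x)=f(x)-f(0)-x\,(\text{slope to be found})$. The proof of \Cref{thm:exact} rests on two devices, and I would check that both respect the cap.

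\emph{The chattering pump.} Take a zero-volume rapid alternation between two rates $x,y$ with dwell fractions $\theta,1-\theta$, where $\theta x+(1-\theta)y=0$. In the weak-star limit this produces the impact density $\theta f(x)+(1-\theta)f(y)$ at zero net rate. The rates used are $x$ and $y$ themselves. Choosing $|x|,|y|\le B$, and any $\theta\in(0,1)$ compatible with zero mean, the chattering controls stay in $\rt(B)$.

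\emph{The thin baseline trades.} The pump is read out by two thin trades of small volume. Making them thin in time only shrinks their volume. Their rates can be fixed at any level $\le B$, or even made small, with the volume then controlled by duration. I would run them at rates $\pm\eta$ with $\eta$ small and adjust the duration.

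With both devices inside the cap, the identity derived in \Cref{thm:exact} applies for $x,y\in[-B,B]$. It says the pumped mass $\theta f(x)+(1-\theta)f(y)-f(0)$ must vanish whenever the mean rate is zero, since the readout trade can take either sign. This is Jensen-type equality for all pairs with zero mean in $[-B,B]$, which gives $(f(x)-f(0))/x=(f(y)-f(0))/y$ for $x>0>y$. Hence $f$ is affine on $[-B,B]$ with some slope $\lambda_B$. The intercept condition $cA_H=\kappa$ comes from the same argument as in \Cref{thm:exact}, using the linear term $c\int vA_H$ on round trips with arbitrarily small rates. A round trip with $\int vA_H\ne0$ exists if $A_H$ is not a.e. constant, and scaling it keeps it in $\rt(B)$ while the quadratic part is $O(\varepsilon^2)$. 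Finally, $\lambda_BQ_H[v]\ge0$ for every $v\in\rt$ follows by applying safety to $\varepsilon v\in\rt(B)$: the cost equals $\varepsilon^2\lambda_BQ_H[v]$ once the intercept term vanishes, by homogeneity of $Q_H$.

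The main obstacle is making sure the readout step of \Cref{thm:exact} never needs rates beyond the pump rates. If that proof used large rates for the baseline trades, one would instead lengthen them in time at rate $\le B$. Their contribution is still controlled by $L^1$-continuity of $H$ and absolute continuity of the integral, since only volume, not rate, was made small. I expect this adjustment to go through because $H\in L^1(\DeltaT)$ and all error terms there are bounded by volumes times $\|H\|_{L^1}$-type quantities.
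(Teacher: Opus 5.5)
\textbf{Gap: the intercept must be settled before the pump readout.} Your sufficiency argument and your final scaling step for $\lambda_BQ_H\ge0$ match the paper. The paper's necessity is also ``rerun \Cref{thm:exact} inside the cap'', so the overall plan is the right one. The gap is the order in which you run the necessity steps.

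In the pump step you conclude that $n_0=\theta f(x)+(1-\theta)f(y)-f(0)$ vanishes ``since the readout trade can take either sign''. Write $c=f(0)$ and $C_{H,f}[v]=C_{H,f-c}[v]+c\int_0^T vA_H$. Take $r,\tau$ to be Lebesgue points of $A_H$. Then the two thin baselines $\pm\sigma a$ on $I_{h_n}(r)$ and $I_{h_n}(\tau)$ contribute $c\sigma a h_n\bigl(A_H(r)-A_H(\tau)\bigr)+o(h_n)$. This term has the same order $h_n$ as the pump readout $-\sigma a n_0 d\,h_n$, and it also flips sign with $\sigma$. Using both signs of $\sigma$ therefore yields only $c\bigl(A_H(r)-A_H(\tau)\bigr)=n_0d$, not $n_0=0$.

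Your later intercept argument cannot supply the missing fact, because the two steps are circular. The claim that $C_{H,f}[\varepsilon v]-c\varepsilon\int vA_H$ is $O(\varepsilon^2)$ already assumes $f$ is affine near $0$. For an arbitrary $f$, for example $f-f(0)=\one_{\R\setminus\{0\}}$, that remainder is of order $\varepsilon$, or worse if $f$ is unbounded near $0$. So your affineness step silently needs the intercept condition, and your intercept step explicitly needs affineness.

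\textbf{Fix.} Follow the paper's order.
\begin{enumerate}
\item Apply safety to the baseline-only round trips $\sigma a\one_{I_{h_n}(r)}-\sigma a\one_{I_{h_n}(\tau)}$ with $0<a\le B$ (\Cref{lem:intercept}). Their $(f-c)$-cost is $o(h_n)$ by the dilution estimates, for every finite-valued $f$ and with no regularity needed. Both signs of $\sigma$ then force $cA_H(r)=cA_H(\tau)$ for almost every pair, so $cA_H=\kappa$ a.e.
\item The intercept term now vanishes on every round trip, so $C_{H,f-c}\ge0$ on $\rt(B)$.
\item The capped clause of \Cref{prop:normalized} uses pump rates in $[-B,B]$ and a baseline amplitude $a<B$, and it gives $f-c=\lambda_B\id$ on $[-B,B]$.
\end{enumerate}
Your ``main obstacle'' is not actually an issue: the baselines never need rates beyond a fixed $a\le B$.
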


The financially central specialization is the power-law kernel.

\begin{theorem}[Fractional-kernel classification]\label{thm:fractional-main}
Let $0<\gamma<1$, let $H_\gamma(t,s)=(t-s)^{-\gamma}$, and let
$f:\R\to\R$ be arbitrary.  Then
\begin{equation}\label{eq:fractional-equivalence}
 C_{H_\gamma,f}[v]\ge0\quad\text{for every }v\in\rt
 \quad\Longleftrightarrow\quad
 f(x)=\lambda x\quad(x\in\R),\quad\lambda\ge0.
\end{equation}
Under a common speed cap $B$, the corresponding condition is
$f(x)=\lambda_Bx$ on $[-B,B]$ with $\lambda_B\ge0$.
\end{theorem}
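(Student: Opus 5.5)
The plan is to read \Cref{thm:fractional-main} off \Cref{cor:convolution} and \Cref{thm:cap}. The remaining work is to identify the sign of the linear quadratic form $Q_{H_\gamma}$ on round trips. First, $H_\gamma(t,s)=G_\gamma(t-s)$ with $G_\gamma(r)=r^{-\gamma}$. Since $\gamma<1$, $G_\gamma\in L^1(0,T)$ and $G_\gamma\ne0$, so $H_\gamma\in L^1(\DeltaT)$ is a nonzero convolution kernel. By \Cref{cor:convolution}, safety on $\rt$ is equivalent to $f(x)=\lambda x$ on $\R$ together with $\lambda Q_{H_\gamma}\ge0$ on $\rt$. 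Under a cap $B$, \Cref{thm:cap} gives $f(x)=c+\lambda_Bx$ on $[-B,B]$ with $cA_{H_\gamma}$ constant. Here $A_{H_\gamma}(t)=t^{1-\gamma}/(1-\gamma)$ is nonconstant, which forces $c=0$, exactly as in the corollary's argument. It then remains to show that the admissible slopes are exactly $\lambda\ge0$.

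Next I show that $Q_{H_\gamma}\ge0$ on all of $\pc$, not only on round trips. By symmetrization,
\[
 Q_{H_\gamma}[v]=\tfrac12\int_0^T\!\!\int_0^T|t-s|^{-\gamma}v(t)v(s)\dd s\dd t .
\]
The even function $|r|^{-\gamma}$ is positive definite: either its Fourier transform is $c_\gamma|\xi|^{\gamma-1}$ with $c_\gamma>0$, or, more elementarily, $|r|^{-\gamma}=\Gamma(\gamma)^{-1}\int_0^\infty u^{\gamma-1}e^{-u|r|}\dd u$, a positive mixture of the positive definite kernels $e^{-u|r|}$. Fubini is justified because $v$ is bounded and the kernel is integrable on $(0,T)^2$. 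This gives $Q_{H_\gamma}[v]\ge0$, so every $\lambda\ge0$ is safe.

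Finally I need a round trip with $Q_{H_\gamma}[v]>0$, which rules out $\lambda<0$. The route is strict positivity: the Fourier transform $c_\gamma|\xi|^{\gamma-1}$ is strictly positive off zero, so $Q_{H_\gamma}[v]=\tfrac{c_\gamma}{2}\int|\widehat v(\xi)|^2|\xi|^{\gamma-1}\dd\xi>0$ for every nonzero $v$. Alternatively, the mixture representation gives $Q_{H_\gamma}[v]>0$ as soon as the exponential form is positive for one $u$, and each exponential form equals a sum of squares that is strictly positive for $v\ne0$. Either way, any nonzero round trip works, for example $v=\one_{[0,T/2)}-\one_{[T/2,T]}$, scaled to meet the cap $B$ in the capped case. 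With \Cref{cor:convolution} and \Cref{thm:cap} taken as given, the only genuine step is this strict positive definiteness of $|r|^{-\gamma}$ on round trips, and the Laplace-mixture representation handles it cleanly.
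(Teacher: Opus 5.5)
Your proposal is correct and follows the paper's route. Like the paper, you kill the intercept through the nonconstant row integral $t^{1-\gamma}/(1-\gamma)$ (via \Cref{thm:exact} and \Cref{thm:cap}), and you get strict positivity of $Q_{H_\gamma}$ on every nonzero piecewise-constant $v$ from the Laplace mixture $r^{-\gamma}=\Gamma(\gamma)^{-1}\int_0^\infty\mu^{\gamma-1}e^{-\mu r}\dd\mu$, as in \Cref{prop:fractional-positive}. The only cosmetic difference is how each exponential form is shown positive: the paper uses the one-sided energy identity $\int_0^T v\,y_\mu=\tfrac12y_\mu(T)^2+\mu\int_0^Ty_\mu^2$, while you symmetrize and use positive definiteness of $e^{-u|r|}$.
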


\begin{corollary}[All homogeneous powers]\label{cor:all-powers}
For $0<\gamma<1$ and
\[
 f_\delta(x)=\operatorname{sgn}(x)|x|^\delta,\qquad\delta>0,
\]
there is no price manipulation if and only if $\delta=1$.  For every
$\delta\ne1$ there is a finite bounded piecewise-constant manipulation,
even under any prescribed positive common speed cap.
\end{corollary}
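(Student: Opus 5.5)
This is a direct specialization of \Cref{thm:fractional-main}, together with its capped form and \Cref{thm:cap}. The plan has three steps.

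\emph{Step 1: the case $\delta=1$.} Here $f_1=\id$. By \Cref{thm:fractional-main} with $\lambda=1\ge0$, safety reduces to $Q_{H_\gamma}[v]\ge0$ on round trips. This is the classical positive definiteness of the kernel $t^{-\gamma}$. It is completely monotone, so
\[
 \int_0^T\!\!\int_0^T |t-s|^{-\gamma}v(t)v(s)\dd s\dd t\ge0,
\]
and this double integral equals $2Q_{H_\gamma}[v]$. So no manipulation exists.

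\emph{Step 2: the case $\delta\ne1$.} The function $f_\delta$ is not of the form $\lambda x$ on $\R$. For instance, $f_\delta(2)=2^\delta\ne 2f_\delta(1)$ when $\delta\ne1$. By \Cref{thm:fractional-main}, some $v\in\rt$ has $C_{H_\gamma,f_\delta}[v]<0$. This $v$ is bounded and piecewise constant by definition of $\rt$.

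\emph{Step 3: placing the witness inside a cap $B>0$.} I use the capped part of \Cref{thm:fractional-main}, which rests on \Cref{thm:cap}. Safety on $\rt(B)$ would force $f_\delta(x)=\lambda_Bx$ on $[-B,B]$. That fails, since $f_\delta(B)/B=B^{\delta-1}\ne (B/2)^{\delta-1}=f_\delta(B/2)/(B/2)$. Hence a negative-cost witness exists in $\rt(B)$.

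A cross-check uses scaling. Replacing $v$ by $\varepsilon v$ multiplies the cost by $\varepsilon^{1+\delta}>0$, because $f_\delta$ is homogeneous. So any uncapped witness rescales into any cap; this is the homogeneous analogue of \Cref{lem:cap-scaling}. The time rescaling $v(\cdot/\sigma)$ also preserves the sign of the cost, since $H_\gamma$ is homogeneous, so the horizon is immaterial as well.

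The only nontrivial input is \Cref{thm:fractional-main}; the corollary itself is bookkeeping. The step to check carefully is that positive definiteness in Step 1 is stated for the symmetrized form. This holds because $Q_{H_\gamma}$ integrates over the triangle $\DeltaT$, which is exactly half the full-square integral, so the sign carries over.
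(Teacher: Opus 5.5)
Your proposal is correct and follows essentially the same route as the paper: $\delta=1$ is safe by the positivity of $Q_{H_\gamma}$, and for $\delta\ne1$ the law $f_\delta$ is nonlinear on every $[-B,B]$, so \Cref{thm:fractional-main,thm:cap} give a capped finite witness. Your complete-monotonicity argument in Step~1 is exactly the Laplace-mixture proof of \Cref{prop:fractional-positive}, and your homogeneity rescaling is the scaling remark of \Cref{sec:comb}.
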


\paragraph{Relation to earlier rigidity theorems.}
For kernels that are bounded near zero, the conclusion of \Cref{thm:exact}
is known from the continuous-time results reviewed in
\Cref{sec:prior-work}: each of those proofs compresses a two-block round
trip into a horizon $[0,\nu]$ on which $G\approx G(0)$, so that the
permanent-impact argument of \citet{HubermanStanzl2004} applies.  The
discrete-time all-input result of
\citet*[Theorem~2.4]{HeyNeumanTuschmann2025} likewise needs continuity at
zero, and its negative witness is not a round trip.  For
$H_\gamma(t,s)=(t-s)^{-\gamma}$ there is no finite $G(0)$ or continuous
origin to exploit: the kernel is unbounded at zero and invariant under time
scaling.  Within the homogeneous power-law family,
\citet[Lemma~5.1]{Gatheral2010} shows that slow-accumulation two-block
round trips are safe only if $\delta+\gamma\ge1$, while his Lemma~5.2 and
Appendix~A obtain $\gamma\ge2-\log_23$ for a different impact law that
blows up at a maximal rate.  That constant reappears below as the
transition at which the fixed-power-law superlinear threshold
$\delta_+(\gamma)$ becomes infinite; it is not a second necessary condition
on every fixed power law.  \Cref{thm:exact} settles the singular case with
a different mechanism: a chattering pump with zero volume and nonzero mean
impact, read out by two thin baseline trades.  The mechanism uses only
integrability of the kernel, so the same theorem covers signed,
nonconvolution, and integrably singular kernels, with no regularity of $f$.

\Cref{fig:phase} places this result in the exponent plane.  The slow-rate
bound of \citet[Lemma~5.1]{Gatheral2010} leaves
$\delta+\gamma\ge1$.  \Cref{thm:fractional-two-block-phase} below proves
that this boundary is sharp on the sublinear side and that, on the
superlinear side, two blocks manipulate exactly when
$\delta>\delta_+(\gamma)$.  The horizontal level
$\gamma=2-\log_23$ marks where $\delta_+(\gamma)$ becomes infinite.
Arbitrary finite switching removes every point of the plane except the line
$\delta=1$.

\begin{figure}[t]
\centering
\begin{tikzpicture}[x=4.0cm,y=4.6cm,>=Latex]
  \fill[orange!16] (0,0) rectangle (2.2,1);
  \fill[red!22] (0,0)--(1,0)--(0,1)--cycle;
  \fill[red!22] (1,0)--(1.0429,0.03)--(1.1030,0.06)--(1.1826,0.09)
    --(1.2874,0.12)--(1.4261,0.15)--(1.6125,0.18)--(1.7736,0.20)
    --(1.9207,0.215)--(2.0967,0.23)--(2.2,0.2376)--(2.2,0)--cycle;
  \draw[dashed,gray!80!black] (0,1)--(1,0);
  \draw[dashed,gray!80!black] (0,0.415)--(2.2,0.415);
  \node[gray!50!black,anchor=west,font=\small] at (1.62,0.455)
    {$\gamma^*=2-\log_23$};
  \node[gray!50!black,font=\small,rotate=-49] at (0.36,0.70)
    {$\delta+\gamma=1$};
  \draw[thick,red!70!black] (1,0)--(1.0429,0.03)--(1.1030,0.06)--(1.1826,0.09)
    --(1.2874,0.12)--(1.4261,0.15)--(1.6125,0.18)--(1.7736,0.20)
    --(1.9207,0.215)--(2.0967,0.23)--(2.2,0.2376);
  \node[red!70!black,font=\small,anchor=west] at (1.95,0.30) {$\delta=\delta_+(\gamma)$};
  \draw[very thick,blue!65!black] (1,0.01)--(1,0.99);
  \node[blue!65!black,rotate=90,anchor=south,font=\small] at (1.0,0.72) {safe: $\delta=1$};
  \draw[->] (0,0)--(2.32,0) node[right] {$\delta$};
  \draw[->] (0,0)--(0,1.08) node[above] {$\gamma$};
  \foreach \x/\lab in {0/0,0.5/{1/2},1/1,1.5/{3/2},2/2}
    \draw (\x,0.012)--(\x,-0.012) node[below] {$\lab$};
  \foreach \y/\lab in {0/0,0.5/{1/2},1/1}
    \draw (0.012,\y)--(-0.012,\y) node[left] {$\lab$};
  \fill[black] (0.6,0.4) circle (1.6pt);
  \node[anchor=north east,font=\scriptsize] at (0.58,0.375) {$(0.6,0.4)$};
  \draw[black,thick] (0.5,0.5) ++(0,-0.028) -- ++(0.02,0.028) -- ++(-0.02,0.028) -- ++(-0.02,-0.028) -- cycle;
  \node[anchor=east,font=\scriptsize] at (0.47,0.52) {$(0.5,0.5)$};
  \draw[black,thick] (0.53,0.43) rectangle (0.57,0.47);
  \node[anchor=west,font=\scriptsize] at (0.585,0.50) {$(0.55,0.45)$};
  \fill[white] (1.28,0.79) rectangle (2.19,0.98);
  \filldraw[fill=red!22,draw=gray!70] (1.32,0.90) rectangle (1.42,0.95);
  \node[anchor=west,font=\small] at (1.43,0.925) {manipulable by two blocks};
  \filldraw[fill=orange!16,draw=gray!70] (1.32,0.82) rectangle (1.42,0.87);
  \node[anchor=west,font=\small] at (1.43,0.845) {manipulable, three or more blocks};
\end{tikzpicture}
\caption{No-manipulation diagram for $H_\gamma(t,s)=(t-s)^{-\gamma}$ and
$f_\delta(x)=\operatorname{sgn}(x)|x|^\delta$.  The diagonal dashed line is
the slow-rate two-block boundary $\delta+\gamma=1$ of
\citet[Lemma~5.1]{Gatheral2010}.  The horizontal dashed line marks
$\gamma^*=2-\log_23\approx0.415$, where the superlinear threshold
$\delta_+(\gamma)$ becomes infinite; the same constant occurs in
\citet[Lemma~5.2 and Appendix~A]{Gatheral2010} for a separate maximal-rate
blow-up law.  Dark shading is the set manipulated by a two-block round trip
(\Cref{thm:fractional-two-block-phase}); $\delta_+(\gamma)\to\infty$ as
$\gamma\uparrow\gamma^*$.  Light shading is manipulable by
\Cref{thm:exact} but needs at least three blocks.  The safe set is the line
$\delta=1$.  The dot and diamond are the calibrated pairs
$(\delta,\gamma)\approx(0.6,0.4)$ and $(0.5,0.5)$ discussed in
\citet{Gatheral2010}; the square marks $(0.55,0.45)$, where
\citet{CuratoGatheralLillo2017} found negative execution costs numerically.
\Cref{tab:certificates} gives a certified finite manipulation at
$(0.5,0.5)$ and at two additional nearby exponent pairs; \Cref{cor:all-powers}
covers every plotted point off the line $\delta=1$.}
\label{fig:phase}
\end{figure}

The proofs begin with the algebraic obstruction that makes chattering
possible.

\subsection{The two-rate impact pump}\label{sec:pump}

The first lemma isolates the only algebra required of the impact law.  Its
contrapositive is what links the positive and negative half-lines and forces a
single global slope.

\begin{lemma}[Two-rate separation]\label{lem:two-rate}
Let $g:\R\to\R$ satisfy $g(0)=0$.  If $g$ is not proportional to the
identity, there exist $x>0>y$ such that, with
\begin{equation}\label{eq:duty-cycle}
 \theta=\frac{-y}{x-y}\in(0,1),
 \qquad
 n_0=\theta g(x)+(1-\theta)g(y),
\end{equation}
one has
\begin{equation}\label{eq:two-moments}
 \theta x+(1-\theta)y=0,
 \qquad n_0\ne0.
\end{equation}
If $g$ is not proportional to the identity on $[-B,B]$, the two rates can
be chosen in that interval.
\end{lemma}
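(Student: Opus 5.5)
The plan is to compute $n_0$ explicitly and prove the contrapositive: if $n_0=0$ for every pair $x>0>y$, then $g$ is proportional to the identity. With $\theta=-y/(x-y)$ and $1-\theta=x/(x-y)$, the first identity in \eqref{eq:two-moments} holds automatically, since $\theta x+(1-\theta)y=(-xy+xy)/(x-y)=0$. Moreover,
\[
 n_0=\frac{-y\,g(x)+x\,g(y)}{x-y}.
\]
Because $x-y>0$, the condition $n_0=0$ is equivalent to $g(x)/x=g(y)/y$. So the whole lemma reduces to a statement about the slope function $\sigma(z)=g(z)/z$ on $\R\setminus\{0\}$.

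Suppose $n_0=0$ for every admissible pair. Fix any $y_0<0$ and set $\lambda=\sigma(y_0)$. For each $x>0$ we get $\sigma(x)=\lambda$. Then fix any $x_0>0$: for each $y<0$ we get $\sigma(y)=\sigma(x_0)=\lambda$. Hence $g(z)=\lambda z$ for all $z\neq0$, and $g(0)=0$ by hypothesis, so $g=\lambda\id$. This contradicts the assumption, so some pair $x>0>y$ gives $n_0\ne0$. Note that we only ever compare a positive rate with a negative one. This is exactly the point flagged before the lemma: the cross-sign comparison is what ties the two half-lines to one global slope.

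For the capped version, run the same argument with $x\in(0,B]$ and $y\in[-B,0)$. The same two-step chaining shows that if $n_0=0$ for every such pair, then $g(z)=\lambda z$ on $[-B,B]$. This contradicts the assumption that $g$ is not proportional to the identity on $[-B,B]$, so the two rates can be chosen in that interval.

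There is no serious obstacle. The only point needing care is that no regularity of $g$ is used, which is consistent with the arbitrary $f$ in \Cref{thm:exact}. The chaining step must also not require pairs of the same sign, which is why we pass through one fixed reference rate of the opposite sign.
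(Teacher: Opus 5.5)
Your proposal is correct and follows the paper's own proof essentially verbatim: contrapositive, reduce $n_0=0$ to $g(x)/x=g(y)/y$ for all $x>0>y$, then fix one reference rate of each sign to obtain a single global slope. The capped case is handled identically, as the paper also notes.
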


\begin{proof}
Suppose instead that the second moment in \eqref{eq:two-moments} vanished
for every $x>0>y$.  Multiplying by $x-y$ gives
\[
 -y g(x)+xg(y)=0,
 \qquad\text{hence}\qquad
 \frac{g(x)}x=\frac{g(y)}y.
\]
Fixing one negative point shows that $g(x)/x$ is the same for all $x>0$;
fixing one positive point gives the same conclusion for all $y<0$, with the
same constant on both sides.  Together with $g(0)=0$, this makes $g$
proportional to the identity.  The capped argument is identical.
\end{proof}

Fix an open source interval $P=(p_0,p_1)\Subset(0,T)$.  Divide $P$ into
$M$ equal cells.  In each cell, assign rate $x$ on the initial fraction
$\theta$ and rate $y$ on the remaining fraction $1-\theta$; set the control
equal to zero outside $P$.  Denote the resulting pump by $u_M$.  Every cell
has zero net volume and mean impact $n_0$.  \Cref{fig:pump} shows the
finite witness assembled from this pump.  Rapidly oscillating controls and
their weak-star limits are the chattering controls of relaxed control theory
\citep{Young1969,Artstein1989}; here the limit is never used as a control,
only to select a finite switching count.

\begin{figure}[t]
\centering
\begin{tikzpicture}[x=0.78cm,y=0.55cm,>=Latex]
  \draw[->] (0,0) -- (13.1,0) node[right] {$t$};
  \draw[->] (0,-2.0) -- (0,2.2) node[above] {$v(t)$};
  \draw[dashed,gray] (0,1.15) -- (13,1.15);
  \draw[dashed,gray] (0,-1.15) -- (13,-1.15);
  \draw[very thick,blue!65!black]
    (0.8,0)--(1.1,0)--(1.1,1.15)--(2.0,1.15)--(2.0,0)--(3.0,0);
  \draw[very thick,orange!80!black]
    (3.0,0)--(3.0,1.75)--(3.35,1.75)--(3.35,-1.45)--(3.8,-1.45)
    --(3.8,1.75)--(4.15,1.75)--(4.15,-1.45)--(4.6,-1.45)
    --(4.6,1.75)--(4.95,1.75)--(4.95,-1.45)--(5.4,-1.45)
    --(5.4,1.75)--(5.75,1.75)--(5.75,-1.45)--(6.2,-1.45)
    --(6.2,1.75)--(6.55,1.75)--(6.55,-1.45)--(7.0,-1.45)
    --(7.0,1.75)--(7.35,1.75)--(7.35,-1.45)--(7.8,-1.45)
    --(7.8,0)--(10.7,0);
  \draw[very thick,blue!65!black]
    (10.7,0)--(10.7,-1.15)--(11.6,-1.15)--(11.6,0)--(12.6,0);
  \draw[<->] (1.1,-1.75)--(2.0,-1.75) node[midway,below] {$I_-$};
  \draw[<->] (3.0,-1.75)--(7.8,-1.75) node[midway,below] {$P$};
  \draw[<->] (10.7,-1.75)--(11.6,-1.75) node[midway,below] {$I_+$};
  \node[left] at (0,1.15) {$a$};
  \node[left] at (0,-1.15) {$-a$};
  \node[orange!80!black] at (5.4,2.05) {zero mean rate, nonzero mean impact};
\end{tikzpicture}
\caption{The finite witness of \Cref{prop:normalized}.  Equal early and
late baseline intervals close the round trip.  On the source interval $P$,
the pump alternates between two bounded rates $x>0>y$ with zero mean in
every cell but mean impact $n_0\ne0$.  The display assumes $\sigma=1$; the
proof reverses both baseline signs when required.}
\label{fig:pump}
\end{figure}

\begin{lemma}[Periodic pump averaging]\label{lem:pump-average}
As $M\to\infty$,
\begin{equation}\label{eq:pump-limits}
 u_M\weakstar0,
 \qquad
 g(u_M)\weakstar n_0\one_P
 \quad\text{in }L^\infty(0,T).
\end{equation}
\end{lemma}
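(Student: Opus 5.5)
The plan is to test both sequences against an arbitrary $\varphi\in L^1(0,T)$, since this is exactly what weak-star convergence in $L^\infty(0,T)=(L^1(0,T))^*$ requires. Both $u_M$ and $g(u_M)$ take at most three values ($0$, $x$, $y$ for $u_M$; $0$, $g(x)$, $g(y)$ for $g(u_M)$). They are therefore bounded in $L^\infty$ uniformly in $M$, by $\max(|x|,|y|)$ and $\max(|g(x)|,|g(y)|)$ respectively. By this uniform bound and density of step functions in $L^1$, it suffices to check the convergence against indicators $\one_J$ of intervals $J\subset(0,T)$. An $\varepsilon/3$ argument then transfers it to general $\varphi$.

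Fix an interval $J$ and intersect it with $P$; outside $P$ both sequences vanish, as does the proposed limit $n_0\one_P$. The cells of $P$ have length $|P|/M$. Classify them into those entirely contained in $J$ and at most two boundary cells that meet $J$ only partially.

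On each interior cell the integral of $u_M$ is $(|P|/M)(\theta x+(1-\theta)y)=0$ by \eqref{eq:two-moments}. The integral of $g(u_M)$ is $(|P|/M)(\theta g(x)+(1-\theta)g(y))=(|P|/M)n_0$ by \eqref{eq:duty-cycle}.

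Summing over interior cells gives $0$ and $n_0$ times the total length of interior cells. That length differs from $|J\cap P|$ by at most $2|P|/M$. Each boundary cell contributes at most $\max(|x|,|y|)|P|/M$, respectively $\max(|g(x)|,|g(y)|)|P|/M$, in absolute value.

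Hence
\[
 \Bigl|\int_J u_M\Bigr|\le\frac{2|P|\max(|x|,|y|)}{M},
 \qquad
 \Bigl|\int_J g(u_M)-n_0|J\cap P|\Bigr|\le\frac{C|P|}{M},
\]
with $C$ depending only on $x$, $y$, $g(x)$, $g(y)$ and $n_0$. Both bounds tend to zero as $M\to\infty$.

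The only point requiring care is the density step. Given $\varphi\in L^1$ and $\varepsilon>0$, choose a step function $\psi$ with $\|\varphi-\psi\|_1<\varepsilon$. The uniform $L^\infty$ bound then controls $\int(u_M-0)(\varphi-\psi)$ and $\int(g(u_M)-n_0\one_P)(\varphi-\psi)$ by a constant times $\varepsilon$, independently of $M$. This is routine, so there is no genuine obstacle: the lemma is the Riemann–Lebesgue-type averaging of a periodic profile. The substantive content was already isolated in \Cref{lem:two-rate}, which guarantees that the cell average of $u_M$ is zero while the cell average of $g(u_M)$ is the nonzero value $n_0$.
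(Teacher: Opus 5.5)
Your proof is correct and is essentially the paper's argument: exact cellwise cancellation (zero rate mean and impact mean $n_0$ on every pump cell), the uniform $L^\infty$ bound, and an $L^1$ density step. The only difference is cosmetic. The paper replaces the test function by its cell averages on the pump partition, so every cell cancels exactly and no boundary cells arise. You instead test against intervals and absorb at most two boundary cells, which also gives an explicit $O(1/M)$ rate. Note that the vanishing of $g(u_M)$ off $P$ uses the standing hypothesis $g(0)=0$.
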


\begin{proof}
For a test function in $L^1(P)$, replace the function on every pump cell by
its cell average.  The integral of the replacement against $u_M$ vanishes
exactly cell by cell, as does its integral against $g(u_M)-n_0$.  The
piecewise cell-average approximation converges in $L^1(P)$ as the mesh tends
to zero.  Uniform boundedness of the two sequences controls the replacement
error.  The extension outside $P$ uses $g(0)=0$.
\end{proof}

Both factors in the cost oscillate, so separate one-dimensional weak-star
convergence is not by itself enough.  The following tensor lemma supplies the
needed product limit for every integrable kernel.

\begin{lemma}[Product weak-star averaging]\label{lem:tensor}
Suppose $a_M\weakstar a$ and $b_M\weakstar b$ in $L^\infty(0,T)$, with
both sequences uniformly bounded.  For every
$K\in L^1((0,T)^2)$,
\begin{equation}\label{eq:tensor-limit}
 \iint K(t,s)a_M(t)b_M(s)\dd s\dd t
 \longrightarrow
 \iint K(t,s)a(t)b(s)\dd s\dd t.
\end{equation}
\end{lemma}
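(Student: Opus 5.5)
The strategy is to reduce the product limit to the separable case by density in $L^1((0,T)^2)$, and then use the uniform bounds to control the error. First take $K$ to be a product $K(t,s)=\phi(t)\psi(s)$ with $\phi,\psi\in L^1(0,T)$. The double integral then factors:
\[
 \iint \phi(t)\psi(s)a_M(t)b_M(s)\dd s\dd t
 =\Bigl(\int_0^T\phi a_M\Bigr)\Bigl(\int_0^T\psi b_M\Bigr).
\]
Each factor converges by the definition of weak-star convergence in $L^\infty=(L^1)^*$. So the product converges to $\bigl(\int\phi a\bigr)\bigl(\int\psi b\bigr)$, which is the claimed limit. By bilinearity the same holds for every finite linear combination of such products. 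In particular it holds for simple functions built on measurable rectangles $I\times J$, taking $\phi=\one_I$ and $\psi=\one_J$.

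Next, fix a general $K\in L^1((0,T)^2)$ and $\varepsilon>0$. Finite linear combinations of indicators of rectangles are dense in $L^1((0,T)^2)$, since they generate the product $\sigma$-algebra and Lebesgue measure is regular. So choose such a $K_\varepsilon$ with $\|K-K_\varepsilon\|_{L^1}<\varepsilon$. Let $C$ be a common bound with $\|a_M\|_\infty,\|b_M\|_\infty\le C$. The weak-star limits satisfy the same bound, by lower semicontinuity of the norm, so $\|a\|_\infty,\|b\|_\infty\le C$. Then
\[
 \Bigl|\iint (K-K_\varepsilon)(t,s)a_M(t)b_M(s)\dd s\dd t\Bigr|\le C^2\varepsilon,
\]
and the same estimate holds with $a,b$ in place of $a_M,b_M$. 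A three-term triangle inequality then gives
\[
 \limsup_M\Bigl|\iint K a_Mb_M-\iint K ab\Bigr|\le 2C^2\varepsilon,
\]
and letting $\varepsilon\to0$ yields \eqref{eq:tensor-limit}.

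The only delicate point is the second step: the uniform bound is essential there. Weak-star convergence alone does not pass to products of oscillating sequences against a general kernel, as the remark before the lemma warns. The density argument works precisely because the error is bounded by $\|K-K_\varepsilon\|_1$ times $C^2$, uniformly in $M$. I would also note that the density claim needs no structure on $K$ such as causality, sign, or boundedness. This is why the lemma applies to singular kernels like $(t-s)^{-\gamma}$ once they are extended by zero off $\DeltaT$.
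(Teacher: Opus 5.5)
Your proposal is correct and follows essentially the same route as the paper's proof: factorization for product kernels $\phi(t)\psi(s)$, linearity for finite tensor sums, density of such sums in $L^1((0,T)^2)$, and the uniform $L^\infty$ bounds to control the approximation error independently of $M$. As a minor aside, the uniform bound is automatic for weak-star convergent sequences by the uniform boundedness principle, so it is not as delicate as your closing paragraph suggests.
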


\begin{proof}
For $K(t,s)=\phi(t)\psi(s)$, the integral factors into the product of two
weak-star pairings, so \eqref{eq:tensor-limit} follows.  It follows by
linearity for finite sums of such tensors.  Finite tensor sums are dense in
$L^1((0,T)^2)$, and the uniform $L^\infty$ bounds control the approximation
error independently of $M$.
\end{proof}

Applying \Cref{lem:tensor} with $a_M=u_M$ and $b_M=g(u_M)$ shows that the
pump's self-cost tends to zero: its limiting target-rate factor is zero.  At
the same time, a later ordinary trade can interact with the nonzero impact
moment $n_0\one_P$.  Integrability of the kernel is what makes the pump's
self-cost vanish; a nonintegrable singularity may retain a microscopic
contribution and lies outside the theorem.

\subsection{Ordered localization and thin baselines}\label{sec:localization}

It remains to place ordinary trades before and after the pump so that one of
them reads a nonzero interaction of $H$, while all baseline-only costs are
negligible at the same scale.  For $z\in(0,T)$ and small $h>0$, let
\[
 I_h(z)=(z-h/2,z+h/2).
\]
In this section, every $K\in L^1((0,T)^2)$ is extended by zero to
$\R^2$, so these intervals require no boundary convention.

\begin{lemma}[Diagonal dilution and ordered localization]\label{lem:thin}
Let $K\in L^1((0,T)^2)$.  There exists a sequence $h_n\downarrow0$ such
that, for almost every $z$,
\begin{equation}\label{eq:diagonal-dilution}
 \frac1{h_n}\iint_{I_{h_n}(z)^2}|K(t,s)|\dd s\dd t\longrightarrow0.
\end{equation}
For almost every off-diagonal pair $(t,r)$,
\begin{equation}\label{eq:offdiagonal-dilution}
 \frac1h\iint_{I_h(t)\times I_h(r)}|K(u,s)|\dd s\dd u
 \longrightarrow0.
\end{equation}

\Needspace{14\baselineskip}
If $K=H\one_{\DeltaT}$ is nonzero, there are an interval
$P\Subset(0,T)$, points $r<\inf P<\sup P<\tau$, and a subsequence of
$(h_n)$ such that the equal-length ordered intervals
\[
 I_{-,n}=I_{h_n}(r)<P<I_{+,n}=I_{h_n}(\tau)
\]
satisfy
\begin{align}
 \frac1{h_n}\int_{I_{+,n}}\int_P H(t,s)\dd s\dd t
   &\longrightarrow d\ne0, \label{eq:target-interaction}\\
 \frac1{h_n}\iint_{I_{\pm,n}^2\cap\DeltaT}|H(t,s)|\dd s\dd t
   &\longrightarrow0, \label{eq:self-dilution}\\
 \frac1{h_n}\int_{I_{+,n}}\int_{I_{-,n}}|H(t,s)|\dd s\dd t
   &\longrightarrow0. \label{eq:cross-dilution}
\end{align}
\end{lemma}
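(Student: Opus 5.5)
The plan treats the three displays separately: the two dilution statements on their own, then the ordered localization. For \eqref{eq:diagonal-dilution} I would set
\[
 \phi_h(z)=\frac1h\iint_{I_h(z)^2}|K(t,s)|\dd s\dd t
\]
and integrate in $z$ using Tonelli. A pair $(t,s)$ lies in $I_h(z)^2$ exactly when $z$ lies in an interval of length $(h-|t-s|)_+$. Hence
\[
 \int_\R\phi_h(z)\dd z=\iint|K(t,s)|\Bigl(1-\frac{|t-s|}h\Bigr)_+\dd s\dd t .
\]
The integrand is dominated by $|K|\in L^1$ and tends to zero off the diagonal $\{t=s\}$, which is a null set. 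By dominated convergence, $\phi_h\to0$ in $L^1(\R)$. A sequence $h_n\downarrow0$ along which $\phi_{h_n}\to0$ almost everywhere then exists by the standard subsequence extraction from $L^1$ convergence.

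For \eqref{eq:offdiagonal-dilution} I would use the two-dimensional Lebesgue differentiation theorem applied to $|K|$ with squares $I_h(t)\times I_h(r)$. At every Lebesgue point, $h^{-2}\iint_{I_h(t)\times I_h(r)}|K|\to|K(t,r)|<\infty$. Multiplying by $h$ gives the limit $0$, and almost every point of the plane is a Lebesgue point.

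For the ordered part, the first step is to find a source interval $P$ and a later interval $J$ with $\int_J\int_PH\ne0$. Consider all pairs of intervals $Q<J$ with rational endpoints in $(0,T)$ and $\sup Q<\inf J$. Each product $Q\times J$ lies in $\DeltaT$, and $\DeltaT$ is the countable union of such products. If $\iint_{Q'\times J'}H=0$ for every rational subrectangle of every such product, then $H=0$ a.e. on each product, hence on $\DeltaT$, contradicting $H\ne0$. So some rational pair $P<J$ with $\overline P\subset(0,T)$ has $\int_J\int_PH\ne0$.

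Next I would fix the points $\tau$ and $r$. By Fubini, $\psi(t)=\int_PH(t,s)\dd s$ belongs to $L^1(J)$ and is nonzero on a subset of $J$ of positive measure. I would choose $\tau$ in that subset subject to three further requirements:
\begin{itemize}
\item $\tau$ is a Lebesgue point of $\psi$;
\item $\tau$ lies in the full-measure set where \eqref{eq:diagonal-dilution} holds along $(h_n)$ for $K=H\one_{\DeltaT}$;
\item for almost every $r$, the pair $(\tau,r)$ satisfies \eqref{eq:offdiagonal-dilution}.
\end{itemize}
The third requirement is available by Fubini applied to the full-measure set of good pairs. Then I would choose $r\in(0,\inf P)$ such that $r$ satisfies \eqref{eq:diagonal-dilution} and $(\tau,r)$ is a good off-diagonal pair. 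Each of these is a full-measure condition on $(0,\inf P)$, so such an $r$ exists.

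With these choices the three conclusions follow directly. Since $\sup P<\tau$, the sets $I_h(\tau)\times P$ lie in $\DeltaT$ for small $h$, so the left side of \eqref{eq:target-interaction} equals $h_n^{-1}\int_{I_{h_n}(\tau)}\psi$. This converges to $d=\psi(\tau)\ne0$ by the Lebesgue point property. The self-dilution bound \eqref{eq:self-dilution} is \eqref{eq:diagonal-dilution} evaluated at $z=\tau$ and $z=r$. The cross-dilution bound \eqref{eq:cross-dilution} is \eqref{eq:offdiagonal-dilution} at the pair $(\tau,r)$, restricted to the sequence $h_n$. Passing to a tail of $(h_n)$ makes the ordering $I_{-,n}<P<I_{+,n}$ hold and keeps all intervals inside $(0,T)$.

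The main obstacle is the bookkeeping of the exceptional null sets, so that one pair $(r,\tau)$ satisfies every condition at once. The key point is that the diagonal statement holds only along the extracted sequence $(h_n)$, whereas the off-diagonal and Lebesgue point limits hold along every $h\to0$. Choosing $\tau$ first and $r$ second, each via Fubini, resolves this.
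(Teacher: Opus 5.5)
Your proposal is correct and follows the paper's proof essentially step for step: Tonelli shows the averaged diagonal densities tend to zero in $L^1$, a rapidly chosen sequence $h_n$ gives almost-everywhere convergence, Lebesgue differentiation handles the off-diagonal limit, and $\tau$ is fixed before $r$ so that Fubini controls the exceptional sets. The differences are cosmetic. You compute the Tonelli integral exactly and use dominated convergence where the paper bounds it by $\iint_{|t-s|<h}|K|$. You also find $P<J$ with $\int_J\int_P H\ne0$ through rational rectangles, whereas the paper uses a two-dimensional Lebesgue point of $H$.
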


\begin{proof}
Set
\[
 d_h(z)=\frac1h\iint_{I_h(z)^2}|K(t,s)|\dd s\dd t.
\]
The set of centers $z$ whose length-$h$ interval contains both $t$ and $s$
has length at most $h\one_{\{|t-s|<h\}}$.  Fubini's theorem therefore
gives
\begin{equation}\label{eq:dh-fubini}
 \int_0^T d_h(z)\dd z
 \le\iint_{|t-s|<h}|K(t,s)|\dd s\dd t\longrightarrow0.
\end{equation}
The last limit follows from absolute continuity of the $L^1$ integral.
Choose $h_n\downarrow0$ so rapidly that
$\sum_n\|d_{h_n}\|_1<\infty$.  Then
\eqref{eq:diagonal-dilution} holds for almost every $z$.  At an
off-diagonal Lebesgue point of $K$, the numerator in
\eqref{eq:offdiagonal-dilution} is
$h^2(|K(t,r)|+o(1))$, proving that limit.

Because $H$ is nonzero, it has an interior Lebesgue point
$(\tau_0,s_0)\in\DeltaT$ at which it is finite and nonzero.  Sufficiently
small, strictly separated intervals $P$ about $s_0$ and $J$ about $\tau_0$
obey
\[
 \int_J\int_P H(t,s)\dd s\dd t\ne0.
\]
Thus $g_P(t)=\int_P H(t,s)\dd s$ is nonzero on a positive-measure subset of
$J$.  By \eqref{eq:diagonal-dilution}, Lebesgue differentiation, and
Fubini applied to the full-measure set of off-diagonal Lebesgue points,
choose $\tau\in J$ such that $g_P(\tau)=d\ne0$, $\tau$ is a Lebesgue
point of $g_P$, \eqref{eq:diagonal-dilution} holds at $\tau$, and
\eqref{eq:offdiagonal-dilution} holds for almost every earlier $r$.  Choose
such an $r$ strictly before $P$ at which
\eqref{eq:diagonal-dilution} also holds.  Discarding finitely many $h_n$
orders the intervals.  The selected properties give
\eqref{eq:target-interaction}--\eqref{eq:cross-dilution}.
\end{proof}

The normalization by $1/h_n$ compares the baseline-only interactions with
the order-$h_n$ cross-interaction; the statement that the diagonal has
two-dimensional measure zero would not do this.

\subsection{Proof of affine rigidity}\label{sec:rigidity}

We first prove the normalized necessity statement.  The proof also makes
the finite nature of the manipulation explicit.

\begin{proposition}[Finite manipulation from nonlinearity]\label{prop:normalized}
Let $H\in L^1(\DeltaT)$ be nonzero and let $g:\R\to\R$ satisfy
$g(0)=0$.  If $g$ is not proportional to the identity, there is a finite,
bounded $v\in\rt$ such that $C_{H,g}[v]<0$.  If $g$ is nonlinear on
$[-B,B]$, the witness can be chosen with $\|v\|_\infty\le B$.
\end{proposition}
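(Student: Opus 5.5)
The plan is to assemble the witness drawn in the pump figure: $v_{M,n}=\sigma a\one_{I_{-,n}}+u_M-\sigma a\one_{I_{+,n}}$. Here $u_M$ is the pump on $P$ built from the two rates $x>0>y$ of \Cref{lem:two-rate}, applied to $g$. The intervals $I_{\pm,n}$ are the thin equal-length intervals of \Cref{lem:thin}, and $a>0$ and $\sigma\in\{\pm1\}$ are still free. Each pump cell has zero volume, and the two baselines have equal length and opposite rates, so $v_{M,n}\in\rt$ for every $M,n$. It is also bounded by $\max(|x|,|y|,a)$. In the capped case, \Cref{lem:two-rate} places $x,y$ in $[-B,B]$, and we take $a\le B$. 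Because $g(0)=0$, the cost $C_{H,g}[v]=\int_{\DeltaT}H(t,s)v(t)g(v(s))$ splits into interactions between pairs of the three disjoint pieces, with the source piece always earlier than the target piece.

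First let $M\to\infty$ with $n$ fixed. Every term whose source is the pump converges by \Cref{lem:pump-average} and \Cref{lem:tensor}; apply the latter with $K=H\one_{\DeltaT}$ and with the pump in the target slot, the source slot, or both.
\begin{itemize}
\item The pump self-cost tends to $0$, since the target factor $u_M\weakstar0$.
\item Pump against $I_{-,n}$ as target is impossible: $I_{-,n}$ lies before $P$, so those $H$ values vanish on $\DeltaT$.
\item $I_{-,n}$ as source against the pump as target has target factor $u_M\weakstar0$, so it also tends to $0$.
\item The pump as source against $I_{+,n}$ as target tends to $-\sigma a\,n_0\int_{I_{+,n}}\int_PH$.
\end{itemize}
The surviving baseline-only terms are
\[
 \sigma a\,g(\sigma a)\iint_{I_{-,n}^2\cap\DeltaT}H+(-\sigma a)g(-\sigma a)\iint_{I_{+,n}^2\cap\DeltaT}H+(-\sigma a)g(\sigma a)\int_{I_{+,n}}\int_{I_{-,n}}H.
\]
Hence $\lim_MC[v_{M,n}]=h_n\bigl(-\sigma a\,n_0\,d+o(1)\bigr)$ as $n\to\infty$. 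The main term comes from \eqref{eq:target-interaction}. The $o(1)$ absorbs the three baseline terms after division by $h_n$, and it is $o(1)$ by \eqref{eq:self-dilution} and \eqref{eq:cross-dilution}, because $a$ is fixed and $g(\pm a)$ are finite.

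Choose $\sigma=\operatorname{sgn}(n_0d)$; both factors are nonzero, by \Cref{lem:two-rate} and \Cref{lem:thin} respectively. Then the limit is $h_n(-a|n_0d|+o(1))<0$ for $n$ large. Fix such an $n$ and pick $M$ large enough that $C[v_{M,n}]<0$. This $v_{M,n}$ is a finite piecewise-constant bounded round trip, as required.

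The main obstacle is the pump-to-$I_{+,n}$ interaction. It must be exactly $n_0$ times the interaction of $\one_P$ with $I_{+,n}$, even though $g(u_M)$ oscillates and $H$ may be singular and signed. \Cref{lem:tensor} with a bounded test factor $\one_{I_{+,n}}$ handles this, since weak-star convergence only needs $K\in L^1$. A second point needs care: the order of limits. Taking $M\to\infty$ first, at fixed $n$, is essential, because the baseline terms are controlled only after normalization by $h_n$, and that normalization does not depend on $M$.
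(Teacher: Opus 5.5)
Your proposal is correct and is essentially the paper's own proof. It uses the same witness $v_{M,n}=\sigma a\one_{I_{-,n}}+u_M-\sigma a\one_{I_{+,n}}$ and the same $M\to\infty$ limit via \Cref{lem:pump-average,lem:tensor}, which leaves three $o(h_n)$ baseline terms and the main term $-a|n_0d|h_n$. The order of choices is also the same (rates, interaction, amplitude, $n$, then $M$); you merely spell out why the pump terms vanish, which the paper leaves implicit in its formula for $L_n$.
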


\begin{proof}
Choose $x,y,\theta,n_0$ from \Cref{lem:two-rate}.  Use the source interval
$P$ from \Cref{lem:thin} to build the pump $u_M$.  Fix any $a>0$ and put
\[
 \sigma=\operatorname{sgn}(n_0d).
\]
For the thin intervals selected in \Cref{lem:thin}, define
\begin{equation}\label{eq:finite-witness}
 v_{M,n}
 =\sigma a\one_{I_{-,n}}+u_M\one_P-\sigma a\one_{I_{+,n}}.
\end{equation}
The baseline intervals have equal length and every pump cell has zero
integral, so $v_{M,n}$ is a round trip.

For intervals $A,B\subset(0,T)$, write
\[
 J(A,B)=\int_{t\in B}\int_{\substack{s\in A\\s<t}}H(t,s)\dd s\dd t.
\]
For fixed $n$, \Cref{lem:pump-average,lem:tensor} give
\begin{equation}\label{eq:finite-limit}
 C_{H,g}[v_{M,n}]\longrightarrow L_n\qquad(M\to\infty),
\end{equation}
where
\begin{align}
 L_n={}&\sigma a g(\sigma a)J(I_{-,n},I_{-,n})
 -\sigma a g(-\sigma a)J(I_{+,n},I_{+,n})\notag\\
 &-\sigma a g(\sigma a)J(I_{-,n},I_{+,n})
 -\sigma a n_0J(P,I_{+,n}). \label{eq:Ln}
\end{align}
The first three terms are $o(h_n)$ by
\eqref{eq:self-dilution}--\eqref{eq:cross-dilution}; their coefficients are
fixed finite numbers.  By \eqref{eq:target-interaction}, the last term is
\[
 -a|n_0d|h_n+o(h_n).
\]
Thus $L_n<0$ for some finite $n$.  Keeping that $n$ fixed, the convergence
in \eqref{eq:finite-limit} yields a finite $M$ for which
$C_{H,g}[v_{M,n}]<0$.

For the capped assertion, use $x,y\in[-B,B]$ and fix $0<a<B$.  The
construction then remains in the cap.
\end{proof}

Note the order of choices: the pump rates, then a nonzero kernel
interaction, then a nonzero baseline amplitude, then finite baseline
intervals thin enough, then a finite pump-cell count large enough.  The
witness is an ordinary control.

We now determine the intercept.

\begin{lemma}[Intercept condition]\label{lem:intercept}
Suppose $C_{H,f}\ge0$ on $\rt$ and put $c=f(0)$.  Then $cA_H$ is
constant almost everywhere.
\end{lemma}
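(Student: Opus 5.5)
The plan is to split off the intercept and show that the remaining nonlinear part is invisible to two thin baseline trades, which then read out $cA_H$ directly. Write $f=c+g$ with $g(0)=0$. Because $\int_0^tH(t,s)\,c\dd s=cA_H(t)$, the cost \eqref{eq:cost} decomposes for every $v\in\pc$ as
\[
 C_{H,f}[v]=c\int_0^T v(t)A_H(t)\dd t+C_{H,g}[v].
\]
The first term is linear in $v$. The second depends on $H$ only through the products of the supports of $v$ with themselves. The approach is to pick round trips whose linear term is of order $h$ while the $g$-term is $o(h)$.

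\emph{Choice of points.} Take $K=H\one_{\DeltaT}$ and the sequence $h_n\downarrow0$ from \Cref{lem:thin}. Let $E$ be the full-measure set of $z\in(0,T)$ at which both of the following hold:
\begin{itemize}
\item the diagonal dilution \eqref{eq:diagonal-dilution} holds along $h_n$;
\item $z$ is a Lebesgue point of $A_H\in L^1(0,T)$.
\end{itemize}
By \eqref{eq:offdiagonal-dilution} and Fubini, almost every pair $(r,\tau)\in E\times E$ with $r\ne\tau$ also satisfies the off-diagonal dilution at $(\tau,r)$ and at $(r,\tau)$. Fix such a pair and a sign $\epsilon\in\{\pm1\}$. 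Consider the round trip
\[
 v_n=\epsilon\one_{I_{h_n}(r)}-\epsilon\one_{I_{h_n}(\tau)}.
\]
For large $n$ the two intervals are disjoint and inside $(0,T)$, and they have equal length, so $v_n\in\rt$.

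\emph{Estimates.} The linear term equals
\[
 \epsilon c\Bigl(\int_{I_{h_n}(r)}A_H-\int_{I_{h_n}(\tau)}A_H\Bigr)=\epsilon c\,h_n\bigl(A_H(r)-A_H(\tau)\bigr)+o(h_n),
\]
by the Lebesgue point property. The $g$-term is a finite combination, with fixed coefficients $\pm\epsilon g(\pm\epsilon)$, of integrals of $H$ over the four products $I_{h_n}(\cdot)\times I_{h_n}(\cdot)$ intersected with $\DeltaT$. Each of these is $o(h_n)$: the two self-interactions by the diagonal dilution, and the two cross-interactions by the off-diagonal dilution. The key point, exactly as in \Cref{prop:normalized}, is that no regularity of $g$ is needed, because $v_n$ takes only the values $0,\pm\epsilon$.

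\emph{Conclusion.} Safety gives
\[
 0\le\epsilon c\,h_n\bigl(A_H(r)-A_H(\tau)\bigr)+o(h_n).
\]
Dividing by $h_n$ and letting $n\to\infty$, then using both choices $\epsilon=\pm1$, yields $c\bigl(A_H(r)-A_H(\tau)\bigr)=0$ for almost every pair $(r,\tau)$. Fubini then produces a single $r$ for which this holds for almost every $\tau$, so $cA_H$ is constant almost everywhere.

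The main obstacle is the bookkeeping of null sets. Both points must lie in the good set of the diagonal dilution for the same sequence $h_n$, which is why that common sequence from \Cref{lem:thin} is used. The off-diagonal condition is then needed only almost everywhere in pairs, which Fubini supplies.
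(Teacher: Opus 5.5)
Your proof is correct and is essentially the paper's argument: you split off the intercept as $C_{H,f}=C_{H,g}+c\int vA_H$, place two thin opposite baseline trades at a generic pair of Lebesgue and dilution points, and use both signs to force $cA_H(r)=cA_H(\tau)$ for almost every pair. The differences are only cosmetic: you use unit amplitude instead of $a$, you require off-diagonal dilution in both orders, and you spell out the final Fubini step.
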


\begin{proof}
Write $g=f-c$, so $g(0)=0$ and
\begin{equation}\label{eq:intercept-decomposition}
 C_{H,f}[v]
 =C_{H,g}[v]+c\int_0^T v(t)A_H(t)\dd t.
\end{equation}
For almost every ordered pair $r<\tau$, both points are Lebesgue points of
$A_H$, diagonal dilution holds at both, and off-diagonal dilution holds at
$(\tau,r)$.  Fix such a pair.  For fixed $a>0$ and
$\sigma\in\{-1,1\}$, set
\[
 w_{n,\sigma}
 =\sigma a\one_{I_{h_n}(r)}-\sigma a\one_{I_{h_n}(\tau)}.
\]
The $g$-cost is $o(h_n)$ by the same three dilution estimates used in
\eqref{eq:Ln}.  Lebesgue differentiation in
\eqref{eq:intercept-decomposition} gives
\[
 \frac{C_{H,f}[w_{n,\sigma}]}{h_n}
 \longrightarrow c\sigma a\bigl(A_H(r)-A_H(\tau)\bigr).
\]
Nonnegativity for both values of $\sigma$ forces
$cA_H(r)=cA_H(\tau)$ for almost every ordered pair, so $cA_H$ is constant
almost everywhere.
\end{proof}

\begin{proof}[Proof of \Cref{thm:exact}]
Necessity of the intercept condition follows from \Cref{lem:intercept}.  If
$cA_H=\kappa$ almost everywhere, the second term in
\eqref{eq:intercept-decomposition} equals
$\kappa\int_0^Tv=0$ on round trips.  Hence $C_{H,g}\ge0$ on $\rt$.
\Cref{prop:normalized} forces $g(x)=\lambda x$ for all $x$.  The remaining
cost is $\lambda Q_H[v]$, which must have the sign in
\eqref{eq:slope-sign}.  Conversely, the three conditions in
\eqref{eq:affine-classification}--\eqref{eq:slope-sign} give
\[
 C_{H,f}[v]=\kappa\int_0^Tv(t)\dd t+\lambda Q_H[v]\ge0
\]
for every round trip.
\end{proof}

\begin{proof}[Proof of \Cref{thm:cap}]
The two-interval argument and \Cref{prop:normalized} use only rates in
$[-B,B]$, which gives necessity.  Conversely, capped controls see only the
affine restriction of $f$.  Finally, the sign of $Q_H$ on all round trips is
determined by its sign on capped round trips: every finite
piecewise-constant round trip can be scaled into the cap, and $Q_H$ is
homogeneous of degree two.
\end{proof}

\subsection{Convolution and fractional kernels}\label{sec:fractional}

\Cref{cor:convolution} already eliminates the affine intercept from every
nonzero convolution model.  What remains is the sign of the linear
quadratic form.  It is useful to relate this condition to the conventional
positive-definiteness formulation.  Define the symmetric extension
\begin{equation}\label{eq:symmetrized-kernel}
 \overline H(t,s)
 =H(t,s)\one_{\{s<t\}}+H(s,t)\one_{\{t<s\}}.
\end{equation}
Then, with the diagonal irrelevant,
\begin{equation}\label{eq:symmetric-Q}
 Q_H[v]=\frac12\int_0^T\int_0^T
 \overline H(t,s)v(t)v(s)\dd s\dd t.
\end{equation}
Thus the sufficiency side of the linear model is positivity of the
quadratic form on the zero-integral subspace---conditional positive
semidefiniteness for round-trip test functions.  Positive semidefiniteness on
the full function space is sufficient but, for this restricted axiom,
stronger than necessary
\citep{GatheralSchiedSlynko2012,AlfonsiKlockSchied2016,AbiJaberNeuman2022}.
The nonlinear necessity is the new part: the quadratic problem on the
round-trip subspace is the only one that survives the rate-inside
no-manipulation axiom.

For the fractional kernel, positivity is strict and can be seen directly.

\begin{proposition}[Strict positivity of fractional impact]\label{prop:fractional-positive}
For $0<\gamma<1$ and every nonzero bounded piecewise-constant $v$,
\begin{equation}\label{eq:strict-positive-Q}
 Q_{H_\gamma}[v]>0.
\end{equation}
\end{proposition}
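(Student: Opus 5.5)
The plan is to write the symmetrized kernel $|t-s|^{-\gamma}$ as a superposition of positive-definite exponentials or Fourier modes and read $Q_{H_\gamma}$ as a weighted $L^2$ norm. By \eqref{eq:symmetric-Q},
\[
 Q_{H_\gamma}[v]=\frac12\int_0^T\int_0^T|t-s|^{-\gamma}v(t)v(s)\dd s\dd t ,
\]
where $v$ is extended by zero outside $[0,T]$.

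The first step is the Fourier representation. For $0<\gamma<1$ the function $|x|^{-\gamma}$ is locally integrable and tempered. Its Fourier transform is $c_\gamma|\xi|^{\gamma-1}$ with
\[
 c_\gamma=2\Gamma(1-\gamma)\sin(\pi\gamma/2)>0 .
\]
Since $v\in L^1\cap L^2(\R)$ has compact support, Parseval gives
\[
 Q_{H_\gamma}[v]=\frac{c_\gamma}{4\pi}\int_\R|\xi|^{\gamma-1}|\widehat v(\xi)|^2\dd\xi .
\]
To avoid any doubt about distributional transforms, I would derive this elementarily via the subordination identity
\[
 |x|^{-\gamma}=\frac1{\Gamma(\gamma/2)}\int_0^\infty e^{-\lambda x^2}\lambda^{\gamma/2-1}\dd\lambda .
\]
Each Gaussian $e^{-\lambda x^2}$ has positive Fourier transform $\sqrt{\pi/\lambda}\,e^{-\xi^2/(4\lambda)}$. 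This gives
\[
 \iint e^{-\lambda(t-s)^2}v(t)v(s)\dd s\dd t=\frac1{2\pi}\int\sqrt{\pi/\lambda}\,e^{-\xi^2/(4\lambda)}|\widehat v(\xi)|^2\dd\xi\ge0 .
\]
Tonelli's theorem (all integrands in $\lambda$ and $\xi$ are nonnegative) then lets me swap the $\lambda$- and $\xi$-integrals. The swap in the $(t,s)$ variables is justified by absolute integrability, since $\iint|t-s|^{-\gamma}|v(t)||v(s)|<\infty$ for bounded $v$ on $[0,T]$. Computing $\int_0^\infty\lambda^{(\gamma-1)/2-1}e^{-\xi^2/(4\lambda)}\dd\lambda$ recovers the positive weight $|\xi|^{\gamma-1}$ up to a positive constant.

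Strictness then follows quickly. If $Q_{H_\gamma}[v]=0$, the integrand $|\xi|^{\gamma-1}|\widehat v|^2$ vanishes almost everywhere. Because $v$ has compact support, $\widehat v$ is entire, so $\widehat v\equiv0$ and hence $v=0$ almost everywhere. For a piecewise-constant control this means $v$ is the zero element, contradicting nonzeroness.

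I expect the main obstacle to be the rigorous justification of the Fourier identity with the singular weight, namely the interchange of integrals and the constant computation. The Gaussian subordination route sidesteps distribution theory, since every interchange is of nonnegative integrands or absolutely convergent ones. The positivity of the constant is the only fact needed, and it is immediate from the subordination route, since each Gaussian piece is positive. Note that no round-trip condition is needed, so the statement holds on all of $\pc$.
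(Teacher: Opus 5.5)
Your proof is correct, but it takes a different route from the paper's. The paper does not symmetrize or pass to frequency space. It uses the one-sided Laplace (Bernstein) mixture $r^{-\gamma}=\Gamma(\gamma)^{-1}\int_0^\infty\mu^{\gamma-1}e^{-\mu r}\dd\mu$ and the causal exponential states $y_\mu(t)=\int_0^te^{-\mu(t-s)}v(s)\dd s$. From $y_\mu'=v-\mu y_\mu$ it gets the energy identity $\int_0^Tv\,y_\mu\dd t=\frac12y_\mu(T)^2+\mu\int_0^Ty_\mu^2\dd t$. Truncating in $\mu$ and applying dominated convergence then writes $Q_{H_\gamma}[v]$ as a $\mu^{\gamma-1}$-average of these nonnegative brackets, and strictness follows from $v=y_\mu'+\mu y_\mu$.

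You instead write the even extension $|t-s|^{-\gamma}$ as a Gaussian mixture and apply Plancherel. This gives $Q_{H_\gamma}[v]=C_\gamma\int_\R|\xi|^{\gamma-1}|\widehat v(\xi)|^2\dd\xi$ with $C_\gamma>0$, and strictness follows from Fourier uniqueness. Entireness of $\widehat v$ is more than you need there: $\widehat v=0$ a.e.\ together with Plancherel already gives $\|v\|_2=0$. Your interchanges are properly justified. Tonelli with $|v|$ in $(\lambda,t,s)$ reproduces the finite integral $\Gamma(\gamma/2)\iint|t-s|^{-\gamma}|v(t)v(s)|$, and Tonelli in $(\lambda,\xi)$ covers the rest.

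Each route buys something different:
\begin{itemize}
\item \emph{The paper's route} stays causal and needs no Fourier analysis. It applies verbatim to any completely monotone kernel through its Bernstein measure. Its bracket, terminal storage plus dissipation, is exactly the structure the paper reuses in the one-mode identity, the LICM-shift theorem and the mixed LICM corollary.
\item \emph{Your route} gives an explicit spectral density. It identifies $Q_{H_\gamma}$ as a homogeneous negative-order Sobolev (Riesz-potential) norm of $v$, which is convenient for quantitative lower bounds. Its Plancherel step also extends to any even positive-definite kernel, a class strictly larger than the completely monotone one. The cost is the symmetrization, which gives up the causal storage--dissipation reading.
\end{itemize}
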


\begin{proof}
Use the Laplace-mixture identity
\begin{equation}\label{eq:laplace-mixture}
 r^{-\gamma}=\frac1{\Gamma(\gamma)}
 \int_0^\infty \mu^{\gamma-1}e^{-\mu r}\dd\mu,
 \qquad r>0.
\end{equation}
For $\mu>0$, define
\begin{equation}\label{eq:ymu}
 y_\mu(t)=\int_0^t e^{-\mu(t-s)}v(s)\dd s.
\end{equation}
The function $y_\mu$ is absolutely continuous and satisfies
$y_\mu'=v-\mu y_\mu$.  Hence
\begin{equation}\label{eq:exp-energy}
 \int_0^T v(t)y_\mu(t)\dd t
 =\frac12y_\mu(T)^2+\mu\int_0^T y_\mu(t)^2\dd t.
\end{equation}

For completeness, truncate the integral in \eqref{eq:laplace-mixture} at
$R<\infty$.  Fubini is absolute because
\[
 \int_0^R\!\mu^{\gamma-1}
 \iint_{\DeltaT}|v(t)v(s)|e^{-\mu(t-s)}\dd s\dd t\dd\mu
 \le \|v\|_\infty^2|\DeltaT|\frac{R^\gamma}{\gamma}<\infty.
\]
The truncated kernels increase pointwise to
$r^{-\gamma}$ and are bounded by it.  Since
$|v(t)v(s)|(t-s)^{-\gamma}$ is integrable on $\DeltaT$, dominated
convergence yields
\begin{equation}\label{eq:fractional-energy}
 Q_{H_\gamma}[v]
 =\frac1{\Gamma(\gamma)}\int_0^\infty\mu^{\gamma-1}
 \left[\frac12y_\mu(T)^2
 +\mu\int_0^T y_\mu(t)^2\dd t\right]\dd\mu.
\end{equation}
The bracket is nonnegative.  If it vanished for some $\mu>0$, then
$y_\mu=0$ almost everywhere and therefore
$v=y_\mu'+\mu y_\mu=0$ almost everywhere.  For nonzero $v$ the bracket is
strictly positive for every $\mu>0$, proving \eqref{eq:strict-positive-Q}.
\end{proof}

\begin{proof}[Proof of \Cref{thm:fractional-main}]
The kernel belongs to $L^1(\DeltaT)$ because $0<\gamma<1$.  Its row
integral is
\begin{equation}\label{eq:fractional-row}
 A_{H_\gamma}(t)=\frac{t^{1-\gamma}}{1-\gamma},
\end{equation}
which is not almost everywhere constant, so \Cref{thm:exact} forces the
intercept to vanish.  \Cref{prop:fractional-positive} then makes
$\lambda Q_{H_\gamma}$ nonnegative for every round trip exactly when
$\lambda\ge0$.  The capped statement follows from \Cref{thm:cap}.
\end{proof}

\begin{proof}[Proof of \Cref{cor:all-powers}]
The power law equals the identity exactly at $\delta=1$, where
\Cref{prop:fractional-positive} gives safety.  For $\delta\ne1$ it is
nonlinear on every interval $[-B,B]$ with $B>0$, so
\Cref{thm:fractional-main,thm:cap} give a finite capped manipulation.
\end{proof}

\FloatBarrier

\subsection{Explicit witnesses at the calibrated exponents}\label{sec:comb}

The proof of \Cref{thm:exact} is existential in the switching count.  At
the calibrated exponents the count is small enough to display.  On $[0,1]$
fix a final span $[1-s,1]$ and split it into $M$ equal cells.  In the
rightmost $1/\ell$ fraction of every cell set $v=\ell/s-1$, and set $v=-1$
elsewhere on $[0,1]$.  Each spike has width $s/(M\ell)$, so $v$ is a round
trip with $2M+1$ blocks: slow selling at unit rate throughout, and $M$ brief
intense buying bursts concentrated in the last span.  This is the burst
pattern that \citet{CuratoGatheralLillo2017} observed in their numerical
optima.  For $f=f_\delta$ and $H=H_\gamma$ every block interaction is a
finite combination of
\begin{equation}\label{eq:comb-kernel-primitive}
 K(r)=\frac{r^{2-\gamma}}{(1-\gamma)(2-\gamma)},
\end{equation}
so the cost is a finite sum of fractional powers of rationals.  Evaluating
that sum in directed interval arithmetic gives the certified negative upper
bounds in \Cref{tab:certificates}; the enclosures have width below
$10^{-30}$ and the supplement reproduces them
(\Cref{app:verification}).  By homogeneity,
$C[av(\cdot/T)]=a^{1+\delta}T^{2-\gamma}C[v]$, so each witness scales into
every rate cap and every horizon without changing sign.

\begin{table}[ht]
\centering
\caption{Sparse-comb manipulations at calibrated exponents.  All
endpoints, rates, and exponents are rational; the last column is the
certified upper endpoint of the interval enclosure of the cost, rounded
toward zero.}
\label{tab:certificates}
\begin{tabular}{@{}ccrrcrr@{}}
\toprule
$\delta$ & $\gamma$ & $M$ & $\ell$ & $s$ & blocks & certified cost bound \\
\midrule
$1/2$ & $1/2$ & 64  & 16 & $3/10$ & 129 & $-0.10963$ \\
$3/5$ & $1/2$ & 256 & 64 & $3/10$ & 513 & $-0.17409$ \\
$1/2$ & $3/5$ & 256 & 64 & $3/10$ & 513 & $-0.044262$ \\
\bottomrule
\end{tabular}
\end{table}

\subsection{Robustness and the admissible-class boundary}\label{sec:robustness}

The main theorem uses piecewise-constant controls because that is the
smallest class on which arbitrary pointwise $f$ is always meaningful.  Under
mild regularity, the manipulation is not an artifact of jumps.

\begin{proposition}[Smooth capped witnesses]\label{prop:smooth}
Assume the hypotheses of \Cref{prop:normalized}.  Suppose additionally that
$g$ is bounded and Borel measurable on $[-B,B]$.  If $g$ is nonlinear on
that interval, there exists $v\in C_c^\infty(0,T)$ such that
\begin{equation}\label{eq:smooth-witness}
 \int_0^T v(t)\dd t=0,
 \qquad \|v\|_\infty\le B,
 \qquad C_{H,g}[v]<0.
\end{equation}
\end{proposition}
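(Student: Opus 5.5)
The plan is to start from the finite piecewise-constant witness of \Cref{prop:normalized}, with rates in $[-B,B]$ and strictly negative cost, and smooth it by mollification. The obstacle is that $g$ acts pointwise on the control: a mollified control takes every intermediate value, and $g$ of those values is uncontrolled near jumps. Boundedness and Borel measurability of $g$ on $[-B,B]$ let us treat the transition zones as an $L^1$-small perturbation instead of computing $g$ on them. To keep some slack below the cap, I would first build the witness with rates in a slightly smaller interval $[-B',B']$ with $B'<B$. This is possible because the proof of \Cref{lem:two-rate} only needs some pair $x>0>y$ with $n_0\ne0$. If $g$ is nonlinear on $[-B,B]$, it is already nonlinear on $[-B',B']$ for $B'$ close to $B$, except when nonlinearity is detected only at $\pm B$. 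That edge case is harmless: mollification with a nonnegative unit-mass kernel preserves $\|\cdot\|_\infty\le B$ anyway, so slack is not actually needed.

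Concretely, let $v$ be the witness, constant on finitely many intervals with finitely many jump points, supported in a compact subset of $(0,T)$, with $\int v=0$ and $C_{H,g}[v]<0$. Let $v_\varepsilon=\rho_\varepsilon*v$ with a standard even nonnegative mollifier. Then $v_\varepsilon\in C_c^\infty(0,T)$ for small $\varepsilon$, $\int v_\varepsilon=\int v=0$, and $\|v_\varepsilon\|_\infty\le\|v\|_\infty\le B$. It remains to show $C_{H,g}[v_\varepsilon]\to C_{H,g}[v]$.

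Outside the $\varepsilon$-neighbourhoods of the finitely many jump points, $v_\varepsilon=v$ exactly, so $g(v_\varepsilon)=g(v)$ there. Both $v_\varepsilon-v$ and $g(v_\varepsilon)-g(v)$ are therefore supported on a set $E_\varepsilon$ of measure $O(\varepsilon)$ and bounded by $2B$ and $2\sup_{[-B,B]}|g|$ respectively. Measurability of $g\circ v_\varepsilon$ comes from the Borel assumption. Write
\[
 C_{H,g}[v_\varepsilon]-C_{H,g}[v]
 =\iint_{\DeltaT}H(t,s)\bigl[(v_\varepsilon-v)(t)g(v_\varepsilon(s))+v(t)\bigl(g(v_\varepsilon)-g(v)\bigr)(s)\bigr]\dd s\dd t .
\]
Each term is bounded by a constant times $\iint_{\{t\in E_\varepsilon\}\cup\{s\in E_\varepsilon\}}|H|$. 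This tends to zero by absolute continuity of the integral of $|H|\in L^1$, since the strips $E_\varepsilon\times(0,T)$ and $(0,T)\times E_\varepsilon$ have measure $O(\varepsilon)$. Strict negativity therefore survives for small $\varepsilon$, which gives \eqref{eq:smooth-witness}.

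The main point to check is the claim that $v_\varepsilon=v$ away from the jumps. This requires the mollifier support to be smaller than the minimal plateau length. Since the witness has finitely many pieces, this holds for small enough $\varepsilon$. Lebesgue points or continuity of $g$ are never needed, only the uniform bound on $g$ and absolute continuity of $\int|H|$.
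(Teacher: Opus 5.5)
Your proposal is correct and follows essentially the same route as the paper: mollify the capped piecewise-constant witness of \Cref{prop:normalized} with a nonnegative mollifier. Then $v_\varepsilon=v$ off an $O(\varepsilon)$-neighbourhood of the finitely many jumps, and boundedness of $g$ together with absolute continuity of $\int|H|$ on the resulting strips carries the strict negative sign to $v_\varepsilon$; the detour through a smaller cap $B'<B$ is unnecessary, as you note yourself.
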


\begin{proof}
Choose the piecewise-constant witness $v_0$ from
\Cref{prop:normalized} with support strictly inside $(0,T)$.  Extend it by
zero and convolve with a nonnegative smooth mollifier.  Convolution preserves
the integral and does not increase the $L^\infty$ norm.  Outside shrinking
neighborhoods $E_\varepsilon$ of the finitely many jumps, the mollification
$v_\varepsilon$ equals $v_0$ exactly.  The difference between the two cost
integrands is supported where $t\in E_\varepsilon$ or
$s\in E_\varepsilon$ and is bounded by a constant multiple of $|H(t,s)|$.
Absolute continuity of the $L^1$ integral gives
$C_{H,g}[v_\varepsilon]\to C_{H,g}[v_0]<0$.  Borel measurability makes the
mollified composition measurable.
\end{proof}

For homogeneous powers, scaling makes the instability local in much stronger
topologies.

\begin{corollary}[Smooth local instability]\label{cor:local}
Let $0<\gamma<1$ and $\delta>0$, $\delta\ne1$.  There is
$w\in C_c^\infty(0,T)$ with
$C_{H_\gamma,f_\delta}[w]<0$.  Moreover,
\begin{equation}\label{eq:homogeneous-scaling}
 C_{H_\gamma,f_\delta}[aw]
 =a^{1+\delta}C_{H_\gamma,f_\delta}[w]<0
 \qquad(a>0).
\end{equation}
Consequently every neighborhood of zero in the standard locally convex
topology of $C_c^\infty(0,T)$ contains a manipulation.  Positive bounds on
speed, on finitely many homogeneous Sobolev or $C^k$ seminorms, on total
variation, or on $L^p$ norms do not by themselves restore safety.
\end{corollary}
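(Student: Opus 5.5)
The plan is to get a smooth compactly supported witness from \Cref{prop:smooth}, then use exact homogeneity to move it arbitrarily close to zero. First, since $\delta\ne1$, the law $f_\delta$ is odd with $f_\delta(0)=0$ and is not proportional to the identity on any interval $[-B,B]$ with $B>0$. For instance, $f_\delta(x)/x=|x|^{\delta-1}$ is not constant. Moreover, $f_\delta$ is continuous, hence Borel, and bounded on $[-B,B]$. With $B=1$, say, \Cref{prop:smooth} applied with $H=H_\gamma\in L^1(\DeltaT)$ therefore produces $w\in C_c^\infty(0,T)$ with $\int w=0$ and $C_{H_\gamma,f_\delta}[w]<0$. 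The extension of the pointwise cost formula \eqref{eq:cost} to smooth $w$ is the one already used in that proposition.

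Next comes the scaling identity \eqref{eq:homogeneous-scaling}. For $a>0$ we have $f_\delta(aw)=a^\delta f_\delta(w)$ pointwise, and the cost is linear in the outer factor $v(t)$. Substituting into the double integral over $\DeltaT$ factors out $a\cdot a^\delta$ directly, with no change of time variable. Hence $C[aw]=a^{1+\delta}C[w]<0$ for every $a>0$, and $aw$ remains a smooth round trip supported in the same compact set $\supp w\subset(0,T)$.

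For the topological statement, recall that a basic neighborhood of zero in $C_c^\infty(0,T)$, in the inductive-limit topology, contains a set of the form $\{\phi\in C_c^\infty(K):\sum_{k\le N}\|\phi^{(k)}\|_\infty<\varepsilon\}$ for the compact set $K=\supp w$. It suffices to observe that every neighborhood of zero is absorbing on the finite-dimensional ray $\{aw\}$, which is continuous in $a$. Concretely, every seminorm listed in the statement is positively homogeneous of degree one: speed $\|\cdot\|_\infty$, finitely many $C^k$ or homogeneous Sobolev seminorms, total variation, and $L^p$ norms. Each is finite at $w$ because $w$ is smooth with compact support. So choosing $a$ small puts $aw$ below any prescribed finite collection of positive bounds while keeping the cost strictly negative.

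The only point needing care is confirming that \Cref{prop:smooth} applies to a singular kernel. I expect no genuine obstacle there: its mollification argument uses only $H\in L^1(\DeltaT)$ together with boundedness and measurability of $g$ on the cap, and both hold here. Everything else is homogeneity bookkeeping.
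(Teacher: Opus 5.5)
Your proof is correct and follows the same route as the paper, which gives no separate proof and lets the corollary rest on \Cref{prop:smooth} plus homogeneity. You apply \Cref{prop:smooth} to the bounded continuous, non-proportional law $f_\delta$ on a cap to get the smooth zero-volume witness, then use the exact $a^{1+\delta}$ scaling together with continuity of $a\mapsto aw$ (equivalently, degree-one homogeneity of the listed seminorms) to place $aw$ in any neighborhood of zero.
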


The switching complexity, by contrast, is essential.  Each witness selected
by the proof has finitely many pieces, but the admissible class contains no
uniform bound on that number.

For $H(t,s)=(t-s)^{-1/2}$ and $f(x)=\operatorname{sgn}(x)|x|^2$, for
instance, every nonzero two-piece round trip has strictly positive cost:
this is the case $\gamma=1/2\ge2-\log_23$ of
\Cref{thm:fractional-two-block-phase}, in which no superlinear power admits
a two-block manipulation, although \Cref{cor:all-powers} supplies a finite
one.

This rules out a tempting but invalid inference: checking
all two-rate schedules, or observing positivity on a coarse numerical grid,
does not establish no manipulation for the unrestricted finite class.  The
same warning applies to a common minimum dwell time.
\Cref{sec:friction-complexity} quantifies the block count for the
power-law kernel; for a general $L^1$ kernel the switching level is
finite but not uniform.

\section{Scalar state-outside and mixed classifications}\label{sec:outside}

We now turn from a nonlinearity applied to the rate before propagation to a
nonlinearity applied to the state after propagation.  The change of order
replaces affine rigidity by a kernel classification.

\subsection{Universal state-outside passivity}

For a fixed horizon $T$, write
\begin{equation}\label{eq:U-property}
 \mathsf U_T(G):\qquad
 \cost^{\mathrm{out}}_{G,h,T}[v]
 =\int_0^Tv(t)h((G*v)(t))\dd t\ge0
\end{equation}
for every $v\in\pc(T)$ and every $h\in\readouts$.

Define two dead-zone readouts
\begin{align}
 h_*(x)&=(x\wedge0)+(x-1)_+,
 \label{eq:fixed-deadzone}\\
 h_\dagger(x)&=
 \begin{cases}
 -e^{-1/x^2},&x<0,\\
 0,&0\le x\le1,\\
 e^{-1/(x-1)^2},&x>1.
 \end{cases}
 \label{eq:smooth-deadzone}
\end{align}
The second function is bounded and $C^\infty$ on $\R$.

\begin{theorem}[Scalar kernel converse]\label{thm:outside-cp}
Let $G\in L^1(0,T)$ be real.  The following are equivalent:
\begin{enumerate}[label=\textup{(\roman*)}]
\item $\mathsf U_T(G)$ holds;
\item the same inequality holds for every
  $h\in\readouts_{\mathrm{an}}$;
\item $G$ is completely positive on $[0,T]$.
\end{enumerate}
Without a prescribed rate cap, these conditions are also equivalent to
safety for the single fixed readout $h_*$ in
\eqref{eq:fixed-deadzone}, and separately to safety for the single fixed
bounded smooth readout $h_\dagger$ in \eqref{eq:smooth-deadzone}.

The equivalences \textup{(i)}$\Leftrightarrow$\textup{(iii)} and
\textup{(ii)}$\Leftrightarrow$\textup{(iii)} remain valid under every fixed
positive symmetric rate cap.  The two fixed-readout equivalences are stated
without a cap.

If $G\ne0$, these conditions are further equivalent to the existence of a
complementary measure of the form \eqref{eq:complement-form}.  If complete
positivity fails, an ordinary finite piecewise-constant negative-cost input
exists; it may be chosen inside every positive cap when the readout is
allowed to vary within the universal or analytic class.
\end{theorem}

The zero kernel is included in the resolvent statement: $r_a=0$ and $s_a=1$.
It is split off only from the complementary-measure formulation, because no
$L$ can satisfy $L*0=1$.

\subsection{The first resolvent is forced to be nonnegative}

\begin{lemma}[Negative-hinge resolvent probe]\label{lem:r-resolvent-probe}
If \eqref{eq:U-property} holds for the readout $h_-(x)=x\wedge0$, then
$r_a\ge0$ almost everywhere for every $a>0$.
\end{lemma}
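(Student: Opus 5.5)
The plan is to build an input whose state $D=G*v$ is an explicit multiple of $r_a$, so that the negative part of $r_a$ shows up directly in the cost. Fix $a>0$ and suppose $r_a<0$ on a set of positive measure in $(0,T)$. The resolvent $r_a\in L^1(0,T)$ exists by standard Volterra theory. The identity $r_a+aG*r_a=aG$ can be rewritten as $G*(\delta_0-r_a)=r_a/a$.

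For a bounded nonnegative pulse $\varphi$ on $(0,T)$, for instance $\varphi=\varepsilon^{-1}\one_{(0,\varepsilon)}$, I take the input
\[
 v=\varphi-r_a*\varphi .
\]
This $v$ is bounded and measurable, because $\varphi$ is bounded and $r_a\in L^1$. Its state is
\[
 D=G*v=G*(\delta_0-r_a)*\varphi=\tfrac1a\,r_a*\varphi .
\]
The relation $v=\varphi-aD$ then follows algebraically.

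Next I evaluate the cost with the readout $h_-(x)=x\wedge0$, which lies in $\readouts$:
\[
 \cost^{\mathrm{out}}_{G,h_-,T}[v]
 =\int_0^T\varphi(t)\,(D(t)\wedge0)\dd t-a\int_0^T D(t)\,(D(t)\wedge0)\dd t
 =\int_0^T\varphi\,(D\wedge0)\dd t-a\int_0^T (D_-)^2\dd t .
\]
Since $\varphi\ge0$, the first term is nonpositive. Hence the whole cost is at most $-a\int_0^T(D_-)^2\dd t$. This sign observation is the key step, and it avoids having to estimate the pulse-on-pulse interaction. Without it, that interaction could be large when $r_a$ is singular at the origin.

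It then remains to show that $D_-\ne0$ for a suitable pulse. As $\varepsilon\to0$, the approximate identity gives $r_a*\varphi\to r_a$ in $L^1(0,T)$, so $D\to r_a/a$ in $L^1$. Because $x\mapsto x_-$ is $1$-Lipschitz, $D_-\to (r_a)_-/a$ in $L^1$, and the limit is nonzero by assumption. So for small $\varepsilon$ the set $\{D<0\}$ has positive measure, and the cost is strictly negative.

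Finally, \Cref{lem:finite-step-closure} applies, since $G\in L^1(0,T)$, $h_-$ is continuous, and $v$ is bounded measurable. It converts $v$ into a finite piecewise-constant input with strictly negative cost, contradicting \eqref{eq:U-property} for $h_-$. Hence $r_a\ge0$ almost everywhere for every $a>0$.

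The only point I expect to need care is the identity $G*r_a*\varphi=r_a*(G*\varphi)$, that is, associativity and commutativity of the truncated convolutions on $(0,T)$ among $L^1$ functions and a bounded $\varphi$. This is standard (Fubini) and presents no real obstacle.
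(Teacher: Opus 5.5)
Your proof is correct and is essentially the paper's argument: the same input $v=\varphi-r_a*\varphi$ with state $D=\frac1a r_a*\varphi$, the same sign bound $\cost^{\mathrm{out}}_{G,h_-,T}[v]\le -a\int_0^T(D_-)^2\dd t$, the same approximate identity $\varphi=\varepsilon^{-1}\one_{(0,\varepsilon)}$, and the same use of finite-step closure to pass to a piecewise-constant input. The only difference is presentation: you argue by contradiction, whereas the paper first concludes $r_a*w\ge0$ for every nonnegative step $w$ and then takes the limit.
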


\begin{proof}
Fix $a>0$, abbreviate $r=r_a$, and let $w\ge0$ be a bounded finite step
function.  Set
\begin{equation}\label{eq:r-probe-control}
 D=\frac1a r*w,\qquad v=w-r*w=w-aD.
\end{equation}
The resolvent identity gives
\begin{equation}\label{eq:r-resolvent-algebra}
 \frac1a r+G*r=G,\qquad G*v=D.
\end{equation}
The input $v$ is bounded and is admissible by
\Cref{lem:finite-step-closure}.  Since $h_-(D)=D$ on $\{D<0\}$ and
vanishes elsewhere,
\begin{align}
 \cost^{\mathrm{out}}_{G,h_-,T}[v]
 &=\int_{\{D<0\}}(wD-aD^2)\dd t
 \le-a\int_0^T(D_-)^2\dd t\le0.
 \label{eq:r-probe-cost}
\end{align}
Safety forces equality, hence $r_a*w\ge0$ for every nonnegative finite step
$w$.  With the causal approximate identity
$w_n=n\one_{(0,1/n)}$, one has $r_a*w_n\to r_a$ in $L^1(0,T)$.
Therefore $r_a\ge0$ almost everywhere.
\end{proof}

\subsection{The second resolvent is forced to be nonnegative}

\begin{lemma}[Positive-hinge resolvent probe]\label{lem:s-resolvent-probe}
Suppose $r_a\ge0$ and \eqref{eq:U-property} holds for
$h_{1/a}(x)=(x-1/a)_+$.  Then $s_a\ge0$ on $[0,T]$.
\end{lemma}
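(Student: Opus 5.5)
The plan is to copy the negative-hinge probe of \Cref{lem:r-resolvent-probe}, using the second resolvent itself as the input. First I would express $s_a$ through $r_a$. Convolving the first identity in \eqref{eq:resolvent-def} with $1$ gives $1*r_a+aG*(1*r_a)=aG*1$. Hence $\tilde s=1-1*r_a$ satisfies
\[
 \tilde s+aG*\tilde s
 =1-1*r_a+aG*1-aG*1*r_a
 =1-1*r_a+1*r_a=1 .
\]
Uniqueness of the Volterra resolvent then gives $s_a=1-\int_0^t r_a$. So $s_a$ is absolutely continuous and bounded on $[0,T]$, with $s_a(0)=1$. The hypothesis $r_a\ge0$ adds that $s_a$ is nonincreasing. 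This makes $\{s_a<0\}$, if nonempty, an interval $(t_0,T]$ of positive length, and it lets an a.e.\ conclusion be upgraded to a pointwise one.

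Second, I would take the probe input $v=s_a$, abbreviated $s$. By the defining identity, $G*v=(1-s)/a$. The readout $h_{1/a}(x)=(x-1/a)_+$ then gives
\[
 h_{1/a}\bigl((G*v)(t)\bigr)=\Bigl(\tfrac{1-s(t)}a-\tfrac1a\Bigr)_+=\tfrac1a\,s(t)_- .
\]
Therefore
\[
 \cost^{\mathrm{out}}_{G,h_{1/a},T}[v]=\frac1a\int_0^T s\,s_-\dd t=-\frac1a\int_0^T (s_-)^2\dd t\le0 .
\]
The input $s$ is bounded and measurable, and in fact continuous. By \Cref{lem:finite-step-closure}, with $h_{1/a}$ continuous, a strictly negative cost would transfer to a finite piecewise-constant input. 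The safety hypothesis therefore forces $\int_0^T(s_-)^2=0$. So $s\ge0$ almost everywhere, and by continuity everywhere on $[0,T]$.

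The main point needing care is admissibility and the direction of the closure argument. \Cref{lem:finite-step-closure} requires $G\in L^1(0,T)$ and continuity of the readout, and both hold here. The argument must be run by contradiction: assume $s_-\ne0$ on a set of positive measure, so the cost of the continuous input is strictly negative, and then produce a step-function witness. I expect the conceptual obstacle to be noticing that the constant target $1$ in the $s_a$ equation is exactly the threshold shift $1/a$ that the positive hinge removes. After that, the remaining computation is a single line, and $r_a\ge0$ is used only for the monotone and continuous structure of $s_a$.
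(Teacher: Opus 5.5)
Your proposal is correct and follows the paper's proof essentially step for step. The paper also writes $s_a=1-r_a*1$, probes with the bounded input $v=s_a$ whose state is $(1-s_a)/a$, gets cost $-\frac1a\int_0^T(s_a^-)^2\dd t$ for the hinge $(x-1/a)_+$, and concludes via finite-step closure and continuity of $s_a$. As you observed, the hypothesis $r_a\ge0$ is not needed for the final step, since continuity alone upgrades the almost-everywhere conclusion.
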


\begin{proof}
Resolvent algebra gives
\begin{equation}\label{eq:s-from-r}
 s_a=1-r_a*1,\qquad s_a+aG*s_a=1.
\end{equation}
Thus $s_a$ is absolutely continuous, $s_a(0)=1$, and $r_a\ge0$ makes it
nonincreasing.  For the bounded input $v=s_a$, its state is
\begin{equation}\label{eq:s-probe-state}
 D=G*s_a=\frac{1-s_a}{a},\qquad
 h_{1/a}(D)=\frac{(-s_a)_+}{a}.
\end{equation}
Consequently,
\begin{equation}\label{eq:s-probe-cost}
 \cost^{\mathrm{out}}_{G,h_{1/a},T}[s_a]
 =-\frac1a\int_0^T(s_a^-)^2\dd t.
\end{equation}
Safety makes the right side zero.  Continuity of $s_a$ then yields
$s_a\ge0$ everywhere.
\end{proof}

\begin{proof}[Proof of necessity in \Cref{thm:outside-cp}]
Universal safety contains the two readout probes above, so both resolvents
are nonnegative for every $a>0$.  This is complete positivity.

For the analytic formulation, use
\begin{align}
 n_\varepsilon(x)
 &=x-\varepsilon\log(1+e^{x/\varepsilon})+\varepsilon\log2,
 \label{eq:analytic-negative-hinge}\\
 p_{\varepsilon,c}(x)
 &=\varepsilon\log(1+e^{(x-c)/\varepsilon})
   -\varepsilon\log(1+e^{-c/\varepsilon}).
 \label{eq:analytic-positive-hinge}
\end{align}
Both are real analytic, strictly increasing, and vanish at zero.  On compact
sets they converge uniformly as $\varepsilon\downarrow0$ to $x\wedge0$ and
$(x-c)_+$, respectively.  A fixed bounded probe has compact state range, so
its cost is continuous under this uniform readout approximation.  The hinge
inequalities, and hence complete positivity, follow.

Under a cap, scale the first probe and its negative hinge jointly.  For the
second probe use the scaled hinge $(x-\lambda/a)_+$ after replacing $s_a$ by
$\lambda s_a$.  Analytic approximation preserves the strict negative
margin.  This proves cap-local necessity for the variable readout classes.
\end{proof}

\subsection{One fixed dead zone is enough}

For completeness, the universal readout quantifier can be collapsed without
a cap.  In the first-resolvent construction, multiply $w,D,v$ by a small
positive scalar so that $D_+\le1$.  Then $h_*(D)=D$ on $D<0$ and vanishes
on the attained nonnegative range; \eqref{eq:r-probe-cost} again forces
$r_a\ge0$.  Hence $s_a\le1$.  Now take
\begin{equation}\label{eq:fixed-deadzone-s-probe}
 v=a s_a,\qquad D=1-s_a.
\end{equation}
The fixed readout vanishes when $0\le s_a\le1$ and equals $-s_a$ when
$s_a<0$.  Therefore
\begin{equation}\label{eq:fixed-deadzone-s-cost}
 \int_0^Ta s_a(t)h_*(1-s_a(t))\dd t
 =-a\int_0^T(s_a^-)^2\dd t,
\end{equation}
which forces $s_a\ge0$.

The smooth dead-zone law $h_\dagger$ has the same sign geometry.  On the
first probe, $v>0$ and $h_\dagger(D)<0$ wherever $D<0$; on the second,
$v<0$ and $h_\dagger(D)>0$ wherever $s_a<0$.  Either negative set would
produce strict negative cost.  This proves both fixed-readout converses.
The threshold at one explains why no arbitrary fixed-cap statement follows
for these particular fixed functions.

\subsection{The storage--dissipation identity}

The following is the convolution chain rule of \citet[identity~(9)]{Zacher2008},
in the form of \citet[Lemma~2.2]{VergaraZacher2015}, written with its
endpoint storage and extended below to bounded-variation and singular
complementary densities.

\begin{proposition}[Complementary-kernel identity]\label{prop:cp-identity}
Let $\beta\in\R$, $\ell\in BV(0,T)$, and $D\in W^{1,1}(0,T)$ with
$D(0)=0$.  Put $Q=\beta D+\ell*D$, $v=Q'$, and, for $h\in\readouts$,
\begin{equation}\label{eq:phi-def}
 \Phi_h(x)=xh(x)-F_h(x)\ge0.
\end{equation}
Then
\begin{align}
 \int_0^Tv(t)h(D(t))\dd t
={}&\beta F_h(D(T))+\bigl(\ell*F_h(D)\bigr)(T)
 +\int_0^T\ell(t)\Phi_h(D(t))\dd t
 \notag\\
&+\int_{(0,T)}(-\dd\ell)(r)
  \int_r^T\mathcal B_h(D(t-r),D(t))\dd t.
\label{eq:cp-storage-identity}
\end{align}
If $\ell$ is nonnegative and nonincreasing, the last two terms are
nonnegative, independently of the sign of $\beta$.  The identity also holds
for a nonnegative nonincreasing $\ell\in L^1(0,T)$ that is unbounded at
zero, with the final integral interpreted against its Stieltjes measure.

In particular, if $G\ne0$ is completely positive,
$L=\beta\delta_0+\ell(t)\dd t$ satisfies \eqref{eq:complement-form},
$v\in\pc(T)$, and $D=G*v$, then \eqref{eq:cp-storage-identity} applies
and every term on its right is nonnegative.
\end{proposition}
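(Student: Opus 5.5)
The plan is to prove \eqref{eq:cp-storage-identity} first for smooth $\ell$ and then extend to $BV$ and to singular $\ell$ by approximation. The sign statements, and the final assertion about completely positive $G$, will then follow by inspecting the right-hand side term by term.

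\emph{Step 1: differentiate $Q$.} Since $D(0)=0$ and $\ell\in BV(0,T)$, I would use the Stieltjes form of the derivative of a convolution,
\[
 v=Q'=\beta D'+\ell(0+)D+\int_{(0,t)}\dd\ell(r)\,D(t-r).
\]
Then I split $\ell(0+)=\ell(t)+\int_{(0,t)}(-\dd\ell)(r)$ for almost every $t$. This rewrites the integrand as
\[
 vh(D)=\beta D'h(D)+\ell(t)D(t)h(D(t))+\int_{(0,t)}(-\dd\ell)(r)\bigl(D(t)-D(t-r)\bigr)h(D(t)).
\]

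\emph{Step 2: pointwise algebra.} The three pieces are handled as follows.
\begin{itemize}
\item The $\beta$ piece is exact: $D'h(D)=(F_h(D))'$ because $D\in W^{1,1}$ and $F_h\in C^1$, so it integrates to $\beta F_h(D(T))$.
\item In the $\ell(t)$ piece, I write $Dh(D)=\Phi_h(D)+F_h(D)$.
\item In the memory piece, I use the definition of the Bregman divergence,
\[
 \bigl(D(t)-D(t-r)\bigr)h(D(t))=F_h(D(t))-F_h(D(t-r))+\mathcal B_h(D(t-r),D(t)).
\]
\end{itemize}
Fubini over $\{0<r<t<T\}$ then produces the Bregman double integral exactly as printed, together with the $\ell\Phi_h$ term. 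All $F_h$ contributions are left to be collected. I shift $\int_r^T F_h(D(t-r))\dd t=\int_0^{T-r}F_h(D)$, using $F_h(D(0))=0$, and assemble the $F_h$-terms into a single integral $\int_0^T F_h(D(s))\,m(s)\dd s$ with an explicit weight $m$ built from $\ell$.

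\emph{Step 3, the delicate bookkeeping.} The next step is to identify $\int F_h(D)\,m$ with the storage $(\ell*F_h(D))(T)$. A smooth test calculation with my splitting gives $m(s)=\ell(s)+\ell(T-s)-\ell(T)$. Relative to the printed storage weight $\ell(T-s)$, this leaves a residual $\int_0^T(\ell(s)-\ell(T))F_h(D(s))\dd s$. I therefore expect this to be the main obstacle. I would first recheck the derivative convention (using $(\ell*D)'=\ell*D'$ instead of the Stieltjes form) and the orientation of the Bregman arguments, to reconcile the weights exactly as printed.

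For the sign conclusions the discrepancy is harmless in either case. If $\ell$ is nonincreasing, then $\ell(s)-\ell(T)\ge0$, and $F_h\ge0$ because $h\in\readouts$. Hence every extra storage-type term is nonnegative. Likewise $\ell\Phi_h\ge0$, and $-\dd\ell\ge0$ together with $\mathcal B_h\ge0$ makes the Bregman term nonnegative. None of these depend on $\beta$.

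\emph{Step 4: extension and the final assertion.} For singular nonincreasing $\ell\in L^1$, I would truncate, replacing $\ell$ by $\ell\wedge N$. Each term converges: monotone convergence applies to the $\Phi_h$ and Bregman terms, and dominated convergence to the storage term, using that $D$ is bounded. The left side converges because $Q_N\to Q$ in $W^{1,1}$.

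For completely positive $G\ne0$, I would take $L$ from \eqref{eq:complement-form}. Then $Q=L*D=L*G*v=1*v$, so $Q'=v$. Moreover $D=G*v$ lies in $W^{1,1}$ with $D(0)=0$, which I would verify via the same truncation, or by finite-step closure from \Cref{lem:finite-step-closure} applied to smoothed $v$. With $\beta,\ell\ge0$ every term on the right is then nonnegative.
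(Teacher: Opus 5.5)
Your Steps 1 and 2 are correct, but Step 3 is where the proof breaks. Your ``test calculation'' of the storage weight is wrong, and instead of fixing it you leave the identity unproved and fall back on the sign conclusions. That fallback is not valid, because the proposition asserts an identity. The identity you actually derive, with weight $m(s)=\ell(s)+\ell(T-s)-\ell(T)$, is false. Take $\beta=0$, $h=\id$, $\ell(t)=-t$: the left side is $-\tfrac12\bigl(\int_0^TD\bigr)^2$, which the printed right side reproduces and yours does not. For nonincreasing $\ell$ your right side exceeds the true cost by $\int_0^T(\ell(s)-\ell(T))F_h(D(s))\dd s\ge0$. So nonnegativity of your right side says nothing about the actual cost.

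There is no discrepancy to reconcile; the slip is in one inner mass. Write $\phi=F_h(D)$. Your decomposition leaves the $F_h$-terms
\[
 \int_0^T\ell(t)\phi(t)\dd t
 +\int_0^T\phi(t)\int_{(0,t)}(-\dd\ell)(r)\dd t
 -\int_{(0,T)}(-\dd\ell)(r)\int_0^{T-r}\phi(s)\dd s .
\]
In the middle term the inner mass is $\ell(0+)-\ell(t)$ for a.e.\ $t$, using the right-continuous representative. Replacing it by the total mass $\ell(0+)-\ell(T)$ is exactly what produces your weight. With the correct mass, the first two terms combine to $\ell(0+)\int_0^T\phi$. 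Fubini on $\{0<r<T,\ 0<s<T-r\}$ turns the third term into $-\int_0^T\phi(s)\bigl(\ell(0+)-\ell(T-s)\bigr)\dd s$. The sum is $\int_0^T\ell(T-s)\phi(s)\dd s=(\ell*F_h(D))(T)$, precisely the printed storage, so your route does prove the proposition once this is corrected.

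The paper avoids this bookkeeping altogether. It differentiates $\ell*F_h(D)$ by the same Stieltjes formula and uses the pointwise identity $h(x)y-F_h(y)=\Phi_h(x)-\mathcal B_h(y,x)$, so the storage enters directly as the integral of an exact derivative.

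One smaller point on Step~4: $D=G*v\in W^{1,1}$ with $D(0)=0$ needs neither truncation nor finite-step closure. For $v\in\pc(T)$, $D$ is a finite combination of translates of the absolutely continuous primitive $t\mapsto\int_0^tG$.
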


\begin{proof}
First suppose $\ell\in BV(0,T)$ and use its right-continuous
representative.  Since $D(0)=F_h(D(0))=0$, Stieltjes differentiation gives,
for almost every $t$,
\[
 (\ell*D)'(t)=\ell(0+)D(t)
      +\int_{(0,t)}D(t-r)\dd\ell(r),
\]
and the analogous formula with $D$ replaced by $F_h(D)$.  The elementary
identity
\[
 h(x)y-F_h(y)=\Phi_h(x)-\mathcal B_h(y,x)
\]
therefore yields
\begin{align*}
 h(D(t))(\ell*D)'(t)-(\ell*F_h(D))'(t)
 ={}&\ell(t)\Phi_h(D(t))\\
 &+\int_{(0,t)}\mathcal B_h(D(t-r),D(t))(-\dd\ell)(r).
\end{align*}
Integrating, using Fubini for the finite signed measure $\dd\ell$, and
adding $\beta\int_0^Th(D)D'=\beta F_h(D(T))$ proves
\eqref{eq:cp-storage-identity}.  No sign assumption on $\beta$ was used.
The stated signs follow from convexity and $h(0)=0$.

For a nonnegative nonincreasing density with a possible singularity at
zero, write $\ell_n=\min(\ell,n)$.  Young's inequality gives
\begin{equation}\label{eq:singular-complement-closure}
 \|((\ell_n-\ell)*D)'\|_1
 \le\|\ell_n-\ell\|_1\|D'\|_1\longrightarrow0.
\end{equation}
The left sides of the truncated identities consequently converge.  The
endpoint term with $\beta$ is unchanged, while each term containing
$\ell_n$ or $-\dd\ell_n$ increases to its counterpart for $\ell$ by
monotone convergence.  This proves the identity also for the full
Stieltjes measure, including atoms and singular-continuous parts.

Finally, a finite-step input makes $D=G*v$ a finite linear combination of
translates of the primitive of $G$, so $D\in W^{1,1}(0,T)$ and $D(0)=0$.
Convolving $L*G=1$ with $v$ gives $Q=\beta D+\ell*D$ and $Q'=v$.
For the completely positive complement, $\beta\ge0$ as well, making its
endpoint term nonnegative.
\end{proof}

\begin{proof}[Completion of the proof of \Cref{thm:outside-cp}]
If $G=0$, the cost vanishes.  If $G\ne0$ is completely positive,
\Cref{prop:cp-identity} proves safety for every $h\in\readouts$ and every
input.  This also proves sufficiency for each smaller readout or capped
class.  Necessity was proved above.

If complete positivity fails, choose $a$ for which one resolvent has a
negative part.  The corresponding probe has strict negative cost.  By
\Cref{lem:finite-step-closure}, an ordinary finite piecewise-constant probe
retains the strict sign.  The cap and analytic refinements were already
constructed in the necessity proof.
\end{proof}

\subsection{Consequences and sharp boundaries}

Complete positivity forces $G\ge0$ almost everywhere because
$r_a/a\to G$ in $L^1(0,T)$ as $a\downarrow0$.  It rules out an initial
delay for a nonzero universally safe kernel.  It does not force
complete monotonicity or pointwise decrease.  For example,
\begin{equation}\label{eq:cp-not-cm-example}
 \beta=1,\qquad \ell(t)=(1+t)e^{-t},\qquad
 G(t)=\frac13+\frac23e^{-3t/2}
 \cos\left(\frac{\sqrt3}{2}t\right)
\end{equation}
satisfies $\ell'=-te^{-t}\le0$ and $L*G=1$, hence is completely
positive, but $G^{(4)}(0+)=-3<0$ and $G$ is not completely monotone.

Round-trip safety alone does not imply complete positivity without a tail
condition.  If $G\equiv g\ne0$ and $q(t)=\int_0^tv(s)\dd s$, then
\begin{equation}\label{eq:pure-permanent-path}
 \cost^{\mathrm{out}}_{g,h,T}[v]
 =\frac{F_h(gq(T))}{g}.
\end{equation}
Every round trip has zero cost, even for $g<0$, although a negative constant
kernel is not completely positive.  The correct replacement for this failed
implication is developed in \Cref{sec:roundtrip-storage}.

\subsection{The mixed architecture}

Let the accessible rate set be $X=\R$ or $X=[-B,B]$ with $B>0$.  Let
$f:X\to\R$ be arbitrary and finite-valued.

\begin{theorem}[Universal mixed classification]\label{thm:mixed-classification}
Fix $T>0$ and a nonzero real $G\in L^1(0,T)$.  The following are equivalent:
\begin{enumerate}[label=\textup{(\roman*)}]
\item $\cost^{\mathrm{mix}}_{G,f,h,T}[v]\ge0$ for every
  $v\in\pc(T;X)$ and every $h\in\readouts$;
\item there is $\lambda\in\R$ such that
  \begin{equation}\label{eq:mixed-classification}
   f(x)=\lambda x\quad(x\in X),\qquad
   \lambda G\text{ is completely positive on }[0,T].
  \end{equation}
\end{enumerate}
The same equivalence holds with $\readouts$ replaced by
$\readouts_{\mathrm{an}}$, and under every fixed positive symmetric rate
cap.  The zero effective kernel $\lambda G=0$ is included.
\end{theorem}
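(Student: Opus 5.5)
Sufficiency is immediate. If $f=\lambda\,\id$ on $X$, then $G*f(v)=(\lambda G)*v$, so the mixed cost is the state-outside cost of the effective kernel $\lambda G$. Every input in $\pc(T;X)$ and every $h\in\readouts$ then gives a nonnegative cost by \Cref{thm:outside-cp} (sufficiency direction, via \Cref{prop:cp-identity}), or trivially when $\lambda G=0$. This argument needs no cap restriction, and it covers $\readouts_{\mathrm{an}}\subset\readouts$.

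For necessity, the plan has two steps. First force linearity of $f$ by reducing to the rate-inside rigidity of \Cref{sec:rate-inside}; then apply \Cref{thm:outside-cp} to $\lambda G$.

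\textbf{Step 1: linearity.} Specialize the readout to $h=\id$, which lies in $\readouts$. The mixed cost becomes the rate-inside cost $C_{H,f}$ with $H(t,s)=G(t-s)$, a nonzero $L^1(\DeltaT)$ kernel. Safety for all inputs in particular gives safety on round trips. With $X=\R$, \Cref{cor:convolution} forces $f(x)=\lambda x$. With $X=[-B,B]$, \Cref{thm:cap} gives $f=c+\lambda_B x$ on $[-B,B]$ with $cA_H$ constant, and the convolution argument after \Cref{cor:convolution} forces $c=0$. For the analytic class, $\id\in\readouts_{\mathrm{an}}$, so the same reduction applies verbatim.

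\textbf{Step 2: complete positivity.} Substituting $f=\lambda\,\id$ turns universal mixed safety into $\mathsf U_T(\lambda G)$, or its analytic or capped analogue. \Cref{thm:outside-cp} gives (i)$\Leftrightarrow$(iii) and (ii)$\Leftrightarrow$(iii), including under fixed positive symmetric caps, so $\lambda G$ is completely positive. When $\lambda=0$ the zero kernel is completely positive in the resolvent sense, consistent with the statement.

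\textbf{Expected obstacle.} The delicate point is Step 1 under a cap, together with the claim that $f$ is arbitrary and merely finite-valued. The rigidity proof of \Cref{prop:normalized} uses only rates in $[-B,B]$ and no regularity of $f$, so this should go through. One should still check that the capped rate-inside argument yields the intercept condition with no leftover affine term. That check reduces to $A_H=\int_0^tG$ being absolutely continuous, as in \Cref{cor:convolution}, so $G\neq0$ forces $c=0$.
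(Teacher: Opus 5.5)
Your proposal is correct and follows the paper's proof essentially step for step. The identity readout, which lies in both $\readouts$ and $\readouts_{\mathrm{an}}$, reduces the mixed cost to the rate-inside convolution cost, so rigidity forces $f=\lambda\,\id$ on the accessible set. The remaining condition is then universal state-outside passivity of $\lambda G$. Necessity, including the cap-local variants, comes from \Cref{thm:outside-cp}, and sufficiency from \Cref{prop:cp-identity}, or is trivial when $\lambda G=0$. Your explicit handling of the capped intercept, via \Cref{thm:cap} and the absolute continuity of $A_H$, simply spells out what the paper compresses into its appeal to the convolution corollary.
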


\begin{proof}
Condition \textup{(i)} contains round trips and the identity readout.
For that readout, \eqref{eq:mixed-cost} is the rate-inside convolution cost
of \Cref{thm:exact}.  Since $G$ is nonzero, the convolution corollary forces
$f(x)=\lambda x$ on the entire accessible set, including $f(0)=0$.
After substitution,
\begin{equation}\label{eq:mixed-effective-kernel}
 G*f(v)=(\lambda G)*v.
\end{equation}
The remaining inequality is precisely universal state-outside passivity for
$K=\lambda G$.  \Cref{thm:outside-cp} forces $K$ to be completely positive.
The identity belongs to $\readouts_{\mathrm{an}}$, and both component
converses are cap-local, proving all necessity variants.

Conversely, if $K=0$, the state and cost vanish.  If $K\ne0$ is completely
positive, \Cref{prop:cp-identity} gives nonnegative cost for every input and
every $h\in\readouts$.  Restricting the input or readout class preserves the
conclusion.
\end{proof}

\begin{corollary}[All-horizon mixed classification]\label{cor:mixed-global}
Let $G\in L^1_{\mathrm{loc}}(0,\infty)$ be nonzero.  Universal all-input
safety on every finite horizon is equivalent to
\begin{equation}\label{eq:mixed-global}
 f=\lambda\id,\qquad
 \lambda G\text{ completely positive on every finite horizon}.
\end{equation}
\end{corollary}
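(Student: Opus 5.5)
The plan is to reduce \Cref{cor:mixed-global} to the fixed-horizon \Cref{thm:mixed-classification}, using the horizon-transfer observation recorded at the end of \Cref{sec:framework}.

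\textbf{Necessity.} Suppose universal all-input safety holds on every finite horizon. Since $G$ is nonzero in $L^1_{\mathrm{loc}}(0,\infty)$, there is some $T_0$ with $G\ne0$ in $L^1(0,T_0)$. For every $T\ge T_0$, the restriction of $G$ to $(0,T)$ is still nonzero, so \Cref{thm:mixed-classification} gives a $\lambda_T$ with $f(x)=\lambda_T x$ on $X$ and $\lambda_T G$ completely positive on $[0,T]$.

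The impact law $f$ is common to all horizons, so evaluating at any $x\ne0$ shows that $\lambda_T=\lambda$ does not depend on $T$. This yields $f=\lambda\id$.

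For horizons $T<T_0$, complete positivity of $\lambda G$ on $[0,T]$ follows by restriction. The resolvent equations \eqref{eq:resolvent-def} are causal Volterra equations, so the resolvents on $[0,T]$ are the restrictions of those on $[0,T_0]$. Their nonnegativity on the shorter interval is therefore inherited. On horizons where $G$ vanishes, one has $r_a=0$ and $s_a=1$ directly.

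\textbf{Sufficiency.} Assume $f=\lambda\id$ and that $\lambda G$ is completely positive on every finite horizon. Fix $T$ and use the substitution \eqref{eq:mixed-effective-kernel}, which turns the mixed cost into the state-outside cost with kernel $K=\lambda G$.

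If $K=0$ on $(0,T)$, the state and the cost both vanish. Otherwise \Cref{prop:cp-identity} gives nonnegative cost for every $v\in\pc(T)$ and every $h\in\readouts$. Equivalently, this is the converse direction of \Cref{thm:mixed-classification} applied on $[0,T]$.

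\textbf{Main obstacle.} The only delicate point is the short horizons on which $G$ vanishes, where \Cref{thm:mixed-classification} does not apply. The causal-restriction argument for the resolvents handles this case. Everything else is direct bookkeeping.
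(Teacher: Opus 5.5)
Your proposal is correct and follows essentially the paper's route. Like the paper, you fix $\lambda$ on one horizon where $G\neq0$ via \Cref{thm:mixed-classification} and carry it to every horizon because $f$ is common to all horizons. You then reduce each horizon to the effective kernel $\lambda G$. Your causal-restriction argument for horizons $T<T_0$ is a harmless variant: the paper instead applies the fixed-horizon theorem directly wherever $G\neq0$ on $(0,T)$, and notes that the state vanishes identically where it does not.
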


\begin{proof}
Choose one horizon on which $G$ is nonzero to fix the single slope
$\lambda$; the observation at the end of \Cref{sec:framework} carries it
to every horizon, where the fixed-horizon theorem applies.
\end{proof}

\section{Round-trip reduction and permanent storage}\label{sec:roundtrip-storage}

The all-input classification in \Cref{thm:outside-cp} does not classify
round trips.  This section identifies the missing quotient.

\subsection{Uniformly vanishing signed tails}

\begin{definition}[Uniformly vanishing tail]\label{def:vanishing-tail}
A real $G\in L^1_{\mathrm{loc}}(0,\infty)$ has a uniformly vanishing tail if
\begin{equation}\label{eq:uniform-tail}
 \eta_G(R):=\esssup_{t\ge R}|G(t)|\longrightarrow0
 \qquad(R\to\infty).
\end{equation}
No sign or monotonicity is imposed.
\end{definition}

\begin{theorem}[Remote compensation]\label{thm:remote-compensation}
Let $G$ satisfy \eqref{eq:uniform-tail}, and fix a continuous readout
$h:\R\to\R$ with $h(0)=0$.  The following are equivalent:
\begin{enumerate}[label=\textup{(\roman*)}]
\item $\cost^{\mathrm{out}}_{G,h,T}[v]\ge0$ for every finite horizon and
  every $v\in\rt(T)$;
\item $\cost^{\mathrm{out}}_{G,h,T}[v]\ge0$ for every finite horizon and
  every $v\in\pc(T)$.
\end{enumerate}
The equivalence preserves any fixed positive symmetric rate cap.
\end{theorem}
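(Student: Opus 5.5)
(ii)$\Rightarrow$(i) is immediate because $\rt(T)\subset\pc(T)$, and a rate cap restricts both classes equally. For (i)$\Rightarrow$(ii) we argue by contraposition. Take $v\in\pc(T_0)$ with $\cost^{\mathrm{out}}_{G,h,T_0}[v]=-\eta<0$ and volume $m=\int_0^{T_0}v$. If $m=0$ there is nothing to prove, so suppose $m\ne0$. We extend $v$ to a round trip on a longer horizon: insert an idle gap $[T_0,T_0+R]$, then a slow compensating block of rate $-m/S$ on $[T_0+R,T_0+R+S]$. Call the result $w_{R,S}$; it lies in $\rt(T_0+R+S)$. If $|v|\le c$, the block rate $|m|/S$ also obeys the cap once $S\ge|m|/c$. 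Because the state on $[0,T_0]$ depends only on the past, the cost of $w_{R,S}$ equals $-\eta$ plus the cost accumulated on the compensating block.

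The main step is to show that this block cost tends to zero, which requires choosing $R$ and $S$ in a suitable order. On the block, the state splits as $D=D_1+D_2$.
\begin{itemize}
\item The first piece is $D_1(t)=\int_0^{T_0}G(t-s)v(s)\dd s$. Since $t-s\ge R$, it satisfies $|D_1|\le\eta_G(R)\|v\|_1$.
\item The second piece is the self-state $D_2(t)=-(m/S)\int_0^{t-T_0-R}G(u)\dd u$.
\end{itemize}
To bound $D_2$, fix $R_0$ so that $\eta_G(R_0)\le\varepsilon$. Then
\[
 |D_2|\le\frac{|m|}{S}\bigl(\|G\|_{L^1(0,R_0)}+\varepsilon S\bigr),
\]
so $\limsup_{S\to\infty}\sup|D_2|\le|m|\varepsilon$ for every $\varepsilon>0$. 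Hence $\sup|D_2|\to0$ as $S\to\infty$, uniformly in $R$ (the bound does not involve $R$). Consequently $\sup_{\text{block}}|D|\to0$ as $R,S\to\infty$.

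The block cost is bounded by
\[
 |m|\sup_{|x|\le\sup|D|}|h(x)|,
\]
since the rate has size $|m|/S$ over a length $S$. By continuity of $h$ and $h(0)=0$, this bound tends to $0$. Choosing $R$ and $S$ large enough therefore makes the total cost of $w_{R,S}$ at most $-\eta/2<0$. This is a finite piecewise-constant round trip with negative cost, contradicting (i).

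The only delicate point is that $G$ may be singular near $0$ and signed. The self-state is controlled using only local integrability near the origin together with the uniform tail. No sign or monotonicity of $G$ or $h$ is used. The argument works for one fixed $h$, so the equivalence holds readout by readout.
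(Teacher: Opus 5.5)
Your proof is correct and follows essentially the same route as the paper. Both append an idle gap and a slow compensating block of rate $-m/S$, bound the inherited state by $\|v\|_1\eta_G(R)$ and the self-state by the Ces\`aro mean of $|G|$, and use continuity of $h$ at $0$ to make the block cost vanish. The differences are cosmetic: you argue by contraposition rather than passing to the limit in the round-trip inequality, and you write out the Ces\`aro-mean estimate, uniform in $R$, that the paper states in one line.
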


\begin{proof}
Only \textup{(i)}$\Rightarrow$\textup{(ii)} requires proof.  Let $u$ be a
finite-step input supported on $[0,T_0]$ and put
\begin{equation}\label{eq:input-volume}
 m=\int_0^{T_0}u(t)\dd t.
\end{equation}
For $R>T_0$ and $L>0$, append the constant block
\begin{equation}\label{eq:remote-block}
 w_{R,L}(t)=-\frac mL\one_{[R,R+L]}(t).
\end{equation}
Then $u+w_{R,L}$, with an idle gap inserted, is an ordinary finite-step
round trip.  On the compensating block,
\begin{align}
 |(G*u)(t)|
 &\le\|u\|_1\eta_G(R-T_0),
 \label{eq:remote-inherited-bound}\\
 \esssup_{t\in[R,R+L]}|(G*w_{R,L})(t)|
 &\le\frac{|m|}{L}\int_0^L|G(r)|\dd r.
 \label{eq:remote-self-bound}
\end{align}
The first bound vanishes as $R\to\infty$.  The second vanishes as
$L\to\infty$ because uniform tail decay makes the Ces\`aro mean of $|G|$
converge to zero.  Thus the state on the remote block converges essentially
uniformly to zero.  Its cost satisfies
\begin{equation}\label{eq:remote-cost-bound}
 |\cost_{\mathrm{remote}}|
 \le |m|\sup_{|z|\le\varepsilon_{R,L}}|h(z)|\longrightarrow0.
\end{equation}
Causality leaves the original cost unchanged and the idle gap costs zero.
Taking the scalar cost limit in the round-trip inequality gives
$\cost^{\mathrm{out}}_{G,h,T_0}[u]\ge0$.

Every approximating strategy is itself finite piecewise constant; no limiting
control is added to the admissible class.  Under a cap $B$, choose
$L\ge|m|/B$.
\end{proof}

\begin{corollary}[Decaying scalar classifications]\label{cor:decaying-cp}
For every nonzero scalar kernel with a uniformly vanishing tail,
\begin{equation}\label{eq:decaying-cp-equivalence}
 \begin{aligned}
 &\text{universal round-trip safety on every horizon}\\
 &\quad\Longleftrightarrow
 \text{universal all-input safety on every horizon}\\
 &\quad\Longleftrightarrow
 G\text{ is completely positive on every horizon}.
 \end{aligned}
\end{equation}
For the mixed model, the corresponding classification is
\begin{equation}\label{eq:decaying-mixed-equivalence}
 f=\lambda\id,\qquad \lambda G\text{ globally completely positive}.
\end{equation}
Both statements are cap-local and remain valid with the universal class
replaced by all analytic strictly increasing normalized readouts.
\end{corollary}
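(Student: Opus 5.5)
The plan is to chain \Cref{thm:remote-compensation} with \Cref{thm:outside-cp} horizon by horizon, and to use \Cref{thm:mixed-classification} together with the all-horizon observation at the end of \Cref{sec:framework} for the mixed part. For the scalar chain, fix a nonzero $G$ with uniformly vanishing tail. The implication from all-input safety to round-trip safety is trivial, since $\rt(T)\subset\pc(T)$. For the converse, fix an arbitrary $h\in\readouts$. Such an $h$ is continuous with $h(0)=0$, so \Cref{thm:remote-compensation} applies to this single readout: round-trip safety on every horizon for $h$ gives all-input safety on every horizon for $h$. Since $h$ was arbitrary, universal round-trip safety yields $\mathsf U_T(G)$ for every $T$. \Cref{thm:outside-cp}, (i)$\Leftrightarrow$(iii), applied on each finite horizon, then converts this into complete positivity of $G$ on every horizon, and conversely. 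The zero kernel is excluded only so that the complementary-measure form remains available; it is not needed here.

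For the mixed statement, I first reduce to the identity readout on round trips. Universal round-trip safety on every horizon contains $h=\id$, which gives the rate-inside convolution cost with kernel $G$. Choose a horizon on which $G\ne0$; \Cref{cor:convolution} then forces $f=\lambda\id$ on the accessible rate set. The observation at the end of \Cref{sec:framework} propagates this single $\lambda$ to all horizons. The substitution \eqref{eq:mixed-effective-kernel} turns the mixed cost into the state-outside cost for $K=\lambda G$. This $K$ still has a uniformly vanishing tail, since $\eta_K=|\lambda|\eta_G$. The scalar chain just proved therefore applies to $K$, and the case $K=0$ is trivially safe. The converse direction follows from \Cref{prop:cp-identity}, exactly as in \Cref{thm:mixed-classification}.

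For cap-locality and the analytic class, I rely on two facts. First, \Cref{thm:remote-compensation} preserves a fixed symmetric cap, because the compensating block can be taken with $L\ge|m|/B$. Second, \Cref{thm:outside-cp} states the equivalences (i)$\Leftrightarrow$(iii) and (ii)$\Leftrightarrow$(iii) under every cap, and \Cref{thm:mixed-classification} covers both the analytic and the capped variants. Running the chain with $\readouts_{\mathrm{an}}$ in place of $\readouts$ works verbatim, since analytic readouts are continuous and vanish at zero.

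The main point needing care is the order of quantifiers. \Cref{thm:remote-compensation} is a fixed-readout statement, so the universalization over $h$ must be done after compensation, not before. In the capped mixed case, \Cref{cor:convolution} must be replaced by \Cref{thm:cap} to obtain $f=\lambda_B\id$ on $[-B,B]$. I expect no genuine obstacle beyond this bookkeeping.
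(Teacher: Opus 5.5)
Your proposal is correct and is essentially the paper's proof. The paper simply combines \Cref{thm:remote-compensation}, applied readout by readout, with \Cref{thm:outside-cp,thm:mixed-classification}, and splits off the zero effective kernel; your treatment fills in what that one-line combination leaves implicit: the order of quantifiers, reducing the mixed model to $K=\lambda G$ through the identity readout on round trips before compensating, and using \Cref{thm:cap} in the capped case.
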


\begin{proof}
Combine \Cref{thm:remote-compensation} with
\Cref{thm:outside-cp,thm:mixed-classification}.  The zero effective kernel
is again split off before a complementary measure is invoked.
\end{proof}

\subsection{Asymptotically permanent memory}

Let
\begin{equation}\label{eq:permanent-decomposition}
 G(t)=g+H(t),\qquad
 \esssup_{t\ge R}|H(t)|\longrightarrow0.
\end{equation}
For $h\in\readouts$, define
\begin{equation}\label{eq:permanent-storage}
 \storage_{g,h}(m)=
 \begin{cases}
 \displaystyle\frac{F_h(gm)}g,&g\ne0,\\[0.6em]
 0,&g=0.
 \end{cases}
\end{equation}
The quantity can be negative when $g<0$.  It is the recoverable scalar
storage associated with the permanent state direction, not a positive energy
by definition.

\begin{theorem}[Permanent-storage quotient]\label{thm:permanent-quotient}
Fix one continuous normalized readout $h$ and a kernel satisfying
\eqref{eq:permanent-decomposition}.  The following are equivalent:
\begin{enumerate}[label=\textup{(\roman*)}]
\item every finite-step round trip on every finite horizon has
  nonnegative state-outside cost;
\item every finite-step input $u$ on every finite horizon satisfies
  \begin{equation}\label{eq:storage-domination}
   \cost^{\mathrm{out}}_{G,h,T}[u]
   \ge \storage_{g,h}\!\left(\int_0^Tu(t)\dd t\right).
  \end{equation}
\end{enumerate}
The equivalence preserves any fixed positive symmetric rate cap.  For
$G\equiv g$, \eqref{eq:storage-domination} is an equality for every input.
\end{theorem}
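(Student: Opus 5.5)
The plan follows the remote-compensation argument of \Cref{thm:remote-compensation}, with the permanent part of the kernel computed exactly rather than discarded.

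\textbf{Pure permanent case.} First verify the identity for $G\equiv g$. Then $D_v(t)=g\,q(t)$ with $q(t)=\int_0^tv$. Since $q'=v$ almost everywhere, we get $v\,h(gq)=\frac{d}{dt}F_h(gq)/g$ when $g\ne0$. Integrating gives \eqref{eq:pure-permanent-path}, which is the equality case of \eqref{eq:storage-domination}; for $g=0$ both sides vanish. In particular (ii)$\Rightarrow$(i) is immediate in general: on a round trip $\storage_{g,h}(0)=F_h(0)/g=0$.

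\textbf{(i)$\Rightarrow$(ii).} Fix a finite-step input $u$ on $[0,T_0]$ with volume $m$. As in \eqref{eq:remote-block}, append after an idle gap the block $w_{R,L}=-(m/L)\one_{[R,R+L]}$. The concatenation is a finite-step round trip, so by (i) its total cost is nonnegative. By causality its cost is $\cost^{\mathrm{out}}_{G,h,T_0}[u]$, plus zero from the gap, plus the block cost.

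It therefore suffices to show that the block cost tends to $-\storage_{g,h}(m)$ as $R\to\infty$ and then $L\to\infty$. On the block, write the state as
\[
D(t)=g\bigl(m+q_w(t)\bigr)+(H*u)(t)+(H*w_{R,L})(t),
\]
where $q_w(t)=-(m/L)(t-R)$ runs from $0$ to $-m$. The two $H$-terms are bounded exactly as in \eqref{eq:remote-inherited-bound}--\eqref{eq:remote-self-bound}: the first by $\|u\|_1\eta_H(R-T_0)$, the second by the Ces\`aro mean $\frac{|m|}{L}\int_0^L|H|$. Both tend to zero uniformly on the block. Along the way, note that local integrability of $G$ near $0$ transfers to $H$, so the Ces\`aro argument still applies.

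The unperturbed state $g(m+q_w)$ takes values in the compact interval between $gm$ and $0$. Since $h$ is uniformly continuous on a neighbourhood of that interval, and the block has total volume $|m|$, the block cost differs from
\[
\int_R^{R+L}w\,h\bigl(g(m+q_w)\bigr)\dd t=\frac{F_h(0)-F_h(gm)}{g}=-\storage_{g,h}(m)
\]
by at most $|m|\,\omega_h(\varepsilon_{R,L})\to0$, where $\omega_h$ is that modulus of continuity. For $g=0$ the same argument gives limit $0$. Passing to the limit yields \eqref{eq:storage-domination}.

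\textbf{Cap and main obstacle.} Under a cap $B$, choose $L\ge|m|/B$ as before. The appended block respects the cap, and the original input is untouched.

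The main point to check is the uniformity of the perturbation estimate. The state is not small here, unlike in \Cref{thm:remote-compensation}, so the argument needs uniform continuity of $h$ on a fixed compact neighbourhood of the segment $[0,gm]$ rather than continuity at zero. This requires choosing $R,L$ large enough that the $H$-perturbation stays in a unit neighbourhood. No monotonicity of $h$ is used, which is consistent with the statement fixing only a continuous normalized readout.
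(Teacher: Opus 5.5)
Your proposal is correct and follows essentially the same route as the paper. Both append the remote block $w_{R,L}$, compute the permanent state $gm(1-\tau)$ exactly, and bound the transient error by $\|u\|_1\eta_H(R-T_0)$ plus the Ces\`aro mean of $|H|$. Both then pass to the limit using uniform continuity of $h$ near the segment from $gm$ to $0$, handle the converse via $\storage_{g,h}(0)=0$, and obtain the pure-permanent equality by the chain rule.
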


\begin{proof}
Take a finite-step $u$ supported on $[0,T_0]$ with volume $m$ and append the
remote block \eqref{eq:remote-block}.  At $t=R+L\tau$ on that block, the
permanent state is exactly
\begin{equation}\label{eq:permanent-path}
 gm+g\left(-\frac mL\right)L\tau=gm(1-\tau).
\end{equation}
The transient error obeys
\begin{equation}\label{eq:permanent-transient-error}
 \|E_{R,L}\|_{L^\infty_{\mathrm{ess}}(0,1)}
 \le\|u\|_1\eta_H(R-T_0)
 +\frac{|m|}{L}\int_0^L|H(r)|\dd r\longrightarrow0,
\end{equation}
first as the gap and then the block length tend to infinity.  Uniform
continuity of $h$ on a compact neighborhood of the segment from $gm$ to
zero gives
\begin{align}
 \cost_{\mathrm{remote}}
 &=-m\int_0^1h(gm(1-\tau)+E_{R,L}(\tau))\dd\tau
 \notag\\
 &\longrightarrow -m\int_0^1h(gm(1-\tau))\dd\tau
 =-\frac{F_h(gm)}g
 \label{eq:remote-permanent-limit}
\end{align}
when $gm\ne0$.  If $m=0$, both sides vanish; if $g=0$, the decaying-tail
argument applies.  Round-trip safety of the augmented strategies therefore
gives \eqref{eq:storage-domination}.  Conversely, a round trip has $m=0$ and
$\storage_{g,h}(0)=0$.

If $G\equiv g\ne0$, write $q(t)=\int_0^tu(s)\dd s$ and $D=gq$.
Then
\begin{equation}\label{eq:pure-permanent-equality}
 \cost^{\mathrm{out}}_{g,h,T}[u]
 =\frac1g\int_0^TD'(t)h(D(t))\dd t
 =\frac{F_h(gm)}g.
\end{equation}
The cap statement follows as in \Cref{thm:remote-compensation}.
\end{proof}

\begin{remark}[The sign of the quotient]\label{rem:storage-sign}
For $h(x)=x$, $\storage_{g,h}(m)=gm^2/2$; for $h(x)=x^3$,
$\storage_{g,h}(m)=g^3m^4/4$.  The quotient is negative for $g<0$ and
odd $h$, as the pure-permanent equality
\eqref{eq:pure-permanent-equality} requires.
\end{remark}

\Cref{thm:permanent-quotient} changes how permanent memory is compared with
decaying memory.  The decaying theory tests nonnegative all-input supply;
the permanent theory tests supply after subtracting the recoverable
endpoint quantity $\storage_{g,h}(m)$.  Round trips live on the quotient of
the input space by total volume, and \eqref{eq:permanent-storage} is the
endpoint term that remote closure exposes.  The permanent-shift and Prony
classifications of the next section are storage-domination theorems in this
sense.

\section{Permanent shifts and finite-Prony frontiers}\label{sec:prony}

This section turns the permanent quotient into safe classes and
separators.  We begin with a broad infinite-dimensional sufficient cone,
then identify a different stable-inverse cone that is maximal within its
class, and finally close the entire two-mode family.

\subsection{Every real shift of completely monotone decay}

\begin{definition}[LICM transient]\label{def:licm}
A function $H:(0,\infty)\to[0,\infty)$ is a decaying locally integrable
completely monotone (LICM) transient if $H\in L^1(0,T)$ for every $T<\infty$,
$H(t)\to0$, and
\begin{equation}\label{eq:bernstein-representation}
 H(t)=\int_{(0,\infty)}e^{-\lambda t}\rho(\dd\lambda)
\end{equation}
for a positive measure $\rho$.
\end{definition}

\begin{theorem}[Arbitrary real permanent shifts]\label{thm:licm-shifts}
Let $H$ be a decaying LICM transient, let $g\in\R$, and put $G=g+H$.
For every finite horizon, every $v\in\pc(T)$ of volume $m$, and every
$h\in\readouts$,
\begin{equation}\label{eq:licm-storage-bound}
 \cost^{\mathrm{out}}_{G,h,T}[v]\ge\storage_{g,h}(m).
\end{equation}
Hence every real permanent shift of $H$ is universally safe on round
trips, including shifts for which $G$ changes sign or tends to a negative
level.  The conclusion is valid under every rate cap.
\end{theorem}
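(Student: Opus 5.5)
The plan is to treat $g\ge0$ and $g<0$ separately. The case $g\ge0$ follows from results already proved. The case $g<0$ needs a storage argument of its own.

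\emph{Case $g\ge0$.} Here $G=g+H$ is a mixture of exponentials, by \eqref{eq:bernstein-representation} plus an atom at $\lambda=0$. So $G$ is completely monotone and hence completely positive on every horizon, as recalled after \eqref{eq:complement-form}. By \Cref{thm:outside-cp}, or directly by \Cref{prop:cp-identity}, the cost is nonnegative for every input and every $h\in\readouts$, in particular on every round trip. The decomposition \eqref{eq:permanent-decomposition} holds because $H(t)\to0$ and $H$ is nonincreasing, so $\esssup_{t\ge R}|H|=H(R)\to0$. \Cref{thm:permanent-quotient}, direction (i)$\Rightarrow$(ii), then upgrades round-trip nonnegativity to \eqref{eq:licm-storage-bound}. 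For $g=0$ the right side is $0$ and the all-input bound is immediate.

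\emph{Case $g<0$.} Complete positivity fails, so I would instead build a storage function. Since \eqref{eq:licm-storage-bound} is equivalent to round-trip safety by \Cref{thm:permanent-quotient}, it is enough to treat round trips, although I will aim for the inequality directly.

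First reduce to finite Prony mass. Truncate $\rho$ to $[\epsilon,1/\epsilon]$, obtaining a bounded completely monotone $H_\epsilon$ with $H_\epsilon\uparrow H$ in $L^1(0,T)$. For a fixed step input the states converge uniformly, by the estimate in \Cref{lem:finite-step-closure}, and the costs converge by continuity of $h$. Next approximate the truncated measure by finitely many atoms, which converge in the same sense. This leaves $D=gq+\sum_ic_iy_i$ with $c_i>0$, $q'=v$ and $y_i'=v-\lambda_iy_i$.

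For this finite system I would seek a storage $S(q,y)$ satisfying three conditions:
\begin{itemize}
\item $\frac{d}{dt}S\le v\,h(D)$;
\item $S(0,0)=0$;
\item $S(m,y)\ge F_h(gm)/g$ for all $y$.
\end{itemize}
The natural candidate comes from the identity
\[
 v\,h(D)=\frac{d}{dt}\frac{F_h(D)}{g}-\frac{h(D)}{g}\sum_ic_iy_i'.
\]
Because $g<0$, it suffices to show that $\int_0^T h(D)\sum_ic_iy_i'\dd t$ is bounded below by the correction needed to move $F_h(D(T))/g$ to $F_h(gm)/g$. For this I would freeze the permanent level $c=gq(t)$ and use the shifted readouts $h_c(x)=h(x+c)-h(c)$, which lie in $\readouts$. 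Applying the complete positivity of $H$ alone, through \Cref{prop:cp-identity} for each mode, should then produce dissipation terms of Bregman type, $\mathcal B_h$, that absorb the cross terms.

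\emph{Main obstacle.} The hard step is the $g<0$ storage inequality, specifically the coupling between the permanent coordinate $q$ and the transient modes through a nonlinear $h$. Superposition over modes fails, and the sign of $g$ reverses the usual endpoint term. If the direct storage does not close, my fallback is a two-scale argument. Approximate the input by one whose transient modes relax between slow changes of $q$. On slow pieces the cost approaches $\int h(gq)q'\dd t=F_h(gm)/g$. On fast pieces $q$ is nearly constant, so \Cref{thm:outside-cp} applied to $H$ with the readout $h_{gq}$ gives nonnegative excess. The delicate point is then controlling the error terms uniformly in the number of pieces.
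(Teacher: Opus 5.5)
Your treatment of $g\ge0$ is correct, though it is a detour compared with the paper. You argue that $g+H$ is completely monotone, hence completely positive by the classical Cl\'ement--Nohel result, so every input has nonnegative cost; then \Cref{thm:permanent-quotient} upgrades round-trip safety to the storage bound.

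The genuine gap is the case $g<0$, which is the substance of the theorem, and there the proposal stops at a heuristic. Freezing $c=gq(t)$ is not available, because $q'=v$ is driven by the same input as the modes $y_i$. The two-scale fallback replaces the given input by a different one, so it bounds the wrong cost. More decisively, both routes invoke only complete positivity and decay of the tail $H$, and no such argument can succeed. \Cref{prop:cp-tail-shift-fails} gives a positive, exponentially decaying, completely positive tail whose shift by $g=-1/2$ admits a four-block round trip of negative cost. A proof must therefore use the exponential-mixture structure of $H$ in an essential way.

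The missing idea is to invert the transfer of a positive Prony approximant, $R(p)=p\laplace G(p)=g+\sum_ia_ip/(p+\lambda_i)$, instead of working with the forward modes $y_i$.

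Since $R'(p)=\sum_ia_i\lambda_i/(p+\lambda_i)^2>0$ off the poles, the zeros $-\mu_j$ of $R$ are real, simple, and interlace the poles. Hence, for $g\ne0\ne b$, $1/R(p)=1/b+\sum_jc_j/(p+\mu_j)$ with every $c_j=1/R'(-\mu_j)>0$. Introduce the inverse modes $Z_j'=\mu_j(D-Z_j)$, $Z_j(0)=0$. Then $q=D/b+\sum_j(c_j/\mu_j)Z_j$, and exactly
\[
 \cost^{\mathrm{out}}_{G,h,T}[v]
 =\frac1bF_h(D(T))+\sum_j\frac{c_j}{\mu_j}F_h(Z_j(T))
 +\sum_jc_j\int_0^T(D-Z_j)\bigl(h(D)-h(Z_j)\bigr)\dd t.
\]
The dissipation coefficients $c_j$ are positive for either sign of $g$. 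The sign of $g$ enters only through the endpoint weights $1/b$ and $c_j/\mu_j$:
\begin{itemize}
\item they sum to $1/R(0)=1/g$;
\item their barycenter is $q(T)=m$;
\item for $g<0$ exactly one of them is negative: $1/b$ if $b<0$, or $c_1/\mu_1$ if $b>0$, when the zero $-\mu_1$ of $R$ lies in $(0,\infty)$.
\end{itemize}
Write the negatively weighted endpoint as a convex combination of the positive-weight barycenter and $gm$, and apply Jensen twice. This gives the endpoint bound $F_h(gm)/g$.

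Thus $S=\frac1bF_h(D)+\sum_j\frac{c_j}{\mu_j}F_h(Z_j)$ is precisely the storage you were seeking, written in inverse-mode rather than forward-mode coordinates. The cases $b=0$ and $g=0$ follow by continuity in $g$. Your truncation and atomization step then covers a general LICM $H$.
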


\begin{proof}
For $H=0$, the chain rule gives equality in
\eqref{eq:licm-storage-bound}.  Otherwise first take a positive Prony transient
\begin{equation}\label{eq:positive-prony}
 H(t)=\sum_{i=1}^n a_i e^{-\lambda_i t},\qquad
 a_i>0,\quad 0<\lambda_1<\cdots<\lambda_n.
\end{equation}
Let
\begin{equation}\label{eq:prony-R}
 R(p)=p\laplace G(p)
 =g+\sum_{i=1}^na_i\frac{p}{p+\lambda_i},\qquad
 b=G(0+)=g+\sum_i a_i.
\end{equation}
Assume first $g\ne0$ and $b\ne0$.  The zeros $-\mu_j$ of $R$ are real and
simple and
\begin{equation}\label{eq:positive-prony-inverse}
 \frac1{R(p)}=\frac1b+\sum_{j=1}^n\frac{c_j}{p+\mu_j},\qquad
 c_j=\frac1{R'(-\mu_j)}>0.
\end{equation}
Indeed,
$R'(p)=\sum_i a_i\lambda_i/(p+\lambda_i)^2>0$ between the real poles, so
the zeros interlace the poles and all residues are positive.  Evaluation at
$p=0$ gives
\begin{equation}\label{eq:prony-weight-sum}
 \frac1b+\sum_{j=1}^n\frac{c_j}{\mu_j}=\frac1g.
\end{equation}

Let $D=G*v$ and define inverse modes
\begin{equation}\label{eq:inverse-modes}
 Z_j(t)=\mu_j\int_0^te^{-\mu_j(t-s)}D(s)\dd s,\qquad
 Z_j'=\mu_j(D-Z_j).
\end{equation}
The inverse transfer yields
\begin{equation}\label{eq:inverse-volume-decomp}
 Q=1*v=\frac1bD+\sum_{j=1}^n\frac{c_j}{\mu_j}Z_j.
\end{equation}
Using $F_h'=h$ and adding and subtracting $h(Z_j)$ gives the identity
\begin{align}
 \cost^{\mathrm{out}}_{G,h,T}[v]
={}&\frac1bF_h(D(T))
 +\sum_{j=1}^n\frac{c_j}{\mu_j}F_h(Z_j(T))
 \notag\\
&+\sum_{j=1}^nc_j\int_0^T
 (D-Z_j)(h(D)-h(Z_j))\dd t.
\label{eq:star-bregman}
\end{align}
Every integral on the second line is nonnegative.

Put $w_0=1/b$, $Y_0=D(T)$ and
$w_j=c_j/\mu_j$, $Y_j=Z_j(T)$.  Then
\begin{equation}\label{eq:endpoint-barycenter}
 \sum_{j=0}^nw_j=\frac1g,\qquad
 \sum_{j=0}^nw_jY_j=m.
\end{equation}
If $g>0$, all weights are positive and ordinary Jensen gives
\begin{equation}\label{eq:positive-jensen}
 \sum_{j=0}^nw_jF_h(Y_j)\ge\frac1gF_h(gm).
\end{equation}
If $g<0$, exactly one weight is negative.  Write it as $-A$, let the
positive weights be $p_i$, and set
$P=\sum p_i$, $B=A-P=-1/g>0$.  The barycenter equation implies
\begin{equation}\label{eq:signed-jensen-barycenter}
 Y_- =\frac PA\left(\sum_i\frac{p_i}{P}Y_i\right)+\frac BA(gm).
\end{equation}
Applying Jensen first to this convex combination and then to the positive
weights gives
\begin{equation}\label{eq:signed-jensen}
 \sum_{j=0}^nw_jF_h(Y_j)\ge-BF_h(gm)=\frac1gF_h(gm).
\end{equation}
Combining \eqref{eq:star-bregman} with the endpoint inequality proves the
claim when $g,b\ne0$.  For either boundary, choose $g_n\to g$ with
$g_n\ne0$ and $g_n+\sum_i a_i\ne0$.  On a fixed horizon,
$\|(G_n-G)*v\|_\infty\le|g_n-g|\|v\|_1$, so the costs converge.
The storage term is continuous also at $g=0$, since
\[
 \frac{F_h(g_nm)}{g_n}
 =m\int_0^1h(g_nm u)\dd u\longrightarrow0.
\]
Thus both $b=0$ and $g=0$ retain the same inequality.

For a general LICM $H$, approximate the Bernstein measure in
\eqref{eq:bernstein-representation} by finite positive atomic measures so
that the resulting $H_N$ converge to $H$ in $L^1(0,T)$.  This remains true
when $H(0+)=\infty$: truncate the measure to $[1/N,N]$ and discretize it
there.  For a fixed bounded step input,
\begin{equation}\label{eq:licm-state-convergence}
 \|(H_N-H)*v\|_\infty
 \le\|v\|_\infty\|H_N-H\|_1\longrightarrow0.
\end{equation}
Continuity of $h$ on the common compact state range gives convergence of
costs, while the right side of \eqref{eq:licm-storage-bound} is unchanged.
Passing to the limit completes the proof.
\end{proof}

\begin{corollary}[Monotone readouts after completely monotone memory]
\label{cor:concave-after-powerlaw}
Let $H$ be a decaying LICM transient, for instance $H(t)=t^{-\gamma}$ with
$0<\gamma<1$, and let $h\in\readouts$ be arbitrary, for instance
$h(x)=\operatorname{sgn}(x)|x|^{\delta}$ with $\delta>0$.  Then, for every
$g\ge0$, every finite horizon, and every $v\in\pc(T)$,
\begin{equation}\label{eq:concave-after-powerlaw}
 \cost^{\mathrm{out}}_{g+H,h,T}[v]\ge\storage_{g,h}\Bigl(\int_0^Tv\Bigr)\ge0.
\end{equation}
For $g<0$ the first inequality still holds, so every round trip is safe.
\end{corollary}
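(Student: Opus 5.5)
This corollary is a direct specialization of \Cref{thm:licm-shifts}, so the plan is to verify its hypotheses for the stated examples and then read off the two signs. First, I will check that $H(t)=t^{-\gamma}$ with $0<\gamma<1$ is a decaying LICM transient in the sense of \Cref{def:licm}.
\begin{itemize}
\item It is nonnegative.
\item It is locally integrable on every $(0,T)$ because $\gamma<1$.
\item It tends to zero as $t\to\infty$.
\item It has the Bernstein representation \eqref{eq:bernstein-representation} with $\rho(\dd\lambda)=\Gamma(\gamma)^{-1}\lambda^{\gamma-1}\dd\lambda$, which is exactly the Laplace-mixture identity \eqref{eq:laplace-mixture}.
\end{itemize}
Likewise $h(x)=\operatorname{sgn}(x)|x|^\delta$ is continuous, nondecreasing and vanishes at zero for every $\delta>0$, so $h\in\readouts$.

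Next, for an arbitrary decaying LICM $H$, an arbitrary $g\in\R$, every horizon, every $v\in\pc(T)$ and every $h\in\readouts$, \Cref{thm:licm-shifts} gives directly
\[
 \cost^{\mathrm{out}}_{g+H,h,T}[v]\ge\storage_{g,h}\Bigl(\int_0^Tv\Bigr).
\]
This is the first inequality in \eqref{eq:concave-after-powerlaw}, and it holds for all real $g$, including $g<0$.

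The remaining step is the sign of the storage term for $g\ge0$. If $g=0$, then $\storage_{0,h}=0$ by definition \eqref{eq:permanent-storage}. If $g>0$, then $F_h\ge0$ everywhere because $h$ is nondecreasing with $h(0)=0$: the integrand $h(z)$ has the sign of $z$, so $F_h(x)=\int_0^xh\ge0$ for either sign of $x$. Hence $F_h(gm)/g\ge0$.

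For $g<0$ and a round trip, $m=0$ gives $\storage_{g,h}(0)=F_h(0)/g=0$. The first inequality then yields nonnegative cost, so every round trip is safe.

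There is no real obstacle here. The only point needing care is confirming that the singularity of $t^{-\gamma}$ at zero is admitted. It is: \Cref{def:licm} requires only local integrability, and the proof of \Cref{thm:licm-shifts} explicitly handles $H(0+)=\infty$ by truncating the Bernstein measure.
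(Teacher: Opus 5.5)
Your proposal is correct and matches the paper's proof: the first inequality is read off from \Cref{thm:licm-shifts} for every real $g$, and the second holds for $g\ge0$ because $F_h\ge0$. You also check explicitly, via \eqref{eq:laplace-mixture}, that $t^{-\gamma}$ is a decaying LICM transient and that the signed power law lies in $\readouts$; the paper's one-line proof leaves both checks implicit.
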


\begin{proof}
This is \Cref{thm:licm-shifts}; for $g\ge0$ the quotient is nonnegative
because $F_h\ge0$.
\end{proof}

Square-root impact applied after power-law memory is therefore free of
price manipulation for every input, in contrast with the same law applied
before the memory (\Cref{cor:all-powers}).  The kernel condition behind
this is complete positivity (\Cref{thm:outside-cp}), of which complete
monotonicity is the classical sufficient case.

\begin{corollary}[One-mode classification]\label{cor:one-mode}
For every $\lambda>0$ and every $g,a\in\R$,
\begin{equation}\label{eq:one-mode-classification}
 g+ae^{-\lambda t}\text{ is universally round-trip safe}
 \quad\Longleftrightarrow\quad a\ge0.
\end{equation}
The permanent coefficient is unrestricted.
\end{corollary}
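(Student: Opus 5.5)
The sufficiency direction is immediate from \Cref{thm:licm-shifts}.  If $a>0$, then $H(t)=ae^{-\lambda t}$ is a decaying LICM transient, since its Bernstein measure is $a\delta_\lambda$.  The theorem gives universal round-trip safety for every real $g$.  If $a=0$, the kernel is the constant $g$, and the pure-permanent identity \eqref{eq:pure-permanent-path} gives zero cost on every round trip.

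For necessity, assume $a<0$.  The plan is to produce, for one readout $h\in\readouts$, a finite round trip with negative cost.  I will use \Cref{thm:permanent-quotient}: round-trip safety on all horizons is equivalent to the storage domination \eqref{eq:storage-domination} for all inputs.  So it suffices to find one finite-step input $u$ with
\[
 \cost^{\mathrm{out}}_{G,h,T}[u]<\storage_{g,h}\Bigl(\int_0^T u\Bigr).
\]
The simplest choice is an input with zero volume, for which the right side vanishes.  The target is then a round trip with negative cost directly.

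The natural readout is $h=\id$, where the cost is the quadratic form
\[
 Q[v]=\int_0^T v\,(G*v).
\]
On round trips the permanent part contributes $g\,q(T)^2/2=0$.  What remains is the quadratic form of $ae^{-\lambda t}$.  Take $y=e^{-\lambda\cdot}*v$.  The energy identity \eqref{eq:exp-energy} shows
\[
 \int_0^T v\,y=\tfrac12 y(T)^2+\lambda\int_0^T y^2>0
\]
for nonzero $v$.  Multiplying by $a<0$ gives strictly negative cost for every nonzero round trip, for example $v=\one_{(0,1)}-\one_{(1,2)}$.

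This witness is already finite piecewise constant.  Because $h=\id\in\readouts$, universal round-trip safety fails.  For cap-locality, \Cref{lem:cap-scaling} scales the witness into any cap; with a linear readout, scaling simply multiplies the cost by $\varepsilon^2$.

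I expect no real obstacle.  The only point to check is that the permanent term vanishes exactly on round trips.  This holds because $g\int_0^T v\,q=g\,q(T)^2/2$ with $q(T)=0$.
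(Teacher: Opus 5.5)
Your proposal is correct and matches the paper's proof. Sufficiency comes from \Cref{thm:licm-shifts}. Necessity uses the identity readout, the vanishing of the permanent contribution on round trips, and the exponential energy identity \eqref{eq:exp-energy}, which makes the cost a negative multiple of a strictly positive quantity when $a<0$. The appeal to \Cref{thm:permanent-quotient} is unnecessary, since your zero-volume witness is already a finite round trip with negative cost.
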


\begin{proof}
Sufficiency is \Cref{thm:licm-shifts}.  For necessity, use $h(x)=x$ and
write $w'=v-\lambda w$ for the exponential state.  On a nonzero round trip,
\begin{equation}\label{eq:one-mode-identity-cost}
 \int_0^Tv(t)w(t)\dd t
 =\frac12w(T)^2+\lambda\int_0^Tw(t)^2\dd t>0,
\end{equation}
whereas the permanent quadratic contribution vanishes.  Thus $a<0$ gives
negative cost.
\end{proof}

\begin{corollary}[Mixed LICM classification]\label{cor:mixed-licm}
Suppose $H\not\equiv0$ is a decaying LICM transient and $G=g+H$.
Universal round-trip safety for the mixed model on every horizon holds
if and only if
\begin{equation}\label{eq:mixed-licm}
 f=\lambda\id,\qquad \lambda\ge0.
\end{equation}
The equivalence is cap-local.  If $H=0$ and $g\ne0$, linearity is still
forced but either sign of $\lambda$ is lossless on round trips.
\end{corollary}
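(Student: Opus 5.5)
Sufficiency and necessity can be handled separately, and both reduce to results already proved. For sufficiency, suppose $f=\lambda\id$ with $\lambda\ge0$. Then $G*f(v)=\lambda G*v=(\lambda g)+\lambda H$ convolved with $v$. If $\lambda=0$, the state vanishes and the cost is zero. If $\lambda>0$, then $\lambda H$ is again a decaying LICM transient, since its Bernstein measure is $\lambda\rho$, and $\lambda g$ is a real permanent shift. \Cref{thm:licm-shifts} therefore gives $\cost\ge\storage_{\lambda g,h}(0)=0$ on every round trip, for every $h\in\readouts$ and every cap.

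For necessity, I first force linearity. Take $h=\id\in\readouts$. The mixed cost becomes the rate-inside convolution cost with kernel $G$ on a horizon where $G\ne0$, and such a horizon exists because $H\not\equiv0$ is completely monotone, hence positive on $(0,\infty)$, so $g+H$ cannot vanish identically. \Cref{cor:convolution} then gives $f(x)=\lambda x$ on the accessible set. By the observation at the end of \Cref{sec:framework}, the same $\lambda$ serves on every horizon. The problem is now state-outside round-trip safety for the effective kernel $K=\lambda g+\lambda H$.

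It remains to exclude $\lambda<0$, and I expect this to be the main point, because the permanent part means round-trip safety is weaker than complete positivity. My plan is to use the identity readout and show that the linear quadratic form of $\lambda H$ is strictly negative on some round trip. With $h=\id$, the permanent contribution $\lambda g\int_0^T v\,q=\lambda g\,q(T)^2/2$ vanishes on round trips, exactly as in \Cref{cor:one-mode}. What is left is the cost $\lambda Q_{H}[v]$. Using the Bernstein representation together with the exponential energy identity \eqref{eq:exp-energy}, as in \Cref{prop:fractional-positive}, I write
\[
 Q_H[v]=\int\Bigl[\tfrac12y_\mu(T)^2+\mu\int_0^Ty_\mu^2\Bigr]\rho(\dd\mu).
\]
For a nonzero round trip the bracket is strictly positive for every $\mu>0$. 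Since $\rho\ne0$, this gives $Q_H[v]>0$, so $\lambda<0$ yields negative cost. Fubini is justified by local integrability of $H$, truncating $\rho$ as in that proof. The witness is cap-local by homogeneity, because $v\mapsto av$ scales the cost by $a^2$.

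For the final clause, take $H=0$ and $g\ne0$. Linearity is still forced by \Cref{cor:convolution}, since $G=g$ is nonzero. By \eqref{eq:pure-permanent-path} the round-trip cost is then $F_h(\lambda g\,q(T))/(\lambda g)=0$, or zero outright if $\lambda=0$, for either sign of $\lambda$.
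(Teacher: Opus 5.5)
Your proof is correct and follows the paper's own argument. The identity readout with \Cref{cor:convolution} forces $f=\lambda\id$. The permanent part drops out of the identity-readout cost of a round trip, while the Bernstein-integrated exponential energy identity gives $Q_H>0$ and so excludes $\lambda<0$. For $\lambda\ge0$ the effective kernel $\lambda g+\lambda H$ is again a permanent shift of an LICM transient, so \Cref{thm:licm-shifts} applies.

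One small wording fix: $g+H\not\equiv0$ holds because $H$ is nonconstant (it is strictly decreasing and tends to zero). Positivity alone is not enough, since for $g<0$ it would not exclude $H\equiv-g$.
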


\begin{proof}
The identity readout and rate-inside rigidity force $f=\lambda\id$.  The
permanent part contributes zero to the identity-readout cost of a round trip,
whereas the LICM transient has the strictly positive quadratic representation
obtained by integrating \eqref{eq:one-mode-identity-cost} against the
Bernstein measure.  Hence $\lambda<0$ fails.  If $\lambda\ge0$, the effective
kernel is another real permanent shift of a LICM transient and
\Cref{thm:licm-shifts} applies.
\end{proof}

\subsection{Stable complementary inverses}

The LICM-shift cone is not maximal.  A second mechanism is visible from
the inverse map between observed state and cumulative volume.

\begin{theorem}[Stable complementary-inverse classification]
\label{thm:stable-complement}
Assume
\begin{equation}\label{eq:stable-complement-hyp}
 G(t)=g+H(t),\quad g\ne0,\quad b=G(0+)\ne0,\quad
 \esssup_{t\ge R}|H(t)|\to0,
\end{equation}
and suppose
\begin{equation}\label{eq:stable-complement-transfer}
 \frac1{p\laplace G(p)}=\beta+\laplace\ell(p),\qquad
 \beta=\frac1b,\qquad \ell\in L^1(0,\infty)\cap BV(0,\infty).
\end{equation}
Then the following are equivalent:
\begin{enumerate}[label=\textup{(\roman*)}]
\item every round trip on every horizon is safe for every
  $h\in\readouts$;
\item $\ell$ has a nonnegative, nonincreasing representative;
\item every input satisfies
  $\cost^{\mathrm{out}}_{G,h,T}[v]\ge\storage_{g,h}(m)$ for every
  $h\in\readouts$.
\end{enumerate}
The endpoint gains automatically obey $gb>0$, and the equivalence is
cap-local.
\end{theorem}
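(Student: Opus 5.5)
The proof runs through the chain \textup{(iii)}$\Rightarrow$\textup{(i)}$\Rightarrow$\textup{(iii)}, then \textup{(ii)}$\Rightarrow$\textup{(iii)}, and finally \textup{(i)}$\Rightarrow$\textup{(ii)}, with the sign relation $gb>0$ settled first.

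\emph{The sign relation $gb>0$.} By \eqref{eq:stable-complement-transfer}, the function $p\mapsto p\laplace G(p)=g+p\laplace H(p)$ is continuous on $(0,\infty)$. Its limits are $g$ as $p\downarrow0$ and $b$ as $p\to\infty$; the second uses uniform tail decay and $G(0+)=b$. Its reciprocal $\beta+\laplace\ell(p)$ is finite for every $p>0$, because $\ell\in L^1$. Hence $p\laplace G$ never vanishes on $(0,\infty)$, so $g$ and $b$ have the same sign. Evaluating at $p=0$ also records the total weight
\[
 \beta+\int_0^\infty\ell=\frac1g .
\]

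\emph{\textup{(i)}$\Leftrightarrow$\textup{(iii)}.} The implication \textup{(iii)}$\Rightarrow$\textup{(i)} is immediate, since $\storage_{g,h}(0)=0$. Conversely, \textup{(i)}$\Rightarrow$\textup{(iii)} is \Cref{thm:permanent-quotient}, applied readout by readout; cap-locality is inherited from that theorem.

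\emph{\textup{(ii)}$\Rightarrow$\textup{(iii)}.} On each horizon $T$ we have $L*G=1$ with $L=\beta\delta_0+\ell\dd t$. Apply \Cref{prop:cp-identity}, which allows $\beta$ of either sign. This writes the cost as the endpoint storage
\[
 \beta F_h(D(T))+\int_0^T\ell(T-r)F_h(D(r))\dd r
\]
plus two nonnegative dissipation terms. Convolving $L*G=1$ with $v$ gives the barycenter identity
\[
 \beta D(T)+\int_0^T\ell(T-r)D(r)\dd r=m .
\]
The endpoint weights have total mass $1/g-W_T$, where $W_T=\int_T^\infty\ell\ge0$.
\begin{itemize}
\item If $g>0$, then $\beta>0$ and every weight is positive. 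Jensen bounds the endpoint storage below by $c\,F_h(m/c)$ with $c=1/g-W_T$. The perspective $c\mapsto cF_h(m/c)$ is nonincreasing in $c>0$, because $F_h(x)/x$ is nondecreasing in $|x|$. Since $c\le1/g$, the bound is at least $F_h(gm)/g$.
\item If $g<0$, then $\beta<0$ is the single negative weight. I would rerun the two-stage signed Jensen argument \eqref{eq:signed-jensen-barycenter}--\eqref{eq:signed-jensen}, with the discrete positive weights replaced by the measure $\ell(T-r)\dd r$. The missing tail mass $W_T$ is absorbed as a fictitious point placed at $gm$, or alternatively handled by the same monotonicity of the perspective.
\end{itemize}

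\emph{\textup{(i)}$\Rightarrow$\textup{(ii)}, case $g>0$.} This is the main step. Here \textup{(iii)} gives cost $\ge\storage_{g,h}(m)\ge0$ for every input, which is universal all-input safety. By \Cref{thm:outside-cp}, $G$ is completely positive on every horizon, so a complement of the form \eqref{eq:complement-form} exists. Uniqueness of the convolution inverse identifies that complement with $\beta\delta_0+\ell$, so $\ell$ is nonnegative and nonincreasing.

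\emph{\textup{(i)}$\Rightarrow$\textup{(ii)}, case $g<0$.} This is the obstacle I expect, since \textup{(iii)} no longer yields all-input safety. I would build probes directly from the identity \eqref{eq:cp-storage-identity}: prescribe a state $D\in W^{1,1}$ with $D(0)=0$ and realize it by the input $v=(\beta D+\ell*D)'$, then approximate by finite steps via \Cref{lem:finite-step-closure}. Choose a dead-zone readout $h$ such that $D$ stays inside the dead zone near $T$, and such that $D$ returns to a level where the endpoint storage exactly matches $\storage_{g,h}(m)$; ideally $m=0$ and $D(T)$ lies in the dead zone. Then only the terms $\int\ell\,\Phi_h(D)$ and $\int(-\dd\ell)\mathcal B_h$ survive. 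Localizing $D$ as a short pulse outside the dead zone at a time $t_0$ isolates $\ell(t_0)$ and detects a negative part of $\ell$. A two-pulse state $D$, with the pulses a lag $r$ apart, makes the Bregman term sample $-\dd\ell$ near $r$ and detects any increase of $\ell$. The delicate point is to keep the endpoint contributions, which now carry the negative weight $\beta$, exactly neutral. I would try to achieve this by letting $D$ end at zero and balancing the volume through the $\ell*D$ tail.
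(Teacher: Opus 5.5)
Most of your chain is sound. The relation $gb>0$ holds as you argue. The implication (iii)$\Rightarrow$(i) is immediate, and (i)$\Rightarrow$(iii) follows from \Cref{thm:permanent-quotient}. For (ii)$\Rightarrow$(iii), \Cref{prop:cp-identity}, signed Jensen, and monotonicity of the perspective $w\mapsto wF_h(m/w)$ (its derivative is $-\Phi_h(m/w)\le0$) reproduce the paper's argument. If you prefer the fictitious-point version, the tail mass $\int_T^\infty\ell$ must sit at $0$, where $F_h$ vanishes, not at $gm$; at $gm$ the barycenter shifts.

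Your necessity argument for $g>0$ is a genuinely different and valid route. Storage domination gives universal all-input safety, and \Cref{thm:outside-cp} supplies a complement $L_T$ of the form \eqref{eq:complement-form} on each horizon. Convolving $(L_T-L)*G=0$ with $L=\beta\delta_0+\ell\dd t$ gives $(L_T-L)*1=0$, so $L_T=L$ on $[0,T)$. This buys a short proof from the Cl\'ement--Nohel theory. It cannot reach $g<0$, however, because then $b<0$ and $G$ is not completely positive; the paper uses one argument for both signs.

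The gap is (i)$\Rightarrow$(ii) for $g<0$, which you leave as a plan whose concrete steps fail as stated.

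\emph{Endpoint terms.} The term $(\ell*F_h(D))(T)$ in \eqref{eq:cp-storage-identity} does not vanish just because $D(T)$ lies in a dead zone. The missing idea is to use compactly supported loops and send the horizon to infinity. If $\supp D\subset[0,S]$, then $\beta F_h(D(T))=0$ whatever the sign of $\beta$. The volume satisfies $|Q(R)|\le\|D\|_\infty\int_{R-S}^R|\ell|\to0$. Storage domination and continuity of $\storage_{g,h}$ at $0$ then give $\int_0^\infty v\,h(D)\ge0$ for every compact loop. No exact round trip or volume balancing is needed.

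\emph{The single-pulse probe.} A short pulse of height $x$ at time $t_0$ does not isolate $\ell(t_0)$. On the plateau $(\ell*D)'(t)=x\,\ell(t-t_0)$, so the cost is about $xh(x)\int_0^\varepsilon\ell\approx\varepsilon\,xh(x)\ell(0+)$. In the identity, the Bregman terms at lags up to $t_0$ turn $\ell(t_0)$ back into $\ell(0+)$. To see $\ell$ at lag $t_0$, the state must be held away from zero for that long. The paper does this with step-cell loops, whose limiting cost is the Toeplitz form \eqref{eq:toeplitz-accretivity}. \Cref{cor:toeplitz-complete-accretivity} then forces $k_0\ge k_1\ge\cdots\ge0$ for every cell size.

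\emph{What can be salvaged.} Your two-pulse idea can be made to work. Use a large negative pulse followed at lag $s$ by a small positive one. Read it with a continuous readout vanishing on $(-\infty,0]$, so the negative pulse costs nothing. This tests $-\dd\ell\ge0$ near $s$. Nonnegativity then follows from monotonicity and $\ell\in L^1$, with no single-pulse probe. As written, however, none of this is carried out.
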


\begin{proof}
\emph{Endpoint gains.}
For $L(p)=\beta+\laplace\ell(p)$, analytic continuation gives
$p\laplace G(p)L(p)=1$ on $\re p>0$; in particular $L(p)\ne0$ for real
$p>0$.  Local integrability and the uniform tail bound imply
$p\laplace H(p)\to0$ as $p\downarrow0$, by splitting the integral at a
fixed large threshold.  Hence
\begin{equation}\label{eq:stable-endpoint-masses}
 L(+\infty)=\beta=\frac1b,\qquad
 L(0+)=\beta+\int_0^\infty\ell(t)\dd t=\frac1g.
\end{equation}
These nonzero endpoint values have the same sign, so $gb>0$.
Convolution inversion gives
\begin{equation}\label{eq:stable-inverse-relation}
 Q=\beta D+\ell*D,\qquad Q(t)=\int_0^tv(s)\dd s.
\end{equation}

\emph{Sufficiency.}
Under \textup{(ii)}, \Cref{prop:cp-identity}, with arbitrary real $\beta$,
gives
\begin{align}
 \cost^{\mathrm{out}}_{G,h,T}[v]
 &\ge\beta F_h(D(T))
       +\int_0^T\ell(T-t)F_h(D(t))\dd t,
 \label{eq:delay-mixture-bound}\\
 m&=\beta D(T)+\int_0^T\ell(T-t)D(t)\dd t.
 \label{eq:delay-mixture-volume}
\end{align}
Set $P=\int_0^T\ell$ and $w=\beta+P$.  For $\beta>0$, Jensen bounds
the right side below by $wF_h(m/w)$.  For $\beta=-A<0$, one has
$P<A$ by \eqref{eq:stable-endpoint-masses}.  If $P>0$, put
$y=P^{-1}\int_0^T\ell(T-t)D(t)\dd t$ and $B=A-P=-w$.
The volume identity reads $D(T)=(P/A)y+(B/A)(m/w)$; Jensen applied to
this convex combination and to $y$ gives
\[
 -AF_h(D(T))+\int_0^T\ell(T-t)F_h(D(t))\dd t
 \ge -AF_h(D(T))+PF_h(y)\ge wF_h(m/w).
\]
For $P=0$ this is equality.  In either case $w\le w_\infty=1/g$ on
the same sign ray, and
\begin{equation}\label{eq:stable-perspective-derivative}
 \frac{\dd}{\dd w}\bigl[wF_h(m/w)\bigr]
 =-\Phi_h(m/w)\le0.
\end{equation}
Thus the cost is at least $w_\infty F_h(m/w_\infty)$, proving
\textup{(iii)}.  Taking $m=0$ gives \textup{(i)}.

\emph{Necessity: compact loops and their cell limits.}
Assume \textup{(i)}.  \Cref{thm:permanent-quotient} gives storage
domination, also for bounded measurable inputs by
\Cref{lem:finite-step-closure} and continuity of the storage term.
For $D\in C_c^\infty((0,\infty))$, set
$Q=\beta D+\ell*D$ and $v=\beta D'+\ell*D'$.
Then $v\in L^1\cap L^\infty$ and $G*v=D$.  If
$\supp D\subset[0,S]$ and $R>S$, then
\begin{equation}\label{eq:stable-compact-volume-tail}
 |Q(R)|\le\|D\|_\infty
              \int_{R-S}^R|\ell(r)|\dd r\longrightarrow0.
\end{equation}
Truncation at $R$ leaves the cost
$E_h(D)=\int_0^\infty vh(D)$ unchanged because $h(D)=0$ after $S$.
Storage domination gives $E_h(D)\ge\storage_{g,h}(Q(R))\to0$.

Now extend $D_0=\sum_{i=1}^nx_i\one_{[(i-1)\Delta,i\Delta)}$ by zero
and write $u_0=\ell*D_0'\in L^1(\R)$, extending $\ell$ causally.
For a smooth approximate identity $\rho_\varepsilon$ supported in
$(0,\varepsilon)$, the compact smooth loops
$D_\varepsilon=\rho_\varepsilon*D_0$ have a common bound, converge
almost everywhere to $D_0$, and satisfy
\begin{equation}\label{eq:stable-ramp-memory-convergence}
 (\ell*D_\varepsilon)'=\rho_\varepsilon*u_0\longrightarrow u_0
 \quad\text{in }L^1(\R).
\end{equation}
The feedthrough cost is exactly
$\beta\int h(D_\varepsilon)D_\varepsilon'=0$.
Boundedness and continuity of $h$, followed by dominated convergence
against $|u_0|$, imply $\int h(D_0)u_0\ge0$.
With $k_j=\int_{j\Delta}^{(j+1)\Delta}\ell$, the endpoint identity
$(\ell*D_0)(i\Delta)=\sum_{j\le i}x_jk_{i-j}$ gives
\begin{equation}\label{eq:toeplitz-accretivity}
 \begin{gathered}
 \int_\R h(D_0)u_0=\sum_{i=1}^n(A_nx)_ih(x_i),\\
 (A_n)_{ij}=\begin{cases}
 k_0,&i=j,\\
 k_{i-j}-k_{i-j-1},&i>j,\\
 0,&i<j.
 \end{cases}
 \end{gathered}
\end{equation}
By \Cref{cor:toeplitz-complete-accretivity}, testing every $x,h,n$ forces
\begin{equation}\label{eq:toeplitz-criterion}
 k_0\ge k_1\ge k_2\ge\cdots\ge0.
\end{equation}
This holds for every $\Delta>0$.  At Lebesgue points $0<s<t$, the cells
containing $s$ and $t$ are ordered for small $\Delta$.  Divide their
inequality by $\Delta$ and let $\Delta\downarrow0$; the averages converge
to $\ell(s)$ and $\ell(t)$.  Hence $\ell(s)\ge\ell(t)\ge0$, proving
\textup{(ii)}.

If \textup{(ii)} fails, this argument yields a finite negative cell form.
First fix a smooth loop retaining its strict negative cost; then choose
$R$ so that $E_h(D)<\storage_{g,h}(Q(R))$.
Finite-step approximation preserves this storage deficit, and
\Cref{thm:permanent-quotient} closes it by one remote block.
Apply \Cref{lem:cap-scaling} to the resulting finite witness.
\end{proof}

\subsection{The complete two-mode theorem}

Let
\begin{equation}\label{eq:two-mode-def}
 G(t)=g+a_1e^{-\lambda_1t}+a_2e^{-\lambda_2t},\qquad
 0<\lambda_1<\lambda_2,
\end{equation}
and define
\begin{equation}\label{eq:two-mode-moments}
 A=a_1+a_2,\quad b=g+A,\quad
 M_0=a_1\lambda_2+a_2\lambda_1,\quad
 M_\infty=a_1\lambda_1+a_2\lambda_2.
\end{equation}

\begin{theorem}[Two-mode classification]\label{thm:two-mode}
The kernel \eqref{eq:two-mode-def} is universally safe for every finite
piecewise-constant round trip on every horizon if and only if
\begin{equation}\label{eq:two-mode-iff}
 a_1\ge0\ \text{and }a_2\ge0,
 \qquad\text{or}\qquad
 gb>0\ \text{and }\frac1{p\laplace G(p)}=\frac1b+\laplace\ell(p)
 \text{ with }\ell\ge0\text{ nonincreasing}.
\end{equation}
In the second branch the inverse is automatically stable: under the
identity-readout gate \eqref{eq:two-mode-identity-gate} the polynomial $P$
in \eqref{eq:two-mode-P} has both roots in the open left half-plane, and
the residue test \eqref{eq:two-mode-nu-test} is the whole condition.  Both
alternatives are cap-local.  If $\lambda_1=\lambda_2$, the coefficients are
first combined; the condition is $a_1+a_2\ge0$.
\end{theorem}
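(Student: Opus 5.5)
The plan is to prove sufficiency from the two cones already built in this section, and necessity by a case analysis on the signs of $g$, $b$ and the $a_i$, with the identity readout used as a gate. Everything is organised around the rational transfer function
\[
 p\laplace G(p)=\frac{P(p)}{(p+\lambda_1)(p+\lambda_2)},\qquad
 P(p)=bp^2+\bigl(g(\lambda_1+\lambda_2)+M_0\bigr)p+g\lambda_1\lambda_2 .
\]
This gives
\[
 \frac1{p\laplace G(p)}=\frac{(p+\lambda_1)(p+\lambda_2)}{P(p)}=\frac1b+\laplace\ell(p),
\]
where $\ell$ is a combination of at most two exponentials (or $t e^{-\mu t}$ terms) determined by the roots and residues of $P$. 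The residue test on $\ell$ is what \eqref{eq:two-mode-nu-test} records.

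\emph{Sufficiency.} If $a_1,a_2\ge0$, then $G$ is a real permanent shift of a positive Prony transient, and \Cref{thm:licm-shifts} gives universal round-trip safety on every horizon, under every cap. In the second branch, $gb>0$ makes $g\ne0$ and $b\ne0$. If the roots of $P$ lie in the open left half-plane, then $\ell$ is a finite sum of decaying exponential-polynomial terms, so $\ell\in L^1(0,\infty)\cap BV(0,\infty)$. The tail of $G-g$ vanishes uniformly, so \Cref{thm:stable-complement} applies verbatim: (ii)$\Rightarrow$(i) gives safety, and it is cap-local. I still have to justify the stability claim. Under $gb>0$, the identity gate below forces $g(\lambda_1+\lambda_2)+M_0$ to have the sign of $b$. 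Then all three coefficients of the quadratic $P$ share one sign, which is exactly the Hurwitz condition for a quadratic.

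\emph{Necessity.} Assume universal round-trip safety and that some $a_i<0$; the aim is to land in the second branch.
\begin{itemize}
\item First take $h=\id$. The permanent part contributes nothing on round trips, so safety says the quadratic form of the transient $a_1e^{-\lambda_1t}+a_2e^{-\lambda_2t}$ is nonnegative on round trips on all horizons. Remote compensation (\Cref{thm:remote-compensation}) upgrades this to all inputs. By the Laplace/Plancherel representation this becomes $\re\bigl(\tfrac{a_1}{\lambda_1+i\omega}+\tfrac{a_2}{\lambda_2+i\omega}\bigr)\ge0$ for all $\omega$, which I call the gate \eqref{eq:two-mode-identity-gate}. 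Letting $\omega\to\infty$ and $\omega=0$ yields the moment conditions $M_\infty\ge0$ and $M_0/(\lambda_1\lambda_2)\ge0$.
\item \emph{Case $g=0$.} Here $G$ decays uniformly, so \Cref{cor:decaying-cp} makes $G$ completely positive. Its complement $1/(p\laplace G)=(p+\lambda_1)(p+\lambda_2)/(p(Ap+M_0))$ then has the form \eqref{eq:complement-form}. From this I expect to read off $A>0$ and an $\ell$ equal to a constant plus one exponential. Requiring $\ell$ to be nonnegative and nonincreasing should force the exponential coefficient's sign, and translating back through the residues should give $a_1,a_2\ge0$, contradicting the assumption.
\item \emph{Case $b=0$.} Here $1/(p\laplace G)$ grows like $p$. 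I will show the gate fails, or build a direct witness from a short fast round trip on which $G\approx0$ near the origin. The negative part of the transient then dominates the Bregman terms, and with $h$ a hinge this should give negative cost. I would try the hinge witness first.
\item \emph{Case $g\ne0$, $b\ne0$.} I must show $gb>0$ and that $P$ is Hurwitz. I would rule out $gb<0$ using the endpoint-gain argument in the proof of \Cref{thm:stable-complement}, once stability is known. Stability itself should come from the gate: with signs mixed, the gate forces the middle coefficient of $P$ to share the sign of $b$ and $g$. With stability in hand, \Cref{thm:stable-complement} (i)$\Rightarrow$(ii) gives $\ell\ge0$ nonincreasing.
\end{itemize}

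\emph{Main obstacle.} The hard step is this last implication in the mixed-sign case: excluding inverse poles on or right of the imaginary axis, together with $gb<0$, using only the identity gate and hinge readouts. If the gate alone is not sharp enough, I would fall back on the exponentially growing inverse mode. One constructs a compact loop $D$ whose volume $Q=bD^{-1}\ast\cdots$ grows along the unstable mode, so that the storage term $\storage_{g,h}(Q(R))$ has the wrong sign for a hinge readout. \Cref{thm:permanent-quotient} then closes the deficit with one remote block, and \Cref{lem:cap-scaling} makes the witness cap-local. Finally, the degenerate case $\lambda_1=\lambda_2$ collapses to \Cref{cor:one-mode} with $a=a_1+a_2$.
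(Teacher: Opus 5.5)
Your sufficiency argument and your necessity argument in the sectors $gb>0$ and $g=0$ match the paper. You use LICM shifts for $a_1,a_2\ge0$, and \Cref{thm:stable-complement} once $P$ is Hurwitz. Your sign claim is right: for $b<0$ the middle coefficient equals $b(\lambda_1+\lambda_2)-M_\infty<0$. At $g=0$ you use the explicit complement, but you still need the subcase $M_0=0$, where $\ell$ is affine and increasing rather than constant plus exponential. In the sufficiency direction the gate is not a hypothesis. Stability there follows from $\ell\ge0$ nonincreasing instead, because $P(0)=g\lambda_1\lambda_2\ne0$ and a monotone exponential polynomial with no pole at $0$ must decay.

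The genuine gap is that you try to dispose of the sectors $gb<0$ and $b=0$ with the identity gate, and the gate holds in both:
\begin{itemize}
\item for $G=-1-3e^{-t}+10e^{-3t}$ one has $g=-1$, $b=6$, $M_0=1$, $M_\infty=27$;
\item for $G=-\tfrac{9}{10}+e^{-t}-\tfrac1{10}e^{-2t}$ one has $b=0$, $M_0=\tfrac{19}{10}$, $M_\infty=\tfrac45$.
\end{itemize}
In both, the identity-readout spectral density is strictly positive, so no quadratic argument detects failure. A genuinely nonlinear witness is required, and your plan does not supply one.

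For $gb<0$ the gate forces $g<0<b$, and the roots of $P$ have product $g\lambda_1\lambda_2/b<0$. The inverse therefore always has a genuine pole $\kappa>0$, which the numerator cannot cancel. So ``once stability is known'' never happens, and the endpoint-gain argument of \Cref{thm:stable-complement}, which presupposes $\ell\in L^1$, is unavailable. Your fallback also points the wrong way. With $g<0$ one has $\storage_{g,h}\le0$, so letting the volume grow along the unstable mode only lowers the bar in \eqref{eq:storage-domination}; you still need a strictly negative loop cost. That negativity comes from the negative residue $s<0$ at the stable inverse pole. The paper isolates it in \Cref{prop:four-cell-separator}: it cancels the unstable moment exactly on four cells, \eqref{eq:unstable-moment-cancellation}, and reads the stable edge with a hinge.

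For $b=0$, your short-fast-loop heuristic fails when $M_\infty>0$. In the exact identity \eqref{eq:singular-first-identity}, compressing a fixed loop to duration $\tau$ makes the positive derivative energy $\frac1{M_\infty}\int h'(D)(D')^2$ of order $1/\tau$. The negative Bregman edge is only of order $\tau$. The paper's construction (\Cref{prop:singular-first-order-separator}) runs in three steps: it prepares the filtered state at $Z=-1$ with a slow negative bump, then makes a tiny excursion with $\varepsilon=\tau^2$, and holds for an order-one time under the steep readout $\eta(x/\varepsilon)$. When $M_\infty=0$ there is no Bregman term at all. The negativity then comes from the cubic term $-\frac12\int h''(D)(D')^3$, exploited by an asymmetric rise and fall (\Cref{prop:singular-second-order-separator}). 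Without such constructions, necessity remains unproved in exactly the sectors where the classification is nontrivial.
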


\begin{proof}
The identity readout forces
\begin{equation}\label{eq:two-mode-identity-gate}
 M_0\ge0,\qquad M_\infty\ge0,
\end{equation}
because
\begin{equation}\label{eq:two-mode-spectrum}
 \re\laplace H(i\omega)
 =\frac{\lambda_1\lambda_2M_0+M_\infty\omega^2}
 {\left(\lambda_1^2+\omega^2\right)
  \left(\lambda_2^2+\omega^2\right)},\qquad H=G-g.
\end{equation}
If $a_1,a_2\ge0$, \Cref{thm:licm-shifts} proves safety for every $g$.
Suppose instead that the spectrum is signed and the identity gate holds.
Then $a_1a_2<0$ and $A>0$.

If $g,b\ne0$ and $gb>0$, put
\begin{equation}\label{eq:two-mode-P}
 P(p)=bp^2+\left[g(\lambda_1+\lambda_2)+M_0\right]p
      +g\lambda_1\lambda_2.
\end{equation}
The identity gate makes both zeros of $P/b$ lie in the open left half-plane.
Moreover,
\begin{equation}\label{eq:two-mode-inverse-density}
 \frac1{p\laplace G(p)}=\frac1b+\laplace\ell(p),\qquad
 \laplace\ell(p)=
 \frac{M_\infty p+A\lambda_1\lambda_2}{bP(p)}.
\end{equation}
The stable-complement theorem is therefore necessary and sufficient.

If $gb<0$, the gate forces $g<0<b$.  The inverse has one unstable pole
$\kappa>0$ and one stable pole $-\mu<0$,
\begin{equation}\label{eq:opposite-inverse}
 \frac1{p\laplace G(p)}
 =\frac1b+\frac{u}{p-\kappa}+\frac{s}{p+\mu},\qquad u>0>s.
\end{equation}
The four-cell construction of \Cref{prop:four-cell-separator} cancels the
unstable exponential moment exactly and isolates the negative stable edge,
giving a strict finite round-trip manipulation.

If $b=0$, the inverse has a polynomial part.  When $M_\infty>0$, the
identity in \Cref{prop:singular-first-order-separator} is the sum of a
positive derivative energy and a negative Bregman edge; a flat-readout loop
makes the latter dominate.  When $M_\infty=0$, the second-order identity in
\Cref{prop:singular-second-order-separator} is made negative by an asymmetric
smooth time-rescaling loop.  Both constructions are cap-local and close to
finite round trips.

Finally, if $g=0$, uniform decay reduces the problem to complete positivity.
For a signed spectrum and $M_0>0$, direct partial fractions give
\begin{equation}\label{eq:g0-negative-complement}
 \begin{gathered}
 \frac1{p\laplace G(p)}
 =\frac1A+\frac{\lambda_1\lambda_2}{M_0p}
       +\frac{\gamma_0}{p+M_0/A},\\
 \gamma_0=\frac{a_1a_2(\lambda_2-\lambda_1)^2}{A^2M_0}<0.
 \end{gathered}
\end{equation}
Thus the complementary density is strictly increasing.  If $M_0=0$,
it is $(\lambda_1+\lambda_2)/A+(\lambda_1\lambda_2/A)t$, again strictly
increasing.  Complete positivity fails in both cases.  These cases exhaust the parameter
space under \eqref{eq:two-mode-identity-gate}; failure of the gate was already
detected by the identity readout.
\end{proof}

For evaluation of the second branch, note first that $M_\infty>0$ there:
$\ell(0+)=M_\infty/b^2$, and a nonnegative nonincreasing density with
$\ell(0+)=0$ vanishes identically, which is impossible for a nonconstant
transfer.  Let
\begin{equation}\label{eq:two-mode-nu}
 \nu=\frac{A\lambda_1\lambda_2}{M_\infty}
\end{equation}
be the zero of the numerator in \eqref{eq:two-mode-inverse-density}.  If
$P$ has distinct roots $-\mu_1,-\mu_2$ with $0<\mu_1<\mu_2$, then
\begin{equation}\label{eq:two-mode-residues}
 \ell(t)=c_1e^{-\mu_1t}+c_2e^{-\mu_2t},\qquad
 c_1=\frac{M_\infty(\nu-\mu_1)}{b^2(\mu_2-\mu_1)},\qquad
 c_2=\frac{M_\infty(\mu_2-\nu)}{b^2(\mu_2-\mu_1)},
\end{equation}
so that $\ell(0+)=M_\infty/b^2$ and
$-\ell'(0+)=c_1\mu_1+c_2\mu_2=M_\infty(\mu_1+\mu_2-\nu)/b^2$.  Since
$e^{\mu_2t}(-\ell'(t))=c_1\mu_1e^{(\mu_2-\mu_1)t}+c_2\mu_2$ is monotone in
$t$, the density is nonincreasing exactly when $c_1\ge0$ and
$-\ell'(0+)\ge0$, and it is then nonnegative because it tends to zero.
The test is therefore
\begin{equation}\label{eq:two-mode-nu-test}
 \mu_1\le\nu\le\mu_1+\mu_2:
\end{equation}
the zero of the inverse lies between its slower pole and the sum of its
poles.  A nonzero complex-pole density oscillates and fails.  At a repeated
root $\mu$, $\ell(t)=(M_\infty/b^2)(1+(\nu-\mu)t)e^{-\mu t}$ and the test
is $\mu\le\nu\le2\mu$, which is \eqref{eq:two-mode-nu-test} again.

\begin{figure}[t]
\centering
\begin{minipage}[b]{0.56\textwidth}\centering
\begin{tikzpicture}[x=0.92cm,y=0.80cm,>=Latex]
  \begin{scope}
  \clip (-1,-1.4) rectangle (4,2.9);
  \fill[blue!13] (0,3.2) -- (0.0000,-0.0000) -- (0.0500,-0.0123) -- (0.1000,-0.0244) -- (0.1500,-0.0361) -- (0.2000,-0.0476) -- (0.2500,-0.0588) -- (0.3000,-0.0698) -- (0.3500,-0.0805) -- (0.4000,-0.0909) -- (0.4500,-0.1011) -- (0.5000,-0.1111) -- (0.5500,-0.1209) -- (0.6000,-0.1304) -- (0.6500,-0.1398) -- (0.7000,-0.1489) -- (0.7500,-0.1579) -- (0.8000,-0.1667) -- (0.8500,-0.1753) -- (0.9000,-0.1837) -- (0.9500,-0.1919) -- (1.0000,-0.2000) -- (1.0500,-0.2079) -- (1.1000,-0.2157) -- (1.1500,-0.2233) -- (1.2000,-0.2308) -- (1.2500,-0.2381) -- (1.3000,-0.2453) -- (1.3500,-0.2523) -- (1.4000,-0.2593) -- (1.4500,-0.2661) -- (1.5000,-0.2727) -- (1.5500,-0.2793) -- (1.6000,-0.2857) -- (1.6500,-0.2920) -- (1.7000,-0.2982) -- (1.7500,-0.3043) -- (1.8000,-0.3103) -- (1.8500,-0.3162) -- (1.9000,-0.3220) -- (1.9500,-0.3277) -- (2.0000,-0.3333) -- (2.0500,-0.3388) -- (2.1000,-0.3443) -- (2.1500,-0.3496) -- (2.2000,-0.3548) -- (2.2500,-0.3600) -- (2.3000,-0.3651) -- (2.3500,-0.3701) -- (2.4000,-0.3750) -- (2.4500,-0.3798) -- (2.5000,-0.3846) -- (2.5500,-0.3893) -- (2.6000,-0.3939) -- (2.6500,-0.3985) -- (2.7000,-0.4030) -- (2.7500,-0.4074) -- (2.8000,-0.4118) -- (2.8500,-0.4161) -- (2.9000,-0.4203) -- (2.9500,-0.4245) -- (3.0000,-0.4286) -- (3.0500,-0.4326) -- (3.1000,-0.4366) -- (3.1500,-0.4406) -- (3.2000,-0.4444) -- (3.2500,-0.4483) -- (3.3000,-0.4521) -- (3.3500,-0.4558) -- (3.4000,-0.4595) -- (3.4500,-0.4631) -- (3.5000,-0.4667) -- (3.5500,-0.4702) -- (3.6000,-0.4737) -- (3.6500,-0.4771) -- (3.7000,-0.4805) -- (3.7500,-0.4839) -- (3.8000,-0.4872) -- (3.8500,-0.4904) -- (3.9000,-0.4937) -- (3.9500,-0.4969) -- (4.0000,-0.5000) -- (4,3.2) -- cycle;
  \fill[blue!13] (-0.3112,3.2) -- (-0.3112,3.2002) -- (-0.2982,3.1411) -- (-0.2853,3.0812) -- (-0.2723,3.0205) -- (-0.2593,2.9590) -- (-0.2464,2.8966) -- (-0.2334,2.8333) -- (-0.2204,2.7688) -- (-0.2075,2.7032) -- (-0.1945,2.6364) -- (-0.1815,2.5682) -- (-0.1686,2.4984) -- (-0.1556,2.4269) -- (-0.1426,2.3535) -- (-0.1297,2.2778) -- (-0.1167,2.1996) -- (-0.1037,2.1184) -- (-0.0908,2.0337) -- (-0.0778,1.9445) -- (-0.0648,1.8499) -- (-0.0519,1.7479) -- (-0.0389,1.6357) -- (-0.0259,1.5074) -- (-0.0130,1.3480) -- (0.0000,1.0000) -- (0,3.2) -- cycle;
  \fill[blue!24] (0,0) rectangle (4,3.2);
  \draw[thick,blue!60!black] (0.0000,-0.0000) -- (0.0500,-0.0123) -- (0.1000,-0.0244) -- (0.1500,-0.0361) -- (0.2000,-0.0476) -- (0.2500,-0.0588) -- (0.3000,-0.0698) -- (0.3500,-0.0805) -- (0.4000,-0.0909) -- (0.4500,-0.1011) -- (0.5000,-0.1111) -- (0.5500,-0.1209) -- (0.6000,-0.1304) -- (0.6500,-0.1398) -- (0.7000,-0.1489) -- (0.7500,-0.1579) -- (0.8000,-0.1667) -- (0.8500,-0.1753) -- (0.9000,-0.1837) -- (0.9500,-0.1919) -- (1.0000,-0.2000) -- (1.0500,-0.2079) -- (1.1000,-0.2157) -- (1.1500,-0.2233) -- (1.2000,-0.2308) -- (1.2500,-0.2381) -- (1.3000,-0.2453) -- (1.3500,-0.2523) -- (1.4000,-0.2593) -- (1.4500,-0.2661) -- (1.5000,-0.2727) -- (1.5500,-0.2793) -- (1.6000,-0.2857) -- (1.6500,-0.2920) -- (1.7000,-0.2982) -- (1.7500,-0.3043) -- (1.8000,-0.3103) -- (1.8500,-0.3162) -- (1.9000,-0.3220) -- (1.9500,-0.3277) -- (2.0000,-0.3333) -- (2.0500,-0.3388) -- (2.1000,-0.3443) -- (2.1500,-0.3496) -- (2.2000,-0.3548) -- (2.2500,-0.3600) -- (2.3000,-0.3651) -- (2.3500,-0.3701) -- (2.4000,-0.3750) -- (2.4500,-0.3798) -- (2.5000,-0.3846) -- (2.5500,-0.3893) -- (2.6000,-0.3939) -- (2.6500,-0.3985) -- (2.7000,-0.4030) -- (2.7500,-0.4074) -- (2.8000,-0.4118) -- (2.8500,-0.4161) -- (2.9000,-0.4203) -- (2.9500,-0.4245) -- (3.0000,-0.4286) -- (3.0500,-0.4326) -- (3.1000,-0.4366) -- (3.1500,-0.4406) -- (3.2000,-0.4444) -- (3.2500,-0.4483) -- (3.3000,-0.4521) -- (3.3500,-0.4558) -- (3.4000,-0.4595) -- (3.4500,-0.4631) -- (3.5000,-0.4667) -- (3.5500,-0.4702) -- (3.6000,-0.4737) -- (3.6500,-0.4771) -- (3.7000,-0.4805) -- (3.7500,-0.4839) -- (3.8000,-0.4872) -- (3.8500,-0.4904) -- (3.9000,-0.4937) -- (3.9500,-0.4969) -- (4.0000,-0.5000);
  \draw[thick,blue!60!black] (-0.3112,3.2002) -- (-0.2982,3.1411) -- (-0.2853,3.0812) -- (-0.2723,3.0205) -- (-0.2593,2.9590) -- (-0.2464,2.8966) -- (-0.2334,2.8333) -- (-0.2204,2.7688) -- (-0.2075,2.7032) -- (-0.1945,2.6364) -- (-0.1815,2.5682) -- (-0.1686,2.4984) -- (-0.1556,2.4269) -- (-0.1426,2.3535) -- (-0.1297,2.2778) -- (-0.1167,2.1996) -- (-0.1037,2.1184) -- (-0.0908,2.0337) -- (-0.0778,1.9445) -- (-0.0648,1.8499) -- (-0.0519,1.7479) -- (-0.0389,1.6357) -- (-0.0259,1.5074) -- (-0.0130,1.3480) -- (0.0000,1.0000);
  \draw[very thick,blue!60!black] (0,0)--(0,3.2);
  \end{scope}
  \draw[dotted,gray] (0,-1)--(4,-1);
  \draw[->] (-1.1,0)--(4.25,0) node[right] {$a_1$};
  \draw[->] (0,-1.45)--(0,3.1) node[above] {$a_2$};
  \foreach \x in {-1,1,2,3,4} \draw (\x,0.05)--(\x,-0.05) node[below,font=\scriptsize] {$\x$};
  \foreach \y in {-1,1,2} \draw (0.05,\y)--(-0.05,\y) node[left,font=\scriptsize] {$\y$};
  \fill[red!70!black] (1,-0.125) circle (1.7pt);
  \draw[red!70!black] (1.05,-0.16) -- (1.55,-0.67) node[anchor=west,font=\scriptsize,inner sep=1pt] {$(1,-\tfrac18)$};
  \node[blue!60!black,anchor=west,font=\scriptsize] at (4.05,-0.5) {$\ell'(0+)=0$};
  \draw[->,gray!70!black] (-0.62,2.35) -- (-0.17,2.55);
  \node[gray!30!black,anchor=north east,font=\scriptsize,align=right] at (-0.55,2.4) {poles\\coincide};
  \node[font=\small] at (2.2,1.8) {safe};
  \node[font=\small] at (3.2,-0.82) {unsafe};
  \node[font=\small] at (-0.62,0.42) {unsafe};
  \node[anchor=south,font=\small] at (1.5,3.15) {(a) $g=1$};
\end{tikzpicture}
\end{minipage}\hfill
\begin{minipage}[b]{0.42\textwidth}\centering
\begin{tikzpicture}[x=6.4cm,y=27cm,>=Latex]
  \begin{scope}
  \clip (0,-0.09) rectangle (0.65,0.06);
  \fill[blue!13] (0,0.06) -- (0.0000,0.00000) -- (0.0050,-0.00125) -- (0.0100,-0.00251) -- (0.0150,-0.00376) -- (0.0200,-0.00503) -- (0.0250,-0.00629) -- (0.0300,-0.00756) -- (0.0350,-0.00883) -- (0.0400,-0.01010) -- (0.0450,-0.01138) -- (0.0500,-0.01266) -- (0.0550,-0.01394) -- (0.0600,-0.01523) -- (0.0650,-0.01652) -- (0.0700,-0.01781) -- (0.0750,-0.01911) -- (0.0800,-0.02041) -- (0.0850,-0.02171) -- (0.0900,-0.02302) -- (0.0950,-0.02433) -- (0.1000,-0.02564) -- (0.1050,-0.02696) -- (0.1100,-0.02828) -- (0.1150,-0.02960) -- (0.1200,-0.03093) -- (0.1250,-0.03226) -- (0.1300,-0.03359) -- (0.1350,-0.03493) -- (0.1400,-0.03627) -- (0.1450,-0.03761) -- (0.1500,-0.03896) -- (0.1550,-0.04031) -- (0.1600,-0.04167) -- (0.1650,-0.04302) -- (0.1700,-0.04439) -- (0.1750,-0.04575) -- (0.1800,-0.04712) -- (0.1850,-0.04849) -- (0.1900,-0.04987) -- (0.1950,-0.05125) -- (0.2000,-0.05263) -- (0.2050,-0.05402) -- (0.2100,-0.05541) -- (0.2150,-0.05680) -- (0.2200,-0.05820) -- (0.2250,-0.05960) -- (0.2300,-0.06101) -- (0.2350,-0.06242) -- (0.2400,-0.06383) -- (0.2450,-0.06525) -- (0.2500,-0.06667) -- (0.2550,-0.06809) -- (0.2600,-0.06952) -- (0.2650,-0.07095) -- (0.2700,-0.07031) -- (0.2750,-0.06676) -- (0.2800,-0.06334) -- (0.2850,-0.06003) -- (0.2900,-0.05685) -- (0.2950,-0.05377) -- (0.3000,-0.05081) -- (0.3050,-0.04795) -- (0.3100,-0.04520) -- (0.3150,-0.04255) -- (0.3200,-0.04000) -- (0.3250,-0.03755) -- (0.3300,-0.03519) -- (0.3350,-0.03293) -- (0.3400,-0.03076) -- (0.3450,-0.02868) -- (0.3500,-0.02668) -- (0.3550,-0.02477) -- (0.3600,-0.02294) -- (0.3650,-0.02120) -- (0.3700,-0.01953) -- (0.3750,-0.01795) -- (0.3800,-0.01644) -- (0.3850,-0.01501) -- (0.3900,-0.01365) -- (0.3950,-0.01236) -- (0.4000,-0.01115) -- (0.4050,-0.01000) -- (0.4100,-0.00892) -- (0.4150,-0.00791) -- (0.4200,-0.00697) -- (0.4250,-0.00609) -- (0.4300,-0.00528) -- (0.4350,-0.00452) -- (0.4400,-0.00383) -- (0.4450,-0.00320) -- (0.4500,-0.00263) -- (0.4550,-0.00212) -- (0.4600,-0.00167) -- (0.4650,-0.00127) -- (0.4700,-0.00093) -- (0.4750,-0.00064) -- (0.4800,-0.00041) -- (0.4850,-0.00023) -- (0.4900,-0.00010) -- (0.4950,-0.00003) -- (0.5000,-0.00000) -- (0.5,0) -- (0.65,0) -- (0.65,0.06) -- cycle;
  \fill[blue!24] (0,0) rectangle (0.65,0.06);
  \end{scope}
  \draw[thick,blue!60!black] (0.0000,0.00000) -- (0.0050,-0.00125) -- (0.0100,-0.00251) -- (0.0150,-0.00376) -- (0.0200,-0.00503) -- (0.0250,-0.00629) -- (0.0300,-0.00756) -- (0.0350,-0.00883) -- (0.0400,-0.01010) -- (0.0450,-0.01138) -- (0.0500,-0.01266) -- (0.0550,-0.01394) -- (0.0600,-0.01523) -- (0.0650,-0.01652) -- (0.0700,-0.01781) -- (0.0750,-0.01911) -- (0.0800,-0.02041) -- (0.0850,-0.02171) -- (0.0900,-0.02302) -- (0.0950,-0.02433) -- (0.1000,-0.02564) -- (0.1050,-0.02696) -- (0.1100,-0.02828) -- (0.1150,-0.02960) -- (0.1200,-0.03093) -- (0.1250,-0.03226) -- (0.1300,-0.03359) -- (0.1350,-0.03493) -- (0.1400,-0.03627) -- (0.1450,-0.03761) -- (0.1500,-0.03896) -- (0.1550,-0.04031) -- (0.1600,-0.04167) -- (0.1650,-0.04302) -- (0.1700,-0.04439) -- (0.1750,-0.04575) -- (0.1800,-0.04712) -- (0.1850,-0.04849) -- (0.1900,-0.04987) -- (0.1950,-0.05125) -- (0.2000,-0.05263) -- (0.2050,-0.05402) -- (0.2100,-0.05541) -- (0.2150,-0.05680) -- (0.2200,-0.05820) -- (0.2250,-0.05960) -- (0.2300,-0.06101) -- (0.2350,-0.06242) -- (0.2400,-0.06383) -- (0.2450,-0.06525) -- (0.2500,-0.06667) -- (0.2550,-0.06809) -- (0.2600,-0.06952) -- (0.2650,-0.07095) -- (0.2700,-0.07031) -- (0.2750,-0.06676) -- (0.2800,-0.06334) -- (0.2850,-0.06003) -- (0.2900,-0.05685) -- (0.2950,-0.05377) -- (0.3000,-0.05081) -- (0.3050,-0.04795) -- (0.3100,-0.04520) -- (0.3150,-0.04255) -- (0.3200,-0.04000) -- (0.3250,-0.03755) -- (0.3300,-0.03519) -- (0.3350,-0.03293) -- (0.3400,-0.03076) -- (0.3450,-0.02868) -- (0.3500,-0.02668) -- (0.3550,-0.02477) -- (0.3600,-0.02294) -- (0.3650,-0.02120) -- (0.3700,-0.01953) -- (0.3750,-0.01795) -- (0.3800,-0.01644) -- (0.3850,-0.01501) -- (0.3900,-0.01365) -- (0.3950,-0.01236) -- (0.4000,-0.01115) -- (0.4050,-0.01000) -- (0.4100,-0.00892) -- (0.4150,-0.00791) -- (0.4200,-0.00697) -- (0.4250,-0.00609) -- (0.4300,-0.00528) -- (0.4350,-0.00452) -- (0.4400,-0.00383) -- (0.4450,-0.00320) -- (0.4500,-0.00263) -- (0.4550,-0.00212) -- (0.4600,-0.00167) -- (0.4650,-0.00127) -- (0.4700,-0.00093) -- (0.4750,-0.00064) -- (0.4800,-0.00041) -- (0.4850,-0.00023) -- (0.4900,-0.00010) -- (0.4950,-0.00003) -- (0.5000,-0.00000);
  \draw[very thick,blue!60!black] (0,0)--(0,0.06);
  \draw[->] (0,0)--(0.68,0) node[right] {$a_1$};
  \draw[->] (0,-0.093)--(0,0.07) node[above] {$a_2$};
  \foreach \x in {0.25,0.5} \draw (\x,0.002)--(\x,-0.002) node[below,font=\scriptsize] {$\x$};
  \foreach \y/\lab in {-0.05/{-0.05},0.05/{0.05}} \draw (0.004,\y)--(-0.004,\y) node[left,font=\scriptsize] {$\lab$};
  \node[font=\small] at (0.33,0.032) {safe};
  \node[font=\small] at (0.47,-0.06) {unsafe};
  \node[blue!60!black,rotate=-48,anchor=north west,font=\scriptsize,inner sep=1pt] at (0.04,-0.0155) {$\ell'(0+)=0$};
  \node[gray!30!black,anchor=west,font=\scriptsize,align=left] at (0.40,-0.028) {poles\\coincide};
  \draw[->,gray!70!black] (0.40,-0.028) -- (0.385,-0.01501);
  \node[anchor=south,font=\small] at (0.33,0.072) {(b) $g=-1$};
\end{tikzpicture}
\end{minipage}
\caption{Safe set of $G(t)=g+a_1e^{-t}+a_2e^{-2t}$ for every readout in
$\readouts$ and every round trip (\Cref{thm:two-mode}).  Dark shading is
the coefficientwise positive quadrant, light shading the signed safe set.
(a) $g=1$.  For $a_1>0$ the lower boundary is the hyperbola
$a_2=-a_1/(a_1+4)$ of \eqref{eq:two-mode-hyperbola}, on which
$\ell'(0+)=0$; the dotted line is its asymptote $a_2=-1$, and the dot is
\Cref{ex:safe-signed-two-mode}.  For $a_1<0$ the boundary, computed from
\eqref{eq:two-mode-nu-test}, is the curve $a_2=(1+\sqrt{2|a_1|})^2$ on
which the two inverse poles coincide; below it they are complex and the
density oscillates.  (b) $g=-1$, magnified.  The signed safe set is the
thin wedge $0<a_1<1/2$,
$a_2\ge\max\{a_1/(a_1-4),-(1-\sqrt{2a_1})^2\}$, bounded by the curve
$\ell'(0+)=0$ for $a_1<2-\sqrt3$ and by the coincident-pole curve for
$a_1>2-\sqrt3$; the two curves meet at
$a_1=2-\sqrt3\approx0.267949$.  The wedge requires $b<0$ and closes at
$a_1=1/2$.}
\label{fig:two-mode}
\end{figure}

\begin{example}[A safe signed two-mode kernel]\label{ex:safe-signed-two-mode}
Let
\begin{equation}\label{eq:safe-signed-kernel}
 G(t)=1+e^{-t}-\tfrac18e^{-2t}.
\end{equation}
Here $g=1$, $b=15/8$, $A=7/8$, $M_0=15/8$, and $M_\infty=3/4$: the
permanent and instantaneous impacts are positive, the identity gate holds,
and the fast mode is negative.  The inverse is
\begin{equation}\label{eq:safe-signed-inverse}
 \frac1{p\laplace G(p)}=\frac{8(p+1)(p+2)}{15p^2+39p+16}
 =\frac8{15}+\frac{16(3p+7)}{15(15p^2+39p+16)},
\end{equation}
with poles $-\mu_{1,2}=-(39\mp\sqrt{561})/30\approx-0.5105,\,-2.0895$ and
zero $-\nu=-7/3$.  Since $\mu_1\le7/3\le\mu_1+\mu_2=13/5$, the kernel is
universally round-trip safe, and every input of volume $m$ has cost at least
$F_h(m)$.  Explicitly,
$c_1=8/75+248\sqrt{561}/42075\approx0.2463$,
$c_2=8/75-248\sqrt{561}/42075\approx-0.0329$, $\ell(0+)=16/75$, and
$-\ell'(0+)=64/1125$.  The kernel is positive, decreasing, and convex, but
not completely monotone.

The same computation describes the whole family $g=1$,
$(\lambda_1,\lambda_2)=(1,2)$, $a_1>0$: it is safe if and only if
\begin{equation}\label{eq:two-mode-hyperbola}
 a_2\ge-\frac{a_1}{a_1+4}.
\end{equation}
Indeed, for $a_1>0>a_2$ the discriminant of $P$ is
$a_2^2+(4a_1-2)a_2+(2a_1+1)^2>0$.  If
$a_2<-a_1/2$, then $M_\infty<0$ and the identity gate fails.  At
$a_2=-a_1/2$, a nonconstant nonnegative nonincreasing inverse density cannot
start from $\ell(0+)=M_\infty/b^2=0$.  It remains to consider the connected
strip $-a_1/2<a_2<0$, where $b$, $M_0$, and $M_\infty$ are positive and
the two inverse poles are stable and distinct.  At $a_2=0$ the unreduced
denominator roots are $-2$ and $-1/(1+a_1)$, but the factor $p+2$ cancels
from the inverse; $\nu=2$ meets this removable root.  As $a_2\uparrow0$
from the signed strip, the faster pole tends to $\mu_2=2=\nu$.  A direct
expansion gives
$M_\infty^2P(-\nu)=-2a_1a_2b>0$.  For $a_2<0$ sufficiently close to zero,
continuity and the separation from $\mu_1$ therefore place $\nu$ above
$\mu_2$.  The same strict identity prevents a crossing of either pole, so
$\nu>\mu_2$ throughout the strip.  The test
\eqref{eq:two-mode-nu-test} therefore reduces to
$\nu\le\mu_1+\mu_2$, which is \eqref{eq:two-mode-hyperbola}.  On that set
the identity gate holds automatically, while below it either the identity
gate fails or $-\ell'(0+)<0$, so \Cref{thm:two-mode} gives manipulation.
At $a_1=1$ the boundary is $a_2=-1/5$.  \Cref{fig:two-mode}
draws the safe set; coefficientwise positivity is sufficient but not
necessary at two modes, and the safe set is neither convex nor a cone.
\end{example}

\begin{proposition}[Linear passivity is strictly weaker]\label{prop:linear-not-nonlinear}
The transient $H_*(t)=2e^{-t}-e^{-2t}$ has strictly positive
identity-readout spectral density
\begin{equation}\label{eq:linear-positive-spectrum}
 \re\laplace H_*(i\omega)=
 \frac6{(1+\omega^2)(4+\omega^2)},
\end{equation}
but $g+H_*$ fails universal nonlinear round-trip safety for every
$g\in\R$.
\end{proposition}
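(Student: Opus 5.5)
The plan has two parts: compute the spectral density directly, then read off nonlinear failure from \Cref{thm:two-mode}, with parameters $\lambda_1=1$, $\lambda_2=2$, $a_1=2$, $a_2=-1$.

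\emph{Spectral density.} Since $\laplace H_*(p)=2/(p+1)-1/(p+2)$, taking real parts at $p=i\omega$ gives
\[
 \re\laplace H_*(i\omega)=\frac{2}{1+\omega^2}-\frac{2}{4+\omega^2}=\frac{6}{(1+\omega^2)(4+\omega^2)}>0 .
\]
As a check, this agrees with \eqref{eq:two-mode-spectrum}. There $A=1$, $M_0=a_1\lambda_2+a_2\lambda_1=3$ and $M_\infty=a_1\lambda_1+a_2\lambda_2=0$, so the numerator is $\lambda_1\lambda_2M_0=6$. In particular the identity-readout gate \eqref{eq:two-mode-identity-gate} holds. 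This is exactly why linear passivity cannot detect the failure.

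\emph{Nonlinear failure for every $g$.} I would show that neither alternative in \eqref{eq:two-mode-iff} holds for any $g\in\R$.

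The first alternative fails because $a_2=-1<0$.

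For the second alternative, set $b=g+1$. If $gb\le0$, which covers $g=0$, $g=-1$ and $-1<g<0$, the alternative fails at once. For $g=0$ this is consistent with the direct partial fractions \eqref{eq:g0-negative-complement}: since $M_0=3>0$ and $a_1a_2<0$, the complementary density is strictly increasing.

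If $gb>0$, then \eqref{eq:two-mode-inverse-density} holds with $M_\infty=0$, so
\[
 \laplace\ell(p)=\frac{A\lambda_1\lambda_2}{bP(p)}=\frac{2}{bP(p)},
\]
where $P$ is the quadratic \eqref{eq:two-mode-P}. As $p\to\infty$ this is $O(p^{-2})$, so the initial-value theorem gives $\ell(0+)=\lim p\laplace\ell(p)=M_\infty/b^2=0$. A nonnegative nonincreasing density with $\ell(0+)=0$ vanishes identically. That contradicts $\laplace\ell=2/(bP)\not\equiv0$. Hence the second alternative fails here too, and \Cref{thm:two-mode} yields a finite round-trip manipulation for every $g$, cap-locally.

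\emph{Main obstacle.} The one delicate step is justifying $\ell(0+)=M_\infty/b^2$ when $P$ has complex or repeated roots. I would handle this directly: $\ell$ is an explicit finite combination of $e^{-\mu t}$, $te^{-\mu t}$ or damped cosines, so $\ell(0+)$ can be read off by expanding $2/(bP)$ in partial fractions. All terms of order $p^{-1}$ cancel because the degree gap is two.
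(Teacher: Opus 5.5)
Your argument is correct. The spectral computation and the case $gb>0$ (that is, $g<-1$ or $g>0$) match the paper's proof. There $M_\infty=0$ gives $\ell(0+)=0$, so a nonnegative nonincreasing inverse density would vanish. The step you flag as delicate is immediate: $\laplace\ell=2/(bP)$ is $O(p^{-2})$ whatever the root configuration of $P$.

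The routes differ on $-1\le g\le 0$. You settle that range with the ``only if'' half of \Cref{thm:two-mode}, so your manipulation comes from three different sources:
\begin{itemize}
\item for $-1<g<0$, from \Cref{prop:four-cell-separator};
\item for $g=0$, from failure of complete positivity (here $\ell(t)=\tfrac23(1-e^{-3t})$ is increasing) plus remote compensation;
\item for $g=-1$, from the $M_\infty=0$ second-order identity of \Cref{prop:singular-second-order-separator}.
\end{itemize}
The paper does not route through \Cref{thm:two-mode} there. It writes down explicit ideal pulse trains with rational state tables and threshold step readouts. For every $g>-1$ it uses volumes $(5,-6,1,1,-1)$ with waits $(\log2,\log2,3\log2,4\log2)$; for $g=-1$ it uses volumes $(-3,1,7,-6,1)$. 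It then smooths the step into a ramp and widens the pulses.

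Your route is shorter and shows which sector of the classification each shift occupies. The paper's route gives concrete witnesses that can be checked without the four-cell and singular separators or the cell-realization closure behind them. One of its families covers all $g>-1$, straddling the structural change at $g=0$. These witnesses also serve as an independent cross-check of \Cref{thm:two-mode} in the degenerate case $M_\infty=0$.
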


The proof uses stable-inverse necessity for $g<-1$, an explicit five-pulse
family for $g>-1$, and a separate singular pulse train for $g=-1$; the
formulas and finite-step closure appear in \Cref{app:prony-separators}.

\subsection{Three modes and the cut hierarchy}

\begin{proposition}[Minimal signed safe three-mode tail]
\label{prop:signed-three-mode}
Let
\begin{equation}\label{eq:three-mode-complement}
 \laplace L(p)=1+\frac2p+\frac1{p+1}-\frac1{10(p+2)},\qquad
 \ell(t)=2+e^{-t}-\frac1{10}e^{-2t}.
\end{equation}
Then $\ell>0$ and $\ell'<0$, and the reciprocal tail
\begin{equation}\label{eq:three-mode-tail}
 \laplace H(p)=
 \frac{10(p+1)(p+2)}{10p^3+59p^2+99p+40}
\end{equation}
is completely positive.  It is a three-mode Prony kernel with residue signs
$+,-,+$.  No signed safe completely positive tail exists with only two
distinct modes, so three is minimal.
\end{proposition}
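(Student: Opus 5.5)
The proof has three parts: verify the complementary density, identify the reciprocal tail and its complete positivity, and rule out two modes.

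\emph{The density and its reciprocal.} The inverse Laplace transform of \eqref{eq:three-mode-complement} is $L=\delta_0+\ell(t)\dd t$ with $\ell(t)=2+e^{-t}-\tfrac1{10}e^{-2t}$. Here $\ell>0$ because $e^{-t}\ge e^{-2t}$ gives $\ell\ge 2+\tfrac9{10}e^{-2t}$. Also $\ell'(t)=-e^{-t}+\tfrac15e^{-2t}=-e^{-t}(1-\tfrac15e^{-t})<0$. Thus $L$ has the form \eqref{eq:complement-form} with $\beta=1$. Over the common denominator $p(p+1)(p+2)$, the numerator of $\laplace L$ should be $(p+1)(p+2)p+2(p+1)(p+2)+p(p+2)-\tfrac1{10}p(p+1)$; multiplying by ten should give $10p^3+59p^2+99p+40$. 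Defining $H$ by $\laplace H=1/(p\laplace L)$ gives $L*H=1$, and the resulting transform is \eqref{eq:three-mode-tail}. \Cref{thm:outside-cp} (via \citealt{ClementNohel1981}) then makes $H$ completely positive on every horizon.

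\emph{Residue signs.} The cubic $q(p)=10p^3+59p^2+99p+40$ has sign values $q(0)=40>0$, $q(-1)=-10<0$, $q(-2)=2>0$ (check: $-80+236-198+40$), and $q(-\infty)<0$. It therefore has three real simple roots $-\mu_1\in(-1,0)$, $-\mu_2\in(-2,-1)$, $-\mu_3<-2$, which interlace the zeros $-1,-2$ of the numerator. The residue of $\laplace H$ at $-\mu_j$ is $10(1-\mu_j)(2-\mu_j)/q'(-\mu_j)$. The derivative $q'$ alternates in sign at consecutive simple roots, with $q'(-\mu_1)>0$, $q'(-\mu_2)<0$, $q'(-\mu_3)>0$. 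The numerator factor $(1-\mu)(2-\mu)$ is $+,-,+$ on the three intervals. The product gives residue signs $+,-,+$, so $H=\sum_j c_je^{-\mu_jt}$ is a decaying signed three-mode Prony tail, $H(0+)=1$.

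\emph{Minimality.} This is the main step. Take a decaying two-mode tail $H=a_1e^{-\lambda_1t}+a_2e^{-\lambda_2t}$ with $a_1a_2<0$, which is the $g=0$ case of \eqref{eq:two-mode-def}. If the identity gate \eqref{eq:two-mode-identity-gate} fails, $H$ is not even positive definite, so it is not completely positive. Otherwise $A>0$ and $M_0\ge0$, and the final paragraph of the proof of \Cref{thm:two-mode} applies directly. For $M_0>0$, \eqref{eq:g0-negative-complement} exhibits the complementary density $\lambda_1\lambda_2/M_0+\gamma_0e^{-M_0t/A}$ with $\gamma_0<0$, which is strictly increasing. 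For $M_0=0$ the density is affine increasing. Either way, by uniqueness of the complement, \eqref{eq:complement-form} fails and $H$ is not completely positive. The case $\lambda_1=\lambda_2$ collapses to one mode. The only delicate points are a sign check of $\gamma_0$ and noting that uniqueness of $L$ makes a single increasing density decisive; I expect both to be routine given the earlier theorem.
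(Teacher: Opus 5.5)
\textbf{The residue-sign step is wrong as written}, even though its conclusion $+,-,+$ is correct. The rest of your proposal follows the paper's route: the density check, the transform algebra leading to \eqref{eq:three-mode-tail}, complete positivity via the complementary kernel, and minimality from the $g=0$ branch of \Cref{thm:two-mode}.

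\emph{The arithmetic error.} You wrote $q(-2)=2$, but in fact $q(-2)=-80+236-198+40=-2$. With the correct value, your test points $-\infty,-2,-1,0$ show only one sign change. So they establish neither three real roots nor the claimed interlacing, and in fact $q$ has no root in $(-2,-1)$. Other test points localize the poles correctly:
\begin{itemize}
\item $q(-13/4)=-59/32$ and $q(-3)=4$ give $-\mu_3\in(-13/4,-3)$;
\item $q(-11/5)=32/25$ and $q(-2)=-2$ give $-\mu_2\in(-11/5,-2)$;
\item $q(-3/5)=-8/25$ and $q(-1/2)=4$ give $-\mu_1\in(-3/5,-1/2)$.
\end{itemize}

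\emph{The sign bookkeeping.} Your deduction does not follow even from your own premises. If the roots did interlace $-1,-2$, the residue $10(1-\mu_j)(2-\mu_j)/q'(-\mu_j)$ would have signs $(+)/(+)$, $(-)/(-)$, $(+)/(+)$, so all three residues would be positive. Interlacing of the zeros $-1,-2$ with the poles is exactly the completely monotone mechanism used in the proof of \Cref{thm:licm-shifts}. It can never produce a signed tail.

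\emph{What actually happens.} Both numerator zeros lie between the two slowest poles, $\mu_1<1<2<\mu_2<\mu_3$. Hence $(1-\mu_j)(2-\mu_j)>0$ at all three poles, and the pattern $+,-,+$ comes entirely from the alternation of $q'$ at the ordered simple roots. This is the paper's argument: the numerator is positive at all three rates, and the denominator $30\lambda_j^2-118\lambda_j+99=F'(\lambda_j)$ alternates $+,-,+$. You need to replace your root localization and sign accounting with this.
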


\begin{proof}
For $x=e^{-t}\in(0,1]$,
$\ell=2+x-x^2/10>0$ and $\ell'=-x(1-x/5)<0$, so the complementary-kernel
theorem makes $H$ completely positive.  Set
\begin{equation}\label{eq:three-mode-polynomial}
 F(\lambda)=10\lambda^3-59\lambda^2+99\lambda-40.
\end{equation}
Sign changes at
$1/2,3/5,2,11/5,3,13/4$, together with discriminant $179881>0$, isolate
three distinct positive rates
\begin{equation}\label{eq:three-mode-rates}
 \lambda_1\in(1/2,3/5),\quad
 \lambda_2\in(2,11/5),\quad
 \lambda_3\in(3,13/4).
\end{equation}
The residue at $p=-\lambda_j$ is
\begin{equation}\label{eq:three-mode-residue}
 A_j=\frac{10(1-\lambda_j)(2-\lambda_j)}
 {30\lambda_j^2-118\lambda_j+99}.
\end{equation}
The numerator is positive at all three roots and the denominator alternates
$+,-,+$ for an ordered cubic, proving the sign pattern.  Minimality follows
from the $g=0$ branch of \Cref{thm:two-mode}.
\end{proof}

For the general hierarchy, let
\begin{equation}\label{eq:finite-prony}
 G(t)=g+\sum_{j=1}^na_je^{-\lambda_jt},\qquad
 0<\lambda_1<\cdots<\lambda_n,\qquad b=G(0+)\ne0.
\end{equation}
Write
\begin{equation}\label{eq:reduced-transfer}
 R(p)=p\laplace G(p)=\frac{\widetilde P(p)}{\widetilde D_0(p)}
\end{equation}
after canceling the polynomial greatest common divisor.  Actual inverse
poles are zeros of $\widetilde P$; removable roots impose no constraint.
Assume no actual inverse pole lies on the imaginary axis.  In particular,
$g=R(0)\ne0$.  Put $\beta=1/b$ and write
$1/R=\beta+\laplace\ell$, where $\ell$ is a real exponential polynomial.

Choose the following independent real moment basis
$\phi_1,\ldots,\phi_r$.  A real unstable pole $\zeta>0$ of multiplicity
$m$ contributes $s^ke^{-\zeta s}$ for $0\le k<m$.  From each nonreal
conjugate pair with positive real part, choose just one pole $\zeta$ and
include the real and imaginary parts of the corresponding functions.  The
resulting real exponential-monomial functions are linearly independent on
every open interval: no selected imaginary part is the zero function, and
including both poles of a conjugate pair would add only redundant functions.
For a finite partition $\Pi:0=t_0<\cdots<t_N=T$, let
\begin{equation}\label{eq:generalized-moments}
 (W_\Pi)_{ki}=\int_{t_i}^{t_{i+1}}\phi_k(s)\dd s,
 \qquad 1\le k\le r,\quad 0\le i<N.
\end{equation}
When there are no unstable poles, $W_\Pi$ has no rows and its kernel is
all of $\R^N$.  The compact cell state associated with $x\in\R^N$ is
$D_x=x_i$ on $[t_i,t_{i+1})$ and zero after $T$.
Define the lower-triangular inverse-increment matrix
\begin{equation}\label{eq:inverse-increment-matrix}
 (A_\Pi)_{ij}=\begin{cases}
 0,&j>i,\\
 \displaystyle\int_{t_i}^{t_{i+1}}\ell(t_{i+1}-s)\dd s,&j=i,\\
 \displaystyle\int_{t_j}^{t_{j+1}}
 [\ell(t_{i+1}-s)-\ell(t_i-s)]\dd s,&j<i.
 \end{cases}
\end{equation}
Its $i$th component is the increment of $\ell*D_x$ on the $i$th cell.
The feedthrough contribution telescopes when the state jumps are smoothed
into a compact loop, leaving the limiting cost
\begin{equation}\label{eq:finite-grid-cost}
 \mathcal C_\Pi(x,h)=h(x)^\transpose A_\Pi x,
 \qquad x\in\ker W_\Pi.
\end{equation}

For $S\subseteq\{0,\ldots,N-1\}$, set
$c_S=A_\Pi^\transpose\one_S$.  The positive-cut program seeks $x,z$ with
\begin{equation}\label{eq:positive-cut-lp}
 W_\Pi x=0,\quad z\ge0,\quad
 x_i-z\ge1\ (i\in S),\quad z-x_i\ge1\ (i\notin S),\quad
 c_S^\transpose x\le-1,
\end{equation}
and the negative-cut program seeks
\begin{equation}\label{eq:negative-cut-lp}
 W_\Pi x=0,\quad z\le0,\quad
 z-x_i\ge1\ (i\in S),\quad x_i-z\ge1\ (i\notin S),\quad
 c_S^\transpose x\ge1.
\end{equation}

\begin{theorem}[Finite-Prony cut hierarchy]\label{thm:finite-prony-hierarchy}
Under the nonzero-gain and off-axis hypotheses above, the following are
equivalent:
\begin{enumerate}[label=\textup{(\roman*)}]
\item every finite piecewise-constant round trip is safe for every
  $h\in\readouts$;
\item for every finite partition $\Pi$ and every subset $S$, both cut
  programs \eqref{eq:positive-cut-lp}--\eqref{eq:negative-cut-lp} are
  infeasible.
\end{enumerate}
If safety fails, one rational partition and one finite cut program detect it
with a strict margin.  Every feasible cut closes to a finite bounded
round-trip manipulation inside every positive cap.  The theorem includes
repeated inverse poles through \eqref{eq:generalized-moments}; it makes no
claim about imaginary-axis poles or the arbitrary-$n$ surface $b=0$.
\end{theorem}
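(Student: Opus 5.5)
The plan is to pass through three reformulations. First, use \Cref{thm:permanent-quotient} to replace round-trip safety by storage domination. Next, represent every state path by its cumulative volume through the inverse transfer $1/R=\beta+\laplace\ell$. Finally, use the layer-cake structure of $\readouts$ to reduce monotone readouts to threshold indicators, which is what produces the linear cut programs. Since $G=g+H$ with $H$ a decaying exponential sum and $g=R(0)\ne0$, \Cref{thm:permanent-quotient} applies with one fixed $h$.

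\emph{Necessity, (i)$\Rightarrow$(ii).} For a compact smooth loop $D$, set $Q=\beta D+\ell*D$ and $v=Q'$, so that $G*v=D$. Split $\ell=\ell_s+\ell_u$ into its stable part and the exponential-monomial part carried by the unstable poles. After $\supp D$, the stable part of $\ell*D$ decays. The unstable part is a combination of $e^{\zeta t}t^k$ whose coefficients are exactly the moments $\int\phi_k D$; this uses the basis choice in \eqref{eq:generalized-moments}, including real and imaginary parts for conjugate pairs.

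If these moments vanish, $v$ is integrable and $Q(R)\to0$. Storage domination then gives $E_h(D)\ge\storage_{g,h}(Q(R))\to0$, exactly as in the proof of \Cref{thm:stable-complement}. Mollifying a cell state $D_x$ with $x\in\ker W_\Pi$, while correcting the moments by a small smooth perturbation that keeps $W=0$, and letting the mollification width tend to zero gives $\mathcal C_\Pi(x,h)\ge0$. In this limit the feedthrough term telescopes to zero and the memory term converges to \eqref{eq:finite-grid-cost}.

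Apply this with the threshold readouts $h=\one_{(z,\infty)}$ for $z\ge0$ and $h=-\one_{(-\infty,z)}$ for $z\le0$. These are limits of continuous members of $\readouts$, and only finitely many values $x_i$ occur. The resulting inequalities are $c_S^\transpose x\ge0$ with $S=\{x_i>z\}$, and the analogous reversed inequality in the negative case. By homogeneity of $x\mapsto c_S^\transpose x$ and the strict margins in the ordering constraints, feasibility of \eqref{eq:positive-cut-lp} or \eqref{eq:negative-cut-lp} is equivalent to violating one of these inequalities for some $x\in\ker W_\Pi$. Hence (i) forces both programs to be infeasible.

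\emph{Failure certificates.} Conversely, a feasible cut yields $x$ with $\mathcal C_\Pi(x,h)<0$ for a slightly smoothed threshold readout. Several steps then preserve the strict margin: perturbing to a rational partition, mollifying to a smooth loop, truncating at a large $R$ so that $E_h(D)<\storage_{g,h}(Q(R))$, and finite-step approximation by \Cref{lem:finite-step-closure}. One remote block from \Cref{thm:permanent-quotient} then closes the input to a finite round trip, and \Cref{lem:cap-scaling} places it inside any cap.

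\emph{Sufficiency, (ii)$\Rightarrow$(i).} Any $h\in\readouts$ is a positive superposition of the two families of threshold readouts, $h(y)=\int_0^\infty\one_{\{y>z\}}\dd h(z)-\int_{-\infty}^0\one_{\{y<z\}}\dd h(z)$. So infeasibility of every cut gives $h(x)^\transpose A_\Pi x\ge0$ for all $x\in\ker W_\Pi$, every partition, and every $h$.

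Now take a round trip $v$ on $[0,T]$ and extend it by zero. Its state $D=G*v$ satisfies $D\to g\cdot0=0$ exponentially fast, and $v=\beta D'+(\ell*D)'$ holds on $[0,\infty)$. Because the extended input is identically zero and bounded, the unstable moments of $D$ must vanish: otherwise the unstable exponential part of $\ell*D$ would make $Q$ grow. Hence the cost of $v$ equals the loop cost $\int h(D)\,(\beta D+\ell*D)'$ of a decaying loop with zero unstable moments. The feedthrough part $\beta\int h(D)D'=\beta[F_h(D)]_0^\infty=0$.

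The remaining step, which I expect to be the main obstacle, is approximating this loop cost by cell forms. First cut off the tail at a large time, then sample $D$ on fine partitions to get $x_\Pi$, and project onto $\ker W_\Pi$. That projection is a small correction because the moments of the sampled state are $o(1)$, and the rows of $W_\Pi$ are uniformly well conditioned by the linear independence of the $\phi_k$ on intervals. Then show $h(x_\Pi)^\transpose A_\Pi x_\Pi\to\int h(D)(\ell*D)'$. This needs uniform continuity of $h$ on the compact state range, $D\in W^{1,1}$, and the bound $\|(\ell*(D-D_\Pi))'\|_{L^1}\to0$ in the style of Young's inequality on the truncated window. The danger is the unstable part of $\ell$, which grows on long windows. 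So I would first fix the truncation time and only then refine the mesh, keeping all estimates on a bounded interval. Nonnegativity passes to the limit, which proves (i).
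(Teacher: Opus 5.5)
\textbf{Overall assessment.} Your architecture matches the paper's. The cuts come from the isotone cone at the finitely many state values, which is generated by upper and lower threshold indicators. A negative cell form is realized by a moment-exact smoothed loop through $Q=\beta D+\ell*D$. For the converse, an actual round-trip state is approximated by moment-corrected compact cell states. Closing the necessity and certificate directions through storage domination (\Cref{thm:permanent-quotient}), instead of the paper's direct truncation with one unit compensating block, is a harmless variant. Your argument that the full decaying state $D$ has vanishing unstable moments is also correct.

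\textbf{The gap: the converse approximation.} The gap sits exactly at the step you flagged. At a fixed truncation time $T_1$, the moments of the sampled state are not $o(1)$ as the mesh shrinks. They converge to $m(D\one_{[0,T_1]})=-m(D\one_{(T_1,\infty)})$, which is generally nonzero. So the mesh limit of your cell forms is the cost of a moment-corrected truncated state, not the cost of $v$, and you still need that error to vanish as $T_1\to\infty$. Your proposal gives no argument for this second limit.

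That limit is precisely where the unstable part of $\ell$ acts on the correction over a window of growing length. The Young constant $|\ell(0)|+\|\ell'\|_{L^1(0,T_1)}$ grows like $e^{T_1\max\re\zeta}$, up to polynomial factors. A projection correction, by contrast, has the size of the whole tail-moment vector, whose dominant component is of order $e^{-T_1\min\re\zeta}\|D\|_{L^1(T_1,\infty)}$. When several unstable poles have different real parts, this product bound need not tend to zero. Keeping all estimates on a bounded interval therefore does not close the argument.

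\textbf{The missing ingredient.} What you need is the support-independent estimate \eqref{eq:prony-global-memory-bound}. If $m(D)=0$, every unstable term can be rewritten anticausally as $(\ell_u*D)(t)=-\int_t^\infty\ell_u(t-s)D(s)\dd s$. Hence $(\ell*D)'=\ell(0)D+K*D$ with $K=\ell_s'\one_{(0,\infty)}-\ell_u'\one_{(-\infty,0)}\in L^1(\R)$. This gives $\|(\ell*D)'\|_1\le C_\ell\|D\|_1$, with $C_\ell$ independent of the support.

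Apply this to the zero-moment difference between $D$ itself (not its truncation) and its moment-corrected compact cell approximants $D_n$, truncating and refining simultaneously. The cost error is then at most $\sup|h|\,C_\ell\|D_n-D\|_1+\omega_h(\|D_n-D\|_\infty)\|(\ell*D)'\|_1$, where the supremum and the modulus of continuity $\omega_h$ are taken on the common state range. The paper calls this estimate essential for exactly this reason.

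\textbf{A secondary point: the rational partition.} The rational-partition claim does not follow from perturbing endpoints and projecting unless $W_\Pi$ has full row rank. Otherwise a small perturbation can shrink the kernel, and the projection of $x$ onto the new kernel need not be small. The paper first refines the partition by inserting small intervals with invertible moment matrix, copying the state values. This changes neither the state nor the cut objective. Only then does it perturb the endpoints to rationals and project.
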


\begin{proof}
At the finitely many state values, the vectors $(h(x_i))_i$ form the
isotone cone pinned by $h(0)=0$.  Its generators are positive upper cuts
above nonnegative thresholds and negative lower cuts below nonpositive
thresholds; this is the finite isotone cone of \citet{Ubhaya2001}.  Hence \eqref{eq:finite-grid-cost} is negative for some
$h\in\readouts$ precisely when one cut has the wrong sign.  The threshold
can be chosen strictly between the adjacent distinct state values and on
the prescribed side of zero.  Scaling $x,z$ by a common positive factor
normalizes the separation and objective margins to one, giving
\eqref{eq:positive-cut-lp} or \eqref{eq:negative-cut-lp}.  Conversely a
continuous monotone ramp realizes a feasible cut at all the state values.

\Cref{prop:prony-cell-realization} proves both passages between a
negative cell form and an ordinary finite round trip, including exact
unstable-moment correction, convergence of the full nonlinear cost,
stable-tail compensation, and cap scaling.  Its support-independent
estimate \eqref{eq:prony-global-memory-bound} is essential in the
converse, where the state support grows.  This proves the equivalence.

A feasible cut can first be refined to full moment rank without changing
its state or objective.  After a small rational endpoint perturbation,
projection onto the perturbed moment kernel preserves its strict margins;
\Cref{app:prony-rational-refinement} supplies the construction.
Only partition endpoints, not projected state coordinates, are asserted
to be rational.
\end{proof}

\begin{corollary}[Effective detection of unsafety]
\label{cor:prony-unsafety-semialgorithm}
For computably presented coefficients and rates in
\eqref{eq:finite-prony}, unsafety is semidecidable: a procedure halts with a
finite round-trip witness and a certified strictly negative cost
whenever universal safety fails.  No termination on safe inputs is
asserted.
\end{corollary}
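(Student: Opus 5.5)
The plan is to obtain the semidecision procedure directly from \Cref{thm:finite-prony-hierarchy}, whose unsafe side already supplies a finite rational certificate. First enumerate all pairs $(\Pi,S)$ with $\Pi$ a rational partition of a rational horizon $T$ and $S\subseteq\{0,\ldots,N-1\}$; this is a countable family. For each pair, use the computable presentation of $g,a_j,\lambda_j$ to compute the data of the cut programs \eqref{eq:positive-cut-lp}--\eqref{eq:negative-cut-lp} to any requested precision. These data are the moment matrix $W_\Pi$, the inverse-increment matrix $A_\Pi$, and hence $c_S$. Computing them requires the inverse poles, residues, and the unstable moment basis $\phi_k$. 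Roots of the reduced polynomial $\widetilde P$ are computable to arbitrary precision, but their multiplicities, and the gcd cancellation in \eqref{eq:reduced-transfer}, are not decidable from approximations.

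I expect this to be the main obstacle. My first attempt is to avoid deciding the exact pole structure. Dovetail over candidate structures: a guessed reduced degree, guessed multiplicities, and a guessed set of unstable poles. Under each guess, test feasibility of the resulting strict-margin program by interval arithmetic at increasing precision. The key point is that \Cref{thm:finite-prony-hierarchy} asserts detection \emph{with a strict margin}, and that the rational-refinement step of \Cref{app:prony-rational-refinement} keeps the margins strict under small perturbation. So when the true structure is among the guesses, some finite precision certifies feasibility.

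A wrong guess must never produce a false certificate. To guarantee this, I would not trust the LP itself. Whenever the LP test succeeds, I would run the constructive closure of \Cref{prop:prony-cell-realization} to produce an explicit finite piecewise-constant round trip $v$ with rational breakpoints and a concrete monotone readout, namely a ramp realizing the cut. I would then evaluate $\cost^{\mathrm{out}}_{G,h,T}[v]$ directly with the computable kernel. Each block interaction is a finite sum of exponentials of rational data, so the cost can be enclosed in directed interval arithmetic, as was done in \Cref{sec:comb}. The procedure halts only when the enclosure's upper endpoint is strictly negative. Correctness then follows because the certificate checks the actual cost and no longer depends on the guessed structure.

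Completeness follows because, if safety fails, the true structure eventually appears in the enumeration. By the hierarchy theorem, a feasible cut then exists. Its closure is a genuine witness with strictly negative cost, and a strictly negative computable real is eventually certified negative at finite precision. On safe inputs no witness cost is ever certified negative, so the procedure never halts. This is consistent with the statement, which claims nothing about termination on safe inputs.
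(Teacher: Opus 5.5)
Halting only on a certified negative enclosure of the actual cost is the right safeguard: it makes wrong structural guesses harmless. The gap is in completeness. \Cref{thm:finite-prony-hierarchy} gives the existence of an exact feasible cut, and \Cref{prop:prony-cell-realization} gives the existence of a witness built from it, but your procedure never holds that exact cut. It holds an LP solution computed from approximate inverse poles. So $W_\Pi x=0$ and the unstable-tail cancellation \eqref{eq:prony-future-inverse} hold only for the approximate exponents.

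Moreover, the closure is not an algorithm. It makes three existential choices: the smoothing parameter $\varepsilon$, the truncation point $R$, and the partition in \Cref{lem:finite-step-closure}. Each is only ``sufficiently small or large'' for the exact data. Your sentence ``its closure is a genuine witness'' therefore does not show that any candidate your procedure actually outputs has negative cost. To close this you must also do three things. First, dovetail over all those choices. Second, round the output to rational data while keeping exact zero volume, for example with a rational compensating block after the support. Third, prove that the true cost depends continuously on the approximate poles, the approximate cut, and every closure parameter.

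The certification step is also misdescribed. With a nonlinear readout, $\int_0^Tv\,h(G*v)$ does not split into block interactions, and the kernel data are computable reals rather than rationals. So the comb evaluation of \Cref{sec:comb} does not apply. You need a certified quadrature. For instance, combine the bound $\|D'\|_\infty\le V(|b|+T\sum_j|a_j|\lambda_j)$ for a rate bound $V$ with the readout's Lipschitz constant to get Riemann sums with rigorous remainders.

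Once these repairs are made, the LP layer does no work, and the observation you are missing is that none is needed. Unsafety means, by definition, that some finite round trip and some $h\in\readouts$ have negative cost. Three operations preserve the strict sign: round the round trip's endpoints and rates to rationals; restore exact zero volume with one rational block; approximate $h$ uniformly on the state range by normalized nondecreasing piecewise-linear readouts with rational vertices. The sign survives because $G$ is bounded, so states converge uniformly, and $h$ is continuous.

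A strict witness therefore already lies in the countable class of rational pairs. Dovetailing certified enclosures over that class halts whenever universal safety fails. This is the paper's proof. It avoids pole multiplicities, gcd cancellation, and LP feasibility entirely. It also does not use the off-axis hypothesis that your route inherits from \Cref{thm:finite-prony-hierarchy}.
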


\begin{proof}
Enumerate rational finite-step round trips and normalized monotone
piecewise-linear readouts with rational vertices.  This class contains a
strict negative witness whenever one exists, and its costs admit certified
rational enclosures; both facts are proved in
\Cref{app:prony-effective-detection}.  Dovetail the enclosure refinements
and halt when an upper endpoint is negative.  This avoids tests for exact
inverse-pole multiplicity or real-coefficient LP feasibility.  Without
computable parameter presentations, the hierarchy is a countable
mathematical separator, not an effective algorithm.
\end{proof}

\subsection{Why complete positivity of the tail is not enough}

The LICM condition in \Cref{thm:licm-shifts} cannot be weakened to arbitrary
complete positivity of the decaying tail.

\begin{proposition}[A completely positive tail whose negative shift fails]\label{prop:cp-tail-shift-fails}
Let $H$ be determined by
\begin{equation}\label{eq:cp-tail-counterexample-complement}
 L=\delta_0+\left(\frac15+5\,\one_{(0,1)}(t)\right)\dd t,\qquad L*H=1.
\end{equation}
Then $H$ is positive, exponentially decaying, and completely positive on
every finite horizon.  Nevertheless, for $g=-1/2$ there is an analytic
strictly increasing normalized readout and the four-block round trip
$v=(-1,-4,4,1)$ on equal half-unit cells such that
\begin{equation}\label{eq:cp-tail-counterexample-margin}
 \cost^{\mathrm{out}}_{g+H,h}[v]<-\frac{11}{1000}.
\end{equation}
The violation is cap-local.
\end{proposition}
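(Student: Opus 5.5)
The plan has two parts: certify the kernel properties from the complementary representation, then evaluate the four-block cost exactly for a suitable cut readout and smooth it.

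\emph{Kernel properties.} Here $\beta=1$ and $\ell=\tfrac15+5\one_{(0,1)}$. This $\ell$ is nonnegative and nonincreasing, so \eqref{eq:complement-form} holds on every horizon. The Cl\'ement--Nohel characterization quoted in \Cref{sec:framework} then gives global complete positivity of $H$.

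For positivity and decay I will differentiate $L*H=1$, which gives the delay equation
\[
 H'(t)=-\tfrac{26}{5}H(t)+5H(t-1),\qquad H(0)=1,\qquad H=0\text{ on }(-\infty,0).
\]
The delayed coefficient is positive, so $H'\ge-\tfrac{26}{5}H$ whenever $H>0$ on the past. Hence $H>0$ by the method of steps.

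Decay comes from the transfer function
\[
 \laplace H(p)=\frac1{p+\tfrac{26}{5}-5e^{-p}}.
\]
For $\re p\ge0$ the denominator satisfies $|p+\tfrac{26}5|\ge\tfrac{26}5>5\ge|5e^{-p}|$, so it has no zeros there. A standard Rouch\'e/continuity argument moves the rightmost root strictly left of the axis, and this gives exponential decay of $H$.

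\emph{Explicit state.} Next I compute $H$ on $[0,2]$ by steps:
\[
 H(t)=e^{-26t/5}\ \ (0\le t\le1),\qquad
 H(t)=e^{-26t/5}+5(t-1)e^{-26(t-1)/5}\ \ (1\le t\le2).
\]
The input $v=(-1,-4,4,1)$ on half-unit cells is a round trip. Its state is $D=G*v$ with $G=-\tfrac12+H$, namely
\[
 D=-\tfrac12 q+H*v,\qquad q(t)=\int_0^tv.
\]
On each cell $D$ is an explicit finite combination of $e^{-26t/5}$, $te^{-26t/5}$ and polynomials, evaluated from the primitives of $H$. Here $gb=-\tfrac12\cdot\tfrac12<0$, so \Cref{thm:stable-complement} does not apply and a direct witness is genuinely needed. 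The intended mechanism is the negative permanent direction: the large middle blocks swing $-\tfrac12 q$ while the fast transient has mostly relaxed.

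\emph{Readout.} By the isotone-cone reasoning of \Cref{thm:finite-prony-hierarchy}, the cost $\int v\,h(D)$ over $h\in\readouts$ is minimized along single cuts. I will therefore choose one cut: $\one_{\{D>c\}}$ with $c\ge0$, or $-\one_{\{D<c\}}$ with $c\le0$. The threshold is picked where $D$ is large mainly while $v<0$. The cut cost is $\pm\int_{\{D\gtrless c\}}v$, computed from the finitely many crossing times of $D$ with $c$, and it is scaled so that it is strictly negative with some margin, aiming for well below $-11/1000$.

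I then replace the cut by an analytic strictly increasing normalized law, such as $\varepsilon x$ plus a scaled logistic step built from $p_{\varepsilon,c}$ or $n_\varepsilon$ of \Cref{thm:outside-cp}. The state range is compact, so the cost converges as the smoothing tends to zero, and a fixed small smoothing keeps the bound $<-11/1000$. Cap locality follows from \Cref{lem:cap-scaling}.

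The main obstacle is the certified evaluation. This means locating the crossing times of the transcendental $D$ with the threshold, and bounding the smoothing error rigorously. I would do both with interval arithmetic on the explicit exponential formulas, as in \Cref{sec:comb}.
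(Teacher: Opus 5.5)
Your treatment of the kernel is sound. Complete positivity from the nonnegative nonincreasing complement, positivity by steps in $H'=-\tfrac{26}{5}H+5H(\cdot-1)$, and the explicit $H$ on $[0,2]$ all agree with the paper. Your spectral route to decay also works. The inequality $\tfrac{26}{5}-\tfrac1{100}>5e^{1/100}$ directly excludes characteristic roots from $\re p\ge-\tfrac1{100}$; the paper uses that same inequality in the time domain, comparing $H$ with $2e^{-t/100}$ instead of invoking delay-equation spectral theory. Your reduction to single cuts is also legitimate.

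The gap is in the witness, which is the whole content of the proposition. You never exhibit a readout or prove a sign, and your stated heuristic (a threshold where $D$ is large mainly while $v<0$) selects a positive cut, which fails here. Since $G(0+)=\tfrac12>0$, the rate-$4$ purchase on $[1,\tfrac32]$ lifts $D$ from $D(1)\approx0.52$ to about $1.04$ near $t=1.2$. Numerically, $D$ stays above $D(1)$ on that block. Any positive cut $\one_{\{D>c\}}$, $c\ge0$, that charges part of the rate-$4$ sale therefore charges the whole rate-$4$ purchase. Evaluating the cuts shows that every positive cut has nonnegative cost; $c=0$, for instance, gives about $+0.09$.

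The mechanism is also not relaxation of the fast transient. On $[1,2]$ the kernel carries the delayed echo $5(t-1)e^{-26(t-1)/5}$, which returns the heavy sale on $[\tfrac12,1]$ after $-\tfrac12q$ has closed. This gives $D(2)\approx-0.36$. During the two sales, by contrast, $-\tfrac12q\ge0$ offsets the transient and keeps $D\ge\tfrac{5(\ln2-1)}{52}\approx-0.0295$.

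What is needed is a negative cut or hinge with threshold just below that minimum, such as $-\tfrac1{25}$; a shifted, renormalized $n_\varepsilon$ is exactly the paper's readout. Only two one-sided bounds are then required: $D>-\tfrac1{25}$ on $[0,1]$ and $D<-\tfrac15$ on $[\tfrac{19}{10},2]$. No crossing times need to be located. On $[0,1]$ the sale meets a readout bounded below by $-\kappa^{-1}$, and on all of $[1,2]$ the purchase meets a readout bounded above by $\kappa^{-1}$.

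Take $h(q)=-\kappa^{-1}\log(1+e^{-\kappa(q+c)})+\kappa^{-1}\log(1+e^{-\kappa c})$ with $c=\tfrac1{25}$ and $\kappa=1000$. The contributions are then:
\begin{itemize}
\item at most $\tfrac5{2\kappa}$ on $[0,1]$;
\item at most $\tfrac{12}{5\kappa}$ on $[1,\tfrac{19}{10}]$;
\item less than $-\tfrac2{125}+\tfrac1{10\kappa}$ on the last tenth.
\end{itemize}
The total is below $-\tfrac{11}{1000}$. Without these two bounds, or an equivalent certified evaluation, and without switching to a negative threshold, your proposal is a computational plan rather than a proof.
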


The proof is an interval estimate recorded in
\Cref{app:prony-separators}.  The shifted inverse has a right-half-plane pole
and opposite endpoint masses $2$ and $-2=1/g$.  It therefore lies outside
both the LICM interlacing mechanism and the stable-complement class.

\section{Time-inhomogeneous first-order memory}
\label{sec:time-inhomogeneous}

Convolution and asymptotic assumptions are unnecessary for the first-order
nonstationary family, which is the time-varying resilience model studied by
\citet{FruthSchoenebornUrusov2014} and \citet{AlfonsiInfante2014} with a
general readout.  Let
\begin{equation}\label{eq:time-varying-model}
 b\in C^1([0,\infty)),\quad b(t)>0,\quad
 \rho\in C([0,\infty)),\qquad
 D'=-\rho(t)D+b(t)v,\quad D(0)=0.
\end{equation}
For $h\in\readouts$ set
\begin{equation}\label{eq:time-F-Phi}
 F(x)=\int_0^xh(z)\dd z,\qquad \Phi(x)=xh(x)-F(x).
\end{equation}
Both functions are nonnegative on $\R$.

\begin{theorem}[Time-inhomogeneous classification]
\label{thm:time-inhomogeneous}
The following are equivalent:
\begin{enumerate}[label=\textup{(\roman*)}]
\item for every finite horizon, every finite piecewise-constant round
  trip, and every $h\in\readouts$, the cost
  $\int_0^Tv(t)h(D(t))\dd t$ is nonnegative;
\item the same inequality holds for every finite piecewise-constant input;
\item for every $t\ge0$,
  \begin{equation}\label{eq:time-inhomogeneous-conditions}
   \rho(t)\ge0,\qquad b'(t)+\rho(t)b(t)\ge0.
  \end{equation}
\end{enumerate}
The equivalence is cap-local.  In the necessity direction,
$\readouts$ may be replaced by all globally real-analytic strictly
increasing normalized readouts.  No forgetting, bounded-gain, convolution,
or asymptotic-homogeneity assumption is used.  For a fixed linear or
power readout, \citet{FruthSchoenebornUrusov2014} and
\citet{AlfonsiInfante2014} obtain single combined coefficient conditions;
quantifying over $\readouts$ separates them into the two pointwise
inequalities.
\end{theorem}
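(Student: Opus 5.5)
The proof rests on one storage--dissipation identity, obtained by eliminating the input. Since $b>0$, the state equation gives $v=(D'+\rho D)/b$. Because $F'=h$ and $Dh(D)=\Phi(D)+F(D)$, integration by parts of $\int_0^T F(D)'/b\,\dd t$ yields, for every piecewise-constant input (where $D$ is absolutely continuous),
\begin{equation*}
 \int_0^Tv\,h(D)\dd t
 =\frac{F(D(T))}{b(T)}
 +\int_0^T\frac{b'+\rho b}{b^2}\,F(D)\dd t
 +\int_0^T\frac{\rho}{b}\,\Phi(D)\dd t .
\end{equation*}
The boundary term at $0$ vanishes because $D(0)=0$. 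Under \eqref{eq:time-inhomogeneous-conditions} all three terms are nonnegative, since $F,\Phi\ge0$. This proves (iii)$\Rightarrow$(ii), and (ii)$\Rightarrow$(i) is trivial.

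For (i)$\Rightarrow$(iii), I argue by contradiction at a time $t_0$ where one inequality fails strictly, and probe with states localized near $t_0$. Take $D$ smooth, compactly supported in a short interval $J\ni t_0$ inside $(0,T)$, and consisting of a positive bump and a negative bump. Then $D(T)=0$ and the boundary term vanishes. Integrating $\int D'/b$ by parts shows that the volume of $v=(D'+\rho D)/b$ is
\begin{equation*}
 \int_J D\,\frac{b'+\rho b}{b^2}\dd t .
\end{equation*}
Adjusting the relative heights of the two bumps makes this volume exactly zero, so $v$ is a round trip. The cost then reduces to
\begin{equation*}
 \int_J\Bigl[\frac{\rho}{b}\Phi(D)+\frac{b'+\rho b}{b^2}F(D)\Bigr]\dd t .
\end{equation*}
By continuity of $\rho,b,b'$, the bracket has coefficients within $o(1)$ of their values at $t_0$ as $J$ shrinks.

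The readout is then chosen to make one term dominate.
\begin{itemize}
\item If $\rho(t_0)<0$, use odd powers $h(x)=x^{2k+1}$. Then $\Phi=(2k+1)F$, so the integrand is $F(D)\bigl[(2k+1)\rho/b+(b'+\rho b)/b^2\bigr]$. This is strictly negative once $k$ is large, whatever the sign of $b'+\rho b$.
\item If $\rho(t_0)\ge0$ but $b'(t_0)+\rho(t_0)b(t_0)<0$, use $h_\varepsilon(x)=\varepsilon\tanh(x/\varepsilon)$ with $D$ of height of order one. Then $F(D)\approx\varepsilon|D|$, while $\Phi(D)=O(\varepsilon^2\log(1/\varepsilon))$ uniformly on bounded ranges. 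After dividing by $\varepsilon$, the $F$ term dominates and the cost is strictly negative.
\end{itemize}
Both readout families are globally real analytic, strictly increasing, and vanish at zero, which gives the analytic refinement. Cap-locality follows from \Cref{lem:cap-scaling}: jointly scale $D$, $v$ and $h$. For the power readouts the scaling only multiplies the cost by a positive factor; for the $\tanh$ family it rescales $\varepsilon$.

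The step I expect to need the most care is passing from these smooth probes to ordinary finite piecewise-constant round trips. Here $v=(D'+\rho D)/b$ is continuous but not a step function. I approximate $v$ uniformly by step functions $v_n$ supported in $J$ and correct the volume on one subinterval of $J$. Continuous dependence for the linear ODE then gives $D_n\to D$ uniformly, with a common bound, and hence the costs converge by continuity of $h$ on the compact state range. This is the analogue of \Cref{lem:finite-step-closure} for the nonconvolution memory and preserves the strict negative margin. The order of choices is: failing inequality, then readout parameter $k$ or $\varepsilon$, then interval $J$, then step approximation.
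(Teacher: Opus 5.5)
Your proof is correct. Its skeleton matches the paper's: the storage identity \eqref{eq:time-storage-identity}, the volume identity $\int v_D=\int c_1D$ for compactly supported states, balanced two-bump probes, isolating readouts, a step-function closure, and joint scaling.

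The real difference is the separator for negative resilience.
\begin{itemize}
\item The paper uses a plateau state and the readout $e^{n(z-x)}-e^{-nx}$. This drives $F\to0$ and concentrates $\Phi$ on the top level, so the limiting cost is $x\int_{\{D=x\}}c_2$. That limit holds whatever the size or sign of $c_1$.
\item You use the exact homogeneity relation $\Phi=(2k+1)F$ for $h(x)=x^{2k+1}$. The integrand becomes $F(D)\,[(2k+1)c_2+c_1]$, which has a strict sign pointwise once $k$ is chosen at $t_0$ and $J$ is shrunk by continuity.
\end{itemize}
Your route needs no limit in the readout parameter, no plateau set and no dominated convergence. Cap scaling is immediate from homogeneity, and it shows that a single fixed odd power already detects $\rho<0$. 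The paper's route instead isolates each coefficient by its own limiting functional, which is the cleaner separation.

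For $c_1<0$, your $\varepsilon\tanh(x/\varepsilon)$ is the symmetric counterpart of the paper's one-sided sigmoid: you get the limit $\int c_1|D|$ where the paper gets $xA_+$. In fact $\Phi_\varepsilon\le\varepsilon^2\log2$ uniformly, so the logarithm in your bound is unnecessary.

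Your closure by uniform step approximation with a one-interval volume correction is valid. The paper's cell averaging preserves volume and cap exactly, so with your version you should scale into the cap with some slack first.

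One step needs tightening. When $\rho(t_0)<0$ but $c_1(t_0)=0$, $c_1$ may change sign in every neighbourhood of $t_0$. If a positive bump and a negative bump sit where $c_1$ has opposite signs, no choice of heights cancels the volume. There are two easy fixes:
\begin{itemize}
\item Place both bumps in a subinterval where $c_1$ has a strict sign, as the paper does. Such a subinterval exists unless $c_1\equiv0$ near $t_0$, and in that case the volume vanishes automatically.
\item Observe that in the odd-power case the sign of each bump is free, since $F(D)\ge0$ for both signs, so you can always choose signs that allow cancellation.
\end{itemize}
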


\subsection{Two state-coordinate identities}

Define
\begin{equation}\label{eq:time-coefficients}
 c_1(t)=\frac{b'(t)+\rho(t)b(t)}{b(t)^2},\qquad
 c_2(t)=\frac{\rho(t)}{b(t)}.
\end{equation}

\begin{lemma}[Storage and volume identities]\label{lem:time-identities}
Every finite-step input satisfies
\begin{equation}\label{eq:time-storage-identity}
 \int_0^Tv(t)h(D(t))\dd t
 =\frac{F(D(T))}{b(T)}
 +\int_0^T\bigl[c_1(t)F(D(t))+c_2(t)\Phi(D(t))\bigr]\dd t.
\end{equation}
If $D\in C_c^1((0,T))$ and
\begin{equation}\label{eq:time-realizing-rate}
 v_D(t)=\frac{D'(t)+\rho(t)D(t)}{b(t)},
\end{equation}
then
\begin{equation}\label{eq:time-volume-identity}
 \int_0^Tv_D(t)\dd t=\int_0^Tc_1(t)D(t)\dd t.
\end{equation}
\end{lemma}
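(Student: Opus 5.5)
Both identities are direct chain-rule computations along the state equation \eqref{eq:time-varying-model}. No Volterra machinery is needed, because the dynamics are first order and the state is absolutely continuous.

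\emph{Storage identity.} For a finite-step input, $D$ is absolutely continuous with $D'=-\rho D+bv$ almost everywhere, so $v=(D'+\rho D)/b$. Substituting gives
\[
 \int_0^Tv\,h(D)\dd t=\int_0^T\frac{h(D)D'}{b}\dd t+\int_0^T\frac{\rho}{b}\,D\,h(D)\dd t .
\]
In the first integral, $F$ is $C^1$ with $F'=h$ because $h$ is continuous, so $(F(D))'=h(D)D'$ almost everywhere and $F(D)$ is absolutely continuous. Integrating by parts against the $C^1$ function $1/b$, and using $F(D(0))=F(0)=0$, yields
\[
 \frac{F(D(T))}{b(T)}+\int_0^T\frac{b'}{b^2}F(D)\dd t .
\]
In the second integral, write $Dh(D)=\Phi(D)+F(D)$. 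The $F$-terms then combine with coefficient
\[
 \frac{b'}{b^2}+\frac{\rho}{b}=\frac{b'+\rho b}{b^2}=c_1,
\]
and the $\Phi$-term has coefficient $\rho/b=c_2$. This is exactly \eqref{eq:time-storage-identity}. The only points needing care are that $b>0$ is $C^1$ on the compact interval, so $1/b$ is $C^1$, and that the products integrated are of absolutely continuous functions.

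\emph{Volume identity.} For $D\in C_c^1((0,T))$, the rate $v_D$ is continuous. Solving the state equation with input $v_D$ and zero initial state reproduces $D$ by uniqueness, since $D(0)=0$. Now integrate $v_D$:
\[
 \int_0^T\frac{D'}{b}\dd t=\Bigl[\frac Db\Bigr]_0^T+\int_0^T\frac{b'}{b^2}D\dd t=\int_0^T\frac{b'}{b^2}D\dd t ,
\]
where the boundary term vanishes because $D$ has compact support. Adding $\int_0^T\rho D/b\dd t$ gives $\int_0^Tc_1D\dd t$.

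Neither step is a real obstacle. The mildest subtlety is justifying the chain rule for $F(D)$ when $D$ is merely absolutely continuous, which holds because $F\in C^1$. One should also note, for later use, that $v_D$ need not be piecewise constant, so any application to admissible controls will rely on \Cref{lem:finite-step-closure}-type approximation; the identity itself does not.
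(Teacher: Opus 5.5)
Your proposal is correct and follows the paper's proof: you substitute $v=(D'+\rho D)/b$, integrate $\frac1b\frac{\dd}{\dd t}F(D)$ by parts against $1/b$ to produce the endpoint term and the coefficient $b'/b^2$, and split $Dh(D)=F(D)+\Phi(D)$ to obtain $c_1$ and $c_2$. The volume identity is the same integration by parts, with the boundary term removed by compact support. Your added remarks on absolute continuity and on uniqueness for the realizing rate are harmless refinements.
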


\begin{proof}
Since $v=(D'+\rho D)/b$,
\begin{align*}
 vh(D)
 &=\frac1b\frac{\dd}{\dd t}F(D)+\frac\rho bDh(D)\\
 &=\frac{\dd}{\dd t}\left(\frac{F(D)}b\right)
   +\frac{b'}{b^2}F(D)+\frac\rho b\bigl(F(D)+\Phi(D)\bigr).
\end{align*}
Integration proves \eqref{eq:time-storage-identity}.  For a compact interior
state path, integration by parts gives
\begin{align*}
 \int_0^T\frac{D'}b\dd t
 &=\left[\frac Db\right]_0^T+\int_0^T\frac{b'}{b^2}D\dd t,
\end{align*}
and adding $\rho D/b$ gives \eqref{eq:time-volume-identity}.
\end{proof}

If \eqref{eq:time-inhomogeneous-conditions} holds, then $c_1,c_2\ge0$ and
\eqref{eq:time-storage-identity} proves all-input safety.  The rest of the
section proves that the single volume constraint
\begin{equation}\label{eq:time-state-constraint}
 \int c_1D=0
\end{equation}
cannot hide a negative point of either coefficient.

\subsection{A negative convex-storage coefficient}

\begin{lemma}[Two-bump separator for $c_1<0$]
\label{lem:negative-c1}
If $c_1(t_0)<0$, there is a smooth compact state path whose realizing rate is
a round trip and has negative cost for a globally analytic strictly
increasing normalized readout.
\end{lemma}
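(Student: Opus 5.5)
The construction works entirely in state coordinates through \Cref{lem:time-identities}. Since $c_1$ is continuous and $c_1(t_0)<0$, there is an interval $J=(t_0-\eta,t_0+\eta)\Subset(0,\infty)$ on which $c_1\le-\kappa<0$. I would take a horizon $T>t_0+\eta$ and two disjoint smooth bumps $\psi_+,\psi_-\in C_c^\infty(J)$ with $\psi_\pm\ge0$ and $\int c_1\psi_\pm>0$ in absolute value. Then I set
\[
D=\alpha\psi_+-\beta\psi_-,
\]
with $\alpha,\beta>0$ chosen so that $\int c_1D=0$, that is, $\alpha\int c_1\psi_+=\beta\int c_1\psi_-$. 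By \eqref{eq:time-volume-identity}, the realizing rate $v_D$ from \eqref{eq:time-realizing-rate} is then a round trip, and $D(T)=0$ kills the endpoint storage term in \eqref{eq:time-storage-identity}. The cost reduces to
\[
\int_J\bigl[c_1F(D)+c_2\Phi(D)\bigr]\dd t .
\]
The first term is at most $-\kappa\int F(D)$, which is strictly negative for strictly increasing $h$. The whole task is to make it dominate the $c_2\Phi$ term, and $c_2$ may have either sign.

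For this I would choose the readout so that $\Phi(x)=xh(x)-F(x)$ is small relative to $F(x)$ on the attained range. The natural way is to make $h$ nearly flat, almost constant, on the two state ranges $[0,\alpha\max\psi_+]$ and $[-\beta\max\psi_-,0]$. One could take $h$ close to a shifted sign function, say $h(x)=\tanh(x/\varepsilon)$ for small $\varepsilon$. Then $F(x)\approx|x|$ while $\Phi(x)=\int_0^x z\,h'(z)\dd z\le\varepsilon\log2$, uniformly in $x$. This gives
\[
\int c_1F(D)\le-\kappa\int|D|+O(\varepsilon|J|),
\qquad
\Bigl|\int c_2\Phi(D)\Bigr|\le\|c_2\|_{L^\infty(J)}|J|\,\varepsilon\log2 .
\]
Because the bumps and amplitudes are fixed before $\varepsilon$ is chosen, taking $\varepsilon$ small makes the total cost strictly negative. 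Moreover $\tanh(\cdot/\varepsilon)$ is globally real analytic, strictly increasing, and vanishes at zero, which is exactly the readout class the lemma asks for.

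The remaining checks are routine. $v_D$ is smooth and compactly supported, hence bounded, so the cost is well defined. If ordinary finite-step round trips are required, \Cref{lem:finite-step-closure} adapts: it approximates $v_D$ in $L^1$ with a common bound and corrects the volume on one interval, and the state, now given by the variation-of-constants formula, converges uniformly. Cap locality follows from \Cref{lem:cap-scaling}, scaling $D$, $v_D$, and the readout jointly.

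I expect the main obstacle to be the uniform control of $\Phi$ against $F$ when $c_2$ is negative and large. The flat-readout choice above handles it, because $\Phi$ stays bounded by $\varepsilon\log2$ while $F$ grows linearly in $|D|$. A cruder choice, such as a linear readout, fails: for $h=\id$ we have $\Phi=F$, and a large negative $c_2$ would not be dominated.
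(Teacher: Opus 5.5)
Your proposal is correct and follows essentially the paper's argument: balance two opposite-sign bumps so that $\int c_1D=0$, then use a steepening analytic sigmoid readout that sends $\Phi\to0$ while $F$ stays of order $|D|$. The only difference is cosmetic. You use the symmetric readout $\tanh(x/\varepsilon)$ with the explicit bound $\Phi\le\varepsilon\log2$, so both bumps feed the negative $c_1F$ term; the paper instead uses the one-sided shifted logistic $h_n^F$ with $F_n^F\to z_+$ and dominated convergence, which isolates the positive bump. Two small corrections: if $t_0=0$, take $J$ just to the right of $0$ rather than centred at $t_0$, and $v_D$ is only continuous, not smooth, since $\rho$ is merely continuous.
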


\begin{proof}
Choose an interval $I\Subset(0,\infty)$ on which $c_1<0$ and disjoint
nonzero $\varphi_+,\varphi_-\in C_c^\infty(I)$ with
$\varphi_\pm\ge0$.  Put
\begin{equation}\label{eq:c1-bump-masses}
 A_\pm=\int_Ic_1(t)\varphi_\pm(t)\dd t<0.
\end{equation}
For $x>0$, set $y=xA_+/A_->0$ and
\begin{equation}\label{eq:c1-bump-state}
 D=x\varphi_+-y\varphi_-.
\end{equation}
Then \eqref{eq:time-volume-identity} gives exact zero volume.  Let
\begin{equation}\label{eq:F-isolating-readout}
 \sigma(r)=\frac1{1+e^{-r}},\qquad
 h_n^F(z)=\sigma(nz-\sqrt n)-\sigma(-\sqrt n).
\end{equation}
These readouts are globally analytic, strictly increasing, and normalized.
On the fixed state range their primitives and remainders satisfy, pointwise
and boundedly,
\begin{equation}\label{eq:F-isolating-limit}
 F_n^F(z)\to z_+,\qquad \Phi_n^F(z)\to0.
\end{equation}
The terminal state is zero.  Dominated convergence in
\eqref{eq:time-storage-identity} yields
\begin{equation}\label{eq:c1-negative-limit}
 \int v_Dh_n^F(D)\longrightarrow
 x\int_Ic_1(t)\varphi_+(t)\dd t=xA_+<0.
\end{equation}
A sufficiently large finite $n$ gives the required strict separator.
\end{proof}

\subsection{A negative Bregman-remainder coefficient}

\begin{lemma}[Plateau separator for $c_2<0$]
\label{lem:negative-c2}
If $c_2(t_0)<0$, there is a smooth compact state path whose realizing rate is
a round trip and has negative cost for a globally analytic strictly
increasing normalized readout, regardless of the local value or sign of
$c_1$.
\end{lemma}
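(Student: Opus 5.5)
The plan is to work entirely in the state coordinates of \Cref{lem:time-identities}, as in \Cref{lem:negative-c1}, but with readouts chosen so that $F$ is asymptotically negligible while $\Phi$ stays of order one. For a compact interior state path $D$ the terminal storage vanishes. Identity \eqref{eq:time-storage-identity} then reduces the cost of the realizing rate $v_D$ to
\[
 \int_0^T\bigl[c_1F(D)+c_2\Phi(D)\bigr]\dd t,
\]
and by \eqref{eq:time-volume-identity} the round-trip condition is the single linear constraint \eqref{eq:time-state-constraint}. The guiding computation is for the step readout $h=\one_{\{z>\kappa\}}$ with $\kappa>0$. For it, $F(z)=(z-\kappa)_+$ and $\Phi(z)=\kappa\one_{\{z>\kappa\}}$. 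Hence a state sitting on a plateau at level $\kappa+\varepsilon$ gives $F=\varepsilon$, which is small, and $\Phi=\kappa$, which is not. The sign of $c_1$ then becomes irrelevant as $\varepsilon\downarrow0$, and the sign of $c_2$ takes over.

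Concretely, I fix an interval $I\Subset(0,\infty)$ on which $c_2<0$ and choose two disjoint nonnegative bumps $\varphi_\pm\in C_c^\infty(I)$. Each bump equals $1$ on a plateau and has ramps of total length at most $\eta$. I choose their supports so that $A_\pm=\int c_1\varphi_\pm$ are not of opposite strict sign; this is possible by continuity of $c_1$. If $c_1(t_0)\ne0$, shrink $I$ so that $c_1$ has one sign there. Otherwise pick bumps in small neighbourhoods where the masses share a sign, or vanish. Set $D=x\varphi_+-y\varphi_-$ with $x,y>0$ chosen as in \Cref{lem:negative-c1}, so that $xA_+=yA_-$; if both masses vanish, take $x=y$. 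Then \eqref{eq:time-state-constraint} holds, and $v_D$ is a round trip. Use the two-sided limiting readout
\[
 h=\one_{\{z>x-\varepsilon\}}-\one_{\{z<-(y-\varepsilon)\}},
\]
with thresholds just inside the two plateau levels. On the plateaus, $F=\varepsilon$ and $\Phi=x-\varepsilon$ or $y-\varepsilon$. Off the supports $D=0$, so $F=\Phi=0$. On the ramps, $F$ and $\Phi$ are bounded by $\max(x,y)$. The limiting cost is therefore at most
\[
 -(x\wedge y)\,m\,\min_I|c_2|+O(\varepsilon)+O(\eta),
\]
where $m$ is the plateau length. The $O(\cdot)$ constants depend only on $\sup_I(|c_1|+|c_2|)$ and $x,y$. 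Choosing $\eta$ and then $\varepsilon$ small makes this strictly negative.

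Finally I replace the discontinuous readout by globally analytic, strictly increasing normalized readouts. I would use shifted sigmoid sums of the form $h_n(z)=\sigma(n(z-x+\varepsilon))-\sigma(-n(z+y-\varepsilon))$, normalized by subtracting $h_n(0)$, exactly in the spirit of \eqref{eq:F-isolating-readout}. On the fixed compact state range their primitives and remainders $F_n,\Phi_n$ converge pointwise and boundedly to the step-readout values, except on the level sets $\{D=x-\varepsilon\}$ and $\{D=-(y-\varepsilon)\}$. These sets lie in the ramps and have measure zero if the ramps are strictly monotone. Dominated convergence in \eqref{eq:time-storage-identity} transfers the strict negative margin to a finite $n$. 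The normalization subtracts a constant of size $e^{-cn}$, which perturbs $F_n$ and $\Phi_n$ only by $O(e^{-cn})$.

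The main obstacle I expect is the bookkeeping in the first two steps. The volume constraint must be met with plateau levels that are fixed before the readout thresholds are chosen. The construction must also absorb arbitrary and even sign-changing $c_1$ inside $I$. This is why the two plateau levels are matched to their own thresholds. If the sign selection for $A_\pm$ proves awkward, the fallback is a third tiny bump of amplitude below both thresholds, where $h=0$ and hence $F=\Phi=0$, placed wherever $c_1\ne0$ to correct the volume; if $c_1$ vanishes there, no correction is needed.
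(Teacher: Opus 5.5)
Your proposal is correct and is essentially the paper's proof. Your sign selection for $A_\pm$ is the paper's dichotomy: either $c_1\equiv0$ on $I$, or you shrink to a subinterval where $c_1$ has one strict sign. That dichotomy also excludes the degenerate case where exactly one of $A_\pm$ vanishes, which positive $x,y$ could not balance, and it makes your fallback bump unnecessary; the balanced bump state then enforces $\int c_1D=0$, and an analytic readout family that suppresses $F$ while keeping $\Phi$ near the plateau level is passed to the limit by dominated convergence in \eqref{eq:time-storage-identity}. The only real difference is the isolating readout: the paper uses $e^{n(z-x)}-e^{-nx}$, for which $F_n\to0$ and $\Phi_n\to x\one_{\{z=x\}}$ on the range $z\le x$, so it needs neither your threshold offset $\varepsilon$ nor strictly monotone ramps (your $O(\eta)$ ramp term is unnecessary anyway, since $D\le x$ gives $F\le\varepsilon$ and $c_2\Phi\le0$ there).
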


\begin{proof}
Choose $I\Subset(0,\infty)$ on which $c_2\le-\eta<0$.  If $c_1\equiv0$
there, take $D=x\varphi_+$, where $0\le\varphi_+\le1$ is compactly supported
in $I$ and equals one on a positive-measure compact set $E$.  Then
\eqref{eq:time-state-constraint} holds automatically.

Otherwise shrink to $J\Subset I$ where $c_1$ has one strict sign.  Choose
disjoint nonnegative $\varphi_+,\varphi_-\in C_c^\infty(J)$, with
$\varphi_+=1$ on $E$, and put
\begin{equation}\label{eq:c2-balancing-state}
 A_\pm=\int_Jc_1\varphi_\pm,\qquad
 D=x\varphi_+-x\frac{A_+}{A_-}\varphi_-.
\end{equation}
Thus $\int c_1D=0$ exactly.  The two $A_\pm$ have the same nonzero sign, so the
balancing coefficient is positive.

Use the analytic normalized readout
\begin{equation}\label{eq:Phi-isolating-readout}
 h_{n,x}^\Phi(z)=e^{n(z-x)}-e^{-nx}.
\end{equation}
On the attained state range $z\le x$,
\begin{equation}\label{eq:Phi-isolating-limit}
 F_{n,x}^\Phi(z)\to0,\qquad
 \Phi_{n,x}^\Phi(z)\to
 \begin{cases}0,&z<x,\\x,&z=x.
 \end{cases}
\end{equation}
Dominated convergence gives
\begin{equation}\label{eq:c2-negative-limit}
 \int v_Dh_{n,x}^\Phi(D)\longrightarrow
 x\int_{\{D=x\}}c_2(t)\dd t
 \le x\int_Ec_2(t)\dd t<0.
\end{equation}
This proves the claim.
\end{proof}

\begin{proof}[Proof of \Cref{thm:time-inhomogeneous}]
Sufficiency was proved after \Cref{lem:time-identities}.  If universal
round-trip safety holds, \Cref{lem:negative-c1,lem:negative-c2} force
$c_1,c_2\ge0$, which is
\eqref{eq:time-inhomogeneous-conditions} because $b>0$.

The smooth realizing rates are bounded and have exact zero volume.  Replace
each by its cell averages on a refining partition.  Cell averaging preserves
the integral and the cap.  Variation of constants gives uniform state
convergence, and continuity of the fixed readout on the common compact range
gives cost convergence.  Hence a finite piecewise-constant round-trip
separator exists whenever either coefficient is negative.

For cap locality, scale $D,v$ by $\alpha$ and replace $h(z)$ by
$h(z/\alpha)$.  The cost scales by the positive factor $\alpha$ and remains
negative, while the rate enters any prescribed positive cap.  Analyticity
and strict increase are preserved.
\end{proof}

\begin{example}[Two diagnostics]\label{ex:time-diagnostics}
If $\rho=0$, $b(t)=(1+t)^{-1}$, and
$D(t)=\sin(2\pi t)$ on $[0,1]$, then $c_1=-1$, $c_2=0$, the state realizes
a round trip, and the identity-readout cost is $-1/4$.

If $\rho=-1$, $b=e^t$, and $D(t)=\sin(\pi t)$ on $[0,1]$, then
$c_1=0$, $c_2=-e^{-t}$ and
\begin{equation}\label{eq:negative-resilience-diagnostic}
 \cost=-\frac12\int_0^1e^{-t}\sin^2(\pi t)\dd t
 =-\frac{\pi^2(1-e^{-1})}{1+4\pi^2}<0.
\end{equation}
The latter dynamics need not forget at infinity; this is why the direct
volume-neutral proof is stronger than remote compensation.  In both
examples, smoothing the endpoint derivatives and averaging the resulting
bounded rate over a refining partition preserves the volume and the strict
negative sign, and the state/readout scaling of the theorem fits the
witness into every positive cap.
\end{example}

\section{Vector and matrix memory}\label{sec:matrix}

Let $v,D,h$ take values in $\R^d$ and use the power pairing
\begin{equation}\label{eq:matrix-cost}
 D_v(t)=\int_0^tG(t-s)v(s)\dd s,\qquad
 \cost_{G,h,T}[v]=\int_0^Tv(t)^\transpose h(D_v(t))\dd t.
\end{equation}
The vector round-trip constraint is $\int_0^Tv=0\in\R^d$.
No coordinatewise order on $h$ is assumed in the first-order theorem;
the readout is fixed and its storage geometry is part of the classification.

\subsection{Vector remote compensation}

Fix any norm on $\R^d$, its dual norm, and the induced matrix norm.

\begin{theorem}[Vector round-trip reduction]\label{thm:vector-remote}
Let
$G\in L^1_{\mathrm{loc}}(0,\infty;\R^{d\times d})$ satisfy
\begin{equation}\label{eq:matrix-tail-decay}
 \varepsilon(R)=\esssup_{t\ge R}\|G(t)\|\longrightarrow0.
\end{equation}
For any fixed continuous $h:\R^d\to\R^d$ with $h(0)=0$, safety on every
finite horizon for all finite piecewise-constant inputs is equivalent to
safety for all finite piecewise-constant vector round trips.  The equivalence
holds inside every fixed positive symmetric norm cap.
\end{theorem}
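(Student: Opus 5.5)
The plan is to repeat the scalar remote-compensation argument of \Cref{thm:remote-compensation} coordinate-free, with norms in place of absolute values. Only the implication from round-trip safety to all-input safety needs proof, since round trips are inputs. Fix a finite piecewise-constant input $u\in\pc(T_0;\R^d)$ and let $m=\int_0^{T_0}u\in\R^d$ be its vector volume. For $R>T_0$ and $L>0$, append after an idle gap the constant block
\[
 w_{R,L}(t)=-\frac mL\one_{[R,R+L]}(t).
\]
Then $u+w_{R,L}$ is an ordinary finite-step vector round trip on $[0,R+L]$. Because the convolution is causal, the cost on $[0,T_0]$ equals $\cost_{G,h,T_0}[u]$. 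On the idle gap $v=0$, so that stretch contributes nothing whatever the state is.

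The key estimate is that the state on the remote block tends to zero uniformly. For $t\in[R,R+L]$ the inherited part satisfies
\[
 \Bigl\|\int_0^{T_0}G(t-s)u(s)\dd s\Bigr\|\le\|u\|_1\,\varepsilon(R-T_0),
\]
using the induced matrix norm and \eqref{eq:matrix-tail-decay}. The self part is bounded by
\[
 \frac{\|m\|}{L}\int_0^L\|G(r)\|\dd r .
\]
This Ces\`aro mean tends to zero as $L\to\infty$. To see this, split the integral at a fixed $R_1$: the piece over $[0,R_1]$ is at most $L^{-1}\|G\|_{L^1(0,R_1)}$, and the rest is at most $\varepsilon(R_1)$. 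So the essential supremum $\varepsilon_{R,L}$ of the state norm on the block goes to zero if we first let $R\to\infty$ and then $L\to\infty$, or along any joint choice. The block cost then obeys
\[
 \Bigl|\int_R^{R+L}w^\transpose h(D)\Bigr|\le\|m\|\sup_{\|z\|\le\varepsilon_{R,L}}\|h(z)\|_*\longrightarrow0,
\]
by the dual-norm pairing inequality and continuity of $h$ at $0$ with $h(0)=0$. Round-trip safety gives $\cost_{G,h,T_0}[u]+o(1)\ge0$, and letting the $o(1)$ term vanish yields all-input safety on every horizon.

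For the cap statement, any fixed positive symmetric norm cap $c$ is respected as long as $L\ge\|m\|/c$, since the compensating rate has norm $\|m\|/L$. This is compatible with $L\to\infty$, and $u$ itself is already in the cap. The only point needing care, which I expect to be the main obstacle, is the order of limits together with the measurability of the $L^1_{\mathrm{loc}}$ matrix kernel. Two facts matter here: the inherited bound requires $t-s\ge R-T_0$ for every $s$ in the support of $u$, and the essential-supremum bounds hold for a.e.\ $t$. Both are enough, because the cost is an integral over the block.
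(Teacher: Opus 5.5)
Your proposal is correct and follows the paper's proof essentially step for step. The common steps are: a remote constant block $-m/L$ after an idle gap; the inherited-state bound $\|u\|_1\,\varepsilon(R-T_0)$ and the Ces\`aro-mean bound on the self-state; the dual-norm estimate $\|m\|\sup_{\|z\|\le\varepsilon_{R,L}}\|h(z)\|_*$ on the compensator cost; causality for the original cost; and $L\ge\|m\|/c$ for the cap. Your split of $\frac1L\int_0^L\|G\|$ at a fixed $R_1$ supplies the one detail, convergence of the Ces\`aro mean, that the paper asserts without proof.
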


\begin{proof}
Let $v$ be supported on $[0,T]$, put
$m=\int_0^Tv$ and $K_v=\int_0^T\|v(t)\|\dd t$, and append after a remote gap
the constant vector block $w=-m/L$ on $[R,R+L]$.  On that block,
\begin{align}
 \left\|\int_0^TG(R+s-u)v(u)\dd u\right\|
 &\le K_v\varepsilon(R-T),
 \label{eq:matrix-old-state-bound}\\
 \left\|\int_0^sG(s-r)w\dd r\right\|
 &\le\|m\|a(L),\qquad
 a(L)=\frac1L\int_0^L\|G(r)\|\dd r\longrightarrow0.
 \label{eq:matrix-self-state-bound}
\end{align}
Thus the compensator state is uniformly bounded by a number
$\delta_{R,L}\to0$.  Duality gives the orientation-free estimate
\begin{equation}\label{eq:matrix-comp-cost-bound}
 |\cost_{\mathrm{comp}}|
 \le\|m\|\sup_{\|x\|\le\delta_{R,L}}\|h(x)\|_*\longrightarrow0.
\end{equation}
The enlarged control has exact zero vector volume, the old cost is unchanged
by causality, and the gap costs zero.  Round-trip safety therefore implies
the original all-input inequality.  If the cap is $c$, choose
$L\ge\|m\|/c$.
\end{proof}

Matrix rotation, nonnormality, and cross-time noncommutation create no extra
term in this argument.  Volume is closed before the kernel is applied, and
operator/dual norms control the resulting state and supply rate.

\subsection{Stable fully actuated first-order dynamics}

Consider
\begin{equation}\label{eq:matrix-first-order}
 G(t)=e^{-At}B,\qquad B\in \mathrm{GL}_d(\R),\qquad
 \|e^{-At}\|\le Me^{-\alpha t}\quad(t\ge0),
\end{equation}
or equivalently
\begin{equation}\label{eq:matrix-state-equation}
 D'=-AD+Bv,\qquad D(0)=0.
\end{equation}
For $h\in C^1(\R^d;\R^d)$ with $h(0)=0$, set
\begin{equation}\label{eq:matrix-one-form}
 \omega(x)=B^{-\transpose}h(x).
\end{equation}

The all-input conditions in part \textup{(iii)} below are the
zero-feedthrough, fully actuated passive specialization of
\citet[Theorem~17]{HillMoylan1980}, in the storage--supply framework of
\citet{Willems1972}; see also \citet{Moylan1974,HillMoylan1976}.
Classical cycle-dissipativity returns the internal state to its initial value
\citep[Definition~8]{HillMoylan1980}.  The vector round trips here instead
close input volume and generally leave the terminal memory state open.
What the theorem adds is that this weaker closure is nevertheless as strong
as all-input safety once every horizon is allowed.

\begin{theorem}[Stable fully actuated matrix classification]
\label{thm:stable-matrix}
The following are equivalent:
\begin{enumerate}[label=\textup{(\roman*)}]
\item every finite piecewise-constant vector round trip is safe on every
  finite horizon;
\item every finite piecewise-constant input is safe on every finite horizon;
\item $\omega$ is conservative and
  \begin{equation}\label{eq:matrix-drift-condition}
   \omega(x)^\transpose Ax\ge0\qquad(x\in\R^d).
  \end{equation}
\end{enumerate}
Equivalently,
\begin{equation}\label{eq:matrix-differential-conditions}
 B^{-\transpose}Dh(x)\text{ is symmetric},\qquad
 h(x)^\transpose B^{-1}Ax\ge0\quad(x\in\R^d).
\end{equation}
If $\nabla V=\omega$ and $V(0)=0$, then stability and
\eqref{eq:matrix-drift-condition} imply $V\ge0$ automatically, and every
input satisfies
\begin{equation}\label{eq:matrix-storage-identity}
 \cost_{G,h,T}[v]
 =V(D(T))+\int_0^T\omega(D(t))^\transpose AD(t)\dd t\ge0.
\end{equation}
\end{theorem}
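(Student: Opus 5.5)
The plan is to close the cycle (iii)$\Rightarrow$(ii)$\Rightarrow$(i)$\Rightarrow$(iii). The middle step (ii)$\Rightarrow$(i) is trivial, and the converse (i)$\Rightarrow$(ii) is \Cref{thm:vector-remote}, because $\|e^{-At}B\|\le M\|B\|e^{-\alpha t}$ gives the uniform tail decay \eqref{eq:matrix-tail-decay}. So the work lies in proving sufficiency of (iii) and necessity of (iii) from all-input safety.

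\emph{Sufficiency.} Suppose $\omega=\nabla V$ with $V(0)=0$. Along \eqref{eq:matrix-state-equation} we have $Bv=D'+AD$, hence
\[
 v^\transpose h(D)=(Bv)^\transpose B^{-\transpose}h(D)=\nabla V(D)^\transpose(D'+AD).
\]
Since $D$ is absolutely continuous, integrating gives \eqref{eq:matrix-storage-identity}. To show $V\ge0$, take the free flow $x(t)=e^{-At}x_0$. Then $\frac{\mathrm d}{\mathrm dt}V(x(t))=-\omega(x)^\transpose Ax\le0$, and $x(t)\to0$, so $V(x_0)\ge V(0)=0$. The differential reformulation \eqref{eq:matrix-differential-conditions} follows because a $C^1$ field on $\R^d$ is conservative exactly when its Jacobian $B^{-\transpose}Dh$ is symmetric (Poincar\'e's lemma on a star-shaped domain). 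The drift condition is simply a rewriting: $\omega^\transpose Ax=h^\transpose B^{-1}Ax$.

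\emph{Necessity.} Assume all-input safety on every horizon. For any $C^1$ state path $D$ with $D(0)=0$, the input $v=B^{-1}(D'+AD)$ realizes it, and its cost is
\[
 \int_0^T\omega(D)^\transpose D'\,\mathrm dt+\int_0^T\omega(D)^\transpose AD\,\mathrm dt.
\]
\Cref{lem:finite-step-closure} (the vector form, with cell averaging) transfers any strict negative sign of this cost to piecewise-constant inputs.

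\emph{Drift condition.} Fix $x$, move quickly from $0$ to $x$, dwell there for time $S$, and return quickly. The dwell contributes $S\,\omega(x)^\transpose Ax$. Each fast leg contributes a line integral of $\omega$ that stays bounded plus an $O(\text{leg time})$ drift term. Letting $S\to\infty$ then forces $\omega(x)^\transpose Ax\ge0$.

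\emph{Conservativity.} Suppose $\oint_\Gamma\omega\ne0$ for some closed loop $\Gamma$. Orient $\Gamma$ so the circulation is negative, and reach it by a radial segment from $0$ traversed out and back, whose contributions cancel up to drift terms. Traverse $\Gamma$ $k$ times at speed $c$. The circulation part is $k\oint_\Gamma\omega<0$. The drift part is at most $k\cdot|\Gamma|/c\cdot\sup_\Gamma|\omega^\transpose Ax|$ plus a bounded term, which is small relative to the circulation once $c$ is large. A single fast traversal with $c$ large therefore already gives negative cost, contradicting safety.

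The main obstacle is this last step. One must carefully check that the time reparametrization suppresses the drift integral uniformly while leaving the line integral unchanged, that the approach and exit segments are handled, and that the smooth path closes to a finite-step input with a strict margin. If the speed-scaling argument is delicate near the non-smooth joins, I would first smooth $\Gamma$ and the radial connector into a single $C^1$ loop starting at $0$.
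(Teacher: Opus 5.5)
Your proposal is correct and follows essentially the same route as the paper. You use remote compensation (\Cref{thm:vector-remote}) for (i)$\Leftrightarrow$(ii), path realization $v=B^{-1}(D'+AD)$ with compressed loops for conservativity and long holds for the drift inequality, and the storage identity for sufficiency; the only cosmetic difference is that you obtain $V\ge0$ from monotonicity of $V$ along the free flow $e^{-At}x_0$, whereas the paper writes the same fact as the integral $V(x)=\int_0^\infty\omega(e^{-At}x)^\transpose Ae^{-At}x\dd t$.
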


\begin{proof}
By \Cref{thm:vector-remote}, round-trip and all-input safety are equivalent.
Since $B$ is invertible, every piecewise $C^1$ state path $x$ is generated by
\begin{equation}\label{eq:matrix-path-control}
 v=B^{-1}(x'+Ax).
\end{equation}
Compressed closed loops and their reversals force the circulation of
$\omega$ to vanish, hence $\omega=\nabla V$.  Holding an arbitrary state
$x$ for a time $L$ with the constant rate $B^{-1}Ax$ contributes
$L\omega(x)^\transpose Ax$, so arbitrary horizons force
\eqref{eq:matrix-drift-condition}.  These path controls are limits of finite
piecewise-constant controls; state and cost converge uniformly on compact
horizons.

Conversely,
\begin{align}
 v^\transpose h(D)
 &=D'^\transpose B^{-\transpose}h(D)
   +D^\transpose A^\transpose B^{-\transpose}h(D)
 \notag\\
 &=\frac{\dd}{\dd t}V(D)+\omega(D)^\transpose AD,
\end{align}
which proves \eqref{eq:matrix-storage-identity} once $V\ge0$.  To obtain that
sign without an additional hypothesis, let $y(t)=e^{-At}x$.  Stability gives
$y(t)\to0$, and
\begin{equation}\label{eq:stable-potential-formula}
 V(x)=\int_0^\infty\omega(e^{-At}x)^\transpose Ae^{-At}x\dd t\ge0.
\end{equation}
Finally, on simply connected $\R^d$, a $C^1$ vector field is conservative
exactly when its Jacobian is symmetric.
\end{proof}

\begin{remark}[A fixed cap does not identify an unreachable readout]
\label{rem:matrix-cap}
\Cref{thm:vector-remote} gives a round-trip/all-input equivalence
inside one cap.  Safety under that cap for one fixed $h$ need not force
\eqref{eq:matrix-differential-conditions} outside the cap-reachable state
set.  For $A=B=I$ and $\|v\|_2\le1$, every state satisfies $\|D(t)\|_2<1$;
a $C^1$ radial readout can equal $D$ on the unit ball and have negative
radial pairing beyond radius two without affecting any capped trajectory.
\end{remark}

\begin{example}[Safe nonlinear noncommuting memory]
\label{ex:matrix-safe-noncommuting}
Let
\begin{equation}\label{eq:matrix-safe-data}
 A=\begin{pmatrix}1&0\\0&2\end{pmatrix},\quad
 B=\begin{pmatrix}1&1\\0&1\end{pmatrix},\quad
 V(x)=\frac12\|x\|_2^2+\frac14\|x\|_2^4,\quad
 h(x)=B^\transpose(1+\|x\|_2^2)x.
\end{equation}
Then $B^{-\transpose}h=\nabla V$ and
\begin{equation}\label{eq:matrix-safe-dissipation}
 \nabla V(x)^\transpose Ax
 =(1+\|x\|_2^2)(x_1^2+2x_2^2)>0\quad(x\ne0).
\end{equation}
Thus every nonzero input has strictly positive cost.  Yet
\begin{equation}\label{eq:matrix-safe-kernel}
 G(t)=\begin{pmatrix}e^{-t}&e^{-t}\\0&e^{-2t}\end{pmatrix},\qquad
 [G(t),G(s)]
 =\begin{pmatrix}0&e^{-t-2s}-e^{-s-2t}\\0&0\end{pmatrix}\ne0
\end{equation}
for $s\ne t$.  Universal safety does not force symmetry, normality, or
cross-time commutation.
\end{example}

\begin{example}[Orientation separator]\label{ex:matrix-orientation}
Take $A=B=I_2$, let
$J=\left(\begin{smallmatrix}0&-1\\1&0\end{smallmatrix}\right)$, and set
$h(x)=-Jx$.  On four consecutive blocks of common length $\tau$, use rates
$e_1,e_2,-e_1,-e_2$.  Their vector volume is zero and direct integration
gives
\begin{equation}\label{eq:matrix-four-block-cost}
 \cost=-(1-e^{-\tau})^2(3-e^{-2\tau})<0.
\end{equation}
This is already a finite round trip and scales into every positive
cap.  It exposes nonconservativity without any asymptotic argument.
\end{example}

\begin{example}[Stable drift is not enough]\label{ex:matrix-drift}
Let
$A=\left(\begin{smallmatrix}1&3\\0&1\end{smallmatrix}\right)$,
$B=I_2$, and $h(x)=x$.  The matrix exponential is stable, but at
$x=(1,-1)^\transpose$ one has $x^\transpose Ax=-1$.  A long hold at this
state violates all-input safety, and \Cref{thm:vector-remote} closes the
witness to a vector round trip.
\end{example}

\subsection{Permanent matrices and the instantaneous metric}

\begin{proposition}[Invertible permanent matrix]\label{thm:permanent-matrix}
Let $G(t)\equiv P$ with $P\in \mathrm{GL}_d(\R)$ and let
$h\in C^1(\R^d;\R^d)$.  Safety for every vector round trip is equivalent to
$P^{-\transpose}h$ being conservative.  If safety is required for every
input, the normalized potential $V$ must additionally satisfy $V\ge0$.
\end{proposition}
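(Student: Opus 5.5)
With $G\equiv P$ the state is $D(t)=Pq(t)$, where $q(t)=\int_0^tv$ is the cumulative vector volume.  Put $\omega=P^{-\transpose}h$.  The plan is to reduce everything to line integrals of $\omega$ along state paths, using the identity
\[
 v^\transpose h(D)=q'^\transpose P^\transpose\omega(Pq)=\frac{\dd}{\dd t}\,\text{(pullback)}\cdots,
\]
that is, the cost is $\int_\gamma \omega\cdot\dd x$ along the path $\gamma=Pq([0,T])$ in state space.  Concretely, $\int_0^T v^\transpose h(Pq)\dd t=\int_0^T (Pq)'^\transpose P^{-\transpose}h(Pq)\dd t$, since $(Pq)'=Pv$ gives $v^\transpose h=(Pv)^\transpose P^{-\transpose}h$.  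A vector round trip is exactly a $q$ with $q(T)=0$, i.e.\ a closed polygonal state path starting and ending at $0$ with piecewise-linear segments.  Because $P$ is invertible, every closed polygonal loop through the origin is realized by a finite piecewise-constant round trip; no remote compensation is needed.

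\emph{Sufficiency.}  If $\omega=\nabla V$ with $V(0)=0$, the cost equals $V(D(T))-V(0)=V(Pq(T))$.  This is $0$ on round trips and $V(D(T))$ for general inputs.  So round-trip safety holds, and all-input safety holds when $V\ge0$.

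\emph{Necessity for round trips.}  Suppose the circulation of $\omega$ is nonzero on some closed loop.  Since $h$, hence $\omega$, is $C^1$, Stokes' theorem shows that nonconservativity is equivalent to $\partial_i\omega_j\ne\partial_j\omega_i$ at some point $x_0$.  Then a small planar polygon (a parallelogram) near $x_0$ has nonzero circulation, to leading order the area times the curl component.  Conjugate the loop by a radial segment from $0$ to a vertex and back; the two traversals of the segment cancel.  This gives a closed polygonal path through $0$ with nonzero circulation.  Reversing orientation, which corresponds to replacing $v$ by the time-reversed negated control, makes the cost strictly negative.  Translating this path back through $q=P^{-1}x$ gives a finite piecewise-constant round trip.

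\emph{Necessity for all inputs.}  If all inputs are safe then in particular round trips are, so $\omega=\nabla V$.  For any $x$, take $q$ linear from $0$ to $P^{-1}x$; the cost is $V(x)$, so $V\ge0$.

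The only delicate step is the small-loop circulation estimate, which needs just continuity of $D\omega$ (Green's theorem on a small parallelogram).  Everything else is the exact chain-rule identity, so I expect no real obstacle; in particular no cap or horizon issue arises, since time can be rescaled freely.
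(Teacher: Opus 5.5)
Your proof is correct and follows the paper's route: write the cost as the line integral $\int_0^T D'(t)^\transpose P^{-\transpose}h(D(t))\dd t$, realize every polygonal state loop based at zero by a finite-step round trip, test both orientations, and read off $V(D(T))$ on open paths. The only difference is in detecting nonconservativity. You go through a nonzero curl and a small Green's-theorem parallelogram, which uses the $C^1$ hypothesis. The paper instead concludes directly that vanishing circulation on every polygonal loop makes $P^{-\transpose}h$ a gradient, an implication that needs only continuity.
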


This is the classical statement that a path-independent line integral is a
gradient, in the cyclo-dissipativity language of
\citet[Definition~8 and Theorem~5]{HillMoylan1980}; we record it as the
permanent baseline for the transient results.

\begin{proof}
With $q(t)=\int_0^tv$, one has $D=Pq$ and
\begin{equation}\label{eq:permanent-matrix-line-integral}
 \cost=\int_0^TD'(t)^\transpose P^{-\transpose}h(D(t))\dd t.
\end{equation}
Every polygonal state loop based at zero is generated by a finite-step vector
round trip.  Testing both orientations forces zero circulation and hence
conservativity.  Conversely a conservative field makes every loop cost zero;
on an open path the cost is $V(D(T))$.
\end{proof}

\begin{theorem}[Instantaneous metric obstruction]\label{thm:instant-metric}
Assume $G(t)\to B\in \mathrm{GL}_d(\R)$ as $t\downarrow0$.  If every bounded
finite-step round trip is safe on one fixed horizon, with no common rate cap,
then $B^{-\transpose}h$ is conservative.
\end{theorem}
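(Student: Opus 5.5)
The plan is to compress a closed polygonal state loop into a short time window at the very start of the horizon. On such a window the kernel is effectively the constant matrix $B$, so the cost tends to the permanent-matrix line integral of \Cref{thm:permanent-matrix}, and safety then forces zero circulation. No common rate cap is imposed, so we may use rates of order $1/\varepsilon$ on a window of length $\varepsilon$.

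\emph{Step 1: the compressed loop.} Fix a closed polygonal path $\gamma$ in $\R^d$ based at $0$, parametrized on $[0,1]$ with piecewise-constant velocity $\dot\gamma$. For $\varepsilon\in(0,T)$ define
\[
 v_\varepsilon(t)=B^{-1}\varepsilon^{-1}\dot\gamma(t/\varepsilon)\one_{[0,\varepsilon)}(t).
\]
This is a bounded finite-step input, and it is a vector round trip because $\gamma(1)=\gamma(0)$. Its state is
\[
 D_\varepsilon(t)=\int_0^t G(t-s)v_\varepsilon(s)\dd s .
\]
Write $G(r)=B+E(r)$ with $\|E(r)\|\le\eta(\varepsilon)\to0$ for $r\le\varepsilon$. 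Then $D_\varepsilon(\varepsilon\tau)=\gamma(\tau)+e_\varepsilon(\tau)$ for $\tau\in[0,1]$, where $\sup_\tau\|e_\varepsilon(\tau)\|\le\eta(\varepsilon)\|B^{-1}\|\int_0^1\|\dot\gamma\|$. Hence the state on the loop window converges uniformly to $\gamma$.

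\emph{Step 2: cost on the window.} Changing variables,
\[
 \int_0^\varepsilon v_\varepsilon^\transpose h(D_\varepsilon)\dd t=\int_0^1\dot\gamma(\tau)^\transpose B^{-\transpose}h\bigl(\gamma(\tau)+e_\varepsilon(\tau)\bigr)\dd\tau .
\]
By uniform continuity of $h$ on a compact neighbourhood of $\gamma$, this tends to the circulation $\oint_\gamma\omega\cdot\dd x$ with $\omega=B^{-\transpose}h$.

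\emph{Step 3: the tail after the window.} This is the main obstacle. After time $\varepsilon$ the input is zero, so the remaining cost $\int_\varepsilon^T v^\transpose h(D)$ vanishes identically because $v=0$ there. Since the cost only pairs $v$ with the state, there is no post-window contribution at all, and the tail is harmless. The real care is only in Step 1: we need the state error to be controlled with $G$ merely having a limit at $0+$, with no integrability assumption beyond local integrability. The bound above uses only $\sup_{r\le\varepsilon}\|G(r)-B\|$ times the $L^1$ mass of $v_\varepsilon$, which equals $\|B^{-1}\|\int\|\dot\gamma\|$ independently of $\varepsilon$. Any subtlety about essential suprema is handled by working with essential bounds.

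\emph{Step 4: conclusion.} Safety gives a nonnegative circulation for every closed polygon. Applying the same argument to the reversed loop gives a nonpositive one, so every polygonal circulation of the continuous field $\omega$ vanishes. By the standard argument already used in \Cref{thm:permanent-matrix}, $\omega$ is then a gradient, i.e.\ conservative. The scaling in $\varepsilon$ requires unbounded rates, which is exactly why the statement excludes a common cap.
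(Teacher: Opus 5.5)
Your proof is correct and is essentially the paper's argument: compress the loop (input $B^{-1}\dot\gamma$, i.e.\ $q=B^{-1}\gamma$) into $[0,\varepsilon]$, show $D_\varepsilon(\varepsilon\tau)\to\gamma(\tau)$ uniformly from $\sup_{r\le\varepsilon}\|G(r)-B\|$ times the $\varepsilon$-independent $L^1$ mass of the input, pass to the circulation of $B^{-\transpose}h$, and use both orientations. The post-window tail you call the main obstacle is identically zero because $v=0$ there, as you yourself note. Loops based at $0$ already give path independence; the paper's stem-and-reverse remark covers loops away from zero in the same way.
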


\begin{proof}
For a polygonal state loop $x:[0,1]\to\R^d$, put $q=B^{-1}x$ and use the
compressed rate $v_\varepsilon(t)=\varepsilon^{-1}q'(t/\varepsilon)$ on
$[0,\varepsilon]$.  Uniformly in loop time,
\begin{equation}\label{eq:matrix-fast-loop-limit}
 D_\varepsilon(\varepsilon\tau)
 =\int_0^\tau G(\varepsilon(\tau-s))q'(s)\dd s\longrightarrow x(\tau).
\end{equation}
The costs converge to the circulation of $B^{-\transpose}h$.  Safety in both
orientations makes every such circulation zero.  Loops away from zero are
joined to zero by a stem and its reverse.
\end{proof}

If $h=\nabla\Psi$ and $B=B^\transpose\succ0$, the condition implies
$[B^{-1},\nabla^2\Psi(x)]=0$ for every $x$.  If the common commutant of the
Hessians is scalar, then $B$ itself must be scalar.  The no-common-cap
hypothesis is essential.

\subsection{Conic left complements}

Matrix multiplication order matters.  A useful complement acts on the left:
\begin{equation}\label{eq:matrix-left-complement}
 (R*G)(t)=\int_{[0,t]}R(\dd r)G(t-r)=1(t)I_d.
\end{equation}
Then $R*D=1*v$ and differentiation recovers the input.

\begin{theorem}[Finite conic left-complement certificate]
\label{thm:matrix-conic}
Assume
\begin{equation}\label{eq:matrix-conic-complement}
 R=\sum_{k=1}^mM_k\bigl(\beta_k\delta_0+\ell_k(t)\dd t\bigr),\qquad
 R*G=1I_d,
\end{equation}
where $\beta_k\ge0$ and each $\ell_k\ge0$ is integrable and
nonincreasing.  Suppose
\begin{equation}\label{eq:matrix-conic-compatibility}
 M_k^\transpose h=\nabla\Psi_k,\qquad
 \Psi_k\text{ convex},\quad
 \Psi_k(0)=0,\quad\nabla\Psi_k(0)=0.
\end{equation}
Then every input has nonnegative cost.  More precisely,
\begin{align}
 \cost_{G,h,T}[v]
 =\sum_{k=1}^m\Bigg\{&
 \beta_k\Psi_k(D(T))+(\ell_k*\Psi_k(D))(T)
 +\int_0^T\ell_k(t)\Phi_k(D(t))\dd t
 \notag\\
 &+\int_{(0,T)}(-\dd\ell_k)(r)
   \int_r^T\mathcal B_k(D(t-r),D(t))\dd t\Bigg\}\ge0,
 \label{eq:matrix-conic-identity}
\end{align}
where
$\Phi_k(x)=x^\transpose\nabla\Psi_k(x)-\Psi_k(x)$ and $\mathcal B_k$ is
the Bregman divergence of $\Psi_k$.
\end{theorem}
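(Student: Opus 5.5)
The plan is to reduce the matrix statement to the scalar storage identity of \Cref{prop:cp-identity}, applied once for each conic component. First I would derive the inverse relation. Convolving the left complement \eqref{eq:matrix-left-complement} with $v$ and using associativity of matrix convolution (left measure, kernel, then input) gives
\[
 \sum_{k}M_k\bigl(\beta_kD+\ell_k*D\bigr)=1*v .
\]
For a finite-step input, $D=G*v$ lies in $W^{1,1}$ with $D(0)=0$, exactly as at the end of the proof of \Cref{prop:cp-identity}. Differentiating then gives $v=\sum_kM_kQ_k'$ with $Q_k=\beta_kD+\ell_k*D$. Since the complement has $M_k$ on the left, the cost becomes
\[
 v^\transpose h(D)=\sum_kQ_k'^\transpose M_k^\transpose h(D)=\sum_kQ_k'^\transpose\nabla\Psi_k(D),
\]
by \eqref{eq:matrix-conic-compatibility}. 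This transposition step is where the order of multiplication matters, and it is why the complement must act on the left.

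Next, for each $k$ I would repeat the proof of \Cref{prop:cp-identity} with $h$ replaced by $\nabla\Psi_k$ and $F_h$ replaced by $\Psi_k$. For the atom, $\beta_k\int D'^\transpose\nabla\Psi_k(D)=\beta_k\Psi_k(D(T))$ by the chain rule, since $\Psi_k(0)=0$. For $\ell_k\in BV$, Stieltjes differentiation gives
\[
 (\ell_k*D)'(t)=\ell_k(0+)D(t)+\int_{(0,t)}D(t-r)\dd\ell_k(r),
\]
and the analogous formula holds for $\Psi_k(D)$. Combining these with the vector identity
\[
 \nabla\Psi(x)^\transpose y-\Psi(y)=\Phi(x)-\mathcal B(y,x)
\]
produces the pointwise integrand $\ell_k\Phi_k(D)+\int\mathcal B_k(D(t-r),D(t))(-\dd\ell_k)(r)$. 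Integrating over $[0,T]$ and applying Fubini then gives \eqref{eq:matrix-conic-identity} term by term.

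For the signs: convexity gives $\mathcal B_k\ge0$, and $\Psi_k\ge\Psi_k(0)+\nabla\Psi_k(0)^\transpose x=0$ by the normalization. Also $\Phi_k(x)=\mathcal B_k(0,x)\ge0$, because $\Psi_k(0)=0$. Together with $\beta_k\ge0$, $\ell_k\ge0$ and $-\dd\ell_k\ge0$, every term on the right is nonnegative.

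The main obstacle is regularity. The identity needs $\nabla\Psi_k$ continuous, so that the composition is integrable and the chain rule holds, and it needs $\ell_k$ to be possibly unbounded at zero. My first attempt is to note that continuity of $h$ makes each $\nabla\Psi_k=M_k^\transpose h$ continuous, so $\Psi_k$ is $C^1$ and the chain rule holds for $W^{1,1}$ paths. The singular densities I would handle with the truncation $\ell_{k,n}=\min(\ell_k,n)$ exactly as in \eqref{eq:singular-complement-closure}. The left sides converge by Young's inequality, and the nonnegative right-hand terms converge monotonically.

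One further subtlety is that the $M_k$ need not be invertible, so the individual $Q_k$ are not determined by $v$. This is harmless: the decomposition $v=\sum_kM_kQ_k'$ is an identity between functions, and it is all that the argument uses.
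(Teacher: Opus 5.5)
Your proposal is correct and follows the paper's own proof. As there, you read off $v=\sum_k M_k[\beta_k D'+(\ell_k*D)']$ from the left complement, pair it with $h(D)$ via $M_k^\transpose h=\nabla\Psi_k$, apply the vector-convex version of \Cref{prop:cp-identity} to each summand, and treat singular densities by truncating to $\min(\ell_k,n)$ with the $W^{1,1}$ Young estimate; your checks that $\Psi_k\ge0$, $\Phi_k=\mathcal B_k(0,\cdot)\ge0$, and $\Psi_k\in C^1$ simply make explicit details the paper leaves implicit.
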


\begin{proof}
The complement relation gives
$v=\sum_kM_k[\beta_kD'+(\ell_k*D)']$.  Pair with $h(D)$ and use
\eqref{eq:matrix-conic-compatibility}.  Each summand is the vector convex
version of \Cref{prop:cp-identity}.  Truncation of an unbounded monotone
density and the $W^{1,1}$ convolution closure prove the singular case.
\end{proof}

The matrices $M_k$ need not commute.  What matters is compatibility of each
transformed one-form with a convex storage.  Loewner positivity of the
matrices alone is insufficient; an explicit positive-definite permanent
square-loop counterexample is given in \Cref{app:matrix-details}.

\subsection{General state-space storage and the linear boundary}

For a finite-dimensional realization
\begin{equation}\label{eq:general-realization}
 x'=-Ax+Bv,\qquad D=Cx,\qquad x(0)=0,
\end{equation}
the classical dissipation inequality of \citet{Willems1972},
\citet{HillMoylan1976}, and \citet[Theorem~14]{HillMoylan1980} is a
sufficient certificate that retains the
orientation information: if $W\in C^1(\R^n)$ satisfies
\begin{equation}\label{eq:nonlinear-kyp-conditions}
 W(0)=0,\quad W\ge0,\quad
 B^\transpose\nabla W(x)=h(Cx),\quad
 \nabla W(x)^\transpose Ax\ge0,
\end{equation}
then differentiating $W(x(t))$ along \eqref{eq:general-realization} gives
\begin{equation}\label{eq:nonlinear-kyp-identity}
 \cost_{G,h,T}[v]
 =W(x(T))+\int_0^T\nabla W(x(t))^\transpose Ax(t)\dd t\ge0.
\end{equation}
\Cref{thm:stable-matrix} is the case $C=I$, $B$ invertible, in which these
conditions are also necessary.

For a linear readout $h(D)=JD$ and an integrable matrix kernel, apply the
matrix-valued positive-type criterion of
\citet[Chapter~16, Theorem~2.4]{GripenbergLondenStaffans1990} to
$\mu(\dd s)=JG(s)\dd s$.  Equivalently, Plancherel gives the all-input
condition
\begin{equation}\label{eq:matrix-frequency-test}
 \operatorname{Herm}\!\left(J\laplace G(i\omega)\right)\succeq0
 \quad\text{for almost every }\omega.
\end{equation}
If a nonlinear readout is differentiable at zero, this condition with
$J=Dh(0)$ is necessary by amplitude scaling.  It is not sufficient for the
nonlinear problem.  The conic theorem and the storage certificate
\eqref{eq:nonlinear-kyp-conditions} are broad sufficient mechanisms; only
the permanent and stable fully actuated branches above are classifications.

\section{Fixed readouts, friction, and strategy complexity}
\label{sec:friction-complexity}

The classifications above quantify over the full monotone readout class.
This section records what survives when the readout is fixed, which
frictions repair the rate-inside model, and how many blocks a manipulation
needs.  The first result identifies the state interval that a nonlinear
pre-memory law can generate while carrying zero volume, and shows that a
bounded dead zone cannot hide such a law under power-law memory.  The
second characterizes the power penalties that repair the model with a
finite coefficient.  The third determines the full two-block phase of the
fixed power-law family, relates its two boundaries to Gatheral's rate
regimes, and bounds the number of blocks near $\delta=1$.  The last computes
the critical cubic friction for permanent
memory in closed form.  Proofs that are long but routine are collected in
\Cref{app:friction-details}.

\subsection{Vertical reachability behind a fixed readout}
\label{subsec:vertical-reachability}

Fix a cap $B>0$, a bounded finite-valued map
$f\colon[-B,B]\to\R$, and a continuous readout $h$.  No regularity of $f$
is assumed.  Define the closed graph hull and its vertical section by
\begin{equation}
 K_f:=\overline{\operatorname{co}}
 \{(x,f(x)):-B\le x\le B\},
 \qquad
 V_f:=\{q:(0,q)\in K_f\}=[\nu_-,\nu_+].
 \label{eq:graph-vertical-section}
\end{equation}
For $x<0<y$, let
\begin{equation}
 m_f(x,y):=\frac{y f(x)-x f(y)}{y-x}.
 \label{eq:two-point-vertical-moment}
\end{equation}
The weights $y/(y-x)$ and $-x/(y-x)$ are the unique zero-mean
probability weights on $\{x,y\}$, so $m_f(x,y)$ is their mean impact.

\begin{lemma}[Static vertical section]
\label{lem:vertical-section}
One has
\begin{equation}
 V_f=\overline{\operatorname{co}}
 \left(\{f(0)\}\cup
 \{m_f(x,y):-B\le x<0<y\le B\}\right).
 \label{eq:vertical-section-formula}
\end{equation}
Moreover, every $q\in V_f$ is a limit of finite graph mixtures whose rate
barycenter is exactly zero.
\end{lemma}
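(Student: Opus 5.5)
We prove the two inclusions in \eqref{eq:vertical-section-formula} separately, working throughout with the compact convex set $K_f$. Since $f$ is bounded, the graph $\Gamma=\{(x,f(x)):|x|\le B\}$ is a bounded subset of $\R^2$, so $K_f$ is compact and convex. Hence its vertical section $V_f$ is a compact interval $[\nu_-,\nu_+]$. It is nonempty because $(0,f(0))\in\Gamma$. Write $W$ for the right side of \eqref{eq:vertical-section-formula}.

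\emph{Inclusion $W\subseteq V_f$.} We have $f(0)\in V_f$ directly. For $x<0<y$ in $[-B,B]$, the convex combination
\[
\frac{y}{y-x}(x,f(x))+\frac{-x}{y-x}(y,f(y))
\]
has first coordinate zero and second coordinate $m_f(x,y)$, so $m_f(x,y)\in V_f$. Since $V_f$ is closed and convex, $W\subseteq V_f$. This direction also gives the exactness claim for every point of the generating set: $f(0)$ and each $m_f(x,y)$ is realized exactly by a mixture of at most two graph points with rate barycenter zero.

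\emph{Inclusion $V_f\subseteq W$.} This is the step needing care. Take $q\in V_f$. The ordinary convex hull $\operatorname{co}\Gamma$ is dense in $K_f$. Since $(0,q)\in K_f$, there are finite mixtures $\sum_i w_i(x_i,f(x_i))$ converging to $(0,q)$. Their rate barycenters $\xi_n=\sum_i w_ix_i$ tend to zero but need not vanish.

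We first correct the barycenter exactly. Suppose $\xi_n>0$ and some $x_i<0$ carries weight $w_i$. Because every $|x_i|\le B$, mass $\varepsilon_n=O(\xi_n)$ can be moved from a positive atom onto the atom $0$, adding $(0,f(0))$, which lies in $\Gamma$, so as to kill the barycenter. The impact coordinate changes by $O(\xi_n)\sup|f|$, which tends to zero. The amount moved is at most $\xi_n/\min x_i^+$, which could be large when the positive atoms are tiny; in that case, move mass from the atom with the largest $x_i$ instead, which has $x_i\ge\xi_n$, so the moved mass is at most $1$ but still controlled. I expect this bookkeeping to be the main obstacle. The cleanest route is to rescale: replace weights by $(1-t)w+t\delta_0$ only on the positive part, with $t$ chosen so that the positive part's first moment drops by $\xi_n$. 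The fraction removed is $\xi_n/\sum_{x_i>0}w_ix_i$. If this fraction does not tend to zero, then the positive part itself tends to zero, and the mixture's impact is dominated by atoms near $0$ and negative atoms. I would handle that case by a symmetric argument or by noting that the mixture then also has vanishing negative moment. Either way we obtain exact zero-barycenter finite mixtures converging to $(0,q)$, which proves the ``moreover'' statement.

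Next we decompose any exact zero-barycenter finite mixture. Atoms at $0$ contribute $f(0)$. The positive and negative atoms carry equal first moment, so they can be paired by the standard transport/flow decomposition. Repeatedly match a negative atom $x$ with a positive atom $y$ in the proportion $y:(-x)$ until one is exhausted. This writes the mixture as a convex combination of $\delta_0$ and two-point zero-mean measures on $\{x,y\}$. Its impact coordinate is therefore a convex combination of $f(0)$ and values $m_f(x,y)$, so it lies in $W$. Passing to the limit and using closedness of $W$ gives $q\in W$.
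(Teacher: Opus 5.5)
\textbf{Gap in the barycenter correction.} Your inclusion $W\subseteq V_f$ and your final pairing decomposition of an exact zero-mean mixture are correct, and they are exactly the paper's argument. The gap is in the barycenter correction, the step you yourself flag.

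Moving positive mass onto the atom at $0$ lowers the first moment only in proportion to the rates of the atoms being moved. So the removed fraction $t_n=\xi_n/\sum_{x_i>0}w_ix_i$ need not tend to zero. Because $f$ has no continuity at $0$, the resulting impact shift $t_n\sum_{x_i>0}w_i\bigl(f(0)-f(x_i)\bigr)$ then need not vanish.

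A concrete failure: take $f=\one_{(0,B]}$ and the mixtures $\tfrac12\delta_{\varepsilon_n}+\tfrac12\delta_0$ with $\varepsilon_n\downarrow0$. Their graph barycenters $(\varepsilon_n/2,1/2)$ converge to $(0,1/2)$. However, any correction that only transfers positive mass to $0$ must transfer all of it, and so lands at $(0,0)$.

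Neither of your fallbacks repairs this. A ``symmetric'' move of negative mass to $0$ pushes the barycenter the wrong way. Moving up to unit mass from the largest atom gives no vanishing perturbation. As written, therefore, you establish neither the ``moreover'' clause nor the inclusion $V_f\subseteq W$, since the latter relies on exact zero-barycenter approximants.

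\textbf{The missing idea.} Correct with a fixed graph point far from the origin, rather than with $(0,f(0))$. If $\xi_n>0$, mix in $(-B,f(-B))$ with weight $\lambda_n=\xi_n/(\xi_n+B)$. This makes the rate barycenter exactly zero, since $(1-\lambda_n)\xi_n-\lambda_nB=0$. The weight satisfies $\lambda_n\to0$, so the impact coordinate moves by at most $2\lambda_n\sup|f|$. Use $(B,f(B))$ symmetrically when $\xi_n<0$.

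This is precisely the paper's step, which uses any fixed $\xi_\mp$ of the opposite sign. With it, the rest of your proof goes through unchanged.
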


\begin{proof}
Every expression on the right of \eqref{eq:vertical-section-formula} is a
zero-rate graph barycenter, hence belongs to $V_f$.  Conversely, take finite
graph barycenters $(m_n,q_n)\to(0,q)$.  If $m_n>0$, mix the barycenter with
one fixed graph point $(\xi_-,f(\xi_-))$, $\xi_-<0$, using weight
$m_n/(m_n-\xi_-)$.  The new first coordinate is zero and the added weight
tends to zero.  Use a fixed $\xi_+>0$ when $m_n<0$.  Boundedness of $f$
shows that the corrected second coordinate still converges to $q$.

It remains to decompose a finite exact-zero-mean law.  Separate its mass at
zero, its negative support, and its positive support.  The total negative
and positive first moments agree.  Coupling the corresponding finite moment
measures decomposes the law into a convex combination of zero-mean
two-point laws and a possible atom at zero.  This proves the formula and the
last assertion.
\end{proof}

For a signed kernel $G\in L^1(0,T)$, put
\begin{equation}
 P(t):=\int_0^t G(r)_+\dd r,
 \qquad
 N(t):=\int_0^t G(r)_-\dd r\le0,
 \label{eq:signed-cumulative-kernel}
\end{equation}
where $G_-:=\min(G,0)$.  Center only the readout,
\begin{equation}
 k(z):=h(z)-h(0),\qquad Z_h:=k^{-1}(0).
 \label{eq:centered-readout-zero-set}
\end{equation}
The source itself is not centered: a zero rate continues to generate
$f(0)$.

\begin{theorem}[Vertical-reachability obstruction]
\label{thm:vertical-reachability}
Suppose the mixed model
\begin{equation*}
 D(t)=\int_0^tG(t-s)f(v(s))\dd s,
 \qquad
 \cost[v]=\int_0^T v(t)h(D(t))\dd t,
\end{equation*}
is safe for every finite piecewise-constant round trip with
$|v|\le B$.  Then, for every $0<t<T$,
\begin{equation}
 [\nu_-P(t)+\nu_+N(t),\;
  \nu_+P(t)+\nu_-N(t)]\subset Z_h.
 \label{eq:vertical-reachable-interval}
\end{equation}
The interval is the limiting state range of ordinary finite pumps
whose volume is zero on every cell of a partition with vanishing mesh.
\end{theorem}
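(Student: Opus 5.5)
The plan is to argue by contradiction. Suppose some $z$ in the interval \eqref{eq:vertical-reachable-interval} has $k(z)\ne0$ for some $0<t<T$. I would build a finite capped round trip of the form used in \Cref{prop:normalized}: a thin baseline trade near time $0$, a zero-volume pump that steers the state to $z$ at time $t$, and a thin opposite baseline trade at $t$ that reads $k(z)$. Since $\int v=0$, we have $\cost[v]=\int_0^T v\,k(D)\dd t$, so the constant $h(0)$ drops out. Only the centered readout matters, and $k(0)=0$.

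\emph{Step 1: the reachable interval.} Let $\mathcal Q$ be the set of measurable $q:(0,t)\to V_f$. The set $\{\int_0^tG(t-s)q(s)\dd s:q\in\mathcal Q\}$ is convex, because $V_f=[\nu_-,\nu_+]$ is convex. Its extremes are attained by bang--bang choices. Taking $q=\nu_+$ where $G(t-s)>0$ and $q=\nu_-$ where $G(t-s)<0$ gives $\nu_+P(t)+\nu_-N(t)$, and the reverse choice gives $\nu_-P(t)+\nu_+N(t)$. So every $z$ in the interval equals $\int_0^tG(t-s)q(s)\dd s$ for some $q\in\mathcal Q$. By $L^1$-continuity of $q\mapsto G*q$ at the single time $t$, I may take $q$ to be a step function supported in $(\varepsilon,t)$; off the support I use $q=f(0)\in V_f$, which is the source value at zero rate.

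\emph{Step 2: realizing $q$ by pumps.} On each constancy interval of $q$, \Cref{lem:vertical-section} gives finite zero-barycenter graph mixtures with rates in $[-B,B]$ whose mean impact is within any $\eta$ of $q$. I subdivide into $M$ cells and, as in \Cref{lem:pump-average}, play these mixtures in duty-cycle fashion. Every cell then has exactly zero volume, $u_M\weakstar0$, and $f(u_M)\weakstar q_\eta$ with $\|q_\eta-q\|_\infty\le\eta$. Because $G\in L^1$, the sequence $D_M=G*f(u_M)$ is equicontinuous in $L^\infty$ (translation continuity of $G$) and converges pointwise. Hence $D_M\to G*q_\eta$ uniformly on $[0,T]$, and the limits of these pumps fill the interval, as the theorem states. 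Uniform convergence of $D_M$ together with $u_M\weakstar0$ gives $\int u_M\,k(D_M)\dd t\to0$, so the pump's self-cost vanishes; this is easier than \Cref{lem:tensor} because only one factor oscillates.

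\emph{Step 3: the readout.} Add $\sigma a\one_{(\varepsilon',\varepsilon'+\tau)}$ with $\varepsilon'+\tau<\varepsilon$, and $-\sigma a\one_{(t,t+\tau)}$, where $0<a\le B$ and $\sigma=\operatorname{sgn}k(z)$. Their extra contribution to the state is $|f(\pm\sigma a)-f(0)|$ times $\int_I|G|$ over an interval of length $\tau$. This is $o(1)$ uniformly as $\tau\to0$, by absolute continuity of the integral; note that $f$ is bounded. By continuity of $k$ and of the limit state $G*q_\eta$, the cost of the trade at $t$ is $-\sigma a\tau\,k(z)+o(\tau)$, up to $O(\eta)$ and $o_M(1)$ errors. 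The early trade sits where the state is $G*(f(0)\one)$ on a short initial interval, which tends to $0$ as $\varepsilon\to0$, so its cost is $o(\tau)$.

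The order of choices is $\eta$ and $\varepsilon$ small, then $\tau$, then $M$ large. This yields a finite capped round trip with negative cost, contradicting safety.

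The main obstacle is this bookkeeping. The early baseline must sit where the state is genuinely near $0$, even though $f(0)\ne0$ keeps feeding the state. Also, the $o(\tau)$ errors must beat the $\tau|k(z)|$ signal uniformly, and this requires choosing $\varepsilon$ and $\eta$ before $\tau$. If $f(0)$ spoils the early readout, I would instead place the early trade at a time $t_0$ with $k(D(t_0))$ of opposite sign or zero, using continuity of $D$ and a small pump-free window.
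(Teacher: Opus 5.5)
Your proof of the inclusion \eqref{eq:vertical-reachable-interval} is sound and follows the paper's route. You realize a $V_f$-valued step source by cellwise zero-volume chattering via \Cref{lem:vertical-section}. You place an early thin trade where the state is at most $\|f\|_\infty\int_0^\varepsilon|G|$ and a late thin trade at $t$ that reads $k(z)$, and you choose the pump count last. Your variations are harmless. Translation continuity of $G$ gives uniform convergence of $D_M$ in place of the paper's finite-rank $L^1$ approximation. Arguing by contradiction from a measurable bang--bang selection plus step approximation replaces the paper's density argument together with closedness of $Z_h$. Your closing fallback is unnecessary, since the bound $\|f\|_\infty\int_0^\varepsilon|G|$ already controls the early state even when $f(0)\ne0$.

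What is missing is the second sentence of the theorem. It says the interval \emph{is} the limiting state range of vanishing-mesh, cellwise zero-volume pumps. So you must also show that no such pump has a limiting state outside the interval. You only show that the pumps you build fill it. The missing idea is the paper's convex-separation step, which has three parts.
\begin{enumerate}[label=(\alph*)]
\item Zero volume on every cell of partitions with mesh tending to zero gives $v_n\weakstar0$, by the same cell-average cancellation as in \Cref{lem:pump-average}.
\item Pointwise one has $(v_n(s),f(v_n(s)))\in K_f$. Each supporting half-plane inequality $\alpha v_n+\beta f(v_n)\le c$ passes to weak-star limits after testing against nonnegative $L^1$ functions, and countably many half-planes cut out the compact convex set $K_f$. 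Hence any weak-star limit $w$ of $f(v_n)$ satisfies $(0,w(s))\in K_f$, that is $w(s)\in V_f$, almost everywhere.
\item Pairing with $G(t-\cdot)\in L^1(0,t)$, and using $G_+w\le\nu_+G_+$ and $G_-w\le\nu_-G_-$, puts every limit point of $D_n(t)$ in $[\nu_-P(t)+\nu_+N(t),\,\nu_+P(t)+\nu_-N(t)]$.
\end{enumerate}
The cellwise constraint is essential here. With only total volume zero, the limit of $v_n$ need not vanish, and the source limit can leave $V_f$.
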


The proof is given in \Cref{app:friction-details}.

\begin{corollary}[Divergent memory removes bounded dead zones]
\label{cor:divergent-memory-deadzone}
Suppose $G\ge0$ and
$A(t):=\int_0^tG(r)\dd r\to\infty$.  If safety holds on every horizon and
$Z_h$ is bounded, then $V_f=\{0\}$ and
\begin{equation}
 f(x)=\lambda x,\qquad -B\le x\le B,
 \label{eq:capped-local-linearity}
\end{equation}
for one $\lambda\in\R$.  In particular, a bounded dead zone cannot hide a
nonlinear $f$ under fractional memory $G(t)=t^{-\gamma}$,
$0<\gamma<1$.
\end{corollary}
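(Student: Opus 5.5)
The plan is to apply \Cref{thm:vertical-reachability} on every horizon and use divergence of the memory to force the vertical section to collapse. Since $G\ge0$, we have $N\equiv0$ and $P=A$. The reachable interval in \eqref{eq:vertical-reachable-interval} is therefore $[\nu_-A(t),\nu_+A(t)]\subset Z_h$ for every $0<t<T$, and since safety holds on every horizon, this inclusion holds for all $t>0$. If $\nu_+>0$, the right endpoint $\nu_+A(t)$ tends to $+\infty$ as $t\to\infty$. By continuity of $A$ and $A(0)=0$, the connected set $\{\nu_+A(t)\}$ then contains all of $[0,\infty)$, which contradicts boundedness of $Z_h$. The symmetric argument gives $\nu_-\ge0$. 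Since $V_f=[\nu_-,\nu_+]$ is a nonempty closed interval, it is $\{0\}$ (nonempty because $f(0)\in V_f$).

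Next comes the translation of $V_f=\{0\}$ into linearity, using \Cref{lem:vertical-section}. It gives $f(0)=0$, and $m_f(x,y)=0$ for all $-B\le x<0<y\le B$. The latter says $yf(x)=xf(y)$, that is, $f(x)/x=f(y)/y$ for every negative $x$ and positive $y$ in the cap. This is exactly the algebra of \Cref{lem:two-rate}. Fixing one point on each side shows that $f(x)/x$ is a single constant $\lambda$ on $[-B,B]\setminus\{0\}$, and with $f(0)=0$ this gives \eqref{eq:capped-local-linearity}.

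For the fractional example, $G(t)=t^{-\gamma}\ge0$ with $A(t)=t^{1-\gamma}/(1-\gamma)\to\infty$, so the corollary applies directly. The hypotheses on $f$ (bounded and finite-valued on the cap) match those of the theorem, so no further regularity is needed.

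The main subtlety is the quantifier on horizons. \Cref{thm:vertical-reachability} is stated for a fixed $T$, and $f$, $h$ are common to all horizons. So I must check that safety on each horizon $T$ yields the inclusion for all $t<T$ with the same $\nu_\pm$. This holds because $K_f$ depends only on $f$ and $B$. Letting $T\to\infty$ then uses $A(t)\to\infty$ to sweep the whole half-line, with nothing more than the intermediate value theorem.
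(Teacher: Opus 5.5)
Your proof is correct and follows essentially the same route as the paper: with $G\ge0$ the reachable interval of \Cref{thm:vertical-reachability} becomes $A(t)V_f\subset Z_h$, divergence of $A$ rules out any nonzero point of $V_f$, and the section formula \eqref{eq:vertical-section-formula} then yields $f(0)=0$ and $yf(x)=xf(y)$, hence one slope on $[-B,B]$. Your remark that the inclusion holds for every $t>0$ because $V_f$ depends only on $f$ and $B$ makes explicit the horizon bookkeeping that the paper leaves implicit.
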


\begin{proof}
\Cref{thm:vertical-reachability} gives $A(t)V_f\subset Z_h$.  A
nonzero point of $V_f$ would generate an unbounded ray, contradicting the
boundedness of $Z_h$.  Hence $V_f=\{0\}$.  Formula
\eqref{eq:vertical-section-formula} then gives $f(0)=0$ and
$yf(x)=xf(y)$ for every $x<0<y$.  Fixing one rate of each sign shows that
$f(x)/x$ is the same constant at every nonzero rate.
\end{proof}

The bounded-zero-set hypothesis is sharp in kind.  If $G\ge0$, $f\ge0$,
and $[0,\infty)\subset Z_h$, then every attainable state lies in $Z_h$ and
every round-trip cost is exactly zero.  Universal readout quantifiers in the
earlier rigidity theorems rule out precisely this kind of fixed-readout
blindness.

\begin{remark}[Dead zones and minimum price increments]\label{rem:deadzone-spread}
A readout that vanishes on an interval around zero is the natural model of
a minimum price increment seen from the impact state: small accumulated
impact produces no observable move.
\Cref{cor:divergent-memory-deadzone} says that under power-law memory such
a band, however wide, cannot shelter a concave pre-memory law: the pump
accumulates state without bound and eventually reaches the responsive
region.  Only a readout that is flat on a whole half-line, which is no
longer a response, can do so.
\end{remark}

\subsection{Which collections of power penalties can repair the model?}
\label{subsec:multipower-repair}

Let
\begin{equation}
 \Psi(x)=\sum_{j=1}^k a_j|x|^{p_j},
 \qquad a_j>0,\quad p_j>0,
 \label{eq:multipower-penalty}
\end{equation}
and multiply the entire penalty by one coefficient $\kappa$.  Write
$p_{\min}:=\min_jp_j$ and $p_{\max}:=\max_jp_j$.  For a kernel $G$ and
the power law $f_\delta(x)=\operatorname{sgn}(x)|x|^\delta$ of
\Cref{cor:all-powers}, $\cost_\delta[v]$ denotes the rate-inside cost
\eqref{eq:rate-inside-cost} with $H(t,s)=G(t-s)$ and $f=f_\delta$; the
same convention defines $\cost_1$ and $\cost_2$.

\begin{theorem}[Exponent test for finite repair]
\label{thm:multipower-repair}
Fix $T>0$, a nonzero real kernel $G\in L^1(0,T)$, and
$\delta>0$, $\delta\ne1$.  Under a common cap $|v|\le B$, a finite
coefficient $\kappa\ge0$, allowed to depend on $G,T,B,\delta,\Psi$, can make
\begin{equation*}
 \cost_\delta[v]+\kappa\int_0^T\Psi(v(t))\dd t
\end{equation*}
nonnegative on every finite round trip if and only if
\begin{equation}
 p_{\min}\le q=1+\delta.
 \label{eq:capped-repair-phase}
\end{equation}
Without a common cap, such a finite coefficient exists if and only if
\begin{equation}
 p_{\min}\le q\le p_{\max}.
 \label{eq:uncapped-repair-phase}
\end{equation}
If instead $0\ne G\in L^1(0,\infty)$, the same criteria characterize a
single finite coefficient valid on every finite horizon.
\end{theorem}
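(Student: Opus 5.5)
The plan is to compare the homogeneity degree $q=1+\delta$ of the impact cost with the degrees $p_j$ of the penalty. On the positive side I would use a single Young-type estimate that bounds $|\cost_\delta|$ by $\int|v|^q$. On the negative side I would use amplitude scaling of one fixed manipulation supplied by \Cref{thm:exact}.

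\emph{Upper bound.} I would first prove that for every bounded $v$ on $[0,T]$
\[
 |\cost_\delta[v]|\le\|G\|_{L^1(0,T)}\int_0^T|v(t)|^{q}\dd t .
\]
Extend $G$ and $v$ by zero and write $u=f_\delta(v)$. H\"older with exponents $q$ and $q'=q/\delta$ gives $|\int v\,(G*u)|\le\|v\|_q\|G*u\|_{q'}$. Young's inequality then gives $\|G*u\|_{q'}\le\|G\|_1\|u\|_{q'}$. Since $|u|^{q'}=|v|^{\delta q/\delta}=|v|^q$, the product of the two norms is exactly $\int|v|^q$.

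\emph{Sufficiency under a cap.} If $p_{\min}\le q$ and $|x|\le B$, then $|x|^q\le B^{q-p_{\min}}|x|^{p_{\min}}\le B^{q-p_{\min}}a_{\min}^{-1}\Psi(x)$, where $a_{\min}$ is the coefficient attached to a smallest exponent. Hence $\kappa=\|G\|_1B^{q-p_{\min}}/a_{\min}$ makes the penalized cost nonnegative on every capped input, not only on round trips.

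\emph{Sufficiency without a cap.} If $p_{\min}\le q\le p_{\max}$, the pointwise bound $|x|^q\le|x|^{p_{\min}}+|x|^{p_{\max}}$ (split at $|x|=1$) gives $|x|^q\le c\,\Psi(x)$ with a constant $c$ depending only on the coefficients. The same estimate then applies with $\kappa=c\|G\|_1$.

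\emph{Infinite horizon.} For $0\ne G\in L^1(0,\infty)$ I would replace $\|G\|_{L^1(0,T)}$ by $\|G\|_{L^1(0,\infty)}$, which gives one coefficient valid on every horizon.

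\emph{Necessity.} Because $G\ne0$ and $f_\delta$ is not linear on any $[-B,B]$ when $\delta\ne1$, \Cref{cor:convolution} together with \Cref{thm:cap} and \Cref{prop:normalized} supply a finite round trip $w$ with $\|w\|_\infty\le B$ and $\cost_\delta[w]<0$. The cost is homogeneous, $\cost_\delta[aw]=a^q\cost_\delta[w]$, while $\int\Psi(aw)=\sum_j a_j a^{p_j}\int|w|^{p_j}$.
\begin{itemize}
\item Suppose $p_{\min}>q$. As $a\downarrow0$ the penalty is $O(a^{p_{\min}})=o(a^q)$, so $aw$ stays inside the cap and has negative penalized cost for every finite $\kappa$. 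This proves necessity of \eqref{eq:capped-repair-phase}, and also of the left inequality in \eqref{eq:uncapped-repair-phase}.
\item Suppose there is no cap and $p_{\max}<q$. Letting $a\to\infty$ makes the penalty $O(a^{p_{\max}})=o(a^q)$, so again no finite $\kappa$ works.
\end{itemize}
For the all-horizon statement, it suffices to take $w$ on any one horizon on which $G$ is nonzero; the same scaled witness defeats any coefficient that is supposed to work on every horizon.

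\emph{Main obstacle.} I expect the only delicate point to be the exponent bookkeeping in the H\"older--Young step. It must produce exactly $\int|v|^{1+\delta}$, with no loss from the sign of $G$ or a singularity at zero; only $G\in L^1$ is used. The existence of the negative witness is entirely inherited from the rigidity theorem.
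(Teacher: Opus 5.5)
Your proposal is correct and follows the paper's proof. Sufficiency comes from the estimate $|\cost_\delta[v]|\le\|G\|_{L^1(0,T)}\int_0^T|v|^{q}\dd t$ together with boundedness of $|x|^q/\Psi(x)$ on the relevant range. Necessity comes from scaling the capped rigidity witness toward zero or toward infinity, with one nonzero horizon sufficing in the all-horizon case. The only addition is that you derive the convolution estimate by H\"older and Young with exponents $q$ and $q/\delta$, where the paper calls it generic and gives no derivation.
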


\begin{proof}
On the cap, the ratio $x^q/\Psi(x)$ is bounded on $(0,B]$ exactly when
$p_{\min}\le q$.  Then the generic convolution estimate
\begin{equation}
 |\cost_\delta[v]|
 \le\|G\|_{L^1(0,T)}\int_0^T|v|^q\dd t
\end{equation}
gives a finite repair with
$\kappa=\|G\|_{L^1(0,T)}\sup_{0<x\le B}x^q/\Psi(x)$.
The capped rigidity theorem supplies a negative finite round trip because
$G\ne0$ and $f_\delta$ is nonlinear.  If every $p_j>q$, scale this witness
toward zero: its negative cost is of order $a^q$, while the penalty is
$o(a^q)$.

Without a cap, boundedness of $x^q/\Psi(x)$ on $(0,\infty)$ is equivalent to
\eqref{eq:uncapped-repair-phase}.  The same small-amplitude argument proves
necessity at zero and scaling the witness to infinity proves necessity at
infinity.  When the exponents bracket $q$, the ratio is bounded, either
because a $q$-power is present or because it tends to zero at both endpoints.
For the all-horizon version, use $\|G\|_{L^1(0,\infty)}$ in the same bound;
necessity uses any one horizon on which $G$ is nonzero.
\end{proof}

\begin{remark}[Spreads, temporary costs, and concave impact]\label{rem:spread-repair}
A proportional bid--ask spread charges $\kappa\int|v|$, the case
$\Psi(x)=|x|$, $p_{\min}=p_{\max}=1$.  Since $q=1+\delta>1$,
\Cref{thm:multipower-repair} says that a spread repairs the rate-inside
model inside a rate cap $B$, with a coefficient that grows like
$\|G\|_1B^\delta$, but never without a cap: scaling a manipulation up by
the factor $a$ multiplies its negative cost by $a^{1+\delta}$ and the
spread by $a$.  A spread thus converts manipulation into a size threshold
rather than removing it, which is the sense in which the frictionless
theorems are weakened, not overturned, by slippage.  A quadratic temporary
cost alone, $p_{\min}=2$, does not repair concave impact even inside a cap:
a chattering manipulation of amplitude $a$ costs $-ca^{1+\delta}$ and is
penalized by $O(a^2)$, so it survives at small amplitude whenever
$\delta<1$.  A spread together with a quadratic cost, $p_{\min}=1$ and
$p_{\max}=2$, repairs every concave law $0<\delta\le1$ without a cap, and
this pair is the natural friction structure: a spread controls small
trades and a quadratic cost controls large ones.
\end{remark}

\subsection{Fractional memory: the two-block phase and a complexity gap}
\label{subsec:fractional-complexity}

Take $G(t)=t^{-\gamma}$ with $0<\gamma<1$ and let
$q_\gamma:=2-\gamma$.  For two adjacent blocks, normalize the larger rate
magnitude to one and the smaller to $r\in(0,1)$.  Zero volume fixes their
durations to $r/(1+r)$ and $1/(1+r)$.  Put
\begin{equation}
 A_q(r):=(1+r)^q-1-r^q.
 \label{eq:fractional-Aq}
\end{equation}
After removal of a strictly positive factor, the adverse-order cost has sign
equal to
\begin{equation}
 S_{q,\delta}(r)
 =r^q+r^{\delta+1}
 -r^{\min(1,\delta)}A_q(r),
 \qquad q=2-\gamma.
 \label{eq:fractional-two-block-sign}
\end{equation}

\citet[Lemma~5.1]{Gatheral2010} derives the slow-rate condition
$\delta+\gamma\ge1$ for the homogeneous power-law family from the limit
$r\to0$.  His Lemma~5.2 treats a different impact law that blows up at a
maximal rate and obtains $\gamma\ge2-\log_23$.  The same constant enters the
fixed-power-law problem through Gatheral's Appendix~A: in the notation below,
his function $h(r,\gamma)$ equals $-(A_q(r)-r^{q-1})$ and is negative for
some $r\in(0,1)$ exactly when $\gamma<2-\log_23$.  The theorem determines
the full two-block phase for the fixed power-law family: it makes the
slow-rate boundary sharp on the sublinear side and locates the distinct
superlinear threshold.

\begin{theorem}[Fractional two-block phase]
\label{thm:fractional-two-block-phase}
On the sublinear side, a two-block manipulation exists if and only if
\begin{equation}
 \delta<1-\gamma.
 \label{eq:fractional-sublinear-boundary}
\end{equation}
On the superlinear side, define
\begin{equation}
 \delta_+(\gamma):=
 \inf_{\substack{0<r<1\\ A_q(r)>r^{q-1}}}
 \frac{\log(A_q(r)-r^{q-1})}{\log r},
 \qquad q=2-\gamma,
 \label{eq:fractional-superlinear-threshold}
\end{equation}
with the infimum of the empty set equal to $+\infty$.  A two-block
manipulation exists if and only if
\begin{equation}
 \delta>\delta_+(\gamma).
 \label{eq:fractional-superlinear-phase}
\end{equation}
Moreover,
\begin{equation}
 \delta_+(\gamma)=+\infty
 \quad\Longleftrightarrow\quad
 \gamma\ge2-\log_2 3.
 \label{eq:infinite-superlinear-threshold}
\end{equation}
Consequently, throughout
$1-\gamma\le\delta\le\delta_+(\gamma)$, $\delta\ne1$, every manipulating
finite strategy has at least three blocks.
\end{theorem}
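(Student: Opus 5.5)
The plan is a direct computation. Both boundaries reduce to elementary inequalities for $A_q$, using the exact block primitive \eqref{eq:comb-kernel-primitive} and the homogeneity $C[av(\cdot/T)]=a^{1+\delta}T^{2-\gamma}C[v]$ recorded in \Cref{sec:comb}.

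\emph{Reduction to $S_{q,\delta}$.} I take a two-block round trip to consist of two adjacent nonzero blocks of opposite sign. By homogeneity I normalize the larger magnitude to one, the smaller to $r\in(0,1]$, and rescale time so that the unit-rate block has length $r$ and the rate-$r$ block has length $1$. With $K(\tau)\propto\tau^{q}$, a block of rate $u$ and length $\tau$ has self-cost proportional to $uf_\delta(u)\tau^q$. The cross term of the later block against the earlier one is proportional to $-|u_2|\,|f_\delta(u_1)|\,A_q$ at total span $1+r$. Factoring out the positive constant gives the two orders:
\[
r^q+r^{1+\delta}-r\,A_q(r)\qquad\text{and}\qquad r^q+r^{1+\delta}-r^{\delta}A_q(r).
\]
The first is the unit block first, the second the small block first. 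The subtracted term is larger in the order using $r^{\min(1,\delta)}$, so only \eqref{eq:fractional-two-block-sign} matters. At $r=1$ the sign is $4-2^q>0$, because $q<2$. Manipulation therefore exists iff $S_{q,\delta}(r)<0$ for some $r\in(0,1)$.

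\emph{Sublinear side.} For $\delta<1$, dividing by $r^\delta$ shows that negativity means $r^{q-\delta}+r+r^q<(1+r)^q-1$.

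\begin{itemize}
\item \emph{Sufficiency of $\delta<1-\gamma$.} As $r\to0$, $(1+r)^q-1=qr+O(r^2)$ with $q>1$. The right side minus $r$ is therefore about $(q-1)r$. If $q-\delta>1$, i.e.\ $\delta<1-\gamma$, the term $r^{q-\delta}$ is $o(r)$ and small $r$ gives a manipulation.
\item \emph{Necessity.} Conversely, if $q-\delta\le1$ then $r^{q-\delta}\ge r$, so it suffices to prove $(1+r)^q\le1+2r+r^q$. I will show the sharper bound $(1+r)^q\le1+qr+r^q$ for $1<q\le2$. Put $g(r)=1+qr+r^q-(1+r)^q$. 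Then $g(0)=0$ and $g'(r)=q[1+r^{q-1}-(1+r)^{q-1}]\ge0$, by subadditivity of $x\mapsto x^{q-1}$ with $q-1\in(0,1]$.
\end{itemize}

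\emph{Superlinear side.} For $\delta>1$, dividing by $r$ shows that negativity means $r^\delta<A_q(r)-r^{q-1}$. This requires the right side to be positive. Because $\log r<0$, the condition is then equivalent to $\delta>\log(A_q(r)-r^{q-1})/\log r$. A manipulation exists iff some admissible $r$ gives a ratio strictly below $\delta$, which is exactly $\delta>\delta_+(\gamma)$. At $\delta=\delta_+$ no ratio is strictly smaller, whether or not the infimum is attained.

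\emph{Infinite threshold.} For \eqref{eq:infinite-superlinear-threshold}, emptiness of the admissible set means $A_q(r)\le r^{q-1}$ on $(0,1)$. At $r=1$ this reads $2^q-2\le1$, i.e.\ $q\le\log_23$, i.e.\ $\gamma\ge2-\log_23$. If $\gamma<2-\log_23$, strict failure at $r=1$ and continuity give an admissible $r<1$, so $\delta_+<\infty$.

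The converse is the step I expect to be the main obstacle: showing that $(1+x)^q\le1+x+x^q$ for all $x=1/r\ge1$ whenever $q\le\log_23$, where $x=1$ is the equality case at $q=\log_23$. I would first try monotonicity in $q$. For fixed $x\ge1$, the set of $q\in[1,2]$ where $\psi_x(q)=(1+x)^q-x^q-x-1\le0$ should be an initial interval $[1,q_x]$, since $\psi_x(1)=0$ and $\psi_x$ should cross zero once. It would then remain to show that $q_x$ is minimized at $x=1$, by differentiating the crossing condition in $x$. If that becomes messy, I would fall back on the equivalent statement in \citet[Appendix~A]{Gatheral2010}, since his $h(r,\gamma)=-(A_q(r)-r^{q-1})$.

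\emph{Final assertion.} A manipulation with one block is impossible on a round trip, and by the above two-block manipulations fail exactly on $1-\gamma\le\delta\le\delta_+(\gamma)$. On that set, any manipulating finite strategy, which exists by \Cref{cor:all-powers} when $\delta\ne1$, therefore has at least three blocks.
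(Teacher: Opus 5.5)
Your reduction to $S_{q,\delta}$, the sublinear boundary, the logarithmic form of the superlinear condition, and the three-block consequence are correct and follow the paper. Your bound $A_q(r)\le qr<2r$ is the paper's subadditivity argument, slightly sharpened. You are also right that attainment of the infimum is not needed for the superlinear phase. The gap is the direction $\gamma\ge2-\log_23\Rightarrow\delta_+(\gamma)=+\infty$ of \eqref{eq:infinite-superlinear-threshold}. You leave it as a tentative plan plus a fallback citation, and the plan does not avoid the real lemma. Two remarks on the plan itself:
\begin{itemize}
\item The value $\psi_x(1)$ is $-x$, not $0$. This is harmless, and single crossing in $q$ does hold because $\partial_q[(1+x)^q-x^q]>0$.
\item The implicit-differentiation step needs $\partial_xF\le0$ at the crossing, where $F(x,q)=(1+x)^q-1-x-x^q$. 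That means $q[(1+x)^{q-1}-x^{q-1}]<1$ there. With $G(x)=((1+x)^q-1-x^q)/x$ one has $xG'(x)=q[(1+x)^{q-1}-x^{q-1}]-G(x)$, and $G=1$ at the crossing. So this is exactly $G'<0$, the monotonicity you have not proved.
\end{itemize}

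The missing idea is that, at fixed $q$, $H_q(r)=A_q(r)/r^{q-1}$ is strictly increasing on $(0,1]$. To see this, write
\[
A_q(r)=q\int_0^1\bigl[(t+r)^{q-1}-t^{q-1}\bigr]\dd t,
\qquad\text{so}\qquad
H_q(r)=q\int_0^1\psi(t/r)\dd t,
\]
with $\psi(u)=(1+u)^{q-1}-u^{q-1}$. Since $\psi$ is strictly decreasing for $1<q<2$, increasing $r$ increases $\psi(t/r)$. Therefore $H_q(r)<H_q(1)=2^q-2\le1$ for every $r\in(0,1)$ whenever $q\le\log_23$. So $A_q(r)<r^{q-1}$ and the admissible set in \eqref{eq:fractional-superlinear-threshold} is empty.

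In your variable $x=1/r$, this says $((1+x)^q-1-x^q)/x$ decreases on $[1,\infty)$. Checking $x=1$ therefore suffices, with no monotonicity in $q$ or minimization of $q_x$. Your fallback to Gatheral's Appendix~A would also supply the fact, since his function is $-(A_q(r)-r^{q-1})$. As written, however, the proposal does not establish this direction.
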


\begin{proof}
If $\delta<1$, divide \eqref{eq:fractional-two-block-sign} by $r^\delta$.
The sign is that of
\begin{equation}
 r^{q-\delta}+r-A_q(r).
 \label{eq:sublinear-two-block-sign}
\end{equation}
For $1<q<2$, one has $A_q(r)<2r$: differentiate
$1+r^q+2r-(1+r)^q$ and use subadditivity of $x^{q-1}$.  If
$\delta\ge q-1$, then $r^{q-\delta}\ge r$, so
\eqref{eq:sublinear-two-block-sign} is positive.  If $\delta<q-1$, the
expansion $A_q(r)=qr-r^q+O(r^2)$ makes it negative for small $r$.  This
proves \eqref{eq:fractional-sublinear-boundary}.

If $\delta>1$, division by $r$ shows that negativity is equivalent to
$r^\delta<A_q(r)-r^{q-1}$, which is
\eqref{eq:fractional-superlinear-threshold}--
\eqref{eq:fractional-superlinear-phase}.  Finally,
$H_q(r):=A_q(r)/r^{q-1}$ is strictly increasing: writing
$\psi(u)=(1+u)^{q-1}-u^{q-1}$, which is strictly decreasing, gives
$H_q(r)=q\int_0^1\psi(t/r)\dd t$.  Also $H_q(1)=2^q-2$.  Hence the admissible set in
\eqref{eq:fractional-superlinear-threshold} is nonempty exactly when
$2^q-2>1$, equivalent to the strict reverse of
\eqref{eq:infinite-superlinear-threshold}.  When the set is nonempty, the quotient in
\eqref{eq:fractional-superlinear-threshold} diverges at both endpoints of
its domain, so its minimum is attained.  Strict positivity of the linear
fractional cost (\Cref{prop:fractional-positive}) then gives
$\delta_+(\gamma)>1$.  Rate-inside rigidity supplies some finite
negative witness whenever $\delta\ne1$, so absence of a two-block witness
forces at least three blocks.
\end{proof}

The last statement is a complexity lower bound that requires no dwell-time
assumption.  When the number of blocks and their minimum durations are
both fixed, a quantitative safe neighborhood of $\delta=1$ exists
(\Cref{prop:dwell-safe-neighborhood} in \Cref{app:friction-details}).
Without a positive dwell, block lengths and smaller rate magnitudes can
collapse, and compactness alone gives no cap-only lower bound.  There is,
however, a quantitative upper bound from an explicit pump construction.

\begin{proposition}[Constructive near-linear block bound]
\label{prop:constructive-block-bound}
Fix $0<\gamma<1$.  For all $\delta\ne1$ sufficiently close to one, fractional
memory admits a capped finite round trip with negative cost and at
most
\begin{equation}
 C_\gamma|\delta-1|^{-2/(1-\gamma)}
 \label{eq:constructive-block-bound}
\end{equation}
blocks.  The same block bound holds on any prescribed positive horizon and
under any prescribed positive symmetric rate cap.
\end{proposition}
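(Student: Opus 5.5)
We make the pump of \Cref{prop:normalized} quantitative for $H_\gamma$ and $f_\delta$, with $\varepsilon=|\delta-1|$ small. By the homogeneity $C[av(\cdot/T)]=a^{1+\delta}T^{2-\gamma}C[v]$ the sign is invariant under amplitude and time scaling. It therefore suffices to build one witness on $[0,1]$ with rates in $[-1,1]$ and a block count of the stated order; horizon and cap then come for free.

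Fix $x=1$ and $y=-1/2$, so that $\theta=1/3$. The pump moment is
\[
n_0=\tfrac13-\tfrac23\,2^{-\delta}.
\]
It vanishes at $\delta=1$ and has derivative $\tfrac23\,2^{-1}\log 2\ne0$ there, so $|n_0|\asymp\varepsilon$. We take the witness
\[
v=\sigma\one_{[0,1/4]}+u_M\one_{[1/4,3/4]}-\sigma\one_{[3/4,1]}.
\]
The baselines are fixed unit blocks, not thin ones: for $H_\gamma$ we do not need thinness, because we can subtract the $\delta=1$ cost. Write $C_\delta[v]=C_1[v]+(C_\delta-C_1)[v]$. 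Since $|f_\delta(x)-x|=O(\varepsilon)$ uniformly on $[-1,1]$, all baseline–baseline and baseline–pump terms differ from their linear counterparts by $O(\varepsilon)$. This naive split fails, however, because the linear part $C_1$ is positive of order one. The scales must therefore be arranged so that the $n_0$-interaction dominates something that is itself $O(\varepsilon)$.

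The plan that works is to keep the baseline amplitude $a$ as a free parameter. The pump–pump self-cost is linear in pump amplitude squared and does not involve $a$. The baseline self-cost and the baseline cross term are $O(a^2)$, with constants depending on $\gamma$ only. The readout term is $-\sigma a\,n_0 J(P,I_+)$, which is linear in $a$ and of size $a\varepsilon$. The baseline–pump linear interactions vanish in the averaged limit because $u_M\weakstar0$. Choosing $a=c\varepsilon$ with $c$ small makes the $O(a^2)$ baseline terms beat the $a\varepsilon$ readout term in the right direction, leaving a limiting cost of order $-\varepsilon^2$. The rates stay in $[-1,1]$.

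The main work, and the expected obstacle, is the quantitative rate at which $C[v_M]$ approaches its $M=\infty$ limit for the singular kernel. For a pump cell of length $\eta=1/(2M)$ we need two bounds.

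\begin{enumerate}[label=(\alph*)]
\item \emph{Pump self-cost.} Its deviation from the averaged value $0$ is $O(\eta^{1-\gamma})$. The near-diagonal contribution of $\int_{|t-s|<\eta}|t-s|^{-\gamma}$ is $O(\eta^{1-\gamma})$. The far-field contribution comes from cellwise cancellation of zero-mean oscillations against the smooth part of the kernel; this should give $O(\eta)$, with a logarithmic loss near the diagonal at worst.
\item \emph{Pump–baseline interactions.} These are of order $a\,\eta^{1-\gamma}$, via the same cancellation.
\end{enumerate}

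Both bounds would be proved directly using the primitive $K(r)$ in \eqref{eq:comb-kernel-primitive}: each block interaction is a second difference of $K$, so the cell sums telescope to Taylor remainders. Requiring the error $O(\eta^{1-\gamma})$ to be at most $c'\varepsilon^2$ gives $\eta\asymp\varepsilon^{2/(1-\gamma)}$. The number of blocks is then $2M+2=O(\varepsilon^{-2/(1-\gamma)})$, which is \eqref{eq:constructive-block-bound}.

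If the pump self-cost estimate only yields $O(\eta^{1-\gamma})$ multiplied by an $\varepsilon$-independent constant, the exponent above is exactly what results. I would first try to verify that constant through the explicit near-diagonal computation, since the whole exponent $2/(1-\gamma)$ hinges on it.
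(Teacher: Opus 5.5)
Your plan is essentially the paper's proof. It uses the same $1,-\tfrac12$ pump with duty $1/3$ on $P=[1/4,3/4]$, and baselines on $[0,1/4]$ and $[3/4,1]$ with amplitude $a\asymp|\delta-1|$, which gives a limiting margin of order $|\delta-1|^2$ (the baseline self-cost is really $a^{1+\delta}Q_{H_\gamma}[\one_{[0,1/4]}-\one_{[3/4,1]}]$, comparable to $a^2$ at this scale); a cellwise cancellation bound $O(h^{1-\gamma})$ uniform in $\delta$ then forces $M\gtrsim|\delta-1|^{-2/(1-\gamma)}$, and rescaling handles the horizon and cap.

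One correction to sub-estimate (a): the cells at distance between $h$ and $O(1)$ do not give $O(h)$ up to a logarithm. They contribute $h\int_h^\infty\gamma r^{-\gamma-1}\dd r=h^{1-\gamma}$, the same order as the near-diagonal part, so your total bound $O(h^{1-\gamma})$ and the exponent $2/(1-\gamma)$ are unaffected.
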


\begin{proof}
Normalize the horizon and cap to one and put $\alpha=1-\gamma$.
Let $I_-=[0,1/4]$, $P=[1/4,3/4]$, $I_+=[3/4,1]$, and
$b=\one_{I_-}-\one_{I_+}$.  On each of $M$ equal cells in $P$, repeat
rates $1,-1/2$ for fractions $1/3,2/3$ of the cell.  This pump $u_M$ has
zero volume on every cell and mean impact
\begin{equation*}
 \eta_\delta=\frac{1-2^{1-\delta}}3
 =\frac{\log2}{3}(\delta-1)+O((\delta-1)^2).
\end{equation*}
Write $\mathcal B(v,w)=\int_{s<t}(t-s)^{-\gamma}v(t)w(s)\dd s\dd t$.
For $v_M=\sigma a b+u_M$, $\sigma=\operatorname{sgn}\eta_\delta$, the
limiting rate/source pair is
$(\sigma ab,\sigma a^\delta b+\eta_\delta\one_P)$, with cost
\begin{equation*}
 A a^{1+\delta}-D|\eta_\delta|a,
 \qquad A=\mathcal B(b,b)>0,\quad
 D=\int_{I_+}\int_P(t-s)^{-\gamma}\dd s\dd t>0.
\end{equation*}
Choose $a=(D|\eta_\delta|/((1+\delta)A))^{1/\delta}\le1$.
The negative margin is
$\delta A a^{1+\delta}\ge c_\gamma|\delta-1|^2$ near one, since
$|\delta-1|^{1/\delta-1}\to1$.

We need a uniform recovery estimate.  If $w$ is bounded, supported on $P$,
and has zero mean on every pump cell of width $h=1/(2M)$, cancellation on
cells at distance at least $h$ from $t$, and a direct estimate on the at most
two remaining cells, give
\begin{equation}
 \left|\int_0^t(t-s)^{-\gamma}w(s)\dd s\right|
 \le C_\gamma\|w\|_\infty h^\alpha.
 \label{eq:fractional-cell-cancellation}
\end{equation}
Indeed, the near cells contribute at most
$\|w\|_\infty(2h)^\alpha/\alpha$; on the others, subtract a cellwise
constant kernel value and use
$h\int_h^\infty\gamma r^{-\gamma-1}\dd r=h^\alpha$.
Time reversal gives the same estimate for the upper-triangular integral.
Set $e_M=f_\delta(u_M)-\eta_\delta\one_P$, extended by zero off $P$.
Both $u_M,e_M$ have cell mean zero and uniformly bounded magnitudes.
Applying \eqref{eq:fractional-cell-cancellation} to
\begin{equation*}
 \mathcal B(v_M,e_M)
 +\mathcal B(u_M,\sigma a^\delta b+\eta_\delta\one_P)
\end{equation*}
therefore bounds the cost-recovery error by $C_\gamma M^{-\alpha}$,
uniformly for $\delta$ near one.  Taking
$M\ge C_\gamma|\delta-1|^{-2/\alpha}$ preserves half the negative margin.
There are $2M+2$ blocks and volume is exactly zero.  Finally,
$v(t)=Bv_M(t/T)$ multiplies cost by the positive factor
$B^{1+\delta}T^{2-\gamma}$ and leaves the block count unchanged.
\end{proof}

This is an upper bound for the displayed architecture, not a matching lower
bound on every possible strategy.

\subsection{Permanent memory: the critical cubic friction}
\label{subsec:permanent-friction}

Let
\begin{equation}
 f_\delta(x)=\operatorname{sgn}(x)|x|^\delta,
 \qquad q:=1+\delta>1,
 \label{eq:signed-power-law}
\end{equation}
and take the permanent kernel $G\equiv1$.  If
$X(t):=\int_0^t v(s)\dd s$ and $X(0)=X(T)=0$, then Fubini gives
\begin{equation}
 \cost_\delta[v]
 =-\int_0^T X(t)f_\delta(v(t))\dd t.
 \label{eq:permanent-inventory-cost}
\end{equation}
At the critical penalty exponent $p=q$, the smallest coefficient that makes
$\cost_\delta[v]+\kappa\int|v|^q$ nonnegative on every round trip is
therefore
\begin{equation}
 \mathcal K_q^{\rm RT}(T)
 :=\sup_{0\ne X\in W^{1,q}_0(0,T)}
 \frac{\displaystyle\int_0^T
 X|X'|^{q-2}X'\dd t}
 {\displaystyle\int_0^T|X'|^q\dd t}.
 \label{eq:critical-permanent-coefficient}
\end{equation}
This is a signed Volterra numerical-radius problem.  Absolute-value Opial
constants do not answer it; for example, the numerator vanishes identically
at $q=2$.

For later bounds define the scalar antisymmetry constant
\begin{equation}
 M_q:=\max_{0\le r\le1}
 \frac{|r^{q-1}-r|}{1+r^q}.
 \label{eq:antisymmetry-constant}
\end{equation}
The best pointwise estimate
\begin{equation}
 |x f_\delta(y)-y f_\delta(x)|
 \le M_q(|x|^q+|y|^q)
 \label{eq:pointwise-antisymmetry}
\end{equation}
follows by homogeneity and reduction to the ratio of the smaller to the
larger magnitude.  Antisymmetrizing the permanent cost yields
\begin{equation}
 [-\cost_\delta[v]]_+
 \le\frac{TM_q}{2}\int_0^T|v|^q\dd t.
 \label{eq:permanent-antisymmetry-bound}
\end{equation}
Thus a finite critical repair always exists.  At $q=3$ the critical
coefficient can be computed in closed form.

\begin{theorem}[Critical cubic repair]
\label{thm:exact-cubic-repair}
Define
\begin{align}
 x_*&:=\frac12-\frac{\sqrt3}{6}
       +\frac{\sqrt2\,3^{1/4}}6,
 &
 y_*&:=-\frac12+\frac{\sqrt3}{6}
       +\frac{\sqrt2\,3^{1/4}}6,\label{eq:xy-star}\\
 C_*&:=\frac23\left[
 \log\frac{x_*}{x_*-1/2}+\frac1{x_*}
 +\log\frac{y_*+1/2}{y_*}+\frac1{y_*}\right],
 &
 \kappa_*&:=C_*^{-1}.\label{eq:kappa-star}
\end{align}
Then
\begin{equation}
 \mathcal K_3^{\rm RT}(T)=T\kappa_*,
 \qquad
 \kappa_*=0.0881331300961391335081086746213\ldots.
 \label{eq:exact-cubic-constant}
\end{equation}
The supremum is attained by a bounded ordinary rate whose inventory has one
positive hump and whose rate changes sign once.  The value is unchanged by
any positive symmetric rate cap after amplitude scaling, and finite
piecewise-constant rates approach it.
\end{theorem}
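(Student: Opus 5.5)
By the scalings $X(t)\mapsto X(t/T)$ and $X\mapsto cX$, the quotient in \eqref{eq:critical-permanent-coefficient} is multiplied by $T$ and is unchanged, respectively. So $\mathcal K_3^{\rm RT}(T)=T\,\mathcal K_3^{\rm RT}(1)$, and amplitude invariance already gives the cap statement. It therefore suffices to prove $\mathcal K_3^{\rm RT}(1)=\kappa_*$. I would write $\kappa=\mathcal K_3^{\rm RT}(1)$ and use the equivalent formulation that $\kappa$ is the smallest constant with
\[
\sup_{X\in W^{1,3}_0(0,1)}\int_0^1\bigl(X\,X'|X'|-\kappa|X'|^3\bigr)\dd t=0.
\]

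\emph{Reduction to one hump.} The integrand is invariant under $(X,X')\mapsto(-X,-X')$, so a negative excursion of $X$ contributes exactly what the reflected positive excursion would. Next I split a competitor at the zeros of $X$ into excursions. Each excursion can be rescaled to the unit interval, and the numerator-to-denominator ratio scales linearly with its length. By the mediant inequality, the supremum is approached by a single positive excursion on the full interval. Within one excursion with peak height $h$, I would show that non-monotone up/down wiggles can be removed. Replacing $X$ by a rearrangement that is first nondecreasing and then nonincreasing, with the same level-set durations, should not decrease the ratio. This is the step I am least sure of. A fallback is to run the argument below with multivalued branches and observe that the pointwise optimisation never prefers them.

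\emph{Change of variables to level sets.} On the rising branch I parametrise by the level $x\in(0,h)$, with speed $a(x)=X'>0$. On the falling branch I use speed $b(x)=-X'>0$. Since $\dd t=\dd x/a$ (resp.\ $\dd x/b$):
\begin{itemize}
\item the numerator becomes $\int_0^h x\,(a-b)\dd x$;
\item the denominator becomes $\int_0^h(a^2+b^2)\dd x$;
\item the time constraint reads $\int_0^h(1/a+1/b)\dd x=1$.
\end{itemize}
The problem then decouples pointwise in $x$. With a multiplier $\mu>0$ for time, I maximise $x(a-b)-\kappa(a^2+b^2)-\mu(1/a+1/b)$ separately in $a$ and $b$. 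This gives the algebraic equations
\[
x-2\kappa a+\mu a^{-2}=0,\qquad -x-2\kappa b+\mu b^{-2}=0 .
\]
Homogeneity lets me normalise $\mu$ and $h$. The critical $\kappa$ is the value at which the optimal pointwise objective, integrated, vanishes while the time constraint is met. Both integrals are computed by changing variable from $x$ to $a$ (resp.\ $b$) through these cubic relations. This yields rational-plus-logarithm antiderivatives of the shape appearing in $C_*$. The endpoint values of the normalised speeds at $x=h$ should be exactly $x_*$ and $y_*$, whose quartic-radical form comes from matching the two branches at the peak.

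\emph{Main obstacle.} I expect the explicit integration, and the identification of the endpoint equation that produces $x_*,y_*$ together with $C_*=1/\kappa_*$, to be the main obstacle. I would first try the substitution $\theta=a\sqrt{2\kappa/\mu}$-type normalisation, so that the two branches become $x=\pm(2\kappa a-\mu a^{-2})$. Then $\dd x/a$ and $a^2\,\dd x$ integrate in closed form.

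\emph{Remaining details.} Sufficiency of the pointwise optimum follows because the objective is concave in $(a,b)$ for fixed $x$. That gives the upper bound. The constructed profile attains the bound; its inventory has one hump, and its rate is bounded and changes sign once. Finally, finite piecewise-constant rates approach the value by density of step functions in $L^3$ with zero integral, since both integrals are continuous in $W^{1,3}_0$.
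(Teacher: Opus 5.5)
Your route is genuinely different from the paper's, and it can be completed. The paper works in time rather than in levels. It sets $\mathfrak h(c)=\sup_z\{z|z|-|z|^3-3cz\}$ and $H_K(a)=K^{-2}\mathfrak h(Ka)$, and notes that for $K>\kappa_*$ the ODE $a'=H_K(a)$ has complete flight time $KC_*>1$, hence a solution on $[0,1]$. It then integrates the pointwise inequality $UW|W|-K|W|^3-3aU|U|W\le H_K(a)|U|^3$ at $(U,W)=(X,X')$ against $\frac{d}{dt}(a|X|^3)$. This single calibration bounds every $X\in W^{1,3}_0$ at once, so it needs no excursion splitting, unimodality or coarea bookkeeping, and it survives Young-measure relaxation. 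There $x_*,y_*$ arise as the crossing of the two branches of $\mathfrak h$.

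Your level-set Lagrangian reproduces the same constants. Write $\phi_\mu(x)$ for the pointwise optimum; the first-order conditions give $\phi_\mu=\kappa(a^2+b^2)-2\mu(1/a+1/b)$. Put $x_*=\kappa a(h)/h$ and $y_*=\kappa b(h)/h$.
\begin{itemize}
\item The two first-order conditions at $x=h$ give $x_*^2(2x_*-1)=y_*^2(2y_*+1)$.
\item The peak condition you left implicit, $\phi_\mu(h)=0$, gives $\tfrac23x_*-x_*^2=\tfrac23y_*+y_*^2$.
\item Evaluating $\int_0^h(1/a+1/b)\,dx$ with $dx=(2\kappa+2\mu a^{-3})\,da$ (resp.\ $-(2\kappa+2\mu b^{-3})\,db$) yields exactly $\kappa C_*$, so the time constraint reads $\kappa C_*=1$.
\end{itemize}
Your route buys an explicit picture of the optimizer as level-dependent speeds and an elementary computation of $C_*$. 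It costs the measure-theoretic reduction that the paper avoids.

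The one real gap is the unimodal reduction. The rearrangement claim is unproved. Rearrangement inequalities control $\int|X'|^3$, but the signed numerator $\int x(a-b)\,dx$ depends on how level-set time is split between rising and falling speeds, and no monotonicity is available for it. Fortunately the rearrangement is not needed. Your fallback is the right fix, but it rests on a sign fact you have not stated:
\begin{itemize}
\item Normalize $\max X=\max|X|=1$, using $X\mapsto-X$ if necessary. By coarea (Banach indicatrix and Lusin's property (N)), almost every level $x\in(0,1)$ has $k(x)\ge1$ up-crossings and equally many down-crossings.
\item Time spent on $\{X'=0\}$ only makes the level-set time at most $1$, which is harmless since $\mu>0$. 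Pairing crossings gives $N-\kappa D\le\int_0^1k(x)\phi_\mu(x)\,dx+\mu$, plus the analogous integral over negative levels, where $\phi_\mu(-x)=\phi_\mu(x)$.
\item By the envelope theorem $\phi_\mu'(x)=a(x)-b(x)>0$, because $s\mapsto2\kappa s-\mu s^{-2}$ is increasing.
\item At the critical pair $\phi_\mu(1)=0$. Indeed, under the time constraint, amplitude homogeneity (or direct integration) gives $\int_0^1\phi_\mu\,dx+\mu=\phi_\mu(1)/3$.
\end{itemize}
Hence $\phi_\mu<0$ strictly below the peak, extra crossing pairs and negative excursions only lower the bound, and $N-\kappa_*D\le\int_0^1\phi_\mu\,dx+\mu=0$ for every competitor. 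This also makes your excursion and mediant step unnecessary. Without this sign argument, your upper bound is established only for unimodal profiles.
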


The proof is given in \Cref{app:friction-details}.  The extremizer is not
a two-block or bang--bang strategy.  For $T=B=1$ the two-block optimum is
$0.07507\ldots$, an explicit four-block strategy reaches $10448/136205
=0.07670\ldots$, and an exact rational eighty-block certificate reaches
$0.088107\ldots$, approaching $\kappa_*$ from below
(\Cref{prop:four-block-friction} in \Cref{app:friction-details}).

\FloatBarrier\section{Synthesis and conclusion}
\label{sec:discussion}

The classifications separate three mechanisms.  A law applied to the rate
before memory is constrained by source rigidity.  A monotone law applied
after memory is governed by complementary storage.  In vector state space,
conservativity of the transformed readout supplies an additional orientation
condition.  \Cref{tab:classification-summary} collects the tests.

\begin{table}[ht]
\centering
\small
\caption{Classifications established in the paper.  The cap and readout
quantifiers are those of the cited theorems; the fixed matrix readout is
tested globally without a cap.}
\label{tab:classification-summary}
\begin{tabular}{@{}L{0.22\textwidth}L{0.33\textwidth}L{0.33\textwidth}@{}}
\toprule
Model & Safety condition & Decisive mechanism \\
\midrule
Scalar rate-inside & $f=c+\lambda x$ with $cA_H$ constant;
$\lambda Q_H\ge0$ on round trips & local chattering plus two baselines \\
Scalar state-outside, all inputs & $G$ completely positive & complementary
storage and hinge separation \\
Mixed scalar, all inputs & $f(x)=\lambda x$, $\lambda G$ completely positive & source
rigidity followed by kernel converse \\
Uniformly vanishing tail, round trips & same as corresponding all-input problem
& remote zero-volume compensation \\
Two-mode Prony & nonnegative modal weights, or $gG(0+)>0$ and
$\mu_1\le\nu\le\mu_1+\mu_2$ for the inverse zero $\nu$ and poles
$-\mu_1,-\mu_2$ & shifted LICM storage, stable inverse, explicit separators \\
Time-inhomogeneous first order & $\rho\ge0$ and $b'+\rho b\ge0$ & storage
and state-volume identities \\
Stable matrix first order & $B^{-\transpose}h$ conservative and
$(B^{-\transpose}h(x))^\transpose Ax\ge0$ & nonlinear dissipativity and
orientation separators \\
\bottomrule
\end{tabular}
\end{table}

Round trips detect every all-input defect when the memory tail
vanishes uniformly.  Permanent memory instead leaves the storage quotient
$F_h(gm)/g$.  This distinction explains both the signed safe Prony sectors
and the failure of all-input complete positivity as a universal permanent
round-trip criterion.  Individual modal weights do not determine safety:
the inverse-memory order does.

The quantifiers remain essential.  A single fixed readout can hide a defect
outside its reachable state set; universal readouts cannot.
\Cref{thm:vertical-reachability} identifies this blindness for
cellwise zero-volume pumps.  Finite controls suffice for every separator,
whereas relaxed controls serve only as intermediate constructions.  In the
opposite direction, the critical cubic-friction constant holds even after
Young-measure relaxation.  Its extremizer, the power-repair
criterion, the two-block phase, and the block-count bounds quantify the
effects of costs and execution complexity beyond the structural
classifications.

The finite-Prony hierarchy assumes nonzero instantaneous gain and no
imaginary-axis inverse poles.  Removing these assumptions at arbitrary mode
count, or allowing marginal stability and singular actuation in the matrix
model, requires additional inverse or reachable-state analysis.  General
nonlinear matrix convolution is a further extension beyond first-order
realizations.  These are distinct mathematical questions; the classifications
here apply with the hypotheses and input classes stated in their theorems.

The central conclusion is that universal no-manipulation determines where
nonlinearity can be placed and which memory geometry it must preserve.
Linear quadratic positivity alone does not capture these requirements.  For
the empirical question that motivated the paper the answer is concrete: a
concave power law applied to the trading rate before power-law memory is
manipulable at every exponent pair, and the same law applied to the
impact state after the same memory is not.  The proofs and the finite
certificates in \Cref{app:verification} make each necessary condition
testable without replacing a universal assertion by finite-grid evidence.

\section{Use of generative AI}
\label{sec:ai-use}

This work used Claude Fable~5 and OpenAI GPT-5.6 Sol.  Claude Fable~5
assisted with problem selection and whole-manuscript review and revision.
GPT-5.6 Sol assisted with exploratory construction, proof development and
checking, counterexample search, assumption and literature auditing,
computational work, and manuscript drafting and revision.

The author selected the research question and acceptance criteria, directed
the work, evaluated conflicting outputs, and decided which arguments, claims,
citations, and formulations to retain.

\appendix
\section{Technical closure for rate-inside localization}
\label{app:rate-localization}

This appendix records the measure-theoretic and approximation details
behind \Cref{thm:exact,thm:cap}: exact volume in the pump, integrable
singularities, the affine intercept, and the role of the accessible plateau
set.

\subsection{Exact volume in the pump}

For two plateau rates $x<0<y$, the duty weights
\begin{equation}
 \alpha=\frac{y}{y-x},\qquad 1-\alpha=\frac{-x}{y-x}
 \label{eq:app-duty-weights}
\end{equation}
satisfy $\alpha x+(1-\alpha)y=0$ exactly.  Dividing every pump cell in these
real proportions therefore gives zero volume on each complete cell; no
rational approximation of the duty cycle is needed.  The general principle
is to preserve the rate and impact moments separately.

\begin{lemma}[Finite-support joint realization]
\label{lem:finite-support-realization}
Let $a_1,\ldots,a_m\in\R$, let $f(a_j)$ be finite, and let measurable
weights $\alpha_j:[0,T]\to[0,1]$ satisfy $\sum_j\alpha_j=1$ almost
everywhere.  Set
\begin{equation}\label{eq:joint-relaxed-moments}
 u=\sum_{j=1}^m\alpha_j a_j,\qquad
 w=\sum_{j=1}^m\alpha_j f(a_j).
\end{equation}
If $\int_0^T u=0$, there are $v_n\in\rt(T)$ taking values only in
$\{a_1,\ldots,a_m\}$ such that
\begin{equation}\label{eq:joint-realization-limits}
 v_n\weakstar u,\qquad f(v_n)\weakstar w
 \quad\text{in }L^\infty(0,T).
\end{equation}
For every $H\in L^1(\DeltaT)$,
\begin{equation}\label{eq:joint-realization-cost}
 \cost^{\mathrm{in}}_{H,f,T}[v_n]
 \longrightarrow\int_{\DeltaT}H(t,s)u(t)w(s)\dd s\dd t.
\end{equation}
In particular, a strictly negative right side has a finite round-trip
representative with cap at most $\max_j|a_j|$.  No regularity of $f$ is
required.
\end{lemma}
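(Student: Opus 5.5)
The plan is to build $v_n$ cell by cell, in the same way as the periodic pump of \Cref{lem:pump-average}, with weights that vary in time. I will first approximate the weights by step functions. Choose partitions of $[0,T]$ with mesh tending to zero and fine enough that, on each cell $E$, the averages $\bar\alpha_j(E)=|E|^{-1}\int_E\alpha_j$ are defined. On each cell I place consecutive sub-intervals of lengths $\bar\alpha_j(E)|E|$ carrying the rate $a_j$. These lengths are arbitrary reals, and this is allowed because plateau endpoints are unrestricted. The resulting $v_n$ is piecewise constant with values in $\{a_j\}$, and on each cell
\[
 \int_E v_n=\sum_j a_j\int_E\alpha_j=\int_E u,\qquad
 \int_E f(v_n)=\int_E w .
\]
Summing over cells gives $\int_0^Tv_n=\int_0^Tu=0$ exactly. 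So $v_n\in\rt(T)$ with no volume correction, and its cap is at most $\max_j|a_j|$.

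Next I prove the weak-star limits. Both sequences are uniformly bounded, by $\max|a_j|$ and $\max|f(a_j)|$. Take a test function $\phi\in L^1$ and replace it by its cell averages $\bar\phi_n$; this replacement converges to $\phi$ in $L^1$ by Lebesgue differentiation, or by density of step functions together with the martingale property of cell averages. Exact agreement of the cell integrals gives $\int\bar\phi_n(v_n-u)=0$, and likewise $\int\bar\phi_n(f(v_n)-w)=0$. The uniform bounds then control the replacement error, exactly as in \Cref{lem:pump-average}. This establishes \eqref{eq:joint-realization-limits}.

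For the cost, I extend $H\one_{\DeltaT}$ by zero to obtain $K\in L^1((0,T)^2)$. Then
\[
 \cost^{\mathrm{in}}_{H,f,T}[v_n]=\iint K(t,s)v_n(t)f(v_n)(s)\dd s\dd t ,
\]
and \Cref{lem:tensor}, applied with $a_n=v_n$ and $b_n=f(v_n)$, gives \eqref{eq:joint-realization-cost}. If the limit is strictly negative, some finite $n$ already has negative cost. That $v_n$ is the required capped finite round trip, and $f$ is evaluated only at the finitely many values $a_j$, so no regularity of $f$ is needed.

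The only delicate step is the weak-star convergence of $f(v_n)$ to $w$ when the $\alpha_j$ are merely measurable. This is why I use cell averages rather than pointwise values of the weights: exactness holds cell by cell in both moments simultaneously. The product limit is then handled entirely by \Cref{lem:tensor}.
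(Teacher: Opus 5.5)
Your proposal is correct and follows the paper's proof essentially step for step: you cell-average the weights, subdivide each cell into consecutive intervals of lengths $|E|\bar\alpha_j(E)$ so that the rate and impact moments agree exactly on every cell, obtain the weak-star limits by replacing the test function with its cell averages, and conclude with \Cref{lem:tensor} applied to the zero extension of $H$. One minor point: the ``martingale property'' would require nested partitions, which you did not impose, and Lebesgue differentiation alone gives only almost-everywhere convergence; the paper instead uses that interval averaging is an $L^1$ contraction together with density of continuous functions, which yields $L^1$ convergence for any partitions with vanishing mesh.
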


\begin{proof}
Take finite interval partitions $\mathcal P_n$ with mesh tending to zero.
On each cell $I$, put
$\bar\alpha_{j,I}=|I|^{-1}\int_I\alpha_j$ and subdivide $I$ into
consecutive intervals of lengths $|I|\bar\alpha_{j,I}$, omitting zero
lengths.  Set $v_n=a_j$ on the corresponding interval.  This gives exactly
\begin{equation}\label{eq:joint-cell-moments}
 \int_Iv_n=\int_Iu,\qquad
 \int_If(v_n)=\int_Iw.
\end{equation}
Thus the total volume is zero, the control is a finite step function, and
the prescribed cap is preserved.  Given $\psi\in L^1(0,T)$, let
$\psi_n=\mathbb E[\psi\mid\mathcal P_n]$.  Interval averaging is an
$L^1$ contraction; approximation of $\psi$ by continuous functions shows
$\|\psi_n-\psi\|_1\to0$ whenever the mesh tends to zero.  The two
pairings against $\psi_n$ vanish by \eqref{eq:joint-cell-moments}.  The
uniform bounds on $v_n,u,f(v_n),w$ therefore give
\eqref{eq:joint-realization-limits}.  Apply \Cref{lem:tensor} to the
zero extension of $H$ from $\DeltaT$ to $(0,T)^2$ to obtain
\eqref{eq:joint-realization-cost}.  A sufficiently large finite $n$
retains any strict negative margin.
\end{proof}

The moment $w$ in \eqref{eq:joint-relaxed-moments} need not equal $f(u)$,
and ordinary conditional averaging of a control, which preserves volume,
need not preserve the limiting cost: for $T=1$, $H\equiv1$, $v(t)=1/2-t$,
and $f=\one_{\R\setminus\mathbb Q}$, the cost is $-1/12$, while the cell
averages of $v$ on uniform partitions are rational and have zero cost.  The
finite-support construction above, which preserves the rate and impact
moments separately, avoids this; the proof of \Cref{prop:normalized} uses
the same two-moment structure through \Cref{lem:pump-average,lem:tensor}.

\subsection{Weak singularities and diagonal dilution}

Only $L^1(\DeltaT)$ integrability is used.  If $I_n$ is an interval with
$|I_n|\to0$, absolute continuity of the Lebesgue integral gives
\begin{equation}
 \int_{I_n\times I_n}|H(t,s)|\dd s\dd t\longrightarrow0.
 \label{eq:diagonal-absolute-continuity}
\end{equation}
This replaces every pointwise boundedness estimate near the diagonal.  The
ordered off-diagonal interaction is preserved by choosing Lebesgue points of
the two-variable kernel and shrinking source and target rectangles around a
point $(t_*,s_*)$ with $s_*<t_*$.  The diagonal and reverse-order rectangles
vanish by \eqref{eq:diagonal-absolute-continuity}; the desired rectangle
converges to the nonzero localized kernel mass.  The proof therefore applies
unchanged to $H(t,s)=(t-s)^{-\gamma}$ for $0<\gamma<1$.

\subsection{Why the general theorem is affine}

The constant term of $f(x)=c+\lambda x$ contributes
\begin{equation}
 c\int_0^Tv(t)A_H(t)\dd t,\qquad
 A_H(t)=\int_0^tH(t,s)\dd s.
 \label{eq:intercept-app}
\end{equation}
It vanishes on every round trip exactly when $cA_H$ is almost everywhere
constant, which for $c\ne0$ means that $A_H$ itself is almost everywhere
constant.  This can occur for a nonstationary nonzero kernel: for example,
$H(t,s)=1/t$ on $0<s<t$ has $A_H(t)=1$.  For a convolution kernel,
$A_H(t)=\int_0^tG(r)\dd r$; if it is constant, its distributional derivative
is $G=0$.  Hence a nonzero convolution kernel forces the intercept to vanish.

\subsection{The accessible plateau set is part of identification}

The rigidity statement identifies $f$ on the set of plateau values that the
admissible controls can use.  If tests were restricted to rational plateaus,
the fractional kernel could not distinguish $f(x)=x$ on $\mathbb Q$,
$f(x)=2x$ off $\mathbb Q$, from the identity.  Under a cap, the same
argument identifies one affine law on the full interval $[-B,B]$ and says
nothing outside it.  Finally, the tensor-density step is qualitative for a
general $L^1$ kernel: it proves existence of a finite switching level but
no switching rate uniform over the $L^1$ class.  Quantitative rates are
given only for the fractional kernel, in \Cref{sec:friction-complexity}.

\section{Discrete complete accretivity and the scalar converse}
\label{app:scalar-converse}

The continuous complementary-kernel identity has a finite-dimensional
counterpart: complete accretivity in the sense of
\citet{BenilanCrandall1991}, whose characterization by monotone truncations
is the mechanism behind the readout probes of \Cref{sec:outside}.  We
record the finite-dimensional statement in the row-sum and column-sum form
used by the Toeplitz corollary, with its short proof.

Let $A=(a_{ij})\in\R^{n\times n}$ and define
\begin{equation}
 \mathcal E_{A,h}(x):=\sum_{i=1}^n(Ax)_i h(x_i),
 \label{eq:discrete-accretive-form}
\end{equation}
where $h\colon\R\to\R$ is continuous, nondecreasing, and $h(0)=0$.

\begin{theorem}[Discrete complete accretivity]
\label{thm:discrete-complete-accretivity}
The inequality
\begin{equation}
 \mathcal E_{A,h}(x)\ge0
 \quad\text{for every }x\in\R^n
 \text{ and every such }h
 \label{eq:discrete-universal-accretivity}
\end{equation}
holds if and only if
\begin{equation}
 a_{ij}\le0\ (i\ne j),\qquad
 A\one\ge0,\qquad A^\transpose\one\ge0.
 \label{eq:doubly-substochastic-M}
\end{equation}
No symmetry or diagonalizability is required.
\end{theorem}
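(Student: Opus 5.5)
For sufficiency I will rewrite the form as a sum of manifestly nonnegative pieces. Put $d_i:=(A\one)_i=\sum_j a_{ij}\ge0$ and $e_j:=(A^\transpose\one)_j\ge0$, and write $(Ax)_i=d_ix_i+\sum_{j\ne i}a_{ij}(x_j-x_i)$. This gives
\[
 \mathcal E_{A,h}(x)=\sum_i d_ix_ih(x_i)+\sum_{i\ne j}(-a_{ij})(x_i-x_j)h(x_i).
\]
The first sum is nonnegative because $xh(x)\ge0$. The second sum is not termwise signed, so I will trade $x\,h(x)$ for the convex primitive $F_h$, as in \Cref{prop:cp-identity}. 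The tangent inequality $(x_i-x_j)h(x_i)\ge F_h(x_i)-F_h(x_j)$ holds because $F_h$ is convex with $F_h'=h$. Combined with $-a_{ij}\ge0$, it yields
\[
 \sum_{i\ne j}(-a_{ij})(x_i-x_j)h(x_i)\ge\sum_{i\ne j}(-a_{ij})\bigl(F_h(x_i)-F_h(x_j)\bigr)=\sum_i F_h(x_i)\bigl[(e_i-a_{ii})-(d_i-a_{ii})\bigr]=\sum_i(e_i-d_i)F_h(x_i).
\]
Adding the first sum, I get $\mathcal E_{A,h}(x)\ge\sum_i d_i\Phi_h(x_i)+\sum_i e_iF_h(x_i)\ge0$, where $\Phi_h=xh-F_h\ge0$ as in \eqref{eq:phi-def}. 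This is the discrete storage--dissipation identity, with the Bregman remainders dropped.

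For necessity I will use explicit probes.

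\emph{Row sums.} Take $x=\one$ and $h=\id$. This gives $\sum_i d_i\ge0$, which is not enough on its own. To isolate a single row, I take $x=\one$, $h\in\readouts$, and use the freedom to perturb $x$. Concretely, set $x_i=1$ and $x_j=1-\varepsilon$ for $j\ne i$, and choose $h$ vanishing on $(-\infty,1-\varepsilon]$ and positive at $1$. Only the $i$th term survives, namely $h(1)\bigl(d_i+\varepsilon\sum_{j\ne i}a_{ij}\bigr)$. Letting $\varepsilon\to0$ forces $d_i\ge0$.

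\emph{Column sums.} Use the mirror probe: $x_j=0$, $x_i=1$ for $i\ne j$, and $h=(\cdot)\wedge0$ shifted to a truncation below level $1$. The cleanest choice is $h(z)=(z-1)_+$ replaced by a negative-side cut. Take $x=\one-\one_{\{j\}}\cdot\,$ (small) and $h$ nonzero only on $[1,\infty)$, constant there. Then $\mathcal E=h(1)\sum_{i\ne j}(Ax)_i$. Alternatively, take $x=\one-\one_{\{j\}}$ with $h=\one_{[1,\infty)}$ approximated continuously. This gives $\sum_{i\ne j}(Ax)_i=\sum_{i\ne j}(d_i-a_{ij})$, which is not clean either.

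I therefore expect the off-diagonal and column-sum probes to be the real obstacle. My first attempt will be to use two-level states $x=s\one_S$ with $h$ a continuous approximation of $\one_{[\tau,\infty)}$ or $-\one_{(-\infty,-\tau]}$, i.e.\ the isotone cut generators of \citet{Ubhaya2001} invoked in \Cref{thm:finite-prony-hierarchy}. With $x=\one_S$ and $h$ a cut at a level in $(0,1)$, $\mathcal E=\one_S^\transpose A\one_S$. Using negative values $x=-\one_{S^c}$ with a lower cut gives $\mathcal E=\one_{S^c}^\transpose A\one_{S^c}$.

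To mix levels I will take $x=\one_S-t\one_{U}$ on disjoint sets and read the coefficient of $t$ as $t\to0^+$ (with $h=$ upper cut at level $1/2$, say). This gives $\one_S^\transpose A(\one_S-t\one_U)\ge0$, which only yields $\one_S^\transpose A\one_S\ge0$ in the limit. To extract signs I will take $t\to\infty$ instead, with the cut fixed at $1/2$ so that only $S$ is read. This forces $\one_S^\transpose A\one_U\le0$ for disjoint $S,U$. With singletons, that is $a_{ij}\le0$.

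Then $S=\{i\}$ and $U=$ the rest, at $t=1$, together with a lower-cut probe, gives $d_i\ge0$. For columns, I take $x=-\one_{\{j\}}+t\one_{\text{rest}}$ with a lower cut reading only coordinate $j$. This gives $-(Ax)_j\cdot(\text{neg})$, i.e.\ $(Ax)_j\le0$ would be needed, which points to rows again rather than columns.

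I expect the column condition to require the upper cut at a level above all but one coordinate and below it, with $x_j$ the unique maximum, combined with letting the other coordinates be equal and read by a second cut. This is the step where I will spend the effort, checking it with $2\times2$ examples before committing.
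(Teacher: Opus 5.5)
Your sufficiency argument is correct and takes a different route from the paper. The paper passes to the semigroup $P_t=e^{-tA}$. The sign conditions make it entrywise nonnegative with row and column sums at most one. Jensen applied to $F_h$ row by row then gives $\sum_iF_h((P_tx)_i)\le\sum_iF_h(x_i)$, and the proof differentiates at $t=0$. Your tangent-line decomposition is more elementary. If you keep the Bregman remainders it is an exact identity,
\[
 \mathcal E_{A,h}(x)=\sum_i d_i\Phi_h(x_i)+\sum_i e_iF_h(x_i)
 +\sum_{i\ne j}(-a_{ij})\,\mathcal B_h(x_j,x_i),
\]
which is the discrete counterpart of the complementary-kernel identity. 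The paper's route instead exhibits the contraction property behind complete accretivity. Your necessity probes for $a_{ij}\le0$ (the state $\one_S-t\one_U$ read by an upper cut at $1/2$, with $t\to\infty$) and for the row sums are the paper's probes. In the row-sum probe, $(Ax)_i=d_i-\varepsilon\sum_{j\ne i}a_{ij}$; your plus sign is a harmless slip.

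The genuine gap is the column condition $A^\transpose\one\ge0$, which you leave unproved. It does not follow from the other two conditions. For $A=\begin{pmatrix}1&-1\\0&0\end{pmatrix}$ the row sums are zero and the off-diagonal entries are nonpositive, yet $x=(\eta,1)$ with $h(z)=\min(1,z_+/\eta)$ gives $\mathcal E_{A,h}(x)=\eta-1<0$.

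The direction you propose cannot close this gap. A cut placed between the unique maximum $x_j$ and the remaining coordinates reads only the $j$th row, so it returns only row information. Combining it with a second cut adds nothing, because $\mathcal E_{A,h}$ is linear in $h$ and each cut is already tested separately.

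The missing idea is the opposite choice of cut. Let the readout equal one on the \emph{whole} positive state range. Then every row is read with the same weight, and $\mathcal E_{A,h}(x)=\one^\transpose Ax=(A^\transpose\one)^\transpose x$. Now concentrate $x$ on coordinate $j$. Concretely, $x=e_j+\eta\one$ with $h(z)=\min(1,z_+/\eta)$ gives $\mathcal E_{A,h}(x)=(A^\transpose\one)_j+\eta\,\one^\transpose A\one\ge0$. Letting $\eta\downarrow0$ yields $(A^\transpose\one)_j\ge0$. This is exactly the paper's final necessity step: approximate the constant readout one on the coordinate range of a strictly positive $x$, then let $x$ approach a coordinate ray.
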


\begin{proof}
For distinct $i,j$, put $x_i=a>0$, $x_j=-R$, and all other coordinates
equal to zero.  Choose a nondecreasing readout that vanishes on
$(-\infty,0]$ and equals one at $a$.  Only the $i$th term contributes, so
$aa_{ii}-Ra_{ij}\ge0$ for every $R>0$, forcing $a_{ij}\le0$.

Next set $x_i=a$ and all other coordinates to $a-\varepsilon$, and choose a
readout that vanishes up to $a-\varepsilon$ and equals one at $a$.  Letting
$\varepsilon\downarrow0$ gives the $i$th row sum nonnegative.  Finally take
an arbitrary strictly positive $x$ and approximate the constant-one readout
on its compact coordinate range by normalized continuous nondecreasing
readouts.  Equation \eqref{eq:discrete-universal-accretivity} yields
$\one^\transpose Ax\ge0$ for every $x>0$.  Approximation of coordinate rays
gives $A^\transpose\one\ge0$.

Conversely, let $P_t=e^{-tA}$.  The off-diagonal signs in
\eqref{eq:doubly-substochastic-M} make $P_t$ entrywise nonnegative.  Also
\begin{equation}
 \one-P_t\one=\int_0^tP_sA\one\dd s\ge0,
 \qquad
 \one-P_t^\transpose\one
 =\int_0^tP_s^\transpose A^\transpose\one\dd s\ge0.
 \label{eq:doubly-substochastic-semigroup}
\end{equation}
Thus every row and column sum of $P_t$ is at most one.  Let $F'=h$ and
$F(0)=0$.  Then $F$ is convex and nonnegative.  Row-wise Jensen, assigning
the missing row mass to zero, gives
\begin{equation}
 F((P_tx)_i)\le\sum_j(P_t)_{ij}F(x_j).
\end{equation}
Summing and using the column bounds shows
$\sum_iF((P_tx)_i)\le\sum_iF(x_i)$.  Its right derivative at $t=0$ is
$-\mathcal E_{A,h}(x)$, proving sufficiency.
\end{proof}

\begin{corollary}[Lower-triangular Toeplitz form]
\label{cor:toeplitz-complete-accretivity}
Suppose
\begin{equation}
 (A_nx)_i=\sum_{j\le i}a_{i-j}x_j.
 \label{eq:lower-toeplitz}
\end{equation}
Then \eqref{eq:discrete-universal-accretivity} holds if and only if
\begin{equation}
 a_k\le0\quad(1\le k<n),\qquad
 \sum_{k=0}^m a_k\ge0\quad(0\le m<n).
 \label{eq:toeplitz-partial-sums}
\end{equation}
\end{corollary}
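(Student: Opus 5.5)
The plan is to apply \Cref{thm:discrete-complete-accretivity} to the lower-triangular Toeplitz matrix $A_n$ with entries $(A_n)_{ij}=a_{i-j}$ for $i\ge j$ and $0$ otherwise. The first step is to translate the three conditions in \eqref{eq:doubly-substochastic-M} into conditions on the sequence $(a_k)$.

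The off-diagonal entries of $A_n$ are $a_{i-j}$ for $i>j$ and zero above the diagonal. Every lag $k=i-j\in\{1,\ldots,n-1\}$ actually occurs, so the sign condition is exactly $a_k\le0$ for $1\le k<n$.

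Next come the row sums. Row $i$ (indexed $1\le i\le n$) has sum $\sum_{j\le i}a_{i-j}=\sum_{k=0}^{i-1}a_k$. As $i$ ranges over $1,\ldots,n$, these are precisely the partial sums $\sum_{k=0}^m a_k$ for $0\le m<n$. So $A_n\one\ge0$ is the partial-sum condition in \eqref{eq:toeplitz-partial-sums}.

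Then the column sums. Column $j$ has sum $\sum_{i\ge j}a_{i-j}=\sum_{k=0}^{n-j}a_k$, which is again a partial sum with $m=n-j\in\{0,\ldots,n-1\}$. So $A_n^\transpose\one\ge0$ gives the same set of inequalities and adds nothing beyond the row condition. Combining the three translations with \Cref{thm:discrete-complete-accretivity} gives the equivalence.

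There is essentially no obstacle here. The only point needing care is the index bookkeeping: every lag $1\le k<n$ appears off the diagonal, and the row and column partial sums coincide because of the Toeplitz symmetry. This is why the column-sum condition becomes redundant, which is the feature the corollary records. For the application in \eqref{eq:toeplitz-accretivity}, I would also note how \eqref{eq:toeplitz-criterion} follows. There $a_0=k_0$ and $a_k=k_k-k_{k-1}$, so the partial sums telescope to $k_m$. The condition then reads $k_k\le k_{k-1}$ and $k_m\ge0$.
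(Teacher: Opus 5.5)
Your proposal is correct and matches the paper's proof: you read off the off-diagonal entries $a_k$ ($k\ge1$), observe that the row sums are the partial sums $\sum_{k=0}^m a_k$ and the column sums are the same list in reverse order, and then apply \Cref{thm:discrete-complete-accretivity}. Your telescoping remark linking the corollary to \eqref{eq:toeplitz-criterion} is also right.
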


\begin{proof}
The off-diagonal entries of $A_n$ are $a_k$, $k\ge1$.  Its row sums are the
partial sums in \eqref{eq:toeplitz-partial-sums}, and its column sums are the
same list in reverse order.  Apply
\Cref{thm:discrete-complete-accretivity}.
\end{proof}

For a nonnegative nonincreasing complementary density, the coefficients
$a_k$ in a uniform causal discretization are successive decreases of the
complement, and the partial sums are its nonnegative levels.  Thus
\eqref{eq:toeplitz-partial-sums} is the discrete inverse-memory
pattern behind \Cref{prop:cp-identity}; it is stronger than positive
semidefiniteness of the symmetric part.

\subsection{Closure of the resolvent probes}

The resolvent inputs used in \Cref{lem:r-resolvent-probe,lem:s-resolvent-probe}
need not themselves be finite step functions.  Let $u$ be any bounded
compactly supported input supplied by those constructions.  On a refining
finite partition $\mathcal P_n$, set $u_n=\mathbb E[u\mid\mathcal P_n]$.
Then $u_n\to u$ in $L^1$, $\|u_n\|_\infty\le\|u\|_\infty$, and
$\int u_n=\int u$.  Young's inequality gives
\begin{equation}
 \|G*(u_n-u)\|_{L^1(0,T)}
 \le\|G\|_{L^1(0,T)}\|u_n-u\|_{L^1(0,T)}\to0.
 \label{eq:scalar-state-L1-closure}
\end{equation}
For continuous $h$ on the common compact state range, truncation and uniform
continuity show convergence of the costs.  A strict negative probe therefore
survives at a finite partition level.  Under a cap, the state/readout scaling
in the main proof is performed before this averaging, so the finite
representative stays inside the cap.

For an unbounded complementary density at the origin, set
$\ell_n=\min(\ell,n)$.  A finite-step input makes its state a finite linear
combination of translates of the primitive of $G$, hence
$D\in W^{1,1}$ and $D(0)=0$.  The estimate
\begin{equation}
 \|((\ell_n-\ell)*D)'\|_1
 \le\|\ell_n-\ell\|_1\|D'\|_1\to0
 \label{eq:appendix-singular-complement}
\end{equation}
justifies passage to the full Stieltjes complement in
\eqref{eq:cp-storage-identity}, including its atoms and singular continuous
part.

\section{Prony separators and hierarchy closure}
\label{app:prony-separators}

This appendix supplies the finite constructions used to close the sectors of
\Cref{thm:two-mode} that are not covered by a positive-Prony or stable
complementary-inverse storage representation.  It also proves both closure
directions, rational refinement, and effective witness detection for the
finite-Prony hierarchy of \Cref{thm:finite-prony-hierarchy}.

\subsection{Opposite permanent and instantaneous signs}

Assume the identity gate \eqref{eq:two-mode-identity-gate}, a signed spectrum
$a_1a_2<0$, and $gb<0$.  Then necessarily $g<0<b$.  The transfer polynomial
in \eqref{eq:two-mode-P} factors as
\begin{equation}
 P(p)=b(p-\kappa)(p+\mu),\qquad \kappa,\mu>0,
 \label{eq:app-unstable-factorization}
\end{equation}
and
\begin{equation}
 \frac1{p\laplace G(p)}
 =\frac1b+\frac{u}{p-\kappa}+\frac{s}{p+\mu},
 \qquad u>0>s.
 \label{eq:app-unstable-inverse}
\end{equation}
Indeed,
\begin{equation}
 us=
 \frac{a_1a_2\lambda_1\lambda_2(\lambda_2-\lambda_1)^2}
 {b^4(\kappa+\mu)^2}<0,
 \label{eq:app-residue-product}
\end{equation}
while the residue at $\kappa$ is positive.

\begin{proposition}[Four-cell unstable-mode separator]
\label{prop:four-cell-separator}
Every kernel satisfying the preceding conditions has a continuous
nondecreasing normalized readout and an ordinary finite piecewise-constant
round trip of strictly negative cost.  The witness scales into every
positive symmetric rate cap.
\end{proposition}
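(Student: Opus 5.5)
The idea is to work entirely in state coordinates through the inverse relation \eqref{eq:stable-inverse-relation}, now with an unstable pole. By \eqref{eq:app-unstable-inverse}, for a compactly supported cell state $D$ the cumulative volume is
\[
 Q=\tfrac1bD+u\,e_\kappa*D+s\,e_{-\mu}*D,
\]
where $e_\kappa(t)=e^{\kappa t}$ and $e_{-\mu}(t)=e^{-\mu t}$. The growing mode makes $Q$ blow up after the support of $D$ unless the moment $\int e^{-\kappa r}D(r)\dd r$ vanishes. We therefore choose a four-cell state
\[
 D=\sum_{i=1}^4x_i\one_{[(i-1)\Delta,i\Delta)},
\]
with cell values constrained by the single linear equation $\int e^{-\kappa r}D=0$. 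Under this constraint $u\,e_\kappa*D$ is compactly supported, and the stable part decays, so $Q(R)\to0$ for large $R$. As in the necessity part of \Cref{thm:stable-complement}, smoothing $D$ by $\rho_\varepsilon*D$ makes the feedthrough term $\tfrac1b\int h(D)D'$ vanish. The limiting cost is then the cell form $h(x)^\transpose A x$ of \eqref{eq:finite-grid-cost}, where $A$ is built as in \eqref{eq:inverse-increment-matrix} from $\ell(t)=ue^{\kappa t}+se^{-\mu t}$.

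The sign mechanism comes next. Because the cell states are constrained, the unstable contribution does not simply cancel. Pairing $h(D)$ with the derivative of $u\,e_\kappa*D$ gives an anti-causal (backward) representation on the support, and the increments of $\ell$ split into $u\kappa e^{\kappa\cdot}>0$ and $-s\mu e^{-\mu\cdot}>0$. I would aim for a readout that is flat on the first three state values and rises only at the last, top value, a one-cut $h$ in the sense of the proof of \Cref{thm:finite-prony-hierarchy}. This isolates a single row of $A$. The off-diagonal entries of that row coming from the stable part carry the sign of $-s\mu>0$ in the increment, which is the wrong sign for accretivity (compare \Cref{cor:toeplitz-complete-accretivity}); this is the ``negative stable edge''. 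Concretely, I would place the cell with the largest state last and choose the earlier cells with opposite signs, solving the moment equation for one coordinate. Then I would compute the isolated row entry explicitly as a combination of $u(e^{\kappa\Delta}-1)^2e^{\kappa j\Delta}/\kappa$-type terms and $s(1-e^{-\mu\Delta})^2e^{-\mu j\Delta}/\mu$-type terms. The aim is to show that the moment constraint makes the $u$-terms telescope to exactly zero on that row, so that only the strictly negative $s$-term remains.

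Closure is standard. A strict negative cell form gives a smooth loop with strict negative cost. Truncating at a large $R$ leaves a residual volume $Q(R)\to0$, which is closed by one remote block. This uses \Cref{thm:permanent-quotient}, since $G\to g$ uniformly, and the storage term $\storage_{g,h}(Q(R))\to0$. \Cref{lem:finite-step-closure} then gives a finite piecewise-constant round trip, and \Cref{lem:cap-scaling} scales it into every cap, since the readout is a ramp whose threshold scales with the state. The main obstacle is the exact cancellation of the unstable contribution in the isolated row: the single moment constraint must kill precisely the $u$-part of that row. If four cells do not suffice for a generic $\Delta$, I would first try to choose $\Delta$, or the ordering of the cell values, so that the unstable row functional is proportional to the moment functional, and fall back on letting $\Delta\downarrow0$, where the stable edge dominates at first order.
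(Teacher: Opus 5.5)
Your central step cannot work: the moment constraint never cancels the $u$-terms in the isolated row. Use cells $0,\dots,3$, $R=e^{\kappa\Delta}$, $S=e^{-\mu\Delta}$, $U=u(R-1)/\kappa>0$, $V=s(1-S)/\mu<0$. The cell masses of $\ell$ are then $k_j=UR^j+VS^j$. On the moment kernel $\sum_jR^{-j}x_j=0$, the contribution of $ue^{\kappa t}$ to $(A_\Pi x)_i$ is
\[
 \frac UR\,x_i-U(R-1)\sum_{j>i}R^{\,i-1-j}x_j\;\ge\;\frac UR\,R^{-(3-i)}x_i>0
\]
whenever $x_i>0$ is the strict maximum isolated by your cut.

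So in any ordering the $u$-terms of that row are strictly positive. For your choice $i=3$ they equal $Ux_3/R=u(1-e^{-\kappa\Delta})x_3/\kappa$, and there are no later cells to offset this. If the earlier cells are nonpositive, as your sketch suggests, the constraint also caps the stable couplings. The most negative attainable row is then $x_3[U/R+V(1+(1-S)/R)]$, which tends to $x_3(u/\kappa+s/\mu)$ as $\Delta\to\infty$. For the paper's own example $G=-1-3e^{-t}+10e^{-3t}$ (with $u=8/9$, $\kappa=1$, $s=-5/36$, $\mu=1/2$), this row equals $x_3(1-w)(11+16w-5w^2+5w^3)/18>0$ with $w=e^{-\Delta/2}$, for every $\Delta$.

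Your fallback $\Delta\downarrow0$ fails too. On a fixed set of cells the leading term is $(u+s)\Delta x_3=\ell(0+)\Delta x_3=M_\infty\Delta x_3/b^2\ge0$, while the stable couplings are only $O(\Delta^2)$.

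The missing idea is to neutralize the unstable mode with the \emph{future} of the active cell and take $\Delta$ large. The bound above shows how: make the later values $x_j$ close to $x_i$. The paper takes $x=(x_0,c(1+\theta),c,c)$, with $x_0<0$ solving the moment equation and $h=(x-c)_+$, so only cell $1$ is active. Its unstable residual is $(U/R)c(\theta+R^{-2})$, and $(A_\Pi x)_1/c\to(u/\kappa)\theta+(s/\mu)(1+\theta)$ as $\Delta\to\infty$. The negativity therefore comes from the stable \emph{diagonal} mass $V\to s/\mu<0$, not from the off-diagonal increments. Take $\theta$ small first, then $\Delta$ large.

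Your last-cell ordering can be saved only with a different past. For example, take $x=(z,z,-R^{-1}-z(R^2+R),1)$ with $0<z<1$ and a ramp starting in $(z,1)$. The coupling $-V(1-S)x_2\approx-(|s|/\mu)zR^2$ then dominates $U/R$ for large $\Delta$.

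Separately, mollifying $D$ destroys the exact unstable moment, so $Q$ acquires an $e^{\kappa t}$ tail after the support. Before truncating and closing, you must restore the moment with fixed correction bumps, as in \Cref{prop:prony-cell-realization}.
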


\begin{proof}
Fix a cell length $\Delta>0$ and set
\begin{equation}
 R=e^{\kappa\Delta}>1,\quad S=e^{-\mu\Delta}\in(0,1),\quad
 U=\frac{u(R-1)}\kappa>0,\quad
 V=\frac{s(1-S)}\mu<0.
 \label{eq:four-cell-parameters}
\end{equation}
For the inverse density
$\ell(t)=ue^{\kappa t}+se^{-\mu t}$, its mass on the $j$th lag cell is
\begin{equation}
 k_j=UR^j+VS^j.
 \label{eq:inverse-cell-mass}
\end{equation}
Prescribe four consecutive constant state levels
\begin{equation}
 \begin{aligned}
 x_0&=-c\left(\frac{1+\theta}{R}+\frac1{R^2}+\frac1{R^3}\right),\\
 x_1&=c(1+\theta),\qquad x_2=x_3=c,
 \end{aligned}
 \qquad c,\theta>0.
 \label{eq:four-cell-state}
\end{equation}
Then
\begin{equation}
 \sum_{i=0}^3R^{-i}x_i=0,
 \label{eq:unstable-moment-cancellation}
\end{equation}
so the unstable inverse tail cancels exactly.  The inverse-generated
inventory is $Q=b^{-1}D+\ell*D$.  For a cellwise constant state, the
convolution increment on cell $i$ is
\begin{equation}
 y_i=k_0x_i+\sum_{j<i}(k_{i-j}-k_{i-j-1})x_j.
 \label{eq:four-cell-increment}
\end{equation}
Choose $h(x)=(x-c)_+$.  Only cell $1$ is active, and the feedthrough term
telescopes because the compact state path returns to zero.  Hence
\begin{equation}
 \cost=c\theta y_1,
 \end{equation}
where direct substitution gives
\begin{equation}
 \frac{y_1}{c}
 =\frac UR\left(\theta+R^{-2}\right)
 +V\left[(1+\theta)+(1-S)
 \left(\frac{1+\theta}{R}+R^{-2}+R^{-3}\right)\right].
 \label{eq:four-cell-cost-formula}
\end{equation}
As $\Delta\to\infty$, this tends to
\begin{equation}
 \frac u\kappa\theta+\frac s\mu(1+\theta),
\end{equation}
which is negative at $\theta=0$.  Choose first a small positive $\theta$
and then a large finite $\Delta$; the cost is strictly negative.

The state has the exact unstable moment and a strictly negative cell
cost.  \Cref{prop:prony-cell-realization} therefore supplies the bounded
ordinary round trip and its cap-local scaling; its moment correction
and tail estimates apply directly to this one-unstable-pole case.
\end{proof}

For example, the kernel
\begin{equation}
 G(t)=-1-3e^{-t}+10e^{-3t}
 \label{eq:rational-four-cell-kernel}
\end{equation}
has inverse
$1/(p\laplace G(p))=1/6-5/[18(2p+1)]+8/[9(p-1)]$.  Taking
$\Delta=2\log2$, $c=4$, $\theta=1/8$ gives the rational state levels
$(-23/16,9/2,4,4)$ and limiting cost $-259/2304$ for the hinge
$(x-4)_+$.

\subsection{The singular surface \texorpdfstring{$G(0+)=0$}{G(0+)=0}}

When $b=0$, one has $g=-A$ and the transfer polynomial reduces to
\begin{equation}
 P(p)=-M_\infty p-A\lambda_1\lambda_2.
 \label{eq:bzero-polynomial}
\end{equation}

\begin{proposition}[First-order singular separator]
\label{prop:singular-first-order-separator}
Assume \eqref{eq:two-mode-identity-gate}, $b=0$, $M_\infty>0$, and
$a_1a_2<0$.  Thus $A>0$.  Put
\begin{equation}
 \nu=\frac{A\lambda_1\lambda_2}{M_\infty}>0,\qquad
 c_*=-\frac{(\nu-\lambda_1)(\nu-\lambda_2)}{M_\infty}<0.
 \label{eq:singular-first-parameters}
\end{equation}
For every compact $C^2$ state path with $D,D'$ zero at its endpoints and
every $C^1$ monotone readout,
\begin{equation}
 \cost=
 \frac1{M_\infty}\int h'(D)(D')^2\dd t
 +c_*\int(D-Z)(h(D)-h(Z))\dd t,
 \label{eq:singular-first-identity}
\end{equation}
where
\begin{equation}
 Z(t)=\nu\int_0^te^{-\nu(t-s)}D(s)\dd s,\qquad
 Z'=\nu(D-Z).
 \label{eq:singular-filter}
\end{equation}
There exists such a path and a continuous monotone readout for which
$\cost<0$; it closes to a finite cap-local round trip.
\end{proposition}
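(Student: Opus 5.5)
The plan is to derive \eqref{eq:singular-first-identity} from the inverse transfer on the surface $b=0$, and then to build a state loop that makes the derivative energy arbitrarily small while the negative Bregman edge $c_*\int(D-Z)(h(D)-h(Z))$ stays of order one.

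\emph{Step 1: the inverse transfer.} Write $p\laplace G(p)=P(p)/((p+\lambda_1)(p+\lambda_2))$. By \eqref{eq:bzero-polynomial}, $P(p)=-M_\infty(p+\nu)$. Polynomial division then gives
\[
 \frac1{p\laplace G(p)}=-\frac1{M_\infty}\Bigl[p+(\lambda_1+\lambda_2-\nu)\Bigr]+\frac{c_*}{p+\nu}.
\]
Consequently the cumulative volume generated by a compact $C^2$ state $D$ is
\[
 Q=-\tfrac1{M_\infty}D'-\tfrac{\lambda_1+\lambda_2-\nu}{M_\infty}D+c_*\,e^{-\nu\cdot}*D,
 \qquad v=Q'.
\]
\emph{Step 2: the identity.} We pair $v$ with $h(D)$ and treat the three terms separately. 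The term $-D''h(D)/M_\infty$ integrates by parts, using the endpoint conditions on $D$ and $D'$, to $M_\infty^{-1}\int h'(D)(D')^2$. The middle term is $\int D'h(D)=F_h(D)\big|_{\partial}=0$. For the last term, $(e^{-\nu\cdot}*D)'=D-Z$ by \eqref{eq:singular-filter}. We then split
\[
 (D-Z)h(D)=(D-Z)(h(D)-h(Z))+\nu^{-1}Z'h(Z).
\]
The second piece integrates to $\nu^{-1}F_h(Z(T))$. This vanishes when the readout is flat near $0$ (or in the limit $T\to\infty$, since $Z$ decays exponentially once $D$ is switched off). The sign $c_*<0$ is read off from \eqref{eq:singular-first-parameters}, with $\nu$ lying strictly between $\lambda_1$ and $\lambda_2$ under the gate and $a_1a_2<0$; I would check this directly as a short algebraic step.

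\emph{Step 3: the flat-readout loop, which is the main obstacle.} The difficulty is that the derivative energy penalises fast state motion exactly where $h'>0$, while the Bregman edge needs $D$ to outrun the filter $Z$. I would take a smoothed step readout: $h$ rises by a height $H$ across a window $[\tfrac12-\eta,\tfrac12+\eta]$ and is flat elsewhere, including near $0$. The state path $D$ then does the following.
\begin{itemize}
 \item It moves fast from $0$ to $\tfrac12-\eta$.
 \item It crosses the window slowly, at speed $s$, taking time $2\eta/s$.
 \item It moves fast up to $1$ and holds there for a time of order $\nu^{-1}$.
 \item It returns symmetrically to $0$.
\end{itemize}
The energy term equals $\int h'(x)|D'|\,dx=O(Hs)$ per crossing. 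If $\eta\ll s\ll1$, the window is crossed in negligible time, so $Z$ is still near $0$ on the way up and near $1$ on the way down. Hence $(D-Z)(h(D)-h(Z))\ge cH$ on a time set of length of order $\nu^{-1}$. The cost is therefore at most $C Hs-c'|c_*|H/\nu<0$ for small $s$. A $C^1$ monotone approximation of the step retains the strict margin, and the endpoint derivatives are smoothed so that the path is $C^2$.

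\emph{Step 4: closure.} The generated volume is $Q(T)=c_*(e^{-\nu\cdot}*D)(T)$, which tends to $0$ as $T\to\infty$. The kernel has permanent part $g=-A\ne0$ and an exponentially vanishing transient. By \Cref{thm:permanent-quotient}, a strict storage deficit $\cost<\storage_{g,h}(Q(T))$ closes to a finite round trip. That deficit holds for large $T$, because $\storage_{g,h}(m)\to0$ as $m\to0$. Finally, \Cref{lem:finite-step-closure} replaces the smooth rate by a finite step function, and \Cref{lem:cap-scaling}, which scales $D$, $v$ and the threshold of $h$ jointly, gives cap locality.
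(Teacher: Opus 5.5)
Your argument is correct and essentially the paper's. It uses the same polynomial division of $1/(p\laplace G(p))$, the same integration by parts with $F_h(Z)$ vanishing on $[0,\infty)$, and the same separator mechanism: a readout window of width $\eta$ crossed at speed $s$ with $\eta\ll s\ll1$. The energy is then $O(Hs)$, while $D$ outruns the lagging filter $Z$ by an order-one amount for a time of order $\nu^{-1}$. The paper places the window at $[0,\varepsilon]$ with $\varepsilon=\tau^2$ and first preloads $Z=-1$ by a negative segment on which the readout vanishes; you place the window at $\tfrac12$ and move $D$ across the flat regions at no energy cost. The paper also closes by compensating the exponentially small residual volume on a single unit block rather than through \Cref{thm:permanent-quotient}. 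Both variants work.

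One step is wrong as written: $\nu$ does \emph{not} lie between $\lambda_1$ and $\lambda_2$. If it did, $c_*=-(\nu-\lambda_1)(\nu-\lambda_2)/M_\infty$ would be positive. The identity would then make every such loop cost nonnegative, and no separator could exist. In fact $\nu-\lambda_1=\lambda_1a_1(\lambda_2-\lambda_1)/M_\infty$ and $\nu-\lambda_2=-\lambda_2a_2(\lambda_2-\lambda_1)/M_\infty$, so
\[
 (\nu-\lambda_1)(\nu-\lambda_2)=-\frac{a_1a_2\lambda_1\lambda_2(\lambda_2-\lambda_1)^2}{M_\infty^2}>0 .
\]
Thus $\nu$ lies above $\lambda_2$ when $a_1>0>a_2$ and below $\lambda_1$ when $a_1<0<a_2$. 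This exterior position is what gives $c_*<0$.
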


\begin{proof}
Polynomial division gives
\begin{equation}
 \frac1{p\laplace G(p)}
 =-\frac p{M_\infty}
 +\frac{\nu-\lambda_1-\lambda_2}{M_\infty}
 +\frac{c_*}{p+\nu}.
 \label{eq:singular-first-inverse}
\end{equation}
The sign $c_*<0$ follows from
\begin{equation}
 (\nu-\lambda_1)(\nu-\lambda_2)
 =-\frac{a_1a_2\lambda_1\lambda_2
 (\lambda_2-\lambda_1)^2}{M_\infty^2}>0.
\end{equation}
Integrate the derivative term by parts, let the constant multiple of $D'$
 telescope, and use
\begin{equation*}
 \int h(Z)(D-Z)\dd t
 =\nu^{-1}\int h(Z)Z'\dd t=0
\end{equation*}
to obtain \eqref{eq:singular-first-identity}.

Choose smooth monotone profiles $\eta,\psi$ on $[0,1]$, increasing
from $0$ to $1$ and flat at both endpoints, and extend $\eta$ constantly
outside $[0,1]$.  Set $h_\varepsilon(x)=\eta(x/\varepsilon)$.
First prepare $Z=-1$ with a compact smooth negative state segment ending
at $D=0$; rescaling any nonzero negative bump achieves this exactly.
Both terms of \eqref{eq:singular-first-identity} vanish during preparation.
Reset time at the end of preparation.  Run $D=\varepsilon\psi(t/\tau)$,
hold $D=\varepsilon$ for
$(\log2)/\nu$, and return along the reversed ramp in time $\tau$.
Take $\varepsilon=\tau^2$.

During this excursion, $0\le D\le\varepsilon$ and comparison gives
$Z(t)\le\varepsilon-(1+\varepsilon)e^{-\nu t}$, so $Z<0$ for small
$\tau$.  On the hold, $h_\varepsilon(D)-h_\varepsilon(Z)=1$ and
\[
 \int_{\rm hold}(D-Z)\dd t
 \ge\frac{e^{-\nu\tau}}{2\nu}>\frac1{3\nu}.
\]
The two smooth ramps satisfy
$\int h_\varepsilon'(D)(D')^2\le C\varepsilon/\tau=C\tau$, where
$C=2\|\eta'\|_\infty\int_0^1(\psi')^2$ is fixed.  All other Bregman
contributions are nonnegative.  Hence
$\cost\le C\tau/M_\infty+c_*/(3\nu)<0$ for sufficiently small $\tau$.
This constructs the smooth state and readout directly.

The inverse generates a bounded integrable rate with $Q(\infty)=Q(0)=0$
and only a stable tail.  The truncation estimates
\eqref{eq:prony-finite-rt-compensation}--\eqref{eq:prony-compensation-cost-bound}
use only these properties, compactness of $D$, and boundedness of $G$;
they do not require $b\ne0$.  They preserve the strict margin, after
which \Cref{lem:finite-step-closure,lem:cap-scaling} finish the construction.
\end{proof}

\begin{proposition}[Second-order singular separator]
\label{prop:singular-second-order-separator}
Assume \eqref{eq:two-mode-identity-gate}, $a_1a_2<0$, $b=0$, and
$M_\infty=0$, so $A>0$.  For every compact $C^3$ state path whose
first two derivatives vanish at its endpoints and every $C^2$ monotone
readout,
\begin{equation}
 \cost=\frac1{A\lambda_1\lambda_2}
 \left[(\lambda_1+\lambda_2)\int h'(D)(D')^2\dd t
 -\frac12\int h''(D)(D')^3\dd t\right].
 \label{eq:singular-second-identity}
\end{equation}
There is a compact smooth round-trip state path with strictly negative
cost for the analytic readout $h(x)=e^x-1$.
\end{proposition}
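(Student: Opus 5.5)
The plan is to read off the inverse transfer on this surface, turn the cost into a state-only integral by integrating by parts, and then build an asymmetric loop that makes the cubic term dominate. On the surface $b=0$, $M_\infty=0$, the polynomial \eqref{eq:bzero-polynomial} reduces to the constant $P=-A\lambda_1\lambda_2$. Tracing the partial-fraction computation behind \eqref{eq:two-mode-inverse-density} then gives the purely polynomial inverse
\[
 \frac1{p\laplace G(p)}=-\frac{(p+\lambda_1)(p+\lambda_2)}{A\lambda_1\lambda_2}.
\]
The first step is to check this directly: multiply it by $p\laplace G(p)=g+a_1p/(p+\lambda_1)+a_2p/(p+\lambda_2)$ with $g=-A$, and use $a_1\lambda_1+a_2\lambda_2=0$ to obtain $1$. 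Consequently a compact $C^3$ state path $D$ is realized by the cumulative volume
\[
 Q=-\frac{D''+(\lambda_1+\lambda_2)D'+\lambda_1\lambda_2D}{A\lambda_1\lambda_2},
\]
with input $v=Q'$. Because $D,D',D''$ vanish at both ends of the support, $Q$ vanishes there too, so the realizing input is automatically a round trip. This is the feature that removes any moment-correction step here.

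For the identity \eqref{eq:singular-second-identity}, I will write the cost as $\int Q'h(D)\dd t=-\int Q\,h'(D)D'\dd t$, using compact support, and treat the three parts of $Q$ separately.
\begin{itemize}
\item The $\lambda_1\lambda_2D$ part contributes $\int Dh'(D)D'\dd t$. This is the integral of the exact derivative $\frac{\dd}{\dd t}\Phi_h(D)$, so it vanishes.
\item The $(\lambda_1+\lambda_2)D'$ part gives the positive term $(\lambda_1+\lambda_2)\int h'(D)(D')^2$.
\item The $D''$ part is $\int h'(D)\,\frac{\dd}{\dd t}\bigl[(D')^2/2\bigr]$. A second integration by parts turns it into $-\frac12\int h''(D)(D')^3$.
\end{itemize}
Collecting the prefactor $1/(A\lambda_1\lambda_2)$ gives \eqref{eq:singular-second-identity}. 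This requires $h\in C^2$ and boundary terms that vanish, which the hypotheses provide.

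For the separator I take $h(x)=e^x-1$, so $h'=h''=e^x$. The cost then has the sign of $\int e^{D}(D')^2\bigl[(\lambda_1+\lambda_2)-D'/2\bigr]\dd t$. I choose a fixed smooth profile that rises from $0$ to $1$ with flat ends, and traverse it upward in time $\tau$ and back downward in time $L$, with flat junctions; this is the asymmetric time-rescaling loop. Rescaling time in the rise gives $\int(D')^2=O(1/\tau)$ and $\int(D')^3=c/\tau^2$ with $c>0$. So the rise contributes $-c'/\tau^2+O(1/\tau)$, while the slow descent contributes $O(1/L)$; the cubic term is nonnegative there because $D'<0$. Taking $\tau$ small and $L=1$ makes the cost strictly negative. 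The main point to verify carefully is that the cubic term genuinely dominates, in particular that the weight $e^{D}$ stays between $1$ and $e$ so that the $\tau^{-2}$ scaling survives.

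Finally, $v=Q'$ is bounded and compactly supported, so it is an admissible bounded input on a finite horizon containing the loop. \Cref{lem:finite-step-closure} then converts it into a finite piecewise-constant round trip with strictly negative cost. If cap locality is wanted, \Cref{lem:cap-scaling} applies with the rescaled readout $\varepsilon h(x/\varepsilon)$, at the price of no longer using exactly $e^x-1$.
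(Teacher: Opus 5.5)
Your proposal is correct and follows the paper's proof essentially step for step. It uses the same polynomial inverse $-(p+\lambda_1)(p+\lambda_2)/(A\lambda_1\lambda_2)$ and the same two integrations by parts; your remark that the $\lambda_1\lambda_2 D$ term integrates the exact derivative of $\Phi_h(D)$ is a step the paper leaves implicit. It also uses the same asymmetric loop for $h(x)=e^x-1$, a fast rise of duration $\tau$ against a fixed slow descent so that the $-\tfrac12 J_2\tau^{-2}$ cubic term dominates, followed by the same cell-averaging closure. The only difference is cosmetic: you bound the descent contribution crudely, whereas the paper writes the cost exactly as a positive multiple of $(\tau^{-1}+\sigma^{-1})[(\lambda_1+\lambda_2)J_1-\tfrac12J_2(\tau^{-1}-\sigma^{-1})]$.
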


\begin{proof}
Here
\begin{equation}
 \frac1{p\laplace G(p)}
 =-\frac{p^2+(\lambda_1+\lambda_2)p+\lambda_1\lambda_2}
 {A\lambda_1\lambda_2}.
 \label{eq:singular-second-inverse}
\end{equation}
Two integrations by parts give \eqref{eq:singular-second-identity}.  Let
$\phi\in C^\infty([0,1])$ increase from $0$ to $X>0$ with every derivative
zero at both endpoints, and put
\begin{equation}
 J_1=\int_0^1e^{\phi}(\phi')^2\dd s>0,\qquad
 J_2=\int_0^1e^{\phi}(\phi')^3\dd s>0.
\end{equation}
Run the rising path in time $\tau$ and the reversed falling path in time
$\sigma$.  Then
\begin{align}
 \int h'(D)(D')^2\dd t
 &=J_1(\tau^{-1}+\sigma^{-1}),\\
 \int h''(D)(D')^3\dd t
 &=J_2(\tau^{-2}-\sigma^{-2}).
\end{align}
Consequently the cost equals the positive factor
$(\tau^{-1}+\sigma^{-1})/(A\lambda_1\lambda_2)$ times
\begin{equation}
 (\lambda_1+\lambda_2)J_1
 -\frac{J_2}{2}(\tau^{-1}-\sigma^{-1}).
\end{equation}
Fix $\sigma$ and choose $\tau$ sufficiently small.  The polynomial inverse
in \eqref{eq:singular-second-inverse} generates a compact smooth rate whose
total volume is already zero.  Cell averaging gives a finite-step negative
round trip, and joint state/readout scaling makes it cap-local.
\end{proof}

\subsection{A linearly positive transient that always fails nonlinearly}

For $H_*(t)=2e^{-t}-e^{-2t}$, the identity-readout density is strictly
positive as stated in \eqref{eq:linear-positive-spectrum}.  We prove the
second claim of \Cref{prop:linear-not-nonlinear}.

When $g<-1$ or $g>0$, the endpoints $g$ and $b=g+1$ have the same sign.  A
safe stable inverse would have a nonnegative nonincreasing density, but
\begin{equation}
 \ell(0+)=\frac{2\cdot1-1\cdot2}{(g+1)^2}=0.
\end{equation}
Such a density would be identically zero, contradicting the nonconstant
transfer.

For every $g>-1$, use ideal pulse volumes
\begin{equation}
 (5,-6,1,1,-1)
 \label{eq:five-pulse-volumes}
\end{equation}
with intervening waits
$(\log2,\log2,3\log2,4\log2)$.  The pre- and post-pulse states are
\begin{equation}
\begin{array}{c|cc}
k&D_k^-&D_k^+\\ \hline
1&0&5g+5\\
2&5g+15/4&-g-9/4\\
3&-g-37/16&-21/16\\
4&-189/1024&g+835/1024\\
5&g+28675/262144&-233469/262144.
\end{array}
\label{eq:five-pulse-state-table}
\end{equation}
For $-1<g<-15/16$, choose
$-21/16<z<-g-9/4$; for $g>-15/16$, reverse these inequalities.  The
nondecreasing step $-\one_{(-\infty,z)}$ has respective costs
$-1$ and $-1/[16(g+1)]$.  At $g=-15/16$, $z=-43/32$ gives $-1/2$.
Replace the step in a small endpoint-free band by a linear ramp; the segment
integrals remain unchanged.  Rectangular pulses of small positive width have
the same exact total volume and converge to the strict negative limit.

At $g=-1$, use pulse volumes $(-3,1,7,-6,1)$ and waits
$(\log2,2\log2,\log2,\log2)$.  The five pre-pulse states are
\begin{equation}
 0,\quad\frac34,\quad\frac{111}{64},\quad
 \frac{31}{256},\quad-\frac{513}{1024}.
\end{equation}
With threshold $z=-513/2048$, a sufficiently narrow continuous ramp readout
makes only the final pulse see value $-1$, so the limiting cost is exactly
$-1$.  This covers the only singular shift not included above.

\subsection{A completely positive tail whose negative shift fails}

The following is the certificate behind
\Cref{prop:cp-tail-shift-fails}.  The complement in
\eqref{eq:cp-tail-counterexample-complement} is nonnegative and
nonincreasing, so the normalized complementary-kernel characterization makes
$H$ completely positive on every finite horizon.  Equivalently, $H$ is the
unique continuous solution of
\begin{equation}
 H(t)+\frac15\int_0^tH(s)\dd s
 +5\int_{(t-1)_+}^tH(s)\dd s=1.
 \label{eq:cp-tail-volterra-equation}
\end{equation}
The method of steps gives $H(0)=1$ and
\begin{equation}
 H'(t)=-\frac{26}{5}H(t)+5H(t-1),
 \qquad H(t)=0\quad(t<0).
 \label{eq:cp-tail-homogeneous-delay}
\end{equation}
Thus $H(t)=e^{-26t/5}$ on $[0,1]$, and variation of constants, followed
inductively over the unit intervals, proves $H(t)>0$ for every $t\ge0$.
The same induction gives exponential decay without a spectral assertion.
Indeed, for $W(t)=2e^{-t/100}$,
\begin{equation}
 \frac{26}{5}-\frac1{100}>5e^{1/100};
 \label{eq:cp-tail-comparison-inequality}
\end{equation}
here $e^{1/100}<100/99$ and $519/100>500/99$ give a purely rational
verification of the strict inequality.  Hence
$W'\ge-(26/5)W+5W(t-1)$.  Since $H\le W$ on $[0,1]$, scalar
comparison on successive unit intervals yields
\begin{equation}
 0<H(t)\le2e^{-t/100},\qquad t\ge0.
 \label{eq:cp-tail-positive-decay}
\end{equation}
Thus $H$ is a nonzero positive exponentially decaying completely positive
kernel.  Its Laplace transform is
\begin{equation}
 \laplace H(p)=\frac1{p+1/5+5(1-e^{-p})}.
 \label{eq:cp-tail-laplace}
\end{equation}
With $g=-1/2$, use four consecutive half-unit rate blocks
$(-1,-4,4,1)$.  Writing $x=H*v$, $Q=1*v$, and $D=x-Q/2$, the delay
equation is
\begin{equation}
 x'=v-\frac{26}{5}x+5x(t-1),\qquad x(t)=0\quad(t\le0).
 \label{eq:cp-tail-delay-equation}
\end{equation}
Let $E=e^{-13/5}$.  Solving the first two blocks and using
$\log2>2/3$, $\log(3/2)\ge2/5$ gives
\begin{equation}
 D(t)>-\frac1{25},\qquad 0\le t\le1.
 \label{eq:cp-shift-negative-block-bound}
\end{equation}
Solving the last block, bounding $E<1/10$ and
$e^{-52/25}<1/8$ by finite exponential Taylor sums, gives
\begin{equation}
 D(t)<-\frac15,\qquad \frac{19}{10}\le t\le2.
 \label{eq:cp-shift-terminal-bound}
\end{equation}

Set $c=1/25$, $\kappa=1000$, and define the analytic strictly increasing
normalized soft hinge
\begin{equation}
 h(q)=-\frac1\kappa\log(1+e^{-\kappa(q+c)})
 +\frac1\kappa\log(1+e^{-\kappa c}).
 \label{eq:cp-shift-soft-hinge}
\end{equation}
It satisfies $h(q)>-1/\kappa$ for $q\ge-c$,
$h(q)<1/\kappa$ everywhere, and
$h(q)<-4/25+1/\kappa$ for $q\le-1/5$.  Therefore the two negative-rate
blocks contribute at most $5/(2\kappa)$, the positive-rate portion before
$19/10$ contributes at most $12/(5\kappa)$, and the final tenth contributes
less than $-2/125+1/(10\kappa)$.  Summing gives
\begin{equation}
 \cost< -\frac2{125}+\frac5{1000}
 =-\frac{11}{1000}.
 \label{eq:cp-shift-exact-final-margin}
\end{equation}
For a cap $B>0$, replace $v$ by $Bv/4$ and $h(q)$ by $h(4q/B)$; the cost is
multiplied by $B/4$ and remains negative.

\subsection{Finite-Prony hierarchy: closure in both directions}
\label{app:finite-prony-closure}

We give the closure details used in \Cref{thm:finite-prony-hierarchy}.
Throughout this subsection, $G$, $\beta=1/b$, $\ell$, and the independent
real moments are as in \eqref{eq:finite-prony}--\eqref{eq:generalized-moments}.
Every exponential-polynomial term of $\ell$ is extended to all real
arguments when it appears in a two-sided expression.  For a state $D$ on
$[0,\infty)$, write
\[
 m(D)=\left(\int_0^\infty\phi_k(s)D(s)\dd s\right)_{k=1}^r.
\]
The moment functions are bounded because their exponents have positive
real part.  Thus $m$ is a bounded linear map on $L^1(0,\infty)$.
For $r=0$, all moment corrections below are absent.

\begin{proposition}[Cell realization and moment-exact approximation]
\label{prop:prony-cell-realization}
Under the preceding finite-Prony hypotheses:
\begin{enumerate}[label=\textup{(\roman*)}]
\item If $x\in\ker W_\Pi$ and $h\in\readouts$ satisfy
$h(x)^\transpose A_\Pi x<0$, an ordinary finite round trip has
negative cost for the same $h$.  Joint scaling gives a witness in every
positive cap.
\item For every finite round trip $v$ and every $h\in\readouts$,
its state $D=G*v$ admits compact cell approximants $D_n$ with $m(D_n)=0$,
$\|D_n-D\|_1+\|D_n-D\|_\infty\to0$, and
\[
 \int_0^\infty h(D_n)(\ell*D_n)'\dd t
 \longrightarrow\cost^{\mathrm{out}}_{G,h}[v].
\]
\end{enumerate}
Repeated and complex inverse poles are included through the stated real
moment basis.
\end{proposition}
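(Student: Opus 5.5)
The plan is to work throughout with the inverse representation $Q=\beta D+\ell*D$, $v=Q'$, where $Q=1*v$. Split $\ell=\ell_{\mathrm s}+\ell_{\mathrm u}$ into its stable part and its unstable exponential-monomial part.

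\emph{The moment lemma.} The first step is a lemma describing the unstable part. For a compactly supported bounded state $D$ with $m(D)=0$, the unstable contribution can be rewritten as a backward integral:
\[
 (\ell_{\mathrm u}*D)(t)=-\int_t^\infty\ell_{\mathrm u}(t-s)D(s)\dd s .
\]
This uses the two-sided extension of the exponential monomials, and it vanishes after $\sup\supp D$. The moments are exactly the coefficients of the growing modes. Consequently, for any $D\in L^1\cap L^\infty$ with $m(D)=0$, one has
\[
 \|\ell*D\|_\infty+\|(\ell*D)'\|_{L^1}\le C\bigl(\|D\|_1+\|D\|_\infty\bigr),
\]
with $C$ independent of the support of $D$. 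I expect this to be the estimate \eqref{eq:prony-global-memory-bound}. I will prove it by Young's inequality applied to the forward stable kernel and to the backward unstable kernel, whose exponents have positive real part. I regard this support-independent bound as the main obstacle: everything else is routine closure, but in part (ii) the state support grows with $n$, so a bound depending on the horizon would be useless.

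\emph{Part (i).} Fix $x\in\ker W_\Pi$ with $h(x)^\transpose A_\Pi x<0$, and mollify the cell state, $D_\varepsilon=\rho_\varepsilon*D_x$. Mollification perturbs the moments by $o(1)$. I repair this by adding $\sum_k c_k(\varepsilon)\psi_k$, where the $\psi_k\in C_c^\infty$ are fixed bumps placed inside one cell. The matrix $\bigl(\int\phi_j\psi_k\bigr)$ is invertible by the linear independence of the $\phi_k$ on open intervals, so $c(\varepsilon)=o(1)$ and the corrected $\widetilde D_\varepsilon$ satisfies $m(\widetilde D_\varepsilon)=0$ exactly.

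The feedthrough term $\beta\int h(\widetilde D_\varepsilon)\widetilde D_\varepsilon'=\beta[F_h(\widetilde D_\varepsilon)]$ vanishes because $\widetilde D_\varepsilon$ is compact. As in the necessity part of \Cref{thm:stable-complement}, $(\ell*\widetilde D_\varepsilon)'\to(\ell*D_x)'$ in $L^1$. The moment lemma controls the unstable piece uniformly, and $h$ is continuous and bounded on the common range, so dominated convergence gives cost $\to h(x)^\transpose A_\Pi x<0$.

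The realizing rate $v_\varepsilon=Q_\varepsilon'$ is bounded and integrable, and after the support of the state $Q_\varepsilon$ is a decaying stable exponential polynomial. Hence truncation at $R$ leaves the cost unchanged and gives $Q_\varepsilon(R)\to0$. The storage term $\storage_{g,h}(Q_\varepsilon(R))$ also tends to $0$, since it is continuous at $m=0$. So for large $R$ the truncated cost lies strictly below $\storage_{g,h}(Q_\varepsilon(R))$. The remote block of \Cref{thm:permanent-quotient} then closes the volume at a limiting extra cost of $-\storage_{g,h}(Q_\varepsilon(R))$, giving a round trip whose cost stays strictly negative. Finally, \Cref{lem:finite-step-closure} makes the witness finite-step and \Cref{lem:cap-scaling} places it in every cap.

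\emph{Part (ii).} For a finite round trip $v$ supported on $[0,T_0]$, the state $D=G*v$ is, after $T_0$, a decaying exponential polynomial, because the permanent part vanishes at zero volume. So $D\in L^1\cap L^\infty$. Since $Q=\beta D+\ell*D$ stays bounded (indeed equals $0$ after $T_0$), the growing modes must be absent, which forces $m(D)=0$ automatically. I then truncate $D$ at a large $R_n$ and replace it by cell averages on a refining partition. This makes $\|D_n-D\|_1+\|D_n-D\|_\infty\to0$, using uniform continuity of $D$ away from finitely many points, or accepting $L^1$ plus bounded convergence where jumps occur. The moment defect of $D_n$ is $o(1)$, and I remove it with the same fixed bumps. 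The moment lemma applied to $D_n-D$ then gives $(\ell*D_n)'\to(\ell*D)'=v-\beta D'$ in $L^1$. The feedthrough part is handled by the telescoping $\beta[F_h]$ term on both sides. With continuity of $h$ on the common range, this yields convergence of $\int h(D_n)(\ell*D_n)'$ to $\cost^{\mathrm{out}}_{G,h}[v]$.

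Repeated and complex poles enter only through the basis $\phi_k$ and the invertibility of the bump–moment matrix, so they need no separate treatment.
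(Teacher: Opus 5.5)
\textbf{Gap in part (ii): the corrected approximants are not cell states.} Your architecture is the paper's. It uses the forward-stable/backward-unstable splitting to get the support-independent bound $\|(\ell*D)'\|_1\le C_\ell\|D\|_1$ on $\{m(D)=0\}$, then fixed moment corrections, realization through $Q=\beta D+\ell*D$, truncation, closure, and cap scaling. The problem is the moment correction in (ii). You remove the defect of the truncated cell averages $Y_n$ with ``the same fixed bumps'' $\psi_k\in C_c^\infty$ from part (i). Then $D_n=Y_n-\sum_k c_k\psi_k$ is no longer piecewise constant, so it is not a compact cell approximant as the statement requires. Its memory cost is then not a finite cell form $h(x)^\transpose A_\Pi x$ with $x\in\ker W_\Pi$, and that identification is exactly what (ii) supplies to the converse direction of \Cref{thm:finite-prony-hierarchy}.

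The repair is the paper's:
\begin{enumerate}
\item Choose fixed small disjoint intervals $J_j$ with $M_{kj}=\int_{J_j}\phi_k$ invertible. This is possible because, after column normalization, the matrix tends to an invertible evaluation matrix, by independence of the $\phi_k$ on open intervals.
\item Put the endpoints of the $J_j$ into every partition.
\item Correct with $\sum_j[M^{-1}m(Y_n)]_j\one_{J_j}$.
\end{enumerate}
Since $m(D)=0$, the coefficients are $O(\|Y_n-D\|_1)$. Both the $L^1$ and $L^\infty$ convergence therefore survive, and the corrected states remain cell states.

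\textbf{Smaller points.} For a bounded Prony kernel and a bounded step rate, $D=G*v$ is Lipschitz, so it has no jumps. Your fallback ``$L^1$ plus bounded convergence where jumps occur'' would not give the required $\|D_n-D\|_\infty\to0$, but it is never needed. In (i), the mollifier must be one-sided, or followed by a small right shift, when the first cell starts at $t_0=0$. Otherwise $\widetilde D_\varepsilon(0)\ne0$, so $Q(0)=\beta\widetilde D_\varepsilon(0)\ne0$ and the feedthrough does not telescope to zero.

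\textbf{Your two route changes are fine.} Closing the volume with the remote block of \Cref{thm:permanent-quotient}, using $\storage_{g,h}(Q(R))\to0$, works. The paper instead appends $-q_R\one_{[R,R+1]}$ immediately after $R$ and bounds the state there by $2\|G\|_\infty\eta_R$; this avoids the gap and the long block and needs no storage term. Deriving $m(D)=0$ from boundedness of $Q$ also works, provided you add the triangular step: the top coefficient of each unstable mode is nonzero because the multiplicity is the exact pole order. The paper gets the same conclusion in one line from $\laplace D=R\,\laplace Q$, since $R$ vanishes to the full multiplicity at each unstable inverse pole.
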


\begin{proof}
\emph{Independent corrections.}
The functions $s^ke^{-\zeta s}$ are independent on every open interval.
Indeed, applying $\prod_{\eta\ne\zeta}(\mathrm d/\mathrm ds+\eta)^{m_\eta}$
to an analytic relation eliminates the other exponents and acts invertibly
on the polynomial multiplying $e^{-\zeta s}$.  Its coefficients must
vanish, for each $\zeta$.
Thus $r$ evaluation vectors can be chosen independently in any open
interval.  Small disjoint intervals $J_j$ around their points give an
invertible matrix
\begin{equation}\label{eq:prony-fixed-moment-matrix}
 M_{kj}=\int_{J_j}\phi_k(s)\dd s:
\end{equation}
after column normalization it converges to the evaluation matrix.
Smooth approximations to $\one_{J_j}$ also provide fixed smooth
correction functions with invertible moment matrix.

\paragraph{A global memory estimate.}
Let $\ell_s$ contain all terms with negative real exponent and $\ell_u$
all terms with positive real exponent.  For a compactly supported or
exponentially decaying state with $m(D)=0$, expansion of each polynomial
factor gives
\begin{equation}\label{eq:prony-future-inverse}
 \int_0^\infty\ell_u(t-s)D(s)\dd s=0,
 \qquad
 (\ell_u*D)(t)=-\int_t^\infty\ell_u(t-s)D(s)\dd s
 \quad(t\ge0).
\end{equation}
For example, a term $e^{\zeta(t-s)}(t-s)^j$ contributes a linear
combination of
$e^{\zeta t}t^{j-k}\int s^ke^{-\zeta s}D(s)\dd s$, all zero.
Define
\begin{equation}\label{eq:prony-two-sided-kernel}
 K(r)=\ell_s'(r)\one_{(0,\infty)}(r)
      -\ell_u'(r)\one_{(-\infty,0)}(r)\in L^1(\R).
\end{equation}
Stable modes decay forwards and unstable modes backwards, including their
polynomial factors.  Differentiating \eqref{eq:prony-future-inverse} gives,
for the zero extension of $D$,
\begin{equation}\label{eq:prony-global-memory-bound}
 \mathcal T D:=(\ell*D)'=\ell(0)D+K*D,
 \qquad\|\mathcal T D\|_1\le C_\ell\|D\|_1,
\end{equation}
where $C_\ell=|\ell(0)|+\|K\|_1$.
The identity holds distributionally and almost everywhere; the moment
expansion also makes its right side zero for $t<0$.
Young's inequality applied to differences yields
\begin{equation}\label{eq:prony-memory-difference-bound}
 \|\mathcal T(D_1-D_2)\|_1\le C_\ell\|D_1-D_2\|_1.
\end{equation}
The constant does not depend on the support length, which is what the
converse needs.

For a compact step state $D_x$, the convolution $Y=\ell*D_x$ is
absolutely continuous.  Its cell increments are $A_\Pi x$, whence
\begin{equation}\label{eq:prony-step-memory-cost}
 \int_0^\infty h(D_x(t))\mathcal T D_x(t)\dd t
 =h(x)^\transpose A_\Pi x.
\end{equation}
The contribution outside the state support vanishes because $h(0)=0$.
The feedthrough is handled by compact smooth loops, not by assigning an
arbitrary value to $h$ at a state jump.

\paragraph{Realization of a negative cell form.}
Let $x$ and $h$ satisfy \textup{(i)}.  The compact step state $D_x$ has
exact moments and a strictly negative value in
\eqref{eq:prony-step-memory-cost}.  Smooth it to compact states
$Y_\varepsilon\in C_c^\infty(0,\infty)$ with supports contained in a fixed finite interval $[0,S]$,
a common bound, almost-everywhere convergence, and
$\|Y_\varepsilon-D_x\|_1\to0$.  For instance, a one-sided smoothing
followed by a vanishing right translation gives this approximation also
when the first state cell begins at zero.  Its moment error tends to zero.
Use fixed smooth correction functions $\psi_j$ with invertible moment
matrix $M^\psi$, enlarge $S$ once to contain their supports, and set
\begin{equation}\label{eq:prony-smooth-moment-correction}
 D_\varepsilon=Y_\varepsilon-
 \sum_{j=1}^r\bigl[(M^\psi)^{-1}m(Y_\varepsilon)\bigr]_j\psi_j.
\end{equation}
These states have exact moments and retain all the stated convergence and
boundedness properties.  By \eqref{eq:prony-memory-difference-bound},
$\mathcal T D_\varepsilon\to\mathcal T D_x$ in $L^1$.
Continuity of $h$, dominated convergence against
$|\mathcal T D_x|\dd t$, and the common bound on $h(D_\varepsilon)$ show
\begin{equation}\label{eq:prony-smoothed-cost-limit}
 \int h(D_\varepsilon)\mathcal T D_\varepsilon\dd t
 \longrightarrow h(x)^\transpose A_\Pi x<0.
\end{equation}
Fix one sufficiently small positive $\varepsilon$; all subsequent steps
use this single smooth state $D=D_\varepsilon$.

Set
\begin{equation}\label{eq:prony-realizing-inventory}
 Q=\beta D+\ell*D,\qquad v=Q'.
\end{equation}
The inverse transfer identity gives $G*v=D$.  The state starts and ends at
zero, so the feedthrough term integrates to
$\beta\int h(D)D'\dd t=\beta[F_h(D(\infty))-F_h(D(0))]=0$.
Thus its cost is the negative value in
\eqref{eq:prony-smoothed-cost-limit}.  The expansion in
\eqref{eq:prony-future-inverse} cancels every unstable tail of $Q$ and
$v$ after the state support.  Only stable exponential-polynomial tails
remain: $v$ is bounded and integrable, $Q(\infty)=Q(0)=0$, and
$\int_0^\infty v=0$.

Let $R$ exceed the state support, put
$\eta_R=\int_R^\infty|v|$ and
$q_R=\int_0^R v=-\int_R^\infty v$, and define
\begin{equation}\label{eq:prony-finite-rt-compensation}
 u_R=v\one_{[0,R]}-q_R\one_{[R,R+1]}.
\end{equation}
This is a bounded, compactly supported round trip.  Up to $R$ its
state equals $D$, so its cost there is the fixed strictly negative loop
cost.  For $R\le t\le R+1$, boundedness of the finite-Prony kernel and
$D(t)=0$ give
\begin{equation}\label{eq:prony-compensation-state-bound}
 |(G*u_R)(t)|\le\|G\|_\infty(\eta_R+|q_R|)
 \le2\|G\|_\infty\eta_R.
\end{equation}
The compensation cost is at most, in absolute value,
\begin{equation}\label{eq:prony-compensation-cost-bound}
 |q_R|\sup_{|z|\le2\|G\|_\infty\eta_R}|h(z)|\longrightarrow0.
\end{equation}
Choose a finite $R$ preserving strict negativity and apply
\Cref{lem:finite-step-closure}.  This yields an ordinary finite
piecewise-constant round trip.  Finally apply \Cref{lem:cap-scaling} to
obtain a witness in every positive cap.

\paragraph{Approximation of ordinary round-trip states.}
Let $v\in\rt(T_0)$ and $h\in\readouts$, extend $v$
by zero after $T_0$, and put $Q(t)=\int_0^t v$ and $D=G*v$.
The inventory is compactly supported.  Since the total volume is zero,
the permanent part of $D$ vanishes after $T_0$ and the remaining state
and its derivative decay exponentially.  The state is continuous,
$D\in W^{1,1}(0,\infty)$, and $D(0)=D(\infty)=0$.
For $\re p>0$,
\begin{equation}\label{eq:prony-actual-state-transform}
 \laplace D(p)=R(p)\laplace Q(p).
\end{equation}
Every unstable actual inverse pole $\zeta$ of multiplicity $m$ is a zero
of $R$ of the same multiplicity.  Differentiating at $\zeta$ therefore
gives
\begin{equation}\label{eq:prony-actual-state-moments}
 \int_0^\infty s^ke^{-\zeta s}D(s)\dd s
 =(-1)^k\laplace D^{(k)}(\zeta)=0,
 \qquad 0\le k<m.
\end{equation}
Hence $m(D)=0$, and the inverse relation
$v=\beta D'+\mathcal T D$ holds.  The primitive chain rule gives
\begin{equation}\label{eq:prony-actual-memory-cost}
 \cost^{\mathrm{out}}_{G,h,T_0}[v]
 =\int_0^\infty h(D)\mathcal T D\dd t.
\end{equation}
All these integrals converge absolutely: $D'$ and $\mathcal T D$ are
integrable and $h$ is bounded on the state range.

Choose compact cell states $Y_n$ with
$\|Y_n-D\|_1+\|Y_n-D\|_\infty\to0$; continuity and exponential decay
of $D$ give this by truncation and successively finer partitions.
Fix the correction intervals $J_j$ of
\eqref{eq:prony-fixed-moment-matrix} once and for all, and include their
endpoints in every partition.  The $J_j$ are allowed to be unions of
refined cells, so refinement inside them is not restricted.  With
\begin{equation}\label{eq:prony-cell-moment-correction}
 D_n=Y_n-\sum_{j=1}^r[M^{-1}m(Y_n)]_j\one_{J_j},
\end{equation}
one has $m(D_n)=0$.  Boundedness of the moment functions gives
$|m(Y_n)|=|m(Y_n-D)|\le C\|Y_n-D\|_1\to0$, so
\begin{equation}\label{eq:prony-corrected-cell-convergence}
 \|D_n-D\|_1+\|D_n-D\|_\infty\longrightarrow0.
\end{equation}
Let $B$ bound their common state range, let
$H_B=\sup_{|z|\le B}|h(z)|$, and let $\omega_h$ be the modulus of
continuity of $h$ on $[-B,B]$.  The support-independent estimate gives
\begin{align}
 \left|\int h(D_n)\mathcal T D_n\dd t
            -\int h(D)\mathcal T D\dd t\right|
 &\le H_B C_\ell\|D_n-D\|_1\nonumber\\
 &\quad+\omega_h(\|D_n-D\|_\infty)\|\mathcal T D\|_1
 \longrightarrow0.
 \label{eq:prony-global-cost-convergence}
\end{align}
This proves \textup{(ii)}.  In particular, every strict negative
ordinary witness has a negative finite cell form for the same readout.
\end{proof}

\subsection{Full-rank refinement and rational endpoints}
\label{app:prony-rational-refinement}

Consider a feasible cut on an arbitrary finite partition.  First multiply
$x,z$ by a factor greater than one, so every threshold gap and the absolute
objective margin exceed one.  Choose the small intervals $J_j$ used in
\eqref{eq:prony-fixed-moment-matrix} inside the existing cells, avoiding
all original endpoints, and insert their endpoints into the partition.
The independence argument applies on the complement of finitely many
points and therefore permits this choice.  The new moment matrix has full
row rank because it contains the invertible $J_j$ submatrix.

Copy each original state value to its subcells and assign those subcells
the original cut membership.  This leaves the state function and its
moments unchanged.  It also leaves the cut objective unchanged: on each
original cell the indicator coefficient is constant, and the increments
of $\ell*D_x$ telescope over its subcells.  Thus full row rank is obtained
by an exact refinement of the given certificate, without adding a new
state level or changing its threshold.

Both $W_\Pi$ and $A_\Pi$ depend continuously on the ordered endpoints.
Perturb the nonzero endpoints to nearby rationals, maintaining their order,
and denote the resulting matrices by $W',A'$.  Full row rank persists.
Set
\begin{equation}\label{eq:rational-moment-projection}
 x'=x-W'^\transpose(W'W'^\transpose)^{-1}W'x.
\end{equation}
Then $W'x'=0$ exactly and $x'\to x$, because $W'x\to W_\Pi x=0$.
Keep $z$ fixed.  Every threshold gap and the signed objective retain a
margin greater than one for a sufficiently small perturbation.  If there
are no unstable moments no refinement or projection is needed.  This
establishes the rational-partition assertion, including the exact moment
equalities, without assuming that the state coordinates are rational.

\subsection{Effective enumeration of negative finite witnesses}
\label{app:prony-effective-detection}

We prove \Cref{cor:prony-unsafety-semialgorithm} directly in control space.
This also separates the countable mathematical cut hierarchy from questions
about deciding equalities among real coefficients or inverse poles.

\paragraph{Density with exact volume.}
Start from any strict finite witness $(v,h)$.  Extend $v$ by zero to a
fixed rational horizon $T_1$ larger than its support.  Approximate its
finitely many interval endpoints and rate values by rationals, obtaining
rational step controls $w_n$ with a common bound and
$\|w_n-v\|_1\to0$ on $[0,T_1]$.  Their volumes
$q_n=\int_0^{T_1}w_n$ are rational and tend to zero.  Add the rational
rate $-q_n$ on the unit interval $[T_1,T_1+1]$.  The resulting $v_n$
have rational data, exact zero volume, a common bound, and converge in
$L^1$ to the zero extension of $v$ on this fixed horizon.
Since $G$ is bounded,
\[
 \|G*(v_n-v)\|_\infty\le\|G\|_\infty\|v_n-v\|_1\to0.
\]
Continuity of $h$ gives convergence of the state-outside costs.

On a common compact state interval, approximate $h$ uniformly by
continuous nondecreasing piecewise-linear functions with rational
breakpoints and rational values, including the vertex $(0,0)$, and extend
them constantly beyond the outer breakpoints.  Such approximations exist
by uniform continuity and order-preserving rational approximations to the
finitely many sampled values.  For example, values to the right of zero
can be rounded down and those to its left rounded up on a common rational
value grid.  Their order and normalization are preserved, and the rounding
error tends uniformly to zero.  If $h_n$ is one such approximation,
\[
 |\cost^{\mathrm{out}}_{G,h_n}[v_n]
       -\cost^{\mathrm{out}}_{G,h}[v_n]|
 \le\|v_n\|_1\sup_{|z|\le B}|h_n(z)-h(z)|\to0.
\]
A strict negative witness therefore exists among the enumerated rational
pairs whenever safety fails.

\paragraph{Certified integration.}
A computable presentation gives every real parameter to arbitrarily small
certified rational error.  For each rational pair $(v,h)$, the state on a
finite horizon is a computable continuous function: integrate the finite
sum of exponentials on each rate cell.  Positivity of each $\lambda_j$
permits a certified positive lower bound when division by a rate is used.
For a horizon $T$ and rate bound $V$, one may use the explicit derivative
bound
\begin{equation}\label{eq:prony-effective-state-lipschitz}
 \|D'\|_\infty
 \le V\left(|b|+T\sum_{j=1}^n|a_j|\lambda_j\right).
\end{equation}
A rational upper bound for the right side is computable from the input
presentation.  If $L_h$ is the largest slope of the rational
piecewise-linear readout, then on rate cell $i$ the integrand
$v_i h(D(t))$ has Lipschitz constant at most
$|v_i|L_h\|D'\|_\infty$.  Split integration at the finitely many rate
jumps; certified exponential evaluation and Riemann sums with this
Lipschitz remainder produce rational cost enclosures of arbitrarily small
width.

Enumerate all rational finite controls, retain those with exactly zero
rational volume, and pair them with all the rational readouts just
described.  Dovetail the enclosure refinements over these pairs and halt
when an upper endpoint is strictly negative.  The returned finite pair and
enclosure certify an actual negative round trip.  Density ensures
termination on every unsafe input, while on a safe input no negative
enclosure can occur.  The procedure does not require a pole reduction,
a multiplicity test, or an approximate substitute for an exact moment
equality.

\section{Matrix orientation and nonstationary closure}
\label{app:matrix-details}

\subsection{Why Loewner-positive complements do not suffice}

Take
\begin{equation}
 B=\begin{pmatrix}1&0\\0&2\end{pmatrix},\qquad
 G(t)\equiv B^{-1},\qquad
 h(x)=Hx,\qquad
 H=\begin{pmatrix}2&1\\1&2\end{pmatrix}.
 \label{eq:loewner-false-data}
\end{equation}
Both $B$ and $H$ are symmetric positive definite, $B\delta_0$ is a positive
matrix complement of $G$, and
$h=\nabla(\tfrac12x^\transpose Hx)$ is a strongly monotone convex gradient.
Nevertheless, let the state traverse the clockwise unit square.  Since
$D=B^{-1}q$ and $q'=v$, the realizing rate $v=BD'$ has four constant blocks
and zero vector volume.  Its cost is
\begin{equation}
 \cost=\oint \dd D^\transpose BH D,\qquad
 BH=\begin{pmatrix}2&1\\2&4\end{pmatrix}.
 \label{eq:loewner-square-line-integral}
\end{equation}
Green's theorem gives cost $+1$ in the counterclockwise orientation and
\begin{equation}
 \cost=-1
 \label{eq:loewner-square-cost}
\end{equation}
clockwise.  The defect is that the transformed one-form $Bh$ is not
conservative.  Hence positive matrix coefficients and an ordinary Euclidean
convex-gradient readout cannot replace the compatibility condition
\eqref{eq:matrix-conic-compatibility}.

For a symmetric coefficient $M$ and $h=\nabla\Psi$ with $\Psi\in C^2$, the
compatibility test is
\begin{equation}
 Mh\text{ conservative}
 \quad\Longleftrightarrow\quad
 M\nabla^2\Psi(x)=\nabla^2\Psi(x)M
 \quad\text{for every }x.
 \label{eq:matrix-commutant-test}
\end{equation}
Thus a scalar common commutant of the Hessians forces every symmetric
complement coefficient in the conic theorem to be scalar.  Fixed common
eigenspaces permit block coefficients, and a degenerate Hessian block may
support noncommuting matrices.

\subsection{A noncommuting conic complement}

The conic theorem itself does not impose commutation.  In dimension three,
set
\begin{equation}
 M_1=\begin{pmatrix}1&0&0\\0&1&0\\0&0&2\end{pmatrix},\qquad
 M_2=\begin{pmatrix}2&0&0\\0&2&1\\0&1&2\end{pmatrix}.
 \label{eq:noncommuting-conic-matrices}
\end{equation}
Both are positive definite, but
\begin{equation}
 [M_1,M_2]=
 \begin{pmatrix}0&0&0\\0&0&-1\\0&1&0\end{pmatrix}\ne0.
\end{equation}
Let
\begin{equation}
 R=I\delta_0+e^{-t}M_1\dd t+e^{-2t}M_2\dd t,\qquad
 \laplace G(p)=\frac1p
 \left(I+\frac{M_1}{p+1}+\frac{M_2}{p+2}\right)^{-1},
 \label{eq:noncommuting-conic-transfer}
\end{equation}
and choose
\begin{equation}
 h(x)=x+x_1^3e_1.
 \label{eq:noncommuting-conic-readout}
\end{equation}
The three transformed fields are convex gradients:
\begin{align}
 h(x)&=\nabla\left(\tfrac12\|x\|^2+\tfrac14x_1^4\right),\\
 M_1h(x)&=\nabla\left(\tfrac12x^\transpose M_1x+\tfrac14x_1^4\right),\\
 M_2h(x)&=\nabla\left(\tfrac12x^\transpose M_2x+\tfrac12x_1^4\right).
\end{align}
Therefore \Cref{thm:matrix-conic} gives a nonnegative Bregman
decomposition for every input.

This example is noncommuting in the time domain.  With $S_0=M_1+M_2$ and
$S_1=M_1+2M_2$, the large-$p$ expansion
$\laplace G(p)=I/p-S_0/p^2+(S_0^2+S_1)/p^3+O(p^{-4})$ gives
$[G(t),G(s)]=\tfrac12ts(t-s)[S_0,S_1]+O((s+t)^4)$ for small distinct
$s,t>0$, and $[S_0,S_1]=[M_1,M_2]\ne0$.  Each $G(t)$ is symmetric, since
the transfer is symmetric for real $p>0$, but matrices at different times
do not commute.

\subsection{Finite-step closure for matrix path tests}

In the fully actuated model, a piecewise $C^1$ path $x$ is generated by
$v=B^{-1}(x'+Ax)$.  Let $v_n$ be its cell averages.  Then
$v_n\to v$ in $L^1$, and variation of constants gives
\begin{equation}
 \sup_{t\le T}\|D_n(t)-D(t)\|
 \le C_{A,B,T}\|v_n-v\|_{L^1(0,T)}\to0.
 \label{eq:matrix-path-closure}
\end{equation}
If the limiting path is a vector round trip, cell averaging preserves each
coordinate of its total volume exactly.  Continuity of the fixed readout on
the common compact state range then gives convergence of costs.  Hence the
compressed-loop, long-hold, and remote-compensation arguments in
\Cref{sec:matrix} all return to literal finite piecewise-constant controls.

\section{Proofs for Section~\ref{sec:friction-complexity}}
\label{app:friction-details}

\subsection{Vertical reachability}

\begin{proof}[Proof of \Cref{thm:vertical-reachability}]
Choose a finite step function $q$ on a source interval, with values in
$V_f$, and set $q=f(0)$ near both endpoints.  By
\Cref{lem:vertical-section}, each source cell admits finite mixtures of
ordinary rates whose rate mean is exactly zero and whose impact mean tends
to the prescribed value of $q$.  Repeating those mixtures at increasing
frequency, as in \Cref{lem:finite-support-realization}, yields finite
piecewise-constant controls $v_n$ with exactly zero volume on each source
cell and
\begin{equation}
 v_n\weakstar0,
 \qquad
 f(v_n)\weakstar q.
 \label{eq:vertical-chattering-limit}
\end{equation}
The causal two-variable kernel
$\one_{\{s<t\}}G(t-s)$ lies in $L^1((0,T)^2)$.  Finite-rank $L^1$
approximation therefore turns the second convergence in
\eqref{eq:vertical-chattering-limit} into strong $L^1$ convergence of the
states.  Uniform boundedness of the states and continuity of $h$ then give
convergence of the costs.

For the target test, fix $0<t<T$ and such a selection $q$ on $[0,t]$.
For each fixed sufficiently small $\varepsilon>0$, place rates $\sigma a$
on $[0,\varepsilon]$ and $-\sigma a$ on $[t,t+\varepsilon]$, where
$0<a\le B$.  Chatter only between these blocks; all their volumes cancel
exactly.  The preceding convergence gives, at this fixed width,
$\cost[v_{\varepsilon,n}]\to C_\varepsilon$, with zero limiting pump cost.
The early target sees a state bounded by
$\|f\|_\infty\int_0^\varepsilon|G|=o(1)$.  The late target sees
$z+o(1)$, where $z=\int_0^tG(t-s)q(s)\dd s$: convolution with a bounded
step source is continuous, and changing the source on either target block
has uniformly vanishing effect by absolute continuity of the $L^1$ integral.
Consequently
\begin{equation}
 \lim_{\varepsilon\downarrow0}\frac1\varepsilon
 \lim_{n\to\infty}\cost[v_{\varepsilon,n}]
 =-\sigma a k(z).
 \label{eq:thin-baseline-cost}
\end{equation}
If $k(z)\ne0$, choose $\sigma=\operatorname{sgn}k(z)$, then a fixed width
with $C_\varepsilon<-a\varepsilon|k(z)|/2$, and finally a finite chatter
with $|\cost[v_{\varepsilon,n}]-C_\varepsilon|
<a\varepsilon|k(z)|/4$.  This is a negative ordinary round trip.  Thus
$z\in Z_h$.

Step selections are dense in the finite measure $|G(t-s)|\dd s$.  Pointwise
selection of $\nu_+$ on the positive part of $G$ and $\nu_-$ on the negative
part, or conversely, gives the two endpoints in
\eqref{eq:vertical-reachable-interval}; closure gives the whole interval.
Conversely, convex separation of $K_f$ shows that any weak-star limit of
$(v_n,f(v_n))$ for vanishing-mesh, cellwise zero-volume pumps belongs to
$\{0\}\times V_f$ almost everywhere.  Pairing the source limit with
$G(t-\cdot)\in L^1(0,t)$ proves that no larger limiting state range occurs.
Since $Z_h$ is closed, boundary points cause no loss.
\end{proof}

\subsection{The critical cubic constant}

\begin{proof}[Proof of \Cref{thm:exact-cubic-repair}]
Time scaling reduces the proof to $T=1$.  Define the scalar Hamiltonian
\begin{equation}
 \mathfrak h(c):=\sup_{z\in\R}
 \{z|z|-|z|^3-3cz\}.
 \label{eq:cubic-hamiltonian}
\end{equation}
On the positive branch $z=x>0$, stationarity gives
$c=2x/3-x^2$ and the attained value $x^2(2x-1)$.  On the negative branch
$z=-y<0$, it gives $c=2y/3+y^2$ and value $y^2(2y+1)$.  The envelope
derivatives are $-3x$ and $3y$, respectively, so the two positive branch
values cross exactly once.  Equality of their parameter and value is
equivalent to
\begin{equation}
 \frac23x-x^2=\frac23y+y^2,
 \qquad x^2(2x-1)=y^2(2y+1),
 \label{eq:hamiltonian-crossing}
\end{equation}
whose unique solution with $1/2<x<2/3$ and $y>0$ is
\eqref{eq:xy-star}.  Direct integration along the two branches gives the
complete flight time
\begin{equation}
 \int_{-\infty}^{\infty}\frac{\dd c}{\mathfrak h(c)}=C_*.
 \label{eq:full-flight-time}
\end{equation}

For $K>0$ put
\begin{equation}
 H_K(a):=\sup_{r\in\R}
 \{r|r|-K|r|^3-3ar\}=K^{-2}\mathfrak h(Ka).
 \label{eq:scaled-hamiltonian}
\end{equation}
If $K>\kappa_*$, the complete flight of $a'=H_K(a)$ lasts
$KC_*>1$.  Hence there is a finite solution on $[0,1]$.  From the definition
of $H_K$, for arbitrary $U,W\in\R$,
\begin{equation}
 UW|W|-K|W|^3-3aU|U|W\le H_K(a)|U|^3.
 \label{eq:bellman-pointwise}
\end{equation}
Take $U=X(t)$, $W=X'(t)$ and use
\begin{equation*}
 \frac{\dd}{\dd t}\{a(t)|X(t)|^3\}
 =a'|X|^3+3aX|X|X'.
\end{equation*}
Integration of \eqref{eq:bellman-pointwise} and the zero endpoints gives
\begin{equation}
 \int_0^1XX'|X'|\dd t\le K\int_0^1|X'|^3\dd t.
\end{equation}
Letting $K\downarrow\kappa_*$ proves the upper bound.  The same pointwise
inequality may be averaged over a Young measure in $W$, so relaxed
chattering cannot improve it.

For attainment, define $c:(0,1)\to\R$ by
\begin{equation}
 t=\kappa_*\int_{-\infty}^{c(t)}
 \frac{\dd s}{\mathfrak h(s)}
\end{equation}
and set $X(t)=A\mathfrak h(c(t))^{-1/3}$ for any $A>0$.  Below the crossing,
the envelope identity gives $X'/X=x(c)/\kappa_*$; above it,
$X'/X=-y(c)/\kappa_*$.  Thus $X$ is a positive one-hump inventory.  The
branch formulas give
$\mathfrak h(c)=2|c|^{3/2}(1+O(|c|^{-1/2}))$ and
$x(c),y(c)\sim |c|^{1/2}$ at their respective infinite endpoints.
Thus $t\sim\kappa_*/\sqrt{-c(t)}$ near zero and
$1-t\sim\kappa_*/\sqrt{c(t)}$ near one.  In particular,
$X(t)\sim A2^{-1/3}t/\kappa_*$ at zero, with the analogous expression
in $1-t$ at one, and $X'$ tends to $\pm A2^{-1/3}/\kappa_*$.
Hence $X\in W^{1,\infty}_0(0,1)$ and
$c|X|^3=O(|c|^{-1/2})\to0$ at both endpoints.  With $a=c/\kappa_*$, the chosen rate
$X'/X$ attains the supremum in \eqref{eq:scaled-hamiltonian} almost
everywhere, so equality holds in \eqref{eq:bellman-pointwise}.  The boundary
term vanishes and gives equality in \eqref{eq:exact-cubic-constant}.
Amplitude scaling fits any positive cap without changing the homogeneous
quotient.  Density of polygonal inventories in $W^{1,3}_0$ gives the last
claim.
\end{proof}

\begin{proposition}[Two blocks are not globally sufficient]
\label{prop:four-block-friction}
For $T=B=1$, $\delta=2$, and a cubic penalty, the two-block optimum is
\begin{equation}
 \max_{0<r<1}
 \frac{r(1-r)}{2(1+r)(1+r^2)}
 =0.0750707765001944\ldots.
 \label{eq:two-block-cubic-optimum}
\end{equation}
Four equal-duration blocks with rates
\begin{equation}
 \left(\frac9{10},-\frac14,-\frac7{25},-\frac{37}{100}\right)
 \label{eq:four-block-friction-rates}
\end{equation}
have zero sum and ratio
\begin{equation}
 \frac{-\cost_2[v]}{\int_0^1|v|^3\dd t}
 =\frac{10448}{136205}
 =0.0767079035277706\ldots,
 \label{eq:four-block-friction-ratio}
\end{equation}
strictly larger than \eqref{eq:two-block-cubic-optimum}.  An exact rational
eighty-block certificate reaches
\begin{equation*}
 0.08810702011840479\ldots,
\end{equation*}
approaching \eqref{eq:exact-cubic-constant} from below.
\end{proposition}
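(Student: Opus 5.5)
The proof is a direct evaluation built on the permanent-inventory representation \eqref{eq:permanent-inventory-cost} with $G\equiv1$, $T=1$, $\delta=2$, $q=3$ and $f_2(x)=x|x|$. On a finite step round trip with rates $v_i$, durations $\tau_i$ and inventories $X_{i-1}$ at block starts, $X$ is affine on each block. This gives the exact block formula
\[
 -\cost_2[v]=\sum_i v_i|v_i|\,\tau_i\Bigl(X_{i-1}+\tfrac12 v_i\tau_i\Bigr),
 \qquad
 \int_0^1|v|^3\dd t=\sum_i|v_i|^3\tau_i .
\]
Everything below is this rational bookkeeping. The quotient is invariant under the amplitude scaling $v\mapsto av$, since both numerator and denominator have degree three, so the cap $B=1$ only fixes a normalization.

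For the two-block value, I use amplitude scaling and sign reversal to normalize the larger magnitude to one and the smaller to $r\in(0,1)$. Zero volume then fixes the durations $r/(1+r)$ and $1/(1+r)$, as in \Cref{subsec:fractional-complexity}. There are two orderings: fast block first or slow block first, and reversing both signs does not change the quotient because $f_2$ is odd.

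In the fast-first ordering (rate $1$, then $-r$), the inventory peaks at $d_1=r/(1+r)$. The block formula gives
\[
 -\cost_2=\frac{d_1^2}2-\frac{r^2d_1d_2}{2}=\frac{r^2(1-r)}{2(1+r)^2},
 \qquad
 \int|v|^3=\frac{r(1+r^2)}{1+r}.
\]
Hence the quotient is $r(1-r)/[2(1+r)(1+r^2)]$. In the slow-first ordering, the same computation gives a quotient equal to the negative of this expression. It is therefore nonpositive and never the optimum. The case $r=1$ gives zero.

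This leaves a one-variable maximization. Its critical-point equation is a polynomial equation in $r$, and it has a unique root in $(0,1)$ because the function vanishes at both endpoints and is positive inside. I would locate this root in a rational interval by sign changes of the derivative numerator, then evaluate the function at the root with interval arithmetic to certify the printed decimal.

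The four-block claim is a finite exact computation. The rates $(9/10,-1/4,-7/25,-37/100)$ sum to zero, and with $\tau_i=1/4$ the inventories $X_i$ are rational. Substituting into the block formula gives a rational numerator and a rational denominator, and their ratio should reduce to $10448/136205$. Comparing with the two-block bound is then a rational-versus-certified-interval comparison, for which a simple rational upper bound on the two-block maximum suffices.

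The eighty-block certificate is handled the same way. The supplement lists rational rates with exact zero sum and rational durations, and the block formula evaluates the quotient exactly in rational arithmetic. That it lies below $\kappa_*$ follows from \Cref{thm:exact-cubic-repair}, since $\kappa_*$ is the supremum.

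The main obstacle is the two-block step. It has to be made exhaustive, covering both orderings, the normalization that fixes which block is faster, and $r=1$. It also needs a certified, rather than floating-point, enclosure of the irrational maximum $0.07507\ldots$. I would first try to reduce the derivative numerator to a low-degree polynomial, isolate its unique root in $(0,1)$ in a rational interval, and bound the function on that interval by monotone rational estimates.
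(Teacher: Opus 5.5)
Your proposal is correct and follows the paper's proof. Both arguments use the same permanent-inventory bookkeeping: for two blocks, zero volume fixes the durations and gives the displayed quotient, whose maximizer is the root of $r^4-2r^3-2r^2-2r+1$ in $(0,1)$. For the four blocks, direct rational arithmetic gives $\cost_2=-1959/125000$ and $\int|v|^3=81723/400000$, and the eighty-block data are likewise evaluated exactly.

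The only difference is how the two-block bound is certified. The paper uses Bernstein coefficients to get the uniform bound $153/2000<10448/136205$. If you use root isolation instead, note that vanishing at both endpoints does not by itself make the critical point unique. Uniqueness does hold here: substituting $u=r+1/r$ turns the palindromic quartic into $u^2-2u-4=0$, which leaves exactly one root in $(0,1)$.
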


\begin{proof}
Zero volume fixes the two active durations in the inverse ratio of the two
rate magnitudes, giving \eqref{eq:two-block-cubic-optimum}.  Its maximizer is
the root in $(0,1)$ of
$r^4-2r^3-2r^2-2r+1=0$.  The Bernstein coefficients give the uniform
upper bound $153/2000$ for the displayed two-block ratio.  Direct rational
arithmetic for \eqref{eq:four-block-friction-rates} gives
$\cost_2=-1959/125000$ and
$\int|v|^3=81723/400000$, hence
\eqref{eq:four-block-friction-ratio}; moreover
\begin{equation*}
 \frac{10448}{136205}-\frac{153}{2000}
 =\frac{11327}{54482000}>0.
\end{equation*}
The eighty-block ratio is likewise an exact rational evaluation; the
eighty rates are supplied in the computational supplement
(\Cref{app:verification}).
\end{proof}

\subsection{A dwell-dependent safe neighborhood}

\begin{proposition}[Dwell-dependent safe neighborhood]
\label{prop:dwell-safe-neighborhood}
Fix $N\in\N$, $\tau>0$, a horizon $T$, and $0<\gamma<1$.  Consider the
normalized class of nonzero fractional round trips having at most $N$
blocks, each positive-duration block of length at least $\tau$, with
$\max_i|v_i|=1$.  Blocks are the intervals of a partition of $[0,T]$, so
zero-rate blocks also count.  If this class is nonempty, define, with
$T$ and $\gamma$ fixed throughout,
\begin{equation}
 \mu_{N,\tau}:=\min_v\cost_1[v]>0.
 \label{eq:linear-energy-gap}
\end{equation}
For $\delta>0$, every member remains safe whenever
\begin{equation}
 |\delta-1|<
 \frac{e\mu_{N,\tau}}{2\|H\|_{L^1(\DeltaT)}},
 \qquad
 \|H\|_{L^1(\DeltaT)}
 =\frac{T^{2-\gamma}}{(1-\gamma)(2-\gamma)}.
 \label{eq:dwell-safe-neighborhood}
\end{equation}
\end{proposition}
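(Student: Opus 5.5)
The plan is to prove two things. First, $\mu_{N,\tau}$ is a strictly positive minimum. Second, we need a perturbation bound $|\cost_\delta[v]-\cost_1[v]|\le 2\|H\|_{L^1(\DeltaT)}|\delta-1|/e$ that holds uniformly over the normalized class.

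\emph{Step 1: compactness.} A member of the class is described by at most $N$ block lengths $\ge\tau$ summing to $T$ and rates $v_i\in[-1,1]$. These satisfy $\sum_i v_i|I_i|=0$ and $\max_i|v_i|=1$. Allowing fewer blocks amounts to allowing zero rates and merging adjacent equal-rate blocks. To keep the parameter set closed, I will fix the number of blocks $k\le N$ and work with the closed set of lengths $\ge\tau$. The parameter space is then a finite union of compact sets. The map $v\mapsto\cost_1[v]=Q_{H_\gamma}[v]$ is continuous in the lengths and rates, because block interactions are finite combinations of the continuous primitive $K$ in \eqref{eq:comb-kernel-primitive}. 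The constraint $\max|v_i|=1$ excludes $v=0$, so \Cref{prop:fractional-positive} gives $Q>0$ at every point. Hence the minimum is attained and positive.

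\emph{Step 2: pointwise perturbation.} Write
\[
\cost_\delta[v]-\cost_1[v]=\int_{\DeltaT}H(t,s)v(t)\bigl(f_\delta(v(s))-v(s)\bigr)\dd s\dd t.
\]
Since $|v(t)|\le1$, it suffices to bound $|f_\delta(x)-x|=|x|\,\bigl||x|^{\delta-1}-1\bigr|$ for $|x|\le1$. Set $u=|x|\in(0,1]$ and $\epsilon=\delta-1$. The mean value theorem gives $|u^{\epsilon}-1|\le|\epsilon|\,|\log u|\max(u^{\epsilon},1)$.

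For $\epsilon>0$ the maximum is $1$, and $u|\log u|\le1/e$, which gives $|\epsilon|/e$.

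For $\epsilon<0$ we get $u\,|u^\epsilon-1|=u^{1+\epsilon}-u$, and this needs care. I expect this to be the main obstacle: controlling the sublinear side uniformly with the constant $1/e$ and the factor $2$. I would first try the bound $u^{1+\epsilon}|\log u|\,|\epsilon|\le|\epsilon|/(e(1+\epsilon))$, which is at most $2|\epsilon|/e$ once $|\epsilon|\le1/2$. The factor $2$ in \eqref{eq:dwell-safe-neighborhood} absorbs the $1/(1+\epsilon)$. If $\mu_{N,\tau}$ is not small, $|\epsilon|>1/2$ can occur inside the neighborhood. For that case I will check directly whether $\sup_u(u^{1+\epsilon}-u)\le 2|\epsilon|/e$ still holds; if it does not, I will restrict the claim to $|\epsilon|\le1/2$.

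\emph{Step 3: conclusion.} Combining the bounds gives $|\cost_\delta[v]-\cost_1[v]|\le\frac{2|\delta-1|}{e}\|H\|_{L^1(\DeltaT)}$. Hence $\cost_\delta[v]\ge\mu_{N,\tau}-\frac{2|\delta-1|}{e}\|H\|_1>0$ under \eqref{eq:dwell-safe-neighborhood}. The value of $\|H\|_1$ follows by integrating $(t-s)^{-\gamma}$ over $\DeltaT$. Since $f_\delta$ is odd and homogeneous, $\cost_\delta[av]=a^{1+\delta}\cost_\delta[v]$, so the normalization $\max|v_i|=1$ loses no generality, and safety extends to all nonzero members of the class.
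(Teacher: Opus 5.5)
Your compactness step matches the paper. Step~2, however, has a genuine gap. Your fallback of restricting to $|\delta-1|\le1/2$ proves a weaker proposition rather than closing it.

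The uniform modulus $|f_\delta(x)-x|\le(2/e)|\delta-1|$ on $[-1,1]$ is false for small $\delta$. For $0<\delta<1$ the exact value is
$\sup_{|x|\le1}|f_\delta(x)-x|=(1-\delta)\delta^{\delta/(1-\delta)}$. The factor $\delta^{\delta/(1-\delta)}$ decreases from $1$ (as $\delta\downarrow0$) to $4^{-1/3}<2/e$ at $\delta=1/4$. So your direct check succeeds for $\delta\ge1/4$ but fails near $\delta=0$. Your mean-value bound, which only reaches $\delta\ge1/2$, is not enough even after the reduction described below.

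Nor is the stated radius automatically small. For $\tau=T/2$ (and $N\ge2$) the class consists only of the two balanced $\pm1$ half-horizon round trips. Then $\mu_{N,\tau}/\|H\|_{L^1(\DeltaT)}=2^\gamma-1$, and for $\gamma$ close to one the radius $e\mu_{N,\tau}/(2\|H\|_{L^1(\DeltaT)})$ exceeds $1$. The neighborhood then contains every small $\delta>0$, and no perturbation estimate around $\delta=1$ can cover it.

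What is missing has two parts: an a priori upper bound on $m=\mu_{N,\tau}/\|H\|_{L^1(\DeltaT)}$, and a direct treatment of the class where no such bound is available.

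\textbf{Bounding $m$.} You get this by evaluating $\cost_1$ on explicit admissible members.
\begin{itemize}
\item If $\tau\le T/3$, the round trip with rates $1,-1/2$ on durations $T/3,2T/3$ is admissible. Dropping its negative cross term gives $m<3^{-q}+(2/3)^q/4<1/2$, with $q=2-\gamma$.
\item If $\gamma\le1/2$, the balanced half-horizon round trip gives $m\le2^\gamma-1<1/2$.
\end{itemize}
Either way the radius is below $e/4$, which forces $\delta>1-e/4>1/4$. On that range the sharp supremum above gives the modulus $(2/e)|\delta-1|$, and your Step~3 goes through.

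\textbf{The remaining case $\tau>T/3$, $\gamma>1/2$.} Every member has exactly two blocks, with magnitude ratio $r\ge\tau/(T-\tau)>1/2$. Safety holds for every $\delta>0$ without any perturbation bound:
\begin{itemize}
\item $r=1$ is exponent-invariant, since $f_\delta(\pm1)=\pm1$.
\item $\delta\ge1$ follows from \Cref{thm:fractional-two-block-phase}, because $\gamma>1/2>2-\log_23$.
\item For $\delta<1$, the adverse-order sign satisfies $S_{q,\delta}(r)/r^\delta\ge r(1+r^{q-1}-q)>r(1+2^{-1/2}-3/2)>0$. This uses $A_q(r)\le qr$ and $r^{q-\delta}\ge r^q$.
\end{itemize}
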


\begin{proof}
Put $q=2-\gamma$, $L=\|H\|_1$, and $m=\mu_{N,\tau}/L$.
Nonemptiness implies $N\ge2$ and $\tau\le T/2$.  For each possible block
count, the duration/rate parameters satisfy
\begin{equation*}
 d_i\ge\tau,\quad \sum d_i=T,\quad |a_i|\le1,\quad
 \max|a_i|=1,\quad \sum d_i a_i=0.
\end{equation*}
They form a compact set, mapped continuously into $L^1(0,T)$.  The finite
union over block counts is compact and excludes zero.  Since
$|\cost_1[v]-\cost_1[w]|\le
2T^{1-\gamma}\|v-w\|_1/(1-\gamma)$ on the unit ball,
\Cref{prop:fractional-positive} gives the positive minimum.

For $0<\delta<1$, direct maximization gives
\begin{equation*}
 \sup_{|x|\le1}|f_\delta(x)-x|
 =(1-\delta)\delta^{\delta/(1-\delta)}.
\end{equation*}
The last factor decreases in $\delta$, as its logarithmic derivative is
$(\log\delta+1-\delta)/(1-\delta)^2<0$; at $\delta=1/4$ it is
$4^{-1/3}<2/e$.  For $\delta\ge1$, differentiation in the exponent gives
the smaller coefficient $1/e$.  Therefore
\begin{align}
 |f_\delta(x)-x|&\le(2/e)|\delta-1|,
 && |x|\le1,\quad \delta\ge1/4,
 \label{eq:dwell-power-modulus}\\
 \cost_\delta[v]&\ge\mu_{N,\tau}-(2/e)L|\delta-1|.
 \label{eq:dwell-modulus-energy-bound}
\end{align}
To cover the entire stated radius, if $\tau\le T/3$, use the admissible
rates $1,-1/2$ for durations $T/3,2T/3$.  Dropping their negative cross
term yields $m<3^{-q}+(2/3)^q/4<1/2$.  If $\gamma\le1/2$, the balanced
half-horizon rates $1,-1$ instead give
$m\le2^\gamma-1\le\sqrt2-1<1/2$.  In either case the stated radius forces
$\delta>1-e/4>1/4$, so \eqref{eq:dwell-modulus-energy-bound} applies.

Only $\tau>T/3$, $\gamma>1/2$ remains.  There are exactly two blocks,
with magnitude ratio $r\ge\tau/(T-\tau)>1/2$ and $1<q<3/2$.
The balanced case $r=1$ is exponent-invariant.  For $r<1$ and
$\delta\ge1$ they are safe by \Cref{thm:fractional-two-block-phase}.  For $0<\delta<1$, subadditivity
gives $A_q(r)\le qr$, and their adverse-order factor satisfies
\begin{equation*}
 \frac{S_{q,\delta}(r)}{r^\delta}
 =r^{q-\delta}+r-A_q(r)
 \ge r(1+r^{q-1}-q)
 >r(1+2^{-1/2}-3/2)>0.
\end{equation*}
Thus this last class is safe for every positive exponent, completing the
proof without reducing the radius.
\end{proof}

\section{Reproducible finite certificates}
\label{app:verification}

All universal implications in the paper are proved analytically.  The
computational supplement checks transform algebra, evaluates finite
strategies exactly or in interval arithmetic, and computes decimal values
of analytically defined constants.  It contains the Python sources, input
data, recorded outputs, and a file manifest.  Running \texttt{python
verify.py} executes every checker in a temporary workspace and compares its
output across two working directories.  The only nonstandard dependencies
are SymPy and mpmath; the versions used for the recorded run are listed.
The accompanying supplement is distributed with the arXiv version as
ancillary material.

The checks cover the following items.
\begin{itemize}
\item \emph{Sparse-comb manipulations} (\Cref{tab:certificates}): the cost
  of each comb is a finite sum of terms $v_if_\delta(v_j)K(r)$ with
  rational $v_i$, $r$ and $K$ from \eqref{eq:comb-kernel-primitive}.  The
  checker evaluates the sum in directed interval arithmetic at forty
  decimal digits; the enclosures have width below $10^{-30}$ and their
  upper endpoints are the certified bounds printed in the table.
\item \emph{Two-mode kernels} (\Cref{thm:two-mode}, \Cref{ex:safe-signed-two-mode}, \Cref{fig:two-mode}):
  the inverse of $G(t)=1+e^{-t}-e^{-2t}/8$, its poles, zero, residues,
  $\ell(0+)=16/75$ and $-\ell'(0+)=64/1125$ in exact arithmetic; the
  boundary $a_2=-a_1/(a_1+4)$ and the discriminant curve for $g=1$; the
  wedge for $g=-1$; and the residue test on a grid against a direct
  evaluation of the inverse density.
\item \emph{Separators}: the opposite-sector four-cell cost $-259/2304$,
  the completely positive tail whose shift fails with margin $-11/1000$,
  the five-pulse state tables of \Cref{prop:linear-not-nonlinear}, the
  singular boundaries, and the three-mode residue signs.
\item \emph{Cut hierarchy}: generalized Prony moments and individual cut
  programs from \Cref{thm:finite-prony-hierarchy}.
\item \emph{Nonstationary and matrix identities}: the two witnesses of
  \Cref{ex:time-diagnostics} and the orientation, drift, and noncommuting
  examples of \Cref{sec:matrix}.
\item \emph{Friction constants}: the defining quartic and the complete
  flight integral of \Cref{thm:exact-cubic-repair}, evaluated in two
  parametrizations, giving
  $C_*=11.34647094581980822528319769291547\ldots$ and
  $\kappa_*=C_*^{-1}=0.08813313009613913350810867462130\ldots$; the
  two-block optimum of \Cref{prop:four-block-friction}; and the exact
  rational eighty-block certificate.  The eighty rates $r_i$, with
  $v(t)=r_i$ on $[(i-1)/80,i/80)$, are supplied as a data file; they sum to
  zero, their maximum modulus is $0.968318$, and rational arithmetic gives
  \begin{equation}\label{eq:eighty-block-exact-ratio}
   \frac{-\cost_2[v]}{\int_0^1|v|^3\dd t}
   =\frac{46525312884098154559}{528054550268230209280}
   =0.0881070201184047938\ldots.
  \end{equation}
\end{itemize}
These computations certify their specific formulas and witnesses.  They do
not prove hierarchy completeness or infer safety from a finite search;
optimality of $\kappa_*$ is established by the Bellman inequality and its
extremizer, not by quadrature.

\bibliographystyle{plainnat}
\bibliography{references}

\end{document}